\newif\iflong
\newif\ifshort
\newif\ifincidence
\newif\ifcdnf
\newcommand{\shortversion}[1]{}
\newcommand{\longversion}[1]{#1}
\newcommand{\shortversion}[1]{#1}
\newcommand{\longversion}[1]{}
\DeclarePairedDelimiter\ceil{\lceil}{\rceil}
\DeclarePairedDelimiter\floor{\lfloor}{\rfloor}
\newcommand{\problemFont}[1]{\textsc{#1}}
  \newcommand{\bigoh}{\mathcal{O}}
  \newcommand{\ctypes}{\mathcal{T}}
\newcommand{\CbyT}{\mathcal{C}}
\newcommand{\trash}[1]{} 
\newlength\problemlength
\newcommand\dproblem[3]{%
\begin{center}
\fbox{%
\begin{minipage}{.93	\linewidth}%
\begin{list}{}{\labelwidth\problemlength \labelsep.7em \rightmargin1.5em
\leftmargin\problemlength \advance\leftmargin by3em%
\parsep0ex \itemsep.2ex plus.1ex}
\item[{\sl Problem:\hfill}] {\problemFont{#1}}
\item[{\sl Input:  \hfill}] #2
\item[{\sl Output: \hfill}] #3
\end{list}
\end{minipage}
}
\end{center}
}
\newcommand\eproblem[3]{%
\begin{center}
\fbox{%
\begin{minipage}{.93	\linewidth}%
\begin{list}{}{\labelwidth\problemlength \labelsep.7em \rightmargin1.5em
\leftmargin\problemlength \advance\leftmargin by3em%
\parsep0ex \itemsep.2ex plus.1ex}
\item[{\sl Problem:\hfill}] {\problemFont{#1}}
\item[{\sl Input:  \hfill}] #2
\item[{\sl Question: \hfill}] #3
\end{list}
\end{minipage}
}
\end{center}
}
\newcommand{\FIXCAM}[1]{#1} %
\newcommand{\Tau}{\mathcal{T}}
\newcommand{\algorithmfootnote}[2][\footnotesize]{
  \let\old@algocf@finish\@algocf@finish
  \def\@algocf@finish{\old@algocf@finish
    \leavevmode\rlap{\begin{minipage}{\linewidth}
    #1#2
    \end{minipage}}
  }
}
\newtheorem{LEM}{Lemma}[section]
\newtheorem{PROP}[LEM]{Proposition}
\newtheorem{THM}[LEM]{Theorem}
\newtheorem{COR}[LEM]{Corollary}
\newtheorem{observation}[LEM]{Observation}
\newtheorem{OBS}[LEM]{Observation}
\newtheorem{DEF}[LEM]{Definition}
\newtheorem{EX}[LEM]{Example}
\newtheorem{EXa}[LEM]{Example}
\renewenvironment{EX}{\begin{EXa}}{\hfill\ensuremath{\blacksquare}\end{EXa}}
\newenvironment{restatetheorem}[1][\unskip]{%
  \begingroup

}%
{%
  \addtocounter{LEM}{-1}
  \endgroup
}%
\newcommand{\complexityClassFont}[1]{\ensuremath{\mathrm{#1}}}
\newcommand{\Ptime}{\complexityClassFont{P}}
\newcommand{\PSPACE}{\ensuremath{\textsc{PSpace}}\xspace}
\newcommand{\SAT}{\textsc{Sat}\xspace}
\newcommand{\QBFSAT}{\textsc{QSat}\xspace}
\newcommand{\QBFSATCNF}{\ensuremath{\textsc{CQSat}}\xspace}
\newcommand{\QBFCNF}{\textsc{CQBF}\xspace}
\newcommand{\SIGMA}[2]{\ensuremath{\Sigma_{{#1}}^{{#2}}}}
\newcommand{\PI}[2]{\ensuremath{\Pi_{{#1}}^{{#2}}}}
\newcommand{\Q}{\ensuremath{Q}}
\newcommand{\SB}{\{}%
\newcommand{\SM}{\mid}%
\newcommand{\SE}{\}}%
\newcommand{\smallalign}[1]{#1
  \setlength{\abovedisplayskip}{3pt}
  \setlength{\belowdisplayskip}{\abovedisplayskip}
  \setlength{\abovedisplayshortskip}{0pt}
  \setlength{\belowdisplayshortskip}{3pt}}
\DeclareMathOperator{\dom}{\mathsf{dom}}%
\DeclareMathOperator{\sgn}{\mathsf{sgn}}%
\newcommand{\NAT}{\ensuremath{\mathbb{N}}}
\newcommand{\Nat}{\mathbb{N}} %
\newcommand{\TTT}{\ensuremath{\mathcal{T}}}%
\newcommand{\tw}[1]{\mathsf{tw}(#1)}
\newcommand{\td}[1]{\mathsf{td}(#1)}
\newcommand{\tdx}{\mathsf{td}}
\newcommand{\pd}[1]{\mathsf{td}(#1)}
\newcommand{\pw}[1]{\mathsf{pw}(#1)}
\newcommand{\lbv}[1]{\mathit{VarIdx}}
\newcommand{\lbvsl}[1]{\mathit{VarPIdxs}}
\newcommand{\lbvs}[1]{\mathit{VarIdxs}}
\newcommand{\lbvv}[1]{\mathit{VarVal}}
\newcommand{\lbvvsl}[1]{\mathit{VarPVals}}
\newcommand{\lbvvs}[1]{\mathit{VarVals}}
\newcommand{\lsat}{\mathit{VarSat}}
\newcommand{\lsel}{\mathit{VarPSels}}
\DeclareMathOperator{\tower}{\ensuremath{\mathsf{tow}}}
\DeclareMathOperator{\poly}{\ensuremath{{poly}}}
\newcommand{\Card}[1]{\left|#1\right|}
\newcommand{\CCard}[1]{\|#1\|}
\newcommand{\lit}[2]{\ensuremath{\mathsf{lit}(#1, #2)}}
\newcommand{\bval}[2]{\ensuremath{[\![#1]\!]_{#2}}}
\newcommand{\loc}{\mathbb{P}}
\newcommand{\bvali}[2]{\ensuremath{[\![#1]\!]_{#2}^\loc}}
\DeclareMathOperator{\var}{var}
\DeclareMathOperator{\vare}{var^{\exists}}
\DeclareMathOperator{\varu}{var^{\forall}}
\DeclareMathOperator{\matr}{\mathsf{matr}}
\newcommand{\bigO}{\ensuremath{{\mathcal O}}}
\newcommand{\eqdef}{\ensuremath{\,\mathrel{\mathop:}=}}
\DeclareMathOperator{\width}{\mathsf{width}}
\DeclareMathOperator{\children}{\mathsf{cld}}
\newcommand{\qbfind}[2]{#1(#2)}
\newcommand{\qbfformula}{\Q}
\newcommand{\booleanformulaA}{\ensuremath{F}}
\newcommand{\booleanformulaB}{\ensuremath{F'}}
\newcommand{\pushright}[1]{\ifmeasuring@#1\else\omit\hfill$\displaystyle#1$\fi\ignorespaces}
\newcommand{\pushleft}[1]{\ifmeasuring@#1\else\omit$\displaystyle#1$\hfill\fi\ignorespaces}
\title{
Structure-Aware Lower Bounds and Broadening the Horizon of Tractability for QBF}
\author{%
  \IEEEauthorblockN{%
    Johannes K. Fichte\IEEEauthorrefmark{1}, %
    Robert Ganian\IEEEauthorrefmark{2}, %
    Markus Hecher\IEEEauthorrefmark{3}, %
    Friedrich Slivovsky\IEEEauthorrefmark{2}, and %
    Sebastian Ordyniak\IEEEauthorrefmark{4}
  }
  \IEEEauthorblockA{\IEEEauthorrefmark{1}%
    Department of Computer and Information Science (IDA), 
    Link\"oping University, SE-581 83, Linköping, Sweden\\
  }
  \IEEEauthorblockA{\IEEEauthorrefmark{2}%
    Institute of Logic and Computation, TU Wien, Favoritenstra\ss{}e
    9--11, 1040 Wien,
    Austria\\
  }
  \IEEEauthorblockA{\IEEEauthorrefmark{3}%
    Massachusetts Institute of Technology, Cambridge MA 02139, USA
    %
  }
  \IEEEauthorblockA{\IEEEauthorrefmark{4}%
    School of Computing, University of Leeds, Leeds, LS2 9JT, United
    Kingdom
  }
  \thanks{Authors are ordered alphabetically. 
    The work has been carried out while Fichte \& Hecher
    visited the Simons Institute.
    %
    It is supported by Austrian Science Fund (FWF) grants J4656, P32830 and Y1329, Society for Research Funding Lower Austria (GFF) grant ExzF-0004, Vienna Science and Technology Fund (WWTF) grants ICT19-060 and ICT19-065, and 
   the ELLIIT funded by the Swedish government.
  }
}
\begin{document}
%
%
%
%
%
%

\makeatletter

\def\emptycleardoublepage{\clearpage\if@twoside \ifodd\c@page\else
\thispagestyle{empty}%
\hbox{}\newpage\if@twocolumn\hbox{}\newpage\fi\fi\fi}

\makeatother

\newpage

%
%
%
%
%
%
%
%
%
%
%
%
%
%
%
%
%
%
%
%

%
%
%
%
%
%
%
%
%
%
%
\newcommand{\DCNF}{CDNF\xspace}

\maketitle

\begin{abstract}

%
  The \QBFSAT problem, which asks to evaluate a quantified Boolean
  formula (QBF), is of fundamental interest in approximation, counting,
  decision, and probabilistic complexity and is also considered the prototypical \PSPACE-complete problem.
%
     As such, it has previously been studied under
  various structural restrictions (parameters), most notably
  parameterizations of the primal graph representation of
  instances. Indeed, it is known that \QBFSAT\ remains
  \PSPACE-complete even when restricted to instances with constant
  treewidth of the primal graph, but the problem admits a double-exponential fixed-parameter algorithm parameterized by the vertex cover number (primal graph).

%

However, prior works have left a gap in our understanding of the
complexity of \QBFSAT\ when viewed from the perspective of other
natural representations of instances, most notably via incidence
graphs. In this paper, we develop structure-aware reductions which
allow us to obtain essentially tight lower bounds for highly restricted instances
of \QBFSAT, including instances whose incidence graphs have bounded
treedepth or feedback vertex number. We complement these lower bounds with novel algorithms for \QBFSAT which establish a nearly-complete picture of the problem's complexity under standard graph-theoretic parameterizations. We also show implications for other natural graph representations, and obtain novel upper as well as lower bounds for \QBFSAT\ under more fine-grained parameterizations of the primal graph.
\end{abstract}

\section{Introduction}
\ifincidence
The evaluation problem for quantified Boolean formulas (\QBFSAT) is a
natural generalization of the Boolean satisfiability problem (\SAT)
and among the most important problems in theoretical computer science,
with applications in symbolic
reasoning~\cite{BenedettiM08,BloemEKKL14,EglyEiterTW00,OtwellRemshagenTruemper04,Rintanen99,ShuklaBPS19},
constraint satisfaction problems
(CSP)~\cite{GentNightingaleRowley08,Dechter06,Freuder85}, databases,
and logic~\cite{Grohe01}.
Input formulas in \QBFSAT consist of a \emph{(quantifier) prefix} and a \emph{matrix}, which can be an arbitrary Boolean formula but is often assumed to be in \emph{conjunctive normal form (CNF)},~e.g., converted by the classical Tseytin transformation~\cite{Tseytin70}.
\QBFSAT is considered the archetypical representative of \PSPACE-complete problems and has been extensively studied from the perspective of classical approximation~\cite{Vazirani03}, counting~\cite{Toda91}, decision~\cite{Papadimitriou94}, and probabilistic complexity~\cite{Lautemann83}, but also through the lens of 
parameterized complexity~\cite{PanVardi06,AtseriasOliva14a}.

The vast majority of parameterizations studied for \QBFSAT\ rely on a suitable graph representation of the matrix; this is, in fact, similar to the situation for \textsc{Boolean Satisfiability} (\SAT)~\cite{SamerSzeider10b,SamerS21,WallonM20},
\textsc{Integer Linear Programming} (\textsc{ILP})~\cite{GanianO18,ChanCKKP22}, 
\textsc{Constraint Satisfaction} (\textsc{CSP})~\cite{Freuder85,DechterP89,SamerSzeider10a}, and other fundamental problems. For \QBFSAT, the most classical parameterization 
considered in the literature is the treewidth~$k$ of the \emph{primal graph} representation of the formula's matrix in \emph{conjunctive normal form (CNF)}. 
There, the complexity 
is well understood by now: The problem remains \PSPACE-complete when parameterized by this parameter
 alone~\cite{PanVardi06,AtseriasOliva14a,FichteHecherPfandler20} even when restricted to decompositions which are paths,
but can be solved in time $\tower(\ell, k)\cdot\poly(n)$\footnote{%
  The runtime is exponential in the treewidth~$k$, where~$k$ is on
  top of a tower of iterated exponentials of height 
  quantifier depth~$\ell$.}  where~$\ell$ is the quantifier depth of the formula's prefix and~$n$ is the number of variables of the formula.
On a more positive note, parameterizing by the vertex cover number of the primal graph alone is known to yield a fixed-parameter algorithm for \QBFSAT~\cite{LampisMitsou17} that 
is double-exponential. 

The $\ell$-fold exponential gap in terms of parameter dependence between treewidth and vertex cover number raises the following question:
what is the boundary of fixed-parameter tractability when dropping the quantifier depth~$\ell$  as a parameter?
%
In parameterized complexity, there is a whole hierarchy of structural parameters that are more restrictive than treewidth and less restrictive than vertex cover number, most prominently \emph{treedepth}~\cite{sparsity} and the \emph{feedback vertex} and \emph{edge numbers}\footnote{The vertex or edge deletion distances to acyclicity, respectively.}.
However, 
there is an even larger gap: we know very little about the complexity-theoretic landscape of \QBFSAT\ in the context of matrix representations other than the primal graph.
The most prominent example of such a graph representation is the \emph{incidence graph}, which has been extensively studied for \SAT~\cite{SamerSzeider10b,SlivovskySzeider20,SamerS21}, CSP~\cite{SamerSzeider10a,HaanKS15}, and \textsc{ILP}~\cite{EibenGKOPW19}, among others. The aforementioned hardness for \QBFSAT\ carries over from primal treewidth to the treewidth of the incidence graph~\cite{AtseriasOliva14a} and the problem is fixed-parameter tractable using quantifier depth plus treewidth of the incidence graph~\cite{CapelliMengel19}, but no other results for structural parameters of this graph were previously~known.

\vspace{-.5em}
\subsection{Overview of Contributions}
Inspired by the high-level approach used to obtain \QBFSAT\ lower bounds for treewidth~\cite{FichteHecherPfandler20}, 
in Section~\ref{sec:saw} we formalize a notion of \emph{structure-aware} (SAW) reductions for \QBFSAT.
These reductions serve as a tool to precisely demonstrate functional
dependencies between parameters of the input instance and the reduced
instance. 
%
Utilizing this notion of SAW reductions, we establish in Section~\ref{sec:main} tight lower bounds for highly restrictive classes of \QBFSAT\ instances that have profound complexity-theoretic implications for three distinct representations of the matrix. These results essentially rule out efficient algorithms for treedepth and faster algorithms than the one for treewidth when using feedback vertex number. 
We highlight them below, followed by a separate discussion for each of the~representations.
Unless the \emph{Exponential Time Hypothesis (ETH)}~\cite{ImpagliazzoPaturiZane01} fails: 

\begin{enumerate}
\item \QBFSAT\ cannot be solved faster than in $\tower(\ell', k)\cdot\poly(n)$ for~$\ell'$ linear in the quantifier depth~$\ell$, where $k$ is either the feedback vertex number 
or the treedepth of the incidence~graph.
\item \QBFSAT\ cannot be solved faster than in $\tower(\ell', k)\cdot\poly(n)$ for~$\ell'$ linear in the quantifier depth~$\ell$, where $k$ is either the feedback vertex number 
or the treedepth of the primal graph of formulas in combined conjunctive normal form (CNF) and \emph{disjunctive normal form (DNF)}.
\item \QBFSAT\ cannot be solved faster than in $\tower(\ell', k)\cdot\poly(n)$ for~$\ell'$ linear in the quantifier depth~$\ell$, 
where $k$ is either the feedback vertex number 
or the treedepth of the primal graph after deleting a single clause from the~matrix.
\end{enumerate}

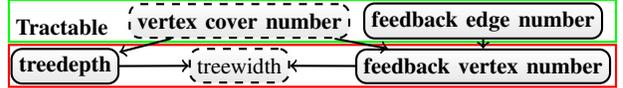
\begin{figure}\centering\vspace{-.5em}
 \begin{tikzpicture}[scale=.6,node distance=0.35mm, every node/.style={scale=0.85}]
\tikzset{every path/.style=thick}
%
\node (vc) [tdnode,dashed,label={[yshift=-0.25em,xshift=0.5em]above left:$ $}] {\textbf{vertex cover number}};
\node (td) [tdnode,label={[yshift=-0.25em,xshift=0.5em]above left:$ $},below left=.5em of vc] {\textbf{treedepth}};
\node (fes) [tdnode,label={[xshift=-1.0em, yshift=-0.15em]above right:$ $}, right = .5em of vc]  {\textbf{feedback edge number}};
\node (fvs) [tdnode,label={[xshift=-1.0em, yshift=-0.15em]above right:$ $}, below = .35em of fes]  {\textbf{feedback vertex number}};
\node (tw) [tdnode,dashed,label={[yshift=-0.25em,xshift=0.5em]above left:$ $},below=.35em of vc] {treewidth};
%
\node [xshift=3.25em,yshift=0em,thick,green,text width=26.35em, text height=1.2em,
draw, align=right] {\textbf{\textcolor{black}{Tractable}\hfill}\vspace{-2em}};
\node [xshift=3.25em,yshift=-2.0em,thick,red,text width=26.35em, text height=1.2em,
draw, align=right] {\qquad\qquad\textbf{\textcolor{black}{}\hfill}\vspace{-2em}};
\draw [->] (vc) to (fvs);
\draw [->] (fes) to (fvs);
\draw [->] (fvs) to (tw);
\draw [->] (vc) to (td);
\draw [->] (td) to (tw);
\end{tikzpicture}\vspace{-.5em}

\caption{Nearly-complete picture for \QBFSAT and parameters on the incidence graph; directed arks indicate that the source parameter upper-bounds the destination, see, e.g.,~\cite{SorgeWeller21,graphclasses}. 
The red frame represents intractability with lower bounds essentially matching
known upper bounds (ETH-tight for treewidth and feedback vertex number); 
the green frame indicates tractability (fpt) results. Bold-face text marks new results.}
\label{fig:params-in}
\end{figure}

\noindent \textbf{1.\ Results for Incidence Graphs.}\quad
Our two complexity-theoretic lower bounds identify that,  with respect to the fundamental representation as incidence graph, the boundaries of intractability for \QBFSAT\ lie significantly below treewidth. They also raise the question of whether we can obtain efficient algorithms for the problem using parameters which place stronger restrictions on the incidence graph. The two by far most natural structural graph parameters satisfying these properties are the mentioned vertex cover number and feedback edge~number. 

We complement our lower bounds with fixed-parameter algorithms for \QBFSAT\ with respect to both parameters, which are provided in Section~\ref{sec:vcfen-inc}. 
Thereby we establish a nearly-complete picture of the problem's complexity based on 
structure of the incidence graph, see~Figure~\ref{fig:params-in}. 

\smallskip
\noindent \textbf{2.\ Implications for Primal Graphs of Combined Matrices.}\quad
Previous complexity-theoretic studies of \QBFSAT\ have typically assumed that the matrix is represented in CNF form, which admits standard graphical representations and may be obtained from an arbitrary formula by using the classical Tseytin transformation~\cite{Tseytin70}. However, empirical evidence has shown that normal form transformations adversely affect the performance of \QBFSAT solvers~\cite{AnsoteguiGS05}, and solvers now typically support more general input formats than CNF~\cite{JordanKlieberSeidl16}.

Given these developments, it is natural to consider the complexity of \QBFSAT\ from the viewpoint of more general normal forms of the matrix which still admit suitable graph representations. An obvious step in this direction would be to combine CNF and DNF,~i.e., consisting of a conjunction of a CNF and a DNF formula. This combined ``\DCNF'' is 
used 
by backtracking search algorithms for QBF, since it is able to emulate 
forms of circuit-level reasoning while enjoying 
optimized data structures~\cite{GoultiaevaSB13}.

Since the \DCNF\ is a strict generalization of the CNF, the lower bounds we established for the incidence graph of the CNF in Section~\ref{sec:main} immediately carry over. However, unlike in the CNF case, our reductions also directly rule out fixed-parameter tractability of \QBFSAT\ with respect to both the treedepth and the feedback vertex number of primal graphs for matrices in \DCNF. 

\smallskip
\noindent \textbf{3.\ Tightening the Gap on Primal Graphs.}\quad
For classical CNF matrices, our SAW reductions of Section~\ref{sec:main} almost---but not quite---settle the aforementioned complexity-theoretic gap between the treewidth and vertex cover number of the primal graph. In particular, we prove that allowing the addition of a single clause to instances with bounded treedepth or feedback vertex number in the primal graph leads to intractability. Given this development, we view settling the parameterized complexity of \QBFSAT\ with respect to these two parameters as the main open questions left in our understanding of the problem's complexity landscape.

As our last contribution, we obtain new fixed-parameter algorithms for \QBFSAT\ with the aim of tightening this gap. First, we obtain a linear kernel (and hence also a fixed-parameter algorithm) for \QBFSAT\ parameterized by the feedback edge number of the primal graph. Second, we establish the fixed-parameter tractability for the problem with respect to several relaxations of the vertex cover number that may be seen as ``stepping stones'' towards treedepth on primal graphs of CNFs.
%
A more elaborate overview of our results is provided in Figure~\ref{fig:saw} (left).

\vspace{-.4em}
\subsection{Approach and Techniques}
For establishing fine-grained lower bounds for parameters between treewidth and vertex cover
on the graph representations above, 
we utilize
the notion of structure-aware (SAW) reductions as visualized in
Figure~\ref{fig:saw} (right). 
%
%
%
We develop concrete SAW reductions that are conceptually self-reductions from \QBFSAT to \QBFSAT,
where we trade an exponential decrease of the parameters feedback vertex number or treedepth for an exponential increase
of runtime dependency on the corresponding parameter.
In order to obtain tight lower bounds that ideally match existing upper bounds (and rule out 
algorithms significantly better than the one for treewidth), one has to carefully
carry out this trade-off so that the  order of magnitude of the runtime dependency increase does not 
exceed the parameter decrease's magnitude.
%
%
More precisely, our transformations reduce from~$\QBFSAT$ using the respective parameter~$k$ and quantifier depth~$\ell$,
to~$\QBFSAT$ when parameterized by~$\log(k)$ with quantifier depth~$\ell{+}1$.
%
%
By iterating this construction (see also Figure~\ref{fig:saw} (right)), we trade an $i$-fold exponential parameter decrease (from~$k$ to~$\log^i(k)$)
for a quantifier depth increase of~$i$, which then, assuming ETH, results in a QBF that is $\ell{+}i$-fold exponential in~$\log^i(k)$ to solve.
%
%

As a consequence of our reductions, we also obtain an interesting result for classical complexity: 
It turns out that \emph{a single additional clause} is already responsible for intractability 
of $\QBFSAT$ on the well-known tractable fragment of 2-CNF formulas.
More specifically, $\QBFSAT$ on 2-CNFs plus one clause with quantifier depth~$\ell>1$ is indeed~$\SIGMA{\ell-1}{P}$-complete, see Corollary~\ref{cor:hardnesslongclause}.

Notably, the construction of our reductions also allows us to strengthen our established lower bounds to graph representations
that are \emph{purely restricted to variables of the innermost quantifier (block)}.
This is a consequence of the fact that our concrete SAW reductions are carried out such that the majority of structural dependencies
reside in the innermost quantifier block of the constructed instance.
Further, the lower bounds even hold for 
parameters covering the vertex deletion distance to (almost)
simple paths, as well as for restricted variants of treedepth.  
Both results are construction-specific consequences, but these findings are in fact significantly stronger 
than the lower bounds for feedback vertex set
and treedepth, thereby providing deeper insights into the hardness of~\QBFSAT.

%

%

Our lower bound results using SAW reductions allow us to draw a rather
comprehensive picture for the (fine-grained) parameterized complexity
of the incidence graph by strengthening previous hardness results to
much more restrictive parameters such as feedback vertex set and
treedepth. We complement these negative results for the incidence
graph by giving fpt-algorithms for \QBFSATCNF, i.e., \QBFSAT
restricted to formulas in CNF, both for vertex cover number and feedback
edge number. Our main technical contribution here is a kernelization
algorithm for feedback edge set for both the primal and incidence graph.

We then turn our attention towards solving \QBFSATCNF using structural
restrictions of the primal graph. While we have to leave open whether
\QBFSATCNF is fixed-parameter tractable parameterized by either the
feedback vertex number or the treedepth of the primal graph, we are able
to make some progress towards establishing fixed-parameter
tractability for the latter. In particular, using novel insights into
winning strategies of Hintikka games, we are able to obtain
fixed-parameter algorithms for three variants of the so-called
\emph{$c$-deletion set} parameter, which is a 
parameter between vertex cover number and treedepth. 

\fi

\ifcdnf
\newpage
**CDNF INTRO***

\bigskip

The evaluation problem for quantified Boolean formulas (\QBFSAT) is a
natural generalization of the Boolean satisfiability problem (\SAT)
and among the most important problem in theoretical computer science
with use in symbolic
reasoning~\cite{BenedettiM08,BloemEKKL14,EglyEiterTW00,OtwellRemshagenTruemper04,Rintanen99,ShuklaBPS19},
constraint satisfaction problems
(CSP)~\cite{GentNightingaleRowley08,Dechter06,Freuder85}, databases,
and logic~\cite{Grohe01}.
Input formulas in \QBFSAT consist of a \emph{quantifier prefix} and a \emph{matrix}, which can be an arbitrary Boolean formula but is often assumed to be in conjunctive normal form (CNF),~e.g., converted by the classical Tseytin transformation~\cite{Tseytin70}.
\QBFSAT is considered the archetypical representative of \PSPACE-complete problems and has been extensively studied from the perspective of classical approximation~\cite{Vazirani03}, counting~\cite{Toda91}, decision~\cite{Papadimitriou94}, and probabilistic complexity~\cite{Lautemann83}, but also through the lens of the more fine-grained parameterized complexity paradigm~\cite{PanVardi06,AtseriasOliva14a}.
This paradigm led to efficient algorithms and meta-theorems~\cite{Courcelle90,ElberfeldJakobyTantau10a}, 
for a plethora of problems in computer science and computational logic~\cite{AlekhnovichRazborov02,LoncTruszczynski03,BacchusDalmaoPitassi03,SamerSzeider10b,CyganEtAl15,DowneyFellows13,FlumGrohe06}.

For \QBFSAT, the most classical considered parameterization 
is the treewidth of the primal graph representation of the formula's matrix in CNF.
The complexity of \QBFSAT under this \emph{primal treewidth} is, in fact, well understood by now: the problem remains \PSPACE-complete when parameterized by primal treewidth alone~\cite{AtseriasOliva14a} even when restricted to decompositions which are paths, but can be solved in time$\tower(\ell, k)\cdot\poly(n)$\footnote{%
  The runtime is exponential in the treewidth~$k$ where~$k$ is on
  top of a tower of iterated exponentials of height that equals the
  quantifier depth in the formula.
} %
 $\tower(\ell, k)\cdot\poly(n)$ for formulas with $n$ variables, primal treewidth $k$, and quantifier depth $\ell$~\cite{Chen04a}. Several works have made progress towards showing that this algorithm cannot be substantially improved~\cite{PanVardi06,LampisMitsou17}, and it is now known that this is indeed the case for arbitrary quantifier depths unless the exponential time hypothesis fails (ETH)~\cite{FichteHecherPfandler20}. 

The situation for primal treewidth contrasts rather starkly with how little we know about the boundaries of tractability for \QBFSAT\ under different parameterizations of the matrix. Indeed, while several authors have proposed variants of treewidth and backdoors for \QBFSAT\ that are based not only on the formula's matrix but also on the variable dependencies arising from the prefix~\cite{EibenGO20,EibenGO18,SamerS09}, the only result known to date on the complexity of \QBFSAT under different structural measures of the matrix alone is a fixed-parameter algorithm with respect to the \emph{vertex cover number}~\cite{LampisMitsou17}. This result highlights a huge complexity-theoretic gap between the use of treewidth and vertex cover number as a parameter; and yet, there is a whole hierarchy of other structural parameters that are more restrictive than treewidth.

%
%
\newcommand{\dirCNF}{(a)\xspace}
\newcommand{\dirCDNF}{(b)\xspace}

\paragraph{Methodology and Research Question}
The aim of this article is to obtain tighter \emph{structure-aware} upper and lower bounds for \QBFSAT, i.e., bounds that exploit the structure of the matrix. However, to do so it will be useful to first revisit how the matrix has been represented in previous works.
While both fundamental normal forms for Boolean formulas---CNF and DNF---admit natural primal graph representations, transforming a DNF formula to CNF can completely alter these representations resulting in unfavorably altered structural properties.
%
For the popular parameter treewidth, this can be circumvented by
slightly modifying Tseytin's construction. 
However, it is not believed that this can be done for almost any structural parameter more restrictive than treewidth---in particular, no such parameter-preserving construction is known for,~e.g., treedepth, feedback vertex number, or vertex cover number, cf.~\cite{LampisMitsou17}.

\paragraph{Parameterizations of the Matrix}
Above, we mentioned that structure-awareness depends on the representation of the matrix.
It is a standard assumption that the matrix of a QBF is given in CNF, since any circuit can be translated to CNF in linear time.
Moreover, \QBFSAT over arbitrary formulas and \QBFSAT over CNFs essentially have the same time complexity~\cite{SanthanamW15}.
This is in contrast with empirical evidence showing that normal form transformations adversely affect the performance of \QBFSAT solvers~\cite{AnsoteguiGS05}.
As a consequence, solvers now typically support an input format where the matrix of a QBF can be an arbitrary circuit~\cite{JordanKlieberSeidl16}, and there are dedicated tools for recovering the original circuit from a CNF encoding~\cite{GoultiaevaB13}.
%
Therefore, we submit that, when considering the parameter hierarchy between treewidth and vertex
cover number, the encoding of the matrix must similarly be considered
~\cite{LampisMitsou17}.
\todo{R: again, not clear what this means... vcn and tw have up to now been defined on CNF only, at least for QSAT. Finished here for now. J: removed it, the previous sentences should be enough}
In view of this, there are two natural directions for researching the fine-grained complexity of the fundamental \QBFSAT problem: apart from \dirCNF completing our understanding of the complexity-theoretic boundaries for \QBFSAT\ by using structural parameterizations of matrices in the CNF normal form, one should also \dirCDNF focus on more general
normal forms to match progress in practical solving.
%
%
We provide novel insights and contributions into both directions.

\todo{J: We could replace (a) and (b) by (CNF) and (CDNF), might look better.}

\medskip
\noindent \textbf{Contributions.}

\smallskip
\noindent\textit{Direction~\dirCDNF:}
%
We go beyond CNF by allowing the matrix to consist of a CNF and a DNF, a combination we call \emph{\DCNF}. 
%
%
This CDNF form admits a natural primal graph representation that combines the representations typically used for CNF and DNF formulas.
%
In this setting, lower bounds do not carry over to CNF, but provide valuable insights
and open up research towards investigations between normal forms
similar to knowledge compilation for reasoning problems in the realm
of propositional satisfiability~\cite{DarwicheMarquis02}.
We note that \DCNF is used internally by backtracking search algorithms for QBF, since it is able to emulate certain forms of circuit-level reasoning while enjoying highly optimized data structures~\cite{GoultiaevaSB13}.
%
%
%
First, we show that the previous algorithm for \QBFSAT parameterized by
treewidth and quantifier depth immediately carries over to \DCNF.
Also for the parameter vertex cover number, we extend the previously
known algorithm from \QBFSAT on CNFs to \DCNF{}s.
Then, for a fine-grained  study of parameters between treewidth and vertex cover, 
we develop a notion of
\emph{structure-aware} (SAW) reductions for \QBFSAT.
These reductions serve as a tool to precisely demonstrate functional
dependencies between parameters of the input instance and the reduced
instance.
Utilizing SAW reductions by self-reducing from \QBFSAT to \QBFSAT, we
show that one cannot solve \QBFSAT on \DCNF{}s faster than in
$\tower(\ell, p)\cdot\poly(n)$ time, where $p$ is the \emph{feedback
  vertex number}\footnote{The vertex deletion distance to acyclicity,
  i.e., the smallest number of vertices that when removed from the
  graph yield an acyclic graph.}  or
\emph{treedepth}\footnote{\todo{Add intuitive
    description.}}~\cite{sparsity} of the primal graph.
Even for the special case of 3-CNFs and 1-DNFs,~i.e.,
clauses of size at most~3 and a disjunction of singleton terms, our
obtained results remain true.
Even further, we strengthen this lower bound to feedback vertex sets
restricted to variables of the innermost quantifier (block) as well as
parameters covering the vertex deletion distance to restricted (almost
simple) paths.  Under the \emph{exponential time hypothesis
  (ETH)}~\cite{ImpagliazzoPaturiZane01}, these lower bounds sustain
and we cannot avoid $\ell$-fold exponentiality.
In summary for Direction~\dirCDNF, we obtain a full picture of \QBFSAT on
formulas in \DCNF{} with common parameterizations between treewidth
and vertex cover.

\smallskip
\noindent\textit{Direction~\dirCNF:}
To circumvent the lower bounds obtained in Direction~\dirCDNF, we turn our
attention towards understanding the boundaries of fixed-parameter
tractability for \QBFSAT\ on CNFs.
%
As our first result in this direction, we show that the previous algorithm based on the vertex cover number can be extended from \QBFSAT on CNFs to \DCNF{}s. For the more restrictive formulas in CNF, we also provide a new fixed-parameter algorithm with respect to the \emph{feedback edge set} parameter\footnote{The edge deletion distance to acyclicity.} that has recently been gaining prominence for problems that remain intractable when parameterized by treewidth~\cite{BredereckHKN19,GanianO21,BentertHHKN20}.
In the final part of our paper, we turn to the question of whether the aforementioned fixed-parameter algorithm for \QBFSAT\ parameterized by the vertex cover number can be generalized to the significantly more general parameter of \emph{treedepth}~\cite{sparsity}. We identify this as an important missing link in our understanding of \QBFSAT, and make the first steps towards resolving this question for \QBFSAT on CNFs by establishing fixed-parameter tractability under several parameterizations that lie between vertex cover number and treedepth.

\todo{Alternative Text for Contributions Direction (ii):
To circumvent the lower bounds obtained in Direction~(i), we turn our
attention towards understanding the boundaries of fixed-parameter
tractability for \QBFSAT\ on CNFs.
Thereby, we focus on parameters between treewidth and vertex cover
that are strong enough for tractability results despite unbounded
quantifier depth.
We provide a new fixed-parameter algorithm with respect to the
parameter \emph{feedback edge number}\footnote{The edge deletion
  distance to acyclicity.} that has recently gained
prominence~\cite{BentertHHKN20,addmore}.

We then turn towards closing the gap between treewidth (for which there is an ETH-tight lower bound for \QBFSAT) and the vertex cover number (which yields a fixed-parameter algorithm for the problem). The most prominent structural parameter between these two is treedepth, and we consider the complexity of \QBFSAT parameterized by treedepth to be a major open question in the field\textcolor{red}{; indeed, the techniques developed for obtaining the lower bounds in Direction~(i) do not carry over to the CNF setting}. We make the first contributions towards settling this question by providing algorithms for \QBFSAT that use parameters which form natural restrictions of treedepth in the CNF setting.

In the final part of our paper, we ask 
whether one is also able to avoid the $\ell$-fold exponentiality in
the quantifier depth for parameters stronger than treedepth.
Thereby, we identify important missing links in our understanding of
\QBFSAT and make first steps towards resolving this question.

}
\fi
\newcommand{\ubound}[1]{\ensuremath{\blacktriangle^{#1}}}
\newcommand{\lbound}[1]{\ensuremath{\blacktriangledown^{#1}}}

\newcommand{\ilbound}[1]{\ensuremath{{\blacktriangledown}^{#1}}}

%
\newcommand{\kubound}[1]{\ensuremath{{\triangle}^{#1}}}
\newcommand{\klbound}[1]{\ensuremath{\triangledown^{#1}}}

\begin{figure*}
\centering\vspace{-.25em}
\begin{minipage}{0.7\textwidth}\centering
\resizebox{1\textwidth}{!}{%
\begin{tabular}{cccccl}
  \toprule
  Prefix$^A$ & Matrix & Matrix-NF & Graph & Complexity & Ref \\ 
  \midrule
         &  & CNF &   & PSPACE &   \cite{StockmeyerMeyer73} \\
  qd $\ell$& & CNF &  & $\SIGMA{{\ell}}{P}$ / $\PI{{\ell}}{P}$ &  \cite{StockmeyerMeyer73}\\

  qd $\ell$ &  & 2,1-CDNF & & $\SIGMA{{\ell-1}}{P}$ / $\PI{{\ell-1}}{P}$ & Cor~\ref{cor:hardnesslongclause}\\
  \midrule
         & tw & CNF &  P & PSPACE &  \cite{PanVardi06,AtseriasOliva14a} \\
  qd & tw  & CNF &  P, I & \kubound{\dagger} / \klbound{\ddagger} & Prop~\ref{ref:ubo}~\cite{Chen04a},\cite{CapelliMengel19} / \cite{PanVardi06}, Prop~\ref{ref:pwlb}~\cite{FichteHecherPfandler20} \\
  qd & tw & m,1-CDNF$^B$ & P & \ubound{\dagger} / \klbound{\ddagger} 
                          & Thm~\ref{ref:ub} / Prop~\ref{ref:pwlb}~\cite{FichteHecherPfandler20}\\
  \midrule
  qd & fvs  & m,1-CDNF$^B$ & P & \ubound{\dagger} / \lbound{\ddagger} & Cor~\ref{ref:ubs} /~Thm~\ref{lab:lb}\\
  qd & fvs  & CNF & I & \ubound{\dagger}, \lbound{\ddagger} & Cor~\ref{ref:ubs} / Cor~\ref{lab:inc-lb}\\
  qd & td  & m,1-CDNF$^B$ & P &  \ubound{\dagger} / \ilbound{\sim\ddagger} & Cor~\ref{ref:ubs} / Thm~\ref{lab:lb_pdp}, Cor~\ref{lab:lb_tdp}\\
  qd & td  & CNF & I & \ubound{\dagger} / \ilbound{\sim\ddagger} & Cor~\ref{ref:ubs} / Cor~\ref{lab:inc-tdp}\\
  \midrule
             & fes & CNF & P & $\blacksquare$ & Thm~\ref{thm:kernel}\\
             & fes & CNF & I & $\blacksquare$ & Thm~\ref{thm:kernel-inc}\\
         & dels & CNF & P & $\blacksquare^{C}$ & Thm~\ref{the:fpt:cdels}, Thm~\ref{thm:1cdeletionset}, Thm~\ref{thm:qbfcnfunifptex}\\
  \midrule

         & vc & m,1-CDNF$^B$ & P & $\ubound{\dagger_2}$ / $\klbound{\ddagger_2}$ 
                                                     & Thm~\ref{thm:cnfvcn} / Prop~\ref{prop:vco}~\cite{LampisMitsou17}\\
         & vc & CNF & I & $\ubound{\dagger_2}$ 
                                                 & Thm~\ref{thm:incvco} \\ 
  \bottomrule
\end{tabular}
}
\end{minipage}%
\begin{minipage}{.32\textwidth} 
\resizebox{1\textwidth}{!}{
\begin{tikzpicture}
  \draw[->] (0, 0) -- (5, 0) node[right] {\textbf{k}};
  \draw[->] (0, 0) -- (0, 5) node[above] {$\mathbf{f(\textbf{k})}$};
  \node at (1,-0.5) {$x$};
  \node at (2,-0.5) {$2^x$};
  \node at (3,-0.5) {$2^{2^x}$};
  \node at (4,-0.5) {$2^{2^{2^x}}$};
  \node at (5,-0.5) {$\ldots$};
  \node at (2.65,-1.1) {Structural \textbf{Parameter Size}};

  \draw[] (1, 0.1) -- (1, -0.1);
  \draw[] (2, 0.1) -- (2, -0.1);
  \draw[] (3, 0.1) -- (3, -0.1);
  \draw[] (4, 0.1) -- (4, -0.1);
  %
  \node at (-0.5,1) {$x$};
  \node at (-0.5,2) {$2^x$};
  \node at (-0.5,3) {$2^{2^x}$};
  \node   at (-0.5,4) {$2^{2^{2^x}}$};
  \node [rotate=90]  at (-0.5,5) {$\ldots$};
  \node [rotate=90] at (-1.1,2.45) {\textbf{Runtime} Depending on Parameter};
  \draw[] (0.1, 1) -- (-0.1,1);
  \draw[] (0.1, 2) -- (-0.1,2);
  \draw[] (0.1, 3) -- (-0.1,3);
  \draw[] (0.1, 4) -- (-0.1,4);
 %

  %
  \draw[<-,thick,red] (2, 3) -- (4, 1) node[right] {SAW};
  \draw[<-,thick,red,dashed] (1, 4) -- (2, 3) node[right] {};
\end{tikzpicture}
}
\end{minipage}
\vspace{-.5em}
\caption{(Left):
  Runtime bounds for~\QBFSAT on a QBF~$Q$ when parameterized by
  parameters listed in prefix and matrix.  
  The triangles \ubound{} refer to established precise upper bounds
  and \lbound{} to precise lower bounds. By \klbound{} and \kubound{}
  we refer to previously known precise upper and lower bounds.
  The box $\blacksquare$ illustrates new fixed-parameter tractability results. 
  Graph parameters are applied to either the primal (P) or incidence (I) graph; the parameters are: 
  ``qd'' refers to the \emph{quantifier depth}; %
  ``fvs'' indicates the \emph{feedback vertex number}; %
  ``td'' indicates the \emph{treedepth}; %
  ``fes''refers to the \emph{feedback edge number}; and %
  ``dels'' refers to the \emph{size+$c$} for a \emph{$c$-deletion
    set}. %
  %
  %
  The runtime bounds are abbreviated by the marks where $\ell$ refers to the prefix 
  and $k$ to the parameterization of the matrix. Detailed results:
  $^\dagger$:~$\tower(\ell, O(k))\cdot\poly(\Card{\var(Q)})$; 
  $^\ddagger$: $\tower(\ell, o(k))\cdot\poly(\Card{\var(Q)})$; 
  $^{\sim\ddagger}$: $\tower(\ell, o(k-\ell))\cdot\poly(\Card{\var(Q)})$ 
  %
  $^{\dagger_2}$: $2^{2^{\mathcal{O}(k)}}\cdot\poly(\Card{\var(Q)})$, for constant~$m$: $2^{k^{\mathcal{O}(m)}}\cdot\poly(\Card{\var(Q)})$;
  $^{\ddagger_2}$: $2^{2^{o(k)}}\cdot\poly(\Card{\var(Q)})$; 
  %
  %
  and 
  $^A$: $\exists$ odd/$\forall$ even.
  $^B$: the lower bound already holds for 3,1-CDNF; and
  $^C$:
  fpt under restrictions, parameterized by~$|D|{+}c$ for a $c$-deletion set $D$. 
  (Right): Structure-Aware reductions for compensating 
  exponential parameter decrease via runtime dependency.\vspace{-.6em}}\label{fig:saw}
\end{figure*}

%

\section{Preliminaries}

We assume basics from graph theory, cf.~\cite{Diestel12,BondyMurty08}. A graph~$G=(V,E)$ is
a \emph{subgraph} of~$G'{=}(V',E')$ if $V\subseteq V'$, $E\subseteq E'$. A \emph{(connected) component} of a graph is a largest connected subgraph. A graph is 
\emph{acyclic} if no subgraph forms a cycle. 
%
%
For a graph~$G=(V,E)$ and a set~$S\subseteq V$ ($D\subseteq E$) of vertices (edges),
we define the \emph{subtraction graph} obtained from~$G$ 
by~$G-S\eqdef (V\setminus S, \{e\mid e\in E, e \cap S=\emptyset\})$ (by $G-D\eqdef (V, E\setminus D)$).
%
%
Further, the \emph{union} of given graphs~$G_1=(V_1, E_1)$ and~$G_2=(V_2, E_2)$ is given by~$G_1\sqcup G_2\eqdef (V_1\cup V_2, E_1 \cup E_2)$.
%
  Expression $\tower(\ell,k)$ is a tower of exponentials of
  $2$ of height~$\ell$ with $k$ on top. 

\paragraph*{Computational Complexity}  %
We give a brief background on parameterized complexity~
\cite{FlumGrohe06,Niedermeier06}.
%
%
Let $\Sigma$ and $\Sigma'$ be two finite non-empty alphabets.
A \emph{parameterized problem} $L$ is a subset of
$\Sigma^* \times \Nat$ for some finite alphabet $\Sigma$. 
%
  $L$ is \emph{fixed-parameter tractable (fpt)} if there exists a computable
  function $f$ and 
an algorithm
  deciding whether $(\mathcal{I},k)\in L$ in
  \emph{fpt-time} $\bigO(f(k)\poly(\CCard{\mathcal{I}}))$, where $\CCard{\mathcal{I}}$ is the 
  size of~$\mathcal{I}$.
Let $L \subseteq \Sigma^* \times \Nat$ and
$L' \subseteq \Sigma'^*\times \Nat$ be two parameterized problems.
A \emph{polynomial-time parameterized-reduction}~$r$,
\emph{pp-reduction} for short, from $L$ to $L'$ is a many-to-one
reduction from $\Sigma^*\times \Nat$ to $\Sigma'^*\times \Nat$ such
that 
$(\mathcal{I},k) \in L$ if and only if
$r(\mathcal{I},k)=(\mathcal{I}',k')\in L'$ with $k' \leq p(k)$
for a fixed computable function $p: \Nat \rightarrow \Nat$ 
and $r$ is computable in time
$\bigO(
\poly(\CCard{\mathcal{I}}))$. 
%
Parameter values are usually 
computed
based on a structural property~$K$ of the instance, 
called \emph{parameter},~e.g., size of a smallest feedback vertex set,
or treewidth. 
%
%
Usually for algorithms we need 
structural representations instead of parameter values, i.e., a
feedback vertex set or tree decomposition.  Therefore,~we~let $\Gamma$ be a 
finite non-empty alphabet and call
$S \in \Gamma^*$ a \emph{structural representation}~of~$\mathcal{I}$.
Then, a \emph{parameterization~$\kappa$} for parameter~$K$ is a
mapping $\kappa: \Gamma^* \rightarrow \NAT$ computing 
$k=\kappa(S)$ in polynomial~time. 
%
%
%
%
%

\paragraph*{Quantified Boolean Formulas (QBFs)}
\emph{Boolean formulas} 
are defined in the usual
way~\cite{BiereEtAl21,KleineBuningLettman99}; \emph{literals} are variables or their negations. We let the \emph{sign of a literal~$l$} be defined by~$\sgn(l)\eqdef 1$ if~$l$ is a variable and~$\sgn(l)\eqdef 0$ otherwise.  For a
Boolean formula~$F$, we denote by~$\var(F)$ the set of variables
of~$F$. Logical operators~$\rightarrow, \wedge, \vee, \neg$ refer to implication, conjunction, disjunction, and negation, respectively, 
as in the usual meaning.
%
  A \emph{term} or \emph{clause} is a set~$S$ of literals; interpreted as a conjunction or disjunction of literals, respectively. 

  We denote by~$\var(S)$ the set of variables appearing in~$S$; without loss of generality we assume 
$\Card{S}=\Card{\var(S)}$.
  A Boolean formula $F$ is in \emph{conjunctive normal form
    (CNF)} if $F$ is a conjunction of clauses and $F$ is in
  \emph{disjunctive normal form (DNF)} if $F$ is a disjunction of
  terms.
  In both cases, we identify $F$ by its set of clauses or terms, respectively.
  %
  %
  A Boolean formula is in \emph{$d$-CNF} or \emph{$d$-DNF} if each set in~$F$
  consists of at most $d$ many literals.
  %
%
%
  Let $\ell\geq 0$ be integer. A \emph{quantified Boolean
    formula}~$\Q$ 
is of the form
  $\mathcal{Q}. F$ for \emph{prefix}~$\mathcal{Q}=Q_{1} V_1.  Q_2 V_2.\cdots Q_\ell V_\ell$, where
  \emph{quantifier $Q_i \in \{\forall, \exists\}$} for $1 \leq i \leq \ell$ and
  $Q_j \neq Q_{j+1}$ for $1 \leq j \leq \ell-1$; and where $V_i$ are
  disjoint, non-empty sets of Boolean variables with
  $\var(Q)\eqdef \var(F)=\bigcup^\ell_{i=1}V_i$; and $F$ is a Boolean
  formula. If~$F$ is in ($c$-)CNF, $Q$ is a 
\emph{($c$-)$\QBFCNF$}. 
%
%
  %
  We call $\ell$ the \emph{quantifier depth} of~$Q$ and let
  $\matr(\Q)\eqdef F$.
Further, we denote the variables of~$Q$ by $\var(Q)\eqdef \var(F)$
and the existential (universal) variables by $\vare(Q)$ ($\varu(Q)$), 
defined by~$\vare(Q)\eqdef\bigcup_{1\leq i\leq \ell, Q_i=\exists} V_i$ ($\varu(Q)\eqdef\bigcup_{1\leq i\leq \ell, Q_i=\forall} V_i$), respectively. 
%
%
%
%
%
%
  %
  %
  
  %
%
  An \emph{assignment} is a mapping~$\alpha: X \rightarrow \{0,1\}$
  from a set~$X$ of variables. 
%
%
Given a Boolean formula~$F$ and an assignment~$\alpha$ for~$\var(F)$.
Then, for~$F$ in CNF, $F[\alpha]$ is a Boolean formula obtained by removing every~$c\in F$
with~$x\in c$ and $\neg x\in c$ if 
$\alpha(x)=1$ and $\alpha(x)=0$, respectively,
and by removing from every remaining clause~$c\in F$ literals~$x$ and $\neg x$
with~$\alpha(x)=0$ and $\alpha(x)=1$, respectively.
Analogously, for~$F$ in DNF values $0$ and~$1$ are swapped. 
More generally, for a QBF $\Q$ whose matrix consists of a conjunction of a CNF formula $F_1$ and a DNF formula $F_2$, and an assignment $\alpha$ for $X\subseteq \var(\Q)$, we define $\Q[\alpha]$ as the QBF obtained from $\Q$ by removing variables assigned by $\alpha$ from the prefix, replacing $F_1$ with $F_1[\alpha]$ and $F_2$ with $F_2[\alpha]$.

%
%
%
%
  For a given QBF~$\Q$ and an assignment~$\alpha$, $\Q[\alpha]$ is a
  QBF obtained from~$\Q$, where variables~$x$ mapped by~$\alpha$ are removed from preceding
  quantifiers accordingly, and~$\matr(\Q[\alpha])= (\matr(\Q))[\alpha]$.
  %
  %

A Boolean formula~$F$ \emph{evaluates to true} (or \emph{is satisfied})
if there exists an assignment~$\alpha$ for~$\var(F)$
such that~$F[\alpha]=\emptyset$ if~$F$ is in CNF or~$F[\alpha]=\{\emptyset\}$
if~$F$ is in DNF. We say that then~$\alpha$ \emph{satisfies}~$F$ or~$\alpha$ is a \emph{satisfying assignment} of~$F$. 
  A QBF~$\Q$ \emph{evaluates to true (or is valid)} if~$\ell=0$
  and $\matr(Q)$ evaluates to true under the empty assignment.
Otherwise,
  i.e., if~$\ell \neq 0$, we distinguish according to~$Q_1$.
  If~$Q_1=\exists$, then~$\Q$ evaluates to true if and only if there
  exists an assignment~$\alpha: V_1\rightarrow \{0,1\}$ such
  that~$\Q[\alpha]$ evaluates to true.  If~$Q_1=\forall$,
  then~$\Q$ evaluates to true if for every
  assignment~$\alpha: V_1 \rightarrow\{0,1\}$, we have that $\Q[\alpha]$ evaluates to
  true.  
%
We say that two QBFs are \emph{equivalent} if one evaluates to true 
whenever the other does.
Given a (C)QBF~$\Q$, the \emph{evaluation problem~\QBFSAT (\QBFSATCNF)} of QBFs asks
  whether $\Q$ evaluates to true. 
Then, $\SAT$ 
is 
$\QBFSAT$, but restricted
to one $\exists$ quantifier.
  In general,~\QBFSAT is \PSPACE-complete
~\cite{KleineBuningLettman99,Papadimitriou94,StockmeyerMeyer73}.
  
%
%

\begin{EX}\label{ex:running1}
  Consider CQBF~$\Q=\forall a,b. \exists c,d. C$,
  where~$C\eqdef \{c_1, c_2, c_3, c_4\}$ is a conjunction of clauses,
  with~$c_1\eqdef \neg a \vee \neg b \vee c$,
  $c_2\eqdef a \vee b \vee c$,
  $c_3\eqdef \neg a \vee \neg c \vee d$, and
  $c_4\eqdef a \vee \neg c \vee \neg d$.
%
  %
  Note that~$\Q[\alpha]$ is valid under any~$\alpha: \{a,b\} \rightarrow \{0,1\}$,
  %
which can be shown by giving 
an assignment~$\beta: \{c,d\}\rightarrow \{0,1\}$
for an arbitrary~$\alpha$. Concretely, let~$\beta(c) \eqdef 1$ whenever~$\alpha(a)=\alpha(b)$
and~$\beta(c)\eqdef 0$ otherwise. Further, $\beta(d)\eqdef 1$ whenever~$\alpha(a)=\alpha(b)=1$ and~$\beta(d)\eqdef 0$ otherwise.
Indeed, for any such~$\alpha$, we have that~$C[\alpha][\beta]=\emptyset$ and~$D[\alpha][\beta]=\{\emptyset\}$.
Consider, e.g.,~$\alpha=\{a\mapsto 0, b\mapsto 1\}$, satisfying~$c_1, c_2$ and~$c_3$; then $c_4$ is satisfied by~$\beta$. 
%
\end{EX}

\paragraph*{Extended Normalizations of the Matrix---Formulas in CDNF}
Our investigations also consider
%
%
a natural and
more general \emph{conjunctive/disjunctive normal form (CDNF)}
for QBFs. A QBF~$Q$, whose \emph{innermost} quantifier is~$Q_\ell$, is \emph{in CDNF}, 
whenever for a CNF~$C$ and DNF~$D$, we have $\matr(Q)=C{\wedge} D$ if~$Q_\ell{=}\exists$, 
and~$\matr(Q)=D {\vee} C$ if~$Q_\ell{=}\forall$.
Naturally, we say that $Q$ is in~\emph{$d$-CDNF} if~$C$ is in $d$-CNF and~$D$ is in $d$-DNF.
Further, $Q$ is in \emph{$d$,1-CDNF} if
$C$ is in $d$-CNF and~$D$ is in 1-DNF (i.e., $D$ is a long clause interpreted as a disjunction of singleton terms). 
Then, the problem $\QBFSAT_\ell$ refers to~\QBFSAT when restricted to QBFs in CDNF and quantifier depth~$\ell$. 

\paragraph*{Graph Representations}
In order to apply graph parameters to (Quantified) Boolean formulas, we need a graph representation. For a Boolean formula~$F$ in CNF or DNF we define the \emph{primal graph}~$G_F{\eqdef}(\var(F), E)$~\cite{SamerSzeider10b} over the variables of~$F$, where two variables are adjoined by an edge, whenever they appear together in at least one clause or term of~$F$, i.e.,~$E\eqdef \{\{x,y\}\mid f\in F, \{x,y\}\subseteq\var(f), x{\neq} y\}$.
The \emph{incidence graph}~$I_F{\eqdef}(\var(F) \cup F, E')$ of~$F$ is over the variables and clauses (or terms) of~$F$ and~$E'\eqdef \{\{f,x\} \mid f\in F, x\in \var(f)\}$.
For a QBF~$Q$ in CDNF
with~$\matr(Q)=C{\wedge}D$ or~$\matr(Q)=D{\vee}C$, 
respectively,
let the \emph{primal graph} of~$Q$ be $G_Q\eqdef$ $G_C \sqcup G_D$
and the \emph{incidence graph} of~$Q$ be~$I_Q\eqdef I_C \sqcup I_D$. 

\begin{EX}\label{ex:running3}
Recall $Q$ and~$C=\matr(Q)$ from Example~\ref{ex:running1}; observe primal and incidence graphs $G_Q$, $I_Q$ in Figure~\ref{fig:primal} (left,middle).
%
Assume a QBF~$Q'$ in CDNF obtained from~$Q$,
where~$\matr(Q')\eqdef C\wedge D$ 
with $D$ being a disjunction of (singleton) terms,~i.e.,~$D\eqdef \{\{b\}, \{\neg d\}\}$. 
%
%
Note that by definition the (1-)DNF formula~$D$ \emph{does not cause an additional edge} in the primal graph~$G_{Q'}$, i.e., the graph is equivalent to the primal graph~$G_C$ \emph{without~$D$}. 
So, $G_Q$ coincides with primal graph~$G_{Q'}$.
\end{EX}

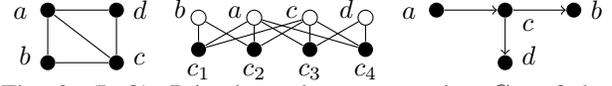
\begin{figure}[t]
\vspace{-.75em}
\centering
	\begin{tikzpicture}[node distance=7mm,every node/.style={circle,inner sep=2pt}]
\node (b) [fill,label={[text height=1.5ex]left:$a$}] {};
\node (a) [fill,right=of b,label={[text height=1.5ex,yshift=0.0cm,xshift=-0.05cm]right:$d$}] {};
\node (c) [fill,below of=b,label={[text height=1.5ex,yshift=0.09cm,xshift=0.05cm]left:$b$}] {};
\node (d) [fill,below of=a,label={[text height=1.5ex,yshift=0.09cm,xshift=-0.05cm]right:$c$}] {};
\draw (a) to (b);
\draw (a) to (d);
\draw (b) to (c);
\draw (b) to (d);
\draw (c) to (d);
\node (b2) [fill,right=11.5em of a,label={[text height=1.5ex]left:$a$}] {};
\node (a2) [fill,right=of b2,label={[text height=1.5ex,yshift=-0.15cm,xshift=-0.05cm]right:$c$}] {};
\node (c2) [fill,right=of a2,label={[text height=1.5ex,yshift=0.00cm,xshift=-0.05cm]right:$b$}] {};
\node (d2) [fill,below of=a2,label={[text height=1.5ex,yshift=0.09cm,xshift=-0.05cm]right:$d$}] {};
\draw[->] (b2) to (a2);
\draw [->] (a2) to  (c2);
\draw [->] (a2) to (d2);
%
%
\node (b) [right=2.5em of d,yshift=.5em,fill,label={[text height=1.5ex,yshift=0.5em]below:$c_1$}] {};
\node (a) [fill,right=1.5em of b,label={[text height=1.5ex,yshift=0.5em,xshift=-0.00cm]below:$c_2$}] {};
\node (d2) [fill,right=1.5em of a,label={[text height=1.5ex,yshift=0.5em,xshift=-0.00cm]below:$c_3$}] {};
\node (c) [fill,right=1.5em of d2,label={[text height=1.5ex,yshift=0.5em,xshift=-0.00cm]below:$c_4$}] {};
\node (vb) [draw=black,fill=white,above=.6em of b,xshift=-0em,yshift=.0em,fill,label={[text height=1.5ex,yshift=0.35em,xshift=.3em]left:$b$}] {};
\node (va) [draw=black,fill=white,right=1.5em of vb,yshift=.0em,fill,label={[text height=1.5ex,yshift=0.35em,xshift=.3em]left:$a$}] {};
\node (vc) [draw=black,fill=white,right=1.5em of va,yshift=.0em,fill,label={[text height=1.5ex,yshift=0.35em,xshift=.3em]left:$c$}] {};
\node (vd) [draw=black,fill=white,right=1.5em of vc,yshift=.0em,fill,label={[text height=1.5ex,yshift=0.35em,xshift=.3em]left:$d$}] {};
\draw (b) to (vb);
\draw (b) to (va);
\draw (b) to (vc);
\draw (a) to (vb);
\draw (a) to (va);
\draw (a) to (vc);
\draw (d2) to (vd);
\draw (d2) to (va);
\draw (d2) to (vc);
\draw (c) to (vd);
\draw (c) to (va);
\draw (c) to (vc);
%
\end{tikzpicture}\vspace{-1em}
\caption{(Left): Primal graph representation~$G_Q$ of the QBF~$Q$ of Example~\ref{ex:running1}. (Middle):  Incidence graph~$I_Q$ of QBF~$Q$. (Right): Treedepth decomposition of~$G_Q$.
}\label{fig:graphs}\label{fig:primal}
\end{figure}



\paragraph*{Treewidth and Pathwidth}
%

Let $G=(V,E)$ be a graph.
A \emph{tree decomposition (TD)}~\cite{RobertsonSeymour83,RobertsonSeymour91} of graph~$G$ is a pair
$\TTT=(T,\chi)$ where $T$ is a tree,
and $\chi$ is a mapping that assigns to each node $t$ of~$T$ a set
$\chi(t)\subseteq V$, called a \emph{bag}, such that the following
conditions hold:
(i) $V=\bigcup_{t\text{ of }T}\chi(t)$ and
$E \subseteq\bigcup_{t\text{ of }T}\SB \{u,v\} \SM u,v\in \chi(t)\SE$; and (ii)
for each \FIXCAM{$q, s, t$,} such that $s$ lies on the path from $q$ to
$t$, we have $\chi(q) \cap \chi(t) \subseteq \chi(s)$.
Then, $\width(\TTT) \eqdef \max_{t\text{ of }T}\Card{\chi(t)}-1$.  The
\emph{treewidth} $\tw{G}$ of $G$ is the minimum $\width({\TTT})$ over
all TDs $\TTT$ of $G$.  %
%
%
%
For \QBFSAT, the following tractability result is known. 

\begin{PROP}[Treewidth UB~\cite{Chen04a}]\label{ref:ubo}
Given any CQBF $Q$ of quantifier depth~$\ell$ with~$k{=}\tw{G_Q}$. 
$\QBFSAT$ on~$Q$ can be decided in time~$\tower(\ell, \mathcal{O}(k))\cdot\poly(\Card{\var(Q)})$.
\end{PROP}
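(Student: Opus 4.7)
The plan is to prove the claim via dynamic programming on a nice tree decomposition of the primal graph $G_Q$. First, I would invoke a standard treewidth approximation to compute, in fpt time, a tree decomposition $\TTT=(T,\chi)$ of $G_Q$ of width $\mathcal{O}(k)$, and normalize it into a nice tree decomposition (leaf, introduce-variable, forget-variable, and join nodes) of size polynomial in $\Card{\var(Q)}$. Since every clause $c \in \matr(Q)$ induces a clique on $\var(c)$ in $G_Q$, each clause is fully contained in some bag and can be assigned uniquely to the forget node where its last variable is forgotten.

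The heart of the proof lies in the table defined at each bag. Without loss of generality assume $Q_\ell = \exists$; the other case is symmetric. A partial assignment $\alpha: \chi(t) \to \{0,1\}$ represents the decisions already committed for the variables currently in the bag. The table $\tau_t$ at node $t$ is defined inductively along the quantifier blocks of the prefix. At the innermost $\exists$-block, $\tau_t$ stores, for each $\alpha$, whether there is an extension on the forgotten innermost-block variables satisfying all clauses assigned to the subtree rooted at $t$. For each further alternation encountered as we move outward, the table gains one additional level of ``set of'': for a $\forall$-block it collects the families of inner-block partial assignments against which $\exists$ still wins, and for the next $\exists$-block it collects the families of such families, and so on. After $\ell$ alternations the size of $\tau_t$ is therefore bounded by $\tower(\ell, \mathcal{O}(k))$.

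With $\tau_t$ in hand, each node type admits a local update in time polynomial in the table size: introducing a variable duplicates entries over its two truth values; forgetting an $\exists$-variable (resp.\ $\forall$-variable) takes the union (resp.\ intersection) over its two values at the appropriate level and then resolves the clauses assigned to that forget node against the current assignment; join nodes combine the two children's tables by the natural product operation on the relevant inner level. Processing $T$ bottom-up and reading off $\tau_r$ at the root $r$ decides $\QBFSAT$ on $Q$. The total time is the number of nodes times the per-node cost, which is $\tower(\ell, \mathcal{O}(k)) \cdot \poly(\Card{\var(Q)})$.

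The main obstacle is pinning down the table so that it is simultaneously \emph{complete} (it retains enough information from the forgotten subtree to correctly decide the alternating game at every outer block) and \emph{tight} (each alternation multiplies the exponent by only a constant, so the $\ell$-fold tower is respected). Once this invariant is formalized, the transitions at the nice-decomposition nodes and the final runtime bound follow by routine case analysis.
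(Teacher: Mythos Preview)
The paper does not actually prove this proposition: it is stated as a known result with a citation to Chen~\cite{Chen04a} and no argument is given. Your proposal is a faithful reconstruction of the standard dynamic-programming approach over a nice tree decomposition that underlies that cited result, and as a proof sketch it is sound. So there is nothing to compare against in the paper itself; your outline simply fills in what the paper takes for granted.
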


However, it is not expected that one can significantly improve this, since already for the weaker pathwidth there are limits. 
The \emph{pathwidth} $\pw{G}$ of graph $G$ is the minimum width over
all TDs of $G$ whose trees are~paths.

\begin{PROP}[LB for Pathwidth~\cite{FichteHecherPfandler20}]\label{ref:pwlb}
Given any CQBF~$Q$ of quantifier depth~$\ell$ with~$k=\pw{G_Q}$. Then, under ETH, $\QBFSAT$ on~$Q$ cannot be decided in time~$\tower(\ell, o(k))\cdot\poly(\Card{\var(Q)})$.
\end{PROP}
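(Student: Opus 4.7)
The plan is to prove the lower bound by a self-reduction of \QBFSAT\ that trades a logarithmic decrease in pathwidth for exactly one additional quantifier alternation, iterated $\ell{-}1$ times starting from the ETH-tight hardness of $3$-$\SAT$. Suppose for contradiction that for some $\ell\geq 1$ there is an algorithm solving \QBFSAT\ on every CQBF of quantifier depth $\ell$ and primal pathwidth $k$ in time $\tower(\ell,f(k))\cdot\poly(\Card{\var(Q)})$ for some $f(k)\in o(k)$. Given a $3$-$\SAT$ instance $F$ on $n$ variables, I would construct CQBFs $Q_1,\dots,Q_\ell$, all equivalent to $F$, where $Q_i$ is a CQBF of quantifier depth $i$ whose primal pathwidth $k_i$ satisfies $\tower(i{-}1,k_i)=\Theta(n)$. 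Applying the hypothetical algorithm to $Q_\ell$ would then decide $F$ in time $\tower(\ell,f(k_\ell))\cdot\poly(n)=2^{o(n)}\cdot\poly(n)$, contradicting ETH.

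The base case $\ell{=}1$ is immediate, since the primal graph of a $3$-CNF on $n$ variables admits a path decomposition of width $O(n)$. For the inductive step, given $Q_i$ whose innermost quantifier block $V$ has size $n_i$, the plan is to replace $V$ by a fresh $O(\log n_i)$-bit block $\vec z$ of the opposite polarity (acting as a pointer into $V$) and a tiny innermost block $v$ witnessing the truth value of the variable $x_{\vec z}$ indexed by $\vec z$. The matrix is laid out along a path in which clauses of $\matr(Q_i)$ are processed sequentially: for each clause I would introduce a constant number of local variables together with comparator gadgets that test, bit by bit, whether the pointer $\vec z$ equals the index of a variable appearing in that clause, and if so substitute $v$ for that variable. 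Because only $O(\log n_i)$ pointer/value/comparator variables are simultaneously live along the path, the resulting primal graph has pathwidth $O(\log n_i)$, which drives the desired tower-root compression at each step.

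The main obstacle is preserving semantic equivalence when the innermost quantifier over $V$ is collapsed to a quantifier over only $O(\log n_i)$ bits plus a single value $v$. A naive reduction would let the player assign different values of $v$ to the same $\vec z$ at different clauses along the path, invalidating the ``for all (or exists) assignments of $V$'' semantics. I would resolve this by the standard pointer-consistency gadget: the extra quantifier block makes $v$ a function of $\vec z$ alone, enforced locally by a constant-width circuit that only ever ``checks'' one clause per path position, at the cost of exactly one additional alternation (not two) which is what keeps the tower counting sharp. Secondary technical concerns are ensuring CNF form via a pathwidth-preserving Tseitin-style encoding and aligning the parities of the new blocks with those of the prefix of $Q_i$, but these are routine; tracking constants carefully across the $\ell{-}1$ iterations then yields the final bound $\tower(\ell,o(k))\cdot\poly(\Card{\var(Q)})$, matching Proposition~\ref{ref:ubo}.
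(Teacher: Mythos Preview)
This proposition is cited from \cite{FichteHecherPfandler20} and is not proved in the present paper; there is no in-paper proof to compare against directly. That said, the paper's SAW reduction $\mathcal{R}$ in Section~\ref{sec:main} is an adaptation of exactly that construction, so one can read your sketch against Formulas~(\ref{red:guessatomj})--(\ref{red:usatnotd}) and Theorem~\ref{lab:lb}.

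Your high-level plan---iterate a self-reduction that trades a logarithmic decrease in the parameter for one extra alternation, then back-chain to ETH for $3$-\SAT---is the correct one and matches both the cited work and Theorem~\ref{lab:lb}. There is, however, a conceptual slip in the inductive step. You write ``given $Q_i$ whose innermost quantifier block $V$ has size $n_i$'' and propose to \emph{replace} that block by $O(\log n_i)$ pointer bits plus a value $v$. But the quantity that must shrink is the \emph{pathwidth} (bag size along a path decomposition), not the cardinality of any quantifier block; the two happen to coincide at the base case but diverge after the first iteration. In the actual construction the original variables are \emph{kept}; a fresh innermost \emph{universal} block of $O(\log k)$ index and value variables is \emph{added}, together with one $sat_i$ per clause strung along the path, and the indices address elements of the structural witness (bag contents / feedback set), not of a quantifier block. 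The consistency issue you correctly flag is then handled the opposite way from what you describe: the universal player freely picks both an index and a claimed value, and Formulas~(\ref{red:guessatomj})--(\ref{red:guessnegatomj}) render any pair that contradicts the existential player's earlier commitment to the indexed variable vacuously satisfied. Your framing ``$v$ is a function of $\vec z$ alone'' suggests an extra existential commitment per index, which would cost a second alternation and ruin the tight tower count.

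A second point: the ``pathwidth-preserving Tseitin-style encoding'' you call routine is precisely where the difficulty sits. Tseitin introduces fresh innermost existential auxiliaries; placed after the new universal block this is again an extra alternation. This is why the paper (and the cited source) work throughout the iteration in the $3{,}1$-CDNF form or its dual, and only collapse to CNF at the very end by turning the $1$-DNF into a single long clause (cf.\ Corollary~\ref{lab:inc-lb}); one long clause is harmless for pathwidth of the incidence graph but is exactly the obstruction on the primal side. None of this is fatal to your outline, but these two points are where the argument actually lives, not details one can defer.
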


%
%
%

\paragraph*{Treedepth}

Given a graph~$G=(V,E)$. Then, a \emph{treedepth decomposition}~$T=(V,F)$ of~$G$ is a forest of rooted trees,
where for every edge~$\{u,v\}\in E$ we require that $u$ is an ancestor or descendant of~$v$ in~$T$.
The \emph{treedepth}~$\td{G}$ of~$G$ is the smallest height among every  treedepth decomposition of~$G$, cf.\ Figure~\ref{fig:graphs} (right).
%
%
%

\paragraph*{Vertex Cover Number}
Given a graph~$G=(V,E)$. Then, a set~$S\subseteq V$ of vertices is a \emph{vertex cover (of~$G$)} if for every edge~$e\in E$ we have that~$e\cap S\neq\emptyset$.
Further, we define the \emph{vertex cover number} of a graph~$G$ to be the smallest size among every vertex cover of~$G$.
Interestingly, $\QBFSAT$ is tractable 
when parameterized by this number. 

\begin{PROP}[UB for Vertex Cover Number~\cite{LampisMitsou17}]\label{prop:vco}
Given any CQBF~$Q$ of $\QBFSAT$ with~$k$ being the vertex cover number of~${G_Q}$. Then, the validity of~$Q$ can be decided in time~$2^{2^{\mathcal{O}(k)}}\cdot\poly(\Card{\var(Q)})$ ($2^{{\mathcal{O}(k^3)}}\cdot\poly(\Card{\var(Q)})$ for~$\matr(Q)$ in 3-CNF).
\end{PROP}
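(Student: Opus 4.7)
The plan is to follow the Lampis--Mitsou approach from~\cite{LampisMitsou17}, which kernelizes around a small vertex cover and then solves the resulting instance by a game-tree enumeration. First, I compute a vertex cover $S \subseteq \var(Q)$ of the primal graph $G_Q$ of size at most $k$ in fpt-time via a standard branching algorithm. The crucial structural consequence is that, since $S$ covers every edge of $G_Q$, every clause of $\matr(Q)$ contains at most one variable from $I \eqdef \var(Q) \setminus S$; that is, the $I$-variables form an independent set and are pairwise decoupled by the clauses.

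Next, I assign to each $x \in I$ a \emph{type} that records, for every subset $L$ of literals on $S$, whether $L \cup \{x\}$ and/or $L \cup \{\neg x\}$ occurs in $\matr(Q)$. Since each of the $k$ variables in $S$ can contribute a positive literal, a negative literal, or nothing at all to $L$, there are at most $3^k$ candidate $S$-parts of clauses, so the number of distinct types is $T \leq 2^{2 \cdot 3^k} = 2^{2^{\mathcal{O}(k)}}$ in general, and a tighter subset count (only $\mathcal{O}(k^2)$ candidates $L$ with $|L|\leq 2$) suffices in the 3-CNF case.

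Then I would argue that within a single quantifier block $V_i$, any two $I$-variables of the same type are semantically interchangeable: their clauses coincide up to renaming of the $I$-variable, so for $V_i$ existential the same value satisfies all of them, and for $V_i$ universal any adversarial choice on one can be replicated on the others. Hence the algorithm keeps one representative per (block, type) pair, producing an equivalent reduced QBF with $\mathcal{O}(k + \ell \cdot T)$ variables. The reduced instance is then evaluated by recursing over the quantifier prefix and branching only over $S$-variables according to their quantifier, which induces a game tree with at most $2^k$ leaves. At each leaf one has a full $\alpha : S \to \{0,1\}$ for which I precompute, for every type $\tau$, the set $D_\tau(\alpha) \subseteq \{0,1\}$ of values of a $\tau$-variable that satisfy every clause containing it under $\alpha$; the satisfaction condition for the representative of $\tau$ in block $V_i$ then becomes $D_\tau(\alpha) \neq \emptyset$ if $V_i$ is existential and $D_\tau(\alpha) = \{0,1\}$ if $V_i$ is universal. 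Combining $2^k$ leaves with $T$ such checks gives the claimed $2^{2^{\mathcal{O}(k)}} \cdot \poly(\Card{\var(Q)})$ runtime, and the tighter type count yields the sharper $2^{\mathcal{O}(k^3)} \cdot \poly(\Card{\var(Q)})$ bound in the 3-CNF case.

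The main obstacle will be to justify the interchangeability argument together with the game-tree evaluation in full detail, because the value chosen for an $I$-variable $x \in V_i$ must be robust against $S$-moves that occur later in the prefix; one has to show that the type-based condition on $\alpha$ can be pushed inward to the correct level of the game tree rather than being verified only at the leaves. The decoupling of $I$-variables through their clauses (which touch only $S$ besides the variable itself) is precisely what makes this pushdown correct and what allows the one-representative-per-(block, type) reduction to preserve the quantifier semantics of~$Q$.
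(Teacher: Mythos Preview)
The paper does not prove Proposition~\ref{prop:vco} itself; it is quoted from~\cite{LampisMitsou17}. The closest in-paper argument is the proof of Theorem~\ref{thm:cnfvcn}, which obtains the same (and more general, CDNF) bounds by a different mechanism: a DPLL recursion with formula caching, where the work is to bound the number of distinct residual matrices $F[\alpha]$ over all prefixes by $n\cdot 2^k\cdot 2^{3^k}$ in general and by $n\cdot 2^{k^{\mathcal O(d)}}$ for $d$-CNF. Your type-based kernelization is therefore a genuinely different route from what the paper does for its own result.

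That said, your evaluation step has a real gap, not merely the technicality you flag in the last paragraph. Branching only on $S$ and testing each type at the leaves silently moves every $I$-variable past all of $S$, and that can flip the truth value. Take $\qbfformula=\exists x\,\forall u.\,(x\vee u)\wedge(\neg x\vee\neg u)$ with vertex cover $S=\{u\}$ and $I=\{x\}$: your leaf test gives $D_\tau(u{=}0)=\{1\}\neq\emptyset$ and $D_\tau(u{=}1)=\{0\}\neq\emptyset$, so (since $x$ is existential and $u$ is universal) you would return ``true''; yet $\qbfformula$ is false because $x$ must commit before $u$. The ``pushdown'' you hope for does not hold: the leaf condition $D_\tau(\alpha)\neq\emptyset$ is the right object only for $I$-variables quantified \emph{after} every $S$-variable appearing in their clauses. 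Note also that your intermediate ``reduced QBF with $\mathcal O(k+\ell\cdot T)$ variables'' still depends on the quantifier depth $\ell$, which is not bounded in $k$, so brute-forcing that kernel is not available either without a further argument collapsing $I$-variables across blocks.

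A clean fix is to abandon the leaf shortcut and instead cache residual matrices along the full prefix, exactly as in the paper's proof of Theorem~\ref{thm:cnfvcn}; the subformula count there directly yields both bounds of the proposition. If you wish to stay with kernelization, you must refine the type to also record the $I$-variable's \emph{slot} relative to the $k$ variables of $S$ (there are only $k{+}1$ slots, and within a slot the $I$-variables are mutually clause-disjoint, so only their quantifier matters) and then argue equivalence at that granularity; even then, naive brute force on the resulting instance does not obviously meet the stated bounds, so further care is required.
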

%

\paragraph*{Feedback Sets and Distance Measures} 

Lower bound results for $\QBFSAT$ parameterized by treewidth (pathwidth) or vertex cover number 
motivates other~parameters. 
%
%
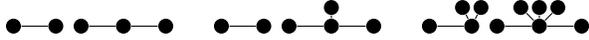
\begin{figure}[t]
\vspace{-.75em}
\hspace{-1em}\begin{tikzpicture}[node distance=3.5mm,every node/.style={circle,inner sep=2pt}]
\node (a) [fill,label={[text height=1.5ex]left:$ $}] {};
\node (b) [fill,right=of a,label={[text height=1.5ex,yshift=0.0cm,xshift=-0.05cm]right:$ $}] {};
\node (c) [fill,right=0.35em of b,label={[text height=1.5ex,yshift=0.09cm,xshift=0.05cm]left:$ $}] {};
\node (d) [fill,right=of c,label={[text height=1.5ex,yshift=0.09cm,xshift=-0.05cm]right:$ $}] {};
\node (e) [fill,right=of d,label={[text height=1.5ex,yshift=0.09cm,xshift=-0.05cm]right:$ $}] {};
\draw (a) to (b);
\draw (c) to (d);
\draw (d) to (e);
\node (a2) [fill,right=1.5em of e,label={[text height=1.5ex]left:$ $}] {};
\node (b2) [fill,right=of a2,label={[text height=1.5ex,yshift=0.0cm,xshift=-0.05cm]right:$ $}] {};
\node (c2) [fill,right=0.35em of b2,label={[text height=1.5ex,yshift=0.09cm,xshift=0.05cm]left:$ $}] {};
\node (d2) [fill,right=of c2,label={[text height=1.5ex,yshift=0.09cm,xshift=-0.05cm]right:$ $}] {};
\node (d2p) [fill,above=0.1em of d2,label={[text height=1.5ex,yshift=0.09cm,xshift=-0.05cm]right:$ $}] {};
\node (e2) [fill,right=of d2,label={[text height=1.5ex,yshift=0.09cm,xshift=-0.05cm]right:$ $}] {};
\draw (a2) to (b2);
\draw (c2) to (d2);
\draw (d2) to (e2);
\draw (d2) to (d2p);
\node (a3) [fill,right=1.5em of e2,label={[text height=1.5ex]left:$ $}] {};
\node (b3) [fill,right=of a3,label={[text height=1.5ex,yshift=0.0cm,xshift=-0.05cm]right:$ $}] {};
\node (b3p) [fill,above=0.1em of b3,xshift=-.35em,label={[text height=1.5ex,yshift=0.0cm,xshift=-0.05cm]right:$ $}] {};
\node (b3p2) [fill,above=0.1em of b3,xshift=.35em,label={[text height=1.5ex,yshift=0.0cm,xshift=-0.05cm]right:$ $}] {};
\node (c3) [fill,right=0.35em of b3,label={[text height=1.5ex,yshift=0.09cm,xshift=0.05cm]left:$ $}] {};
\node (d3) [fill,right=of c3,label={[text height=1.5ex,yshift=0.09cm,xshift=-0.05cm]right:$ $}] {};
\node (d3p) [fill,above=0.1em of d3,xshift=-0em,label={[text height=1.5ex,yshift=0.09cm,xshift=-0.05cm]right:$ $}] {};
\node (d3p2) [fill,above=0.1em of d3,xshift=.7em,label={[text height=1.5ex,yshift=0.09cm,xshift=-0.05cm]right:$ $}] {};
\node (d3p3) [fill,above=0.1em of d3,xshift=-.7em,label={[text height=1.5ex,yshift=0.09cm,xshift=-0.05cm]right:$ $}] {};
\node (e3) [fill,right=of d3,label={[text height=1.5ex,yshift=0.09cm,xshift=-0.05cm]right:$ $}] {};
\draw (a3) to (b3);
\draw (b3) to (b3p);
\draw (b3) to (b3p2);
\draw (c3) to (d3);
\draw (d3) to (e3);
\draw (d3) to (d3p);
\draw (d3) to (d3p2);
\draw (d3) to (d3p3);

\end{tikzpicture}\vspace{-0.5em}
\caption{Disjoint paths (left); half-ladder graph (middle); and caterpillar graph (right).} 
\label{fig:ladder}
\end{figure}
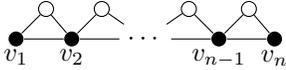
\begin{figure}[t]
\vspace{-1em}
\centering
	\begin{tikzpicture}[node distance=5mm,every node/.style={circle,inner sep=2pt}]
%
\node (b) [yshift=.5em,fill,label={[text height=1.5ex,yshift=.65em]below:$v_1$}] {};
\node (a) [fill,right=1.5em of b,label={[text height=1.5ex,yshift=0.65em,xshift=-0.00cm]below:$v_2$}] {};
\node (ba) [draw,above right=1em of b,label={[text height=1.5ex]above:$ $}] {};
\node (cd) [draw,above right=1em of a,label={[text height=1.5ex]above:$ $}] {};
\node (c) [draw=black,color=white,right=1.5em of a,label={[text height=1.5ex,yshift=0.00cm,xshift=0.00cm]left:$ $}] {\textcolor{black}{$\cdots$}};
\node (de) [fill,right=1.5em of c,label={[text height=1.5ex,yshift=1.1em,xshift=-0.00cm]below:$v_{n-1}$}] {};
\node (cd2) [draw,above left=1em of de,label={[text height=1.5ex,yshift=-1em]above:$ $}] {};
\node (cd3) [draw,above right=1em of de,label={[text height=1.5ex,yshift=-1em,xshift=2em]above:$ $}] {};
\node (d) [fill,right=1.5em of de,label={[text height=1.5ex,yshift=0.65em,xshift=-0.00cm]below:$v_n$}] {};
\draw (a) to (b);
\draw (a) to (ba);
\draw (b) to (ba);
\draw (c) to (cd);
\draw (a) to (cd);
\draw (c) to (cd2);
\draw (de) to (cd2);
\draw (de) to (cd3);
\draw (d) to (cd3);
\draw (a) to (c);
\draw (c) to (de);
\draw (de) to (d);
\end{tikzpicture}\vspace{-1.5em}
\caption{
Graph of pathwidth~$2$ with FV number in~$\mathcal{O}(n)$.}
\label{fig:constpaths}
\end{figure}

\begin{figure*}[t]\centering
 \begin{tikzpicture}[scale=.6,node distance=0.05mm, every node/.style={scale=0.85}]
\tikzset{every path/.style=thick}
\node (leaf1) [tdnode,label={[yshift=-0.25em,xshift=0.5em]above left:$ $}] {\emph{distance to sparse half-ladder}};
\node (vc) [tdnode,dashed,label={[yshift=-0.25em,xshift=0.5em]above left:$ $},above=.35em of leaf1] {vertex cover number};
\node (disj) [tdnode,label={[yshift=-0.25em,xshift=0.5em]above left:$ $},right=2em of leaf1] {\emph{distance to half-ladder}};
\node (dl) [tdnode,label={[yshift=-0.25em,xshift=0.5em]above left:$ $},right=2em of disj] {\emph{distance to caterpillar}};
\node (pw) [tdnode,label={[yshift=-0.25em,xshift=0.5em]above left:$ $},right=.5em of dl] {\emph{pathwidth}};
\node (td) [tdnode,label={[yshift=-0.25em,xshift=0.5em]above left:$ $},above=.35em of pw,xshift=0em] {\textbf{\emph{treedepth}}};
\node (tw) [tdnode,dashed,label={[yshift=-0.25em,xshift=0.5em]above left:$ $},below=.35em of pw] {treewidth};
\node (leaf2) [tdnode,label={[xshift=-1.0em, yshift=-0.15em]above right:$ $}, below = .35em of leaf1]  {\emph{sparse feedback vertex number}};
\node (leaffvs) [tdnode,text width=5.1em,yshift=1.2em,xshift=3.8em,thick,label={[xshift=-1.0em, yshift=-0.15em]above right:$ $}, left = 3.7em of leaf2]  {\textbf{FV number} \textbf{(incidence graph)}};
\node (fvs) [tdnode,label={[xshift=-1.0em, yshift=-0.15em]above right:$ $}, xshift=-.5em, below = .35em of disj]  {\textbf{\emph{feedback vertex number}}};
\node (dop) [tdnode,label={[xshift=-1.0em, yshift=-0.15em]above right:$ $}, right = .75em of fvs]  {\emph{distance to outerplanar}};
\node (tdi) [tdnode,text width=4.3em,yshift=-1.2em,xshift=.4em,label={[yshift=-0.25em,xshift=0.5em]above left:$ $},right=.35em of td] {\textbf{treedepth (incidence graph)}};
\draw [->] (leaf1) to (leaf2);
\draw [->] (leaf1) to (disj);
\draw [->] (leaf2) to (fvs);
\draw [->] (leaf2.west) to (leaf2.west-|leaffvs.east);
\draw [->] (dl) to (fvs);
\draw [->] (disj) to (dl);
\draw [->] (dl) to (pw);
\draw [->] (td) to (pw);
\draw [->] (vc) to (td);
\draw [->] (pw) to (tw);
\draw [->] (fvs) to (dop);
\draw [->] (dop) to (tw);
\draw [->] (vc) to (leaf1);
\draw [->] (td.east) to  (td.east-|tdi.west); 
\end{tikzpicture}\vspace{-.5em}
\caption{Parameters for QBFs of the primal graph (incidence graph), where italic text refers to \emph{parameters between vertex cover number and treewidth}. A directed arc indicates the source being weaker than the destination, i.e., the destination is linearly bounded by the source. Bold-face parameters mark selected new lower bounds for $\QBFSAT_\ell$.}\label{fig:params}
\end{figure*}
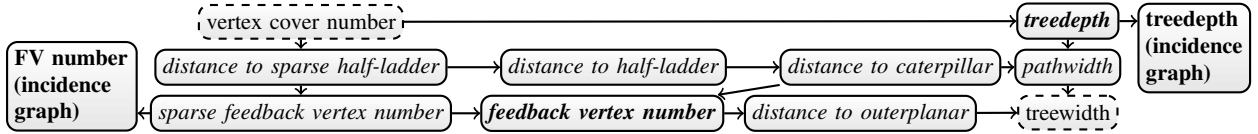
Let~$G=(V,E)$ be a graph. Then, a set~$S\subseteq V$ of vertices is called a \emph{feedback vertex set (FVS)} of~$G$ if~$G-S$ is an acyclic graph, and the \emph{feedback vertex number} (of~$G$) refers to the smallest size among all feedback vertex sets of~$G$.
Further, $S$ is referred to by \emph{distance set to half-ladder} 
if~$G-S$ is a \emph{half-ladder (graph)}, consisting of (vertex) disjoint paths such that additionally each vertex might be adjacent to \emph{one} fresh vertex. 
If we allow more than one such fresh vertex, we call the graph a \emph{caterpillar}, cf.\ Figure~\ref{fig:ladder}.
The smallest~$k=\Card{S}$ among these distance sets~$S$ 
is the \emph{distance} (to the corresponding graph class).
We say $S$ is a \emph{$c$-deletion set}, for some integer $c$, 
if every component of $G{-}S$ has at most $c$ vertices.
A set $D\subseteq E$ is a \emph{feedback edge set (FES)} for $G$ if
$G-D$ is acyclic. 
%
%

We utilize these sets~$S,D$ \emph{for 
%
a QBF~$Q$}, where $G=G_Q$. 
%
Then, $S$ is \emph{sparse} if for every two distinct vertices~$u,v$ of~$G_Q-S$ 
there is at most one clause or term $f$ of $\matr(Q)$ with $u,v\in\var(f)$. 
%
%

\begin{EX}
Recall QBF~$Q$ from Example~\ref{ex:running1} and observe that the feedback vertex number is~$1$, e.g., $\{a\}$ is a FVS of~$G_Q$ as well as a distance set to half-ladder 
of~$G_Q$. However, the sparse feedback vertex number of~$Q$ is~$2$, since no single vertex can be removed from~$G_Q$ such that each edge corresponds to at most one clause of~$\matr(Q)$. Set $\{a,c\}$ is a sparse FVS of~$Q$ since $G_Q{-}\{a,c\}$ is edgeless.
While for~$G_Q$ pathwidth is identical to the sparse FV number, 
the graph of Figure~\ref{fig:constpaths} has pathwidth $2$, but admits only a large FVS, e.g., all white nodes.
\end{EX}

Inspired by related and more general works on parameter hierarchies \cite{SorgeWeller21,graphclasses}, we obtain a hierarchy of parameters for QBFs: 
Figure~\ref{fig:params} depicts parameters, where a directed arc from parameter~$p_1$ to parameter~$p_2$ indicates that~$p_1$ is \emph{weaker} than~$p_2$, i.e., $p_2$ is upper-bounded by~$\mathcal{O}(p_1)$.
Consequently, lower bounds for the weaker (linearly smaller) parameter $p_1$ form \emph{stronger results} and automatically carry over to the stronger parameter~$p_2$. 

\begin{EXa}
Observe that already for a CQBF~$Q$, the feedback vertex number~$k_G$ of~$G_Q$ and the feedback vertex number~$k_I$ of~$I_Q$ are incomparable,
cf., Figure~\ref{fig:params}. 
It is easy to see that~$k_I \ll k_G$ by constructing an instance with one large clause.
However, there are also cases where~$k_I \gg k_G$: One can construct pairs of variables where each pair appears in (at least) two clauses of size~$2$, i.e.,
each pair is involved in a cycle in~$G_I$. Then,
$k_G$ is zero, but~$k_I$ amounts to the number of pairs.
\end{EXa}


\section{Structure-Aware (SAW) Reductions}\label{sec:saw}

Recall the gap between runtimes for $\QBFSAT$ using treewidth (pathwidth, cf., Proposition~\ref{ref:pwlb}) and $\QBFSAT$ 
when parameterized by vertex cover number (see Proposition~\ref{prop:vco}). 
Interestingly, runtime bounds for $\QBFSAT$ and vertex cover number on CNFs also hold on CDNFs.
\begin{THM}[UB for $\QBFSAT_\ell$ and Vertex Cover Number, $\star$\footnote{Statements marked with a star (``$\star$'') are  proven in the appendix.}]\label{thm:cnfvcn}
There is an algorithm that, given a QBF $\qbfformula$ in CDNF with vertex cover
number $k$ of~$G_\qbfformula$, decides whether $\qbfformula$ is true in time $2^{2^{\bigO(k)}}\cdot
\poly(\Card{\var(\qbfformula)})$.
If $\qbfformula$ is in $d$-CDNF, the algorithm runs in time $2^{k^{\bigO(d)}} \cdot \poly(\Card{\var(\qbfformula}))$.
\end{THM}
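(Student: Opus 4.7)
My plan is to extend the algorithm of Lampis and Mitsou (Proposition~\ref{prop:vco}) from pure CNF to the CDNF setting. The first step is to compute a minimum vertex cover $S$ of $G_{\qbfformula}$ with $|S|\leq k$ in fpt time. The crucial structural observation is that, since the variables appearing in any single clause of the CNF-part or any single term of the DNF-part form a clique in $G_{\qbfformula}$ and the vertices outside $S$ form an independent set, every clause and every term of $\matr(\qbfformula)$ contains at most one variable outside $S$. Hence distinct variables outside $S$ interact only indirectly, through the shared $S$-variables.

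Next, I classify each variable $x\notin S$ by its \emph{type}, which records, for every clause and every term of $\matr(\qbfformula)$ containing $x$, the restriction of that clause/term to $S$-literals together with the sign of $x$. A restricted clause/term is determined by a partial assignment of the $S$-variables to $\{\text{absent},+,-\}$, yielding at most $3^k$ possibilities and thus at most $2\cdot 3^k$ signed possibilities. In total, there are $T\leq 2^{2\cdot 3^k}=2^{2^{\mathcal{O}(k)}}$ types; for the $d$-CDNF case, restricted clauses/terms contain at most $d-1$ literals over $S$, giving $\mathcal{O}(k^{d-1})$ possibilities and $T\leq 2^{k^{\mathcal{O}(d)}}$ types. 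A reduction lemma then states that for every (type, quantifier block) pair it suffices to keep only a bounded number of representative variables: same-type variables in the same block are interchangeable---they occur in clauses/terms that are identical up to the identity of the outside variable---and do not share any clause or term with one another, so deleting surplus representatives preserves the truth of $\qbfformula$.

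Finally, I evaluate the reduced QBF via a dynamic program that enumerates $S$-assignments respecting the quantifier prefix, interleaved with the bounded per-(type, block) choices supplied by the reduction. The key property enabling efficient evaluation is that, once the $S$-part of an assignment is fixed, the residual matrix on outside variables decouples: every CNF-clause either is already satisfied by the $S$-part or reduces to a unit clause on its unique outside variable, and every DNF-term either is already refuted or reduces to a unit term. The remaining game then splits into independent per-type subgames that are decidable directly. Standard accounting of the reachable DP states, controlled by the $2^{|S|}=2^{\mathcal{O}(k)}$ $S$-assignments together with the per-type summaries, yields the claimed $2^{2^{\mathcal{O}(k)}}\cdot\poly(\Card{\var(\qbfformula)})$ and $2^{k^{\mathcal{O}(d)}}\cdot\poly(\Card{\var(\qbfformula)})$ runtimes.

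The main obstacle I foresee is establishing correctness of the representative-reduction in the mixed CDNF setting, because the disjunctive reading of $D$ introduces behavior absent in the pure-CNF case: a single correctly-chosen outside variable can already satisfy $D$, and the semantics of the matrix depends on whether the innermost quantifier makes it $C\wedge D$ or $D\vee C$. Consequently, the equivalence between the original and the reduced QBF requires a careful case analysis on quantifier types and on the interaction between the conjunctive and disjunctive components, in particular for universal blocks containing many same-type variables whose individual effects on $D$ must be argued to be captured by a small number of witnesses.
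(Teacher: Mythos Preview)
Your approach differs from the paper's. Instead of a type-based reduction, the paper runs a DPLL-style backtracking search with \emph{formula caching} and bounds the number of distinct residual matrices $F[\alpha]$ that can arise from prefix-respecting partial assignments $\alpha$. Splitting $\alpha$ into its $S$-part $\alpha_X$ and its non-$S$ part $\beta$, the clauses/terms touched by $\beta$ reduce (after applying $\alpha_X$) to clauses/terms over $S$ alone; hence $F[\alpha_X][\beta]$ on that part is one of at most $(2^{3^k})^2$ CDNFs on $k$ variables, or $(2^{(2k+1)^d})^2$ in the $d$-CDNF case. Multiplying by $2^k$ choices for $\alpha_X$ and by $n$ prefix positions bounds the cache size and thus the runtime. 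This sidesteps entirely the representative-reduction lemma you correctly identify as the crux of your plan.

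Your route is plausible but carries a gap beyond the one you flag: the number of quantifier \emph{blocks} is unbounded in $k$, so bounding representatives per (type, block) pair does not yield a kernel whose size depends only on $k$. The standard fix is to work with the at most $k{+}1$ \emph{intervals} of the prefix delimited by the $S$-variables (together with the quantifier type), since outside variables lying in the same interval can be permuted freely; this is what makes a $2^{2^{\mathcal{O}(k)}}$-size kernel possible. With that correction, and once the CDNF representative lemma is actually established (your concern about the DNF side and the $D\vee C$ semantics for innermost $\forall$ is well-founded and requires work), your argument should go through---but the paper's caching argument is shorter and avoids both the CDNF case analysis and the interval bookkeeping.
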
\vspace{-.75em}
\begin{proof}[Proof (Idea)]
The result can be established by enhancing a DPLL-style backtracking algorithm with
formula caching~\cite{BeameIPS10}.
The number of subformulas of the matrix that can be obtained by
assigning variables can be bounded by a function that is linear in the
number of variables and only exponential in the size of the
vertex cover.
This upper bounds the size of the search tree. 
%
\end{proof}


%
%
%

%

Similarly, the known runtime result for treewidth (Proposition~\ref{ref:ubo}) carries over to CDNFs.

\begin{THM}[UB for $\QBFSAT_\ell$ and Treewidth]\label{ref:ub}
Given any QBF~$Q$ in CDNF of quantifier depth~$\ell$ with~$k=\tw{G_Q}$. Then, $\QBFSAT_\ell$ on~$Q$ can be decided in time~$\tower(\ell, \mathcal{O}(k))\cdot\poly(\Card{\var(Q)})$.
\end{THM}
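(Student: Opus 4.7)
The plan is to lift the dynamic-programming algorithm of Chen (Proposition~\ref{ref:ubo}) from the CNF setting to the CDNF setting with only a constant-factor overhead in the DP state at each bag. Let $Q$ be the input QBF with innermost quantifier $Q_\ell$ and matrix $C \wedge D$ (the case $D \vee C$ is dual), and let $\TTT=(T,\chi)$ be a tree decomposition of $G_Q = G_C \sqcup G_D$ of width $O(k)$, computable in FPT time. The crucial observation is that by the definition of $G_Q$, every clause $c \in C$ and every term $t \in D$ forms a clique in $G_Q$, hence the variables of $c$ or $t$ are all contained together in at least one bag of $\TTT$. Consequently, the satisfaction status of every clause and every term can be determined at a single bag, just as in the classical CNF case.

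I would now follow Chen's nested-DP scheme verbatim, but augment its state with \emph{one additional bit} that records whether at least one term of $D$ has already been satisfied by the partial assignment stored at that DP state. Concretely, at each bag $t$ I would maintain the usual nested table of assignment-types for variables of $\chi(t)$ (of tower height $\ell$ from alternating existential/universal projections), and each leaf-level entry carries: (i) the standard clause-satisfaction information for $C$, exactly as in Chen's algorithm, and (ii) the single Boolean flag $\mathtt{satD} \in \{0,1\}$ which becomes $1$ as soon as some term $t_i \in D$ has all its literals simultaneously present in the bag and all satisfied by the current assignment. On introduce/forget nodes the flag is updated locally; on join nodes for two children with flags $b_1,b_2$, the combined flag is $b_1 \vee b_2$ (terms are independent, so their satisfaction composes disjunctively), in contrast with the \emph{conjunctive} composition used for the clauses of $C$. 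Finally, at the root one accepts iff the classical CNF part evaluates to true \emph{and} $\mathtt{satD}=1$ (for $Q_\ell = \exists$; the dual condition for $\forall$).

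For the quantifier projections across the nested tables, I would argue that the DNF flag behaves semantically identically to the matrix-truth flag in Chen's algorithm: an existential block collapses the tables by requiring \emph{some} extension with the desired flag, a universal block requires \emph{all} extensions to have it. In particular, $\mathtt{satD}$ is just a summary of the outcome of the matrix on the relevant partial assignments, and thus its projection under alternating quantifiers is handled by exactly the same mechanism as the evaluation of $C$. Since at each bag we have only doubled the table size, the total runtime remains $\tower(\ell, O(k)) \cdot \poly(\Card{\var(Q)})$.

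The main technical obstacle is verifying that the disjunctive composition of $\mathtt{satD}$ across join nodes composes correctly with the conjunctive composition of clause-satisfaction for $C$ inside the same DP cell, and that this interaction survives the nested existential/universal projections without incurring an extra exponential blowup per level. This requires a careful but routine induction on the tree-decomposition structure and on the quantifier depth, mirroring the correctness proof of Chen's algorithm; once the invariant ``the DP cell at node $t$ stores, for every type of assignment to $\chi(t)$, the correct value of the projected QBF on the subtree below $t$ with $\mathtt{satD}$ correctly accumulated'' is established, the claimed runtime follows immediately from the bound on the table size and the $\ell$-fold nesting.
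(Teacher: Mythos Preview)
Your proposal is correct in outline, but it takes a genuinely different route from the paper's proof. You modify Chen's dynamic program directly, carrying an extra Boolean flag $\mathtt{satD}$ through the nested tables and arguing that it composes disjunctively at join nodes while clause satisfaction composes conjunctively. The paper instead gives a \emph{reduction} back to the pure CNF case: given a tree decomposition $(T,\chi)$ of $G_Q$, it introduces one fresh existential variable $sat_t$ per node $t$ of $T$ (added to the innermost block $V_\ell$), and adds the CNF clauses expressing $sat_t \rightarrow \bigvee_{t'\in\children(t)} sat_{t'} \vee \bigvee_{d\in D,\ \var(d)\subseteq\chi(t)} d$ together with the unit clause $sat_r$ at the root. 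This encodes ``some term of $D$ is satisfied somewhere in the subtree'' purely in CNF, and the resulting formula $Q'$ has a tree decomposition of width at most $\width(\mathcal{T})+3$ (each bag gains $sat_t$ and the $sat_{t'}$ of its at most two children). One then invokes Proposition~\ref{ref:ubo} on $Q'$ as a black box.

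The trade-off: the paper's reduction is what it explicitly calls a linear-SAW reduction, so it fits the paper's methodological theme and avoids re-opening the correctness argument of Chen's DP---the nested quantifier projections are handled entirely by the cited result. Your approach is algorithmically more transparent (no auxiliary variables, just one extra bit in the state), but the ``careful but routine induction'' you defer is exactly what the paper sidesteps; in particular, you would need to check that the extra bit survives all $\ell$ levels of nested projection without interference, whereas the paper only needs to verify that the new clauses are covered by the augmented bags. Both approaches yield the same $\tower(\ell,\mathcal{O}(k))$ bound.
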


We show this result by relying on 
the following concept of structure-aware reductions~\cite{FichteHecherPfandler20,FichteHecherMahmood21}. 
These reductions will be a key component of the constructions for the new lower bound results of this paper. 
%
%
 They provide a constructive way of utilizing an actual structural representation of the instance,
thereby precisely bounding
the parameter increase (decrease) in
terms of the representation. 

\begin{DEF}[SAW Reduction]
  Let $\Sigma$, $\Sigma'$, $\Gamma$, $\Gamma'$ be alphabets,
  $\mathcal{P} \subseteq \Sigma^* {\times}\NAT$,
  $\mathcal{P'} \subseteq \Sigma'^* {\times} \NAT$~be~parameterized problems
  with parameterizations $\kappa, \kappa'$, 
  and~$f$~be a computable function.
  An $f$-structure-aware reduction $\mathcal{R}$ from
  $\mathcal{P}$ to $\mathcal{P'}$ 
  maps $\Sigma^* {\times} \Gamma^*$ to
  $\Sigma'^* {\times} \Gamma'^*$ where for
  $(\mathcal{I}, S) {\;\in\;} \Sigma^* {\times} \Gamma^*$ with
  $(\mathcal{I'}, S') {=} \mathcal{R}(\mathcal{I}, S)$, we~have 
  $(({\cal I},S),\kappa(S))\allowbreak{\mapsto}  (({\cal I}',S'), \kappa'(S'))$ is a pp-reduction s.t.~(i)\allowbreak 
   $S'{=}g(S)$ for polynomial-time function $g$ 
  \emph{(functional dependency)};
  (ii)~$\kappa'(S') \leq \mathcal{O}(f(\kappa(S)))$ \emph{($f$-boundedness)}.






\end{DEF}

The definition of SAW reductions serves the following purposes.
First, such a reduction always provides a structural representation of the reduced instance,
whereby (i) the functional dependency immediately gives insights into how such a representation can be obtained. 
Further, the property (ii) $f$-boundedness ensures that the resulting parameter of the reduced instance fulfills precise guarantees, which will be essential for 
the next subsection.
%

To demonstrate these reductions, we briefly explain arcs of Figure~\ref{fig:params}.
%
Interestingly, almost every arc of Figure~\ref{fig:params} 
can be shown by the trivial linear-SAW reduction that takes a QBF~$Q$ of~$\QBFSAT_\ell$ and a structural representation~$S$ of the respective parameter and returns $(Q,S)$.
Indeed, e.g., any vertex cover~$S$ is a distance set to half-ladder 
and for any path decomposition~$S$ it holds that it is a tree decomposition.
Further, any distance set~$S$ to outerplanar can be turned into a TD~$S'$, since each outerplanar graph~\cite{Syslo79} has a TD~$\mathcal{T}$ of width at most~$2$ and we obtain TD~$S'$ by adding~$S$ to every bag of~$\mathcal{T}$.
%
%
%
%
%
%

With these SAW reductions at hand,
one can easily establish Theorem~\ref{ref:ub}. 
%
\iflong
\begin{proof}[Proof (Sketch) of Theorem~\ref{ref:ub}]
We illustrate the proof on the case where for~$Q$ the innermost quantifier~$Q_\ell=\exists$. 
Let $Q=Q_1 V_1. \cdots Q_\ell V_\ell.$ $C\wedge D$ and let $\mathcal{T}=(T,\chi)$ be a TD
of~$G_Q$ of width~$\mathcal{O}(k)$, computable in time~$2^{\mathcal{O}(k)}\cdot\poly(\Card{\var(Q)})$~\cite{Korhonen22}. 
For each node~$t$ of~$T$, let the set of child nodes be given by~$\children(t)$;
we assume without loss of generality that~$\Card{\children(t)}\leq 2$ (obtainable by adding auxiliary nodes).
We use auxiliary variables~$S\eqdef\{sat_t \mid t\text{ in }T\}$. 
Then, we define a linear-SAW reduction from~$Q$ and~$\mathcal{T}$, constructing
a QBF~$Q'\eqdef Q_1 V_1. \cdots Q_\ell (V_\ell \cup S). (C\cup C'')$, 
whose matrix is in CNF: 
\vspace{-.25em}
{\smallalign{\normalfont\small}
\begin{align}
	&\label{t:aux}sat_t \rightarrow\hspace{-.5em}\bigvee_{t'\in\children(T)}sat_{t'} \vee\hspace{-2em}\bigvee_{d\in D, \var(d)\subseteq\chi(t)}\hspace{-2em}d& \text{for every }t\text{ of }T\\
	&\label{t:root}sat_{r} & \text{for root }r\text{ of }T
\end{align}}
\noindent Formulas~(\ref{t:aux}) define when a term is satisfied for a node~$t$, 
which together with Formula~(\ref{t:root}) can be easily converted to 
the set~$C'$ of CNFs (using distributive law).
Observe that 
at least one term has to be satisfied
at the root node.
The reduction yields 
a TD~$\mathcal{T}'\eqdef(T,\chi')$ of~$G_{Q'}$ 
%
where, for every~$t$ of~$T$, $\chi'(t)\eqdef \chi(t)\cup \{sat_t\} \cup \{sat_{t'}\mid t'\in\children(t)\}$. 
The width of $\mathcal{T}'$ is $ \width(\mathcal{T}) + 3 \in\mathcal{O}(\width(\mathcal{T}))$,
so on~$Q'$ the algorithm from Proposition~\ref{ref:ubo} runs in 
time~$\tower(\ell, \mathcal{O}(k))\cdot\poly(\Card{\var(Q)})$. 
\end{proof}
\fi
%
Since the treewidth parameter is linearly bounded by both feedback vertex size and treedepth, 
we instantly obtain the following results. 

\begin{COR}[UB for $\QBFSAT_\ell$ and Feedback Vertex Number/Treedepth]\label{ref:ubs}
Given any QBF~$Q$ in CDNF of quantifier depth~$\ell$ with~$k$ being the feedback vertex number or treedepth of ${G_Q}$. Then, $\QBFSAT_\ell$ on~$Q$ can be decided in time~$\tower(\ell, \mathcal{O}(k))\cdot\poly(\Card{\var(Q)})$.
\end{COR}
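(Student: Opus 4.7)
The plan is to derive both cases as immediate consequences of Theorem~\ref{ref:ub} via the trivial (identity) SAW reductions suggested by the authors right before the corollary, using the standard fact that treewidth is linearly upper-bounded by both the feedback vertex number and the treedepth. Concretely, for the feedback vertex case, given a QBF $Q$ together with a feedback vertex set $S$ of $G_Q$ of size $k$, I would compute a tree decomposition of the forest $G_Q - S$ (which has treewidth at most $1$) and then add all of $S$ into every bag. This yields in polynomial time a tree decomposition $\mathcal{T}$ of $G_Q$ of width at most $k+1$, and the map $(Q,S)\mapsto (Q,\mathcal{T})$ is a linear-SAW reduction from $\QBFSAT_\ell$ parameterized by feedback vertex number to $\QBFSAT_\ell$ parameterized by treewidth.

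For the treedepth case, I would invoke the well-known fact that any treedepth decomposition $F$ of depth $k$ of $G_Q$ can be transformed in polynomial time into a tree decomposition $\mathcal{T}$ of $G_Q$ of width at most $k-1$: for every node $v$ of $F$, create a bag consisting of $v$ together with all of its ancestors in $F$, arranged in a tree mirroring $F$. The edge condition of a treedepth decomposition (each edge of $G_Q$ joins a vertex to one of its ancestors or descendants) guarantees that all edges are covered and that the connectedness property holds. Again, $(Q,F)\mapsto (Q,\mathcal{T})$ is a linear-SAW reduction to the treewidth-parameterized version.

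Having established these two linear-SAW reductions, in both cases the resulting instance $(Q,\mathcal{T})$ has treewidth $\mathcal{O}(k)$ of $G_Q$ and the same quantifier depth $\ell$, so Theorem~\ref{ref:ub} solves it in time $\tower(\ell, \mathcal{O}(k))\cdot\poly(\Card{\var(Q)})$, which yields the claimed bound. If a structural witness ($S$ or $F$) is not provided with the input, one can first compute an approximately optimal feedback vertex set or treedepth decomposition in fpt time (e.g.\ via standard approximation algorithms), with a running-time overhead of at most $2^{\mathcal{O}(k)}\cdot\poly(\Card{\var(Q)})$, which is absorbed into the tower-bound.

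I do not expect any real obstacle here: the only substance is the (textbook) linear upper bounds $\mathsf{tw}\le\mathsf{fvs}+1$ and $\mathsf{tw}\le\mathsf{td}-1$, together with the SAW-reduction framework already set up in the paper. The main thing to be careful about is to phrase the argument as a genuine SAW reduction (so that functional dependency and $f$-boundedness are explicit), rather than merely a parameter-comparison remark; this keeps the corollary consistent with the methodological setup of Section~\ref{sec:saw}.
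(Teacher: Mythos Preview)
Your proposal is correct and matches the paper's own argument: the paper simply notes that treewidth is linearly bounded by both feedback vertex number and treedepth, and then invokes Theorem~\ref{ref:ub}. Your extra detail on explicitly constructing the tree decompositions and phrasing the step as a linear-SAW reduction is fully consistent with (and a fleshed-out version of) that one-line justification.
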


\section{Lower Bounds via SAW Reductions}\label{sec:main}

In order to establish conditional lower bounds for parameterized problems, one might reduce from \SAT on 3-CNFs and then directly apply the widely believed exponential time hypothesis (ETH)~\cite{ImpagliazzoPaturiZane01}. Indeed, many results have been shown, where the parameter of the reduced instance depends on the number of variables of the Boolean formula, e.g.,~\cite{MarxMitsou16,LampisMitsou17,PilipczukSorge20}, immediately followed by applying ETH.
Oftentimes consequences of the ETH are sufficient, claiming that \SAT on 3-CNFs cannot be solved in time~$2^{o(k)}\cdot\poly(n)$, where~$k$ is a 
parameter of the instance and~$n$ is the variable number. 

Having established the concept of structure-aware (SAW) reductions,
we apply this type of reductions as a precise tool for generalizing the lower bound result of Proposition~\ref{ref:pwlb}. 
%
More specifically, the next subsection focuses on defining a self-reduction
from \QBFSAT to \QBFSAT, where we trade an increase of quantifier depth
for an exponential decrease in the parameter of interest.
%
This is done in such a way that we are able to show lower bounds matching their upper bounds when assuming ETH.



In order to find suitable candidate parameters, recall that for the vertex cover number~$k$,
the problem \QBFSAT can be solved in double-exponential runtime in~$k$, regardless of quantifier depth.
However for the treewidth (or pathwidth) this is not the case, since for quantifier depth~$\ell$,
one requires a runtime that is~$\ell$-fold exponential in the pathwidth, cf.\ Proposition~\ref{ref:pwlb}.
This motivates our quest to investigate suitable parameters that are ``between'' vertex cover number and pathwidth.
In Section~\ref{sec:result}, we show that the (sparse) feedback vertex number is insufficient as well,
i.e., we obtain ETH-tight lower bounds that match the upper bounds of Corollary~\ref{ref:ubs} for this parameter. 
%
%
%
Section~\ref{sec:tdpth} adapts the reduction, thereby providing deeper insights into hardness for treedepth. 
Further, this section outlines an ETH-tight lower bound for treedepth, similar to Corollary~\ref{ref:ubs}, 
for instances of high treewidth. 

\vspace{-.25em}
\subsection{Tight QBF Lower Bound for Feedback Vertex Number}\label{sec:result}

The overall approach proceeds via SAW reductions as follows.
We assume an instance~$Q$ of~$\QBFSAT_\ell$ and a sparse feedback vertex set~$S$ such that~$\Card{S}=k$. 
Then, we devise a $\log(k)$-SAW reduction~$\mathcal{R}$, constructing an equivalent QBF~$Q'$
such that~${Q'}$ has a sparse feedback vertex set~$S'$ of size $\mathcal{O}(\log(k))$.

Without loss of generality, we restrict ourselves to the case where the innermost quantifier of the QBF~$Q$ is~$\exists$, as one can easily adapt for the other case or solve the inverse problem and invert the result in constant time.
Further, we assume that the first quantifier of~$Q$ is~$\exists$ as well.
Let~$Q=\exists V_1. \forall V_2. \cdots \exists V_\ell. C \wedge D$ be such a QBF that admits a sparse feedback vertex set~$S$ with~$k=\Card{S}$.
For the purpose of our lower bound, we actually prove a \emph{stronger result in the more restricted 3,1-CDNF form}, 
where we assume~$C$ in 3-CNF and~$D$ in 1-DNF, i.e., $G_Q=G_C$ as discussed in Example~\ref{ex:running3}. 
Finally, we assume that each~$c_i\in C$ consists of exactly three literals; however, the reduction works with individual smaller clause sizes.
The reduced instance~$Q'$ and sparse feedback vertex set~$S'$ of~${Q'}$ that is obtained by the SAW reduction, uses the additional quantifier block
in order to ``unfold'' $S'$ (i.e., reconstruct an assignment of~$S$). 

\vspace{-.05em}
\paragraph*{Auxiliary Variables}
%
%
%
%
In order to construct~$Q'$, we require the following \emph{additional (auxiliary) variables}.
First, we use pointer or \emph{index variables} that are used to address precisely \emph{one element} of~$S$.
In order to address~$3$ elements of~$S$ for the evaluation of a 3-CNF (3-DNF) formula, we require three of those indices.
These index variables are of the form~$\lbvs{S}\eqdef \{idx_j^1, \ldots, idx_j^{\ceil{\log(\Card{S})}} \mid 1\leq j\leq 3\}$
and, intuitively, for each of the three indices these allow us to ``address'' each of the~$k$ many elements of~$S$ via a specific assignment of~$\ceil{\log(k)}$ many Boolean variables.
These~$2^{\ceil{\log(k)}}$ many combinations of variables per index~$j$ are sufficient to address any of the~$k$ elements of~$S$.
To this end, we assign each element~$x\in S$ and each~$1\leq j\leq 3$ a set consisting of an arbitrary, but fixed and \emph{unique combination of literals} over the index variables~$idx_j^1, \ldots, idx_j^{\ceil{\log(\Card{S})}}$, denoted by~$\bval{x}{j}$.

Further, 
for each clause~$c_i\in C$ with~$c_i=\{l_1, l_2, l_3\}$ we assume an arbitrary ordering among the literals of~$c_i$ and write~$\lit{c_i}{j}\eqdef l_j$ for the~\emph{$j$-th literal of~$c_i$} ($1\leq j\leq 3$).
%
%
%
We also require three Boolean variables~$val_1, val_2, val_3$, where~$val_j$ captures a \emph{truth (index) value} for the element of~$S$ that is addressed via the variables for the~$j$-th index.
These variables are referred to by~$\lbvvs{S}\eqdef \{val_1, val_2, val_{3}\}$.
Finally, we use one variable to store whether $D$ is satisfied as well as~$\Card{C}$ many auxiliary variables that indicate whether a clause~$c\in C$ is satisfied.
These variables are addressed by the set~$\lsat\eqdef\{sat, sat_{1}, \ldots,sat_{ {\Card{C}}}\}$ of \emph{satisfiability variables},
where we assume clauses~$C{=}\{c_1,\ldots,$ $c_{\Card{C}}\}$ are ordered  according to some fixed total ordering.

\paragraph*{The Reduction}

The reduction~$\mathcal{R}$ takes~$Q$ and~$S$ and constructs 
an instance~$Q'$ as well as a sparse feedback vertex set~$S'$ of~${Q'}$.
The QBF $Q'$ is of the form $Q'\eqdef$
\vspace{-.6em}
\[\exists V_1.\ \forall V_2.\ \cdots \exists V_\ell.\ 
\forall 
\lbvs{S}, \lbvvs{S}, \lsat
.\ C' \vee D',  \] 

\vspace{-.55em}
\noindent where~$C'$
is in DNF, defined as a disjunction of terms:

\begingroup
\allowdisplaybreaks
\vspace{-.95em}
{\smallalign{\normalfont\small}
\begin{flalign}
	\label{red:guessatomj}&x \wedge \bigwedge_{b \in \bval{x}{j}} b \wedge \neg val_j&\pushleft{\text{for each } x\in S, 1\leq j\leq 3}\qquad\raisetag{2.5em}\\
	\label{red:guessnegatomj}&\neg x \wedge \bigwedge_{b \in \bval{x}{j}} b \wedge val_j&\pushleft{\text{for each } x\in S, 1\leq j\leq 3}\qquad\raisetag{2.5em}\\
%
	&sat_{i} \wedge l & \pushleft{\text{for each } c_i\in C, 1\leq j\leq 3\text{ with}}\notag\\[-.4em]\label{red:usatv} &&\pushleft{\lit{c_i}{j}=l,
\var(l)\in \var(C)\setminus S}\raisetag{1.15em}\\
	&sat_{i} \wedge \neg b & \pushleft{\text{for each } c_i\in C, 1\leq j\leq 3, x\in S,}\notag\\[-.4em] \label{red:usatxp} &&\pushleft{b\in \bval{x}{j}
\text{ with}\var(\lit{c_i}{j})=x}\raisetag{1.15em}\\
%
	&sat_{i} \wedge val_j & \pushleft{\text{for each } c_i\in C, 1\leq j\leq 3, x\in S 
\text{ with}}\notag\\[-.4em] 
\label{red:usatxv}&&\pushleft{\lit{c_i}{j}=x}\raisetag{1.15em}\\
%
%
%
%
%
%
%
%
%
	 &sat_{i} \wedge \neg val_j & \pushleft{\text{for each } c_i\in C, 1\leq j\leq 3, x\in S
\text{ with}}\notag\\[-.4em]
\label{red:usatxnegv}&&\pushleft{\lit{c_i}{j}=\neg x}\raisetag{1.15em}\\[.5em]
%
%
%
%
	\label{red:usatvd} &sat \wedge l & \pushleft{\text{for each } \{l\}\in D}\qquad\raisetag{1.3em} 
%
%
%
%
%
%
%
\end{flalign}}\endgroup
%

\noindent Additionally, we define~$D'$ in 1-CNF, which is a conjunction of the following singletons.
\vspace{-.25em}
{\smallalign{\normalfont\small}
\begin{flalign}
	\label{red:usatnot} &\pushleft{\neg sat_i} & {\makebox[5em]{}\text{for each }1\leq i \leq \Card{C}}\\
	\label{red:usatnotd} &\pushleft{\neg sat} 
\end{flalign}}
\vspace{-1em}

Observe that the fresh auxiliary variables appear under the innermost universal quantifier of~$Q'$.
So, intuitively, Formulas~(\ref{red:guessatomj}) ensure that whenever
some~$x\in S$ is set to $1$ and the~$j$-th index targets~$x$, that we then ``skip'' the corresponding assignment if~$val_j$ is set to $0$.
This is similar to Formulas~(\ref{red:guessnegatomj}) for the case~$x\in S$ is set to false, ensuring that for the remaining formulas of~$Q'$ 
 whenever the~$j$-th index targets some~$x\in S$,
the corresponding value~$val_j$ agrees with the implicit assignment of $x$.

Formulas~(\ref{red:usatv})--(\ref{red:usatxnegv}) are used to check that the clauses~$c_i\in C$ are satisfied.
Intuitively, the variables~$sat_i$ serve as switches that require clause~$c_i$ to be satisfied if set to true.
Formulas~(\ref{red:usatv}) ensure that $c_i$ is satisfied whenever a literal~$l\in c_i$, whose variable~$\var(l)$ is not in~$S$, is assigned true.
For literals~$l\in c_i$ whose variables~$\var(l)$ are in~$S$, Formulas~(\ref{red:usatxp}) evaluate to true if the corresponding $j$-th index ($j$ such that~$l=\lit{c_i}{j}$) does not target~$\var(l)$, or one of Formulas~(\ref{red:usatxv}) and~(\ref{red:usatxnegv}) is true if the targeted literal is true.
Finally, one of Formulas~(\ref{red:usatvd}) evaluates to true if~$D$ is true, similarly to Formulas~(\ref{red:usatv}).
Observe that since the~$\lsat$ variables are universally quantified, multiple~$sat_i$ variables might be set to true.
Intuitively, this makes it ``easier'' to satisfy some term among Formulas~(\ref{red:usatv})--(\ref{red:usatvd}), since it is sufficient for one of the clauses $c_i$ to be satisfied.
The only problematic assignment of~$\lsat$ variables is the one where both~$sat$ as well as all the $sat_i$ variables are set to false.
This is prevented by 1-CNF formula $D'$, i.e., Formulas~(\ref{red:usatnot}),(\ref{red:usatnotd}).

Example~\ref{ex:redu} ($\star$) illustrates $\mathcal{R}$ on a specific formula.
%
%
%

\begin{figure*}[t]\centering
\vspace{-1.25em}
\begin{tikzpicture}[node distance=7mm,every node/.style={circle,inner sep=2pt}]
\node (b) [fill,label={[text height=1.5ex]left:$v_1$}] {};
\node (y) [fill,below right=2.35em of b,label={[text height=1.5ex]left:$v_3$}] {};
\node (z) [fill,below left=2.35em of y,label={[text height=1.5ex]left:$v_4$}] {};
\node (f) [fill,below left=2.35em of z,label={[text height=1.5ex]left:$v_5$}] {};
\node (d) [fill,below=1.5em of z,label={[text height=1.5ex,yshift=.5em]below:$v_6$}] {};
\node (q) [fill,below right=2.35em of y,label={[text height=1.5ex]left:$v_8$}] {};
\node (e) [fill,below right=2.35em of z,label={[text height=1.5ex]right:$v_7$}] {};
%
%
\draw (y) to (b);
\draw (y) to (q);
\draw (z) to (e);
\draw (z) to (f);
\draw (d) to (z);
%
\node (c) [tdnode,ellipse,draw=black,text height=5em,yshift=-3em,text width=2em,xshift=-.75em,color=white,right=5em of b,label={[text height=1.5ex,yshift=0.00cm,xshift=0.00cm]below:$S$}] {}; 
\node (s1) [draw=black,right=5em of b,yshift=-.8em,label={[text height=1.5ex,yshift=0.00cm,xshift=0.35em]right:$s_1$}] {};
\node (s2) [draw=black,below=1em of s1,label={[text height=1.5ex,yshift=0.00cm,xshift=0.35em]right:$s_2$}] {};
\node (s3) [draw=black,below=1em of s2,label={[text height=1.5ex,yshift=0.00cm,xshift=0.35em]right:$s_3$}] {};
\node (s4) [draw=black,below=1em of s3,label={[text height=1.5ex,yshift=0.00cm,xshift=0.35em]right:$s_4$}] {};
\draw (s1) to (s2);
\draw [in=5] (s1) to (s3);
\draw [in=5] (s2) to (s4);
\draw (s2) to (s3);
\draw (s3) to (s4);
\draw [dashed,in=20,out=140] (s1) to (b);
\draw [dashed,in=20,out=140] (s2) to (b);
\draw [dashed,in=20,out=140] (s2) to (y);
\draw [dashed,in=-25,out=140] (s4) to (z);
\draw [dashed,in=20,out=140] (s4) to (e);
\draw [dashed,in=20,out=140] (s2) to (b);
\draw [dashed,in=20,out=140] (s3) to (b);
\draw [dashed,in=20,out=130] (s4) to (b);
\node (arr) [draw=white,right=3em of c,label={}] {$\xRightarrow{\;\mathcal{R}(Q,S)\;}$};
%
%
%
%
\node (b2) [fill,right=19em of b,label={[text height=1.5ex]left:$v_1$}] {};
\node (sb2) [fill,right=20.85em of b, yshift=.3em,label={[text height=1.5ex,yshift=1em,xshift=-.45em]below:{\tiny $sat_2$}}] {};
\node (x2s) [fill,below left=.9em of b2,label={[text height=1.5ex,xshift=.45em]left:{\tiny $sat_1$}}] {};
\node (y2) [fill,below right=2.35em of b2,label={[text height=1.5ex]left:$v_3$}] {};
\node (y2s) [fill,below right=.9em of b2,label={[text height=1.5ex,xshift=.45em]left:{\tiny $sat_3$}}] {};
\node (z2) [fill,below left=2.35em of y2,label={[text height=1.5ex]left:$v_4$}] {};
\node (z2s2) [fill,below right=.9em of y2,label={[text height=1.5ex,xshift=.45em]left:{\tiny $sat_7$}}] {};
\node (f2) [fill,below left=2.35em of z2,label={[text height=1.5ex]left:$v_5$}] {};
\node (f2s) [fill,below left=.75em of z2,label={[text height=1.5ex,xshift=.45em]left:{\tiny$sat_4$}}] {};
\node (d2) [fill,below=1.5em of z2,label={[text height=1.5ex,yshift=.5em]below:$v_6$}] {};
\node (d2s) [fill,below=.9em of z2,label={[text height=1.5ex,xshift=.45em]left:{\tiny$sat_5$}}] {};
\node (q2) [fill,below right=2.35em of y2,label={[text height=1.5ex,xshift=-.1em,yshift=.2em]left:$v_8$}] {};
\node (e2) [fill,below right=2.35em of z2,label={[text height=1.5ex,xshift=.45em,yshift=-.25em]left:$v_7$}] {};
\node (sat9) [fill,right=3.5em of e2,yshift=.5em,label={[text height=1.5ex,xshift=.45em,yshift=-.25em]left:{\tiny $sat_8$}}] {};
\node (e2s) [fill,below right=.9em of z2,label={[text height=1.5ex,xshift=1.58em,yshift=.65em]left:{\tiny$sat_6$}}] {};
\draw (b2) to (sb2);
\draw (b2) to (x2s);
\draw (y2) to (b2);
\draw (y2) to (q2);
\draw (z2) to (e2);
\draw (z2) to (f2);
\draw (d2) to (z2);

\node (sx) [draw=black,right=9em of b2,yshift=1em,label={[text height=1.5ex,yshift=0.00cm,xshift=0.35em]right:$ $}] {};
\node (c2) [tdnode,ellipse,draw=black,text height=5em,yshift=-2.3em,text width=7em,xshift=-.95em,color=white,right=5em of b2,label={[text height=1.5ex,yshift=0.00cm,xshift=0.00cm]below:$S'$}] {}; 
\node (s1z) [fill=black,right=30em of s1,label={[text height=1.5ex,yshift=0.00cm,xshift=0.35em]right:$s_1$}] {};
\node (s2z) [fill=black,below=1em of s1z,label={[text height=1.5ex,yshift=0.00cm,xshift=0.35em]right:$s_2$}] {};
\node (s3z) [fill=black,below=1em of s2z,label={[text height=1.5ex,yshift=0.00cm,xshift=0.35em]right:$s_3$}] {};
\node (s4z) [fill=black,below=1em of s3z,label={[text height=1.5ex,yshift=0.00cm,xshift=0.35em]right:$s_4$}] {};

\node (s12) [draw=black,right=5em of b2,yshift=-.8em,label={[text height=1.5ex,yshift=0.00cm,xshift=0.35em]right:$idx_1^1$}] {};
\node (s13) [draw=black,right=10.5em of b2,yshift=-.8em,label={[text height=1.5ex,yshift=0.00cm,xshift=0.35em]right:$idx_3^1$}] {};
\node (s22) [draw=black,below=1em of s12,label={[text height=1.5ex,yshift=0.00cm,xshift=0.35em]right:$idx_1^2$\; $\ldots$}] {};
\node (s23) [draw=black,below=1em of s13,label={[text height=1.5ex,yshift=0.00cm,xshift=0.35em]right:$idx_3^2$}] {};
\node (vs22) [draw=black,below=1em of s22,label={[text height=1.5ex,yshift=0.00cm,xshift=0.35em]right:$val_1$}] {};
\node (vsdots) [draw=black,below=1em of s22,label={[text height=1.5ex,yshift=0.00cm,xshift=0.35em]right:$val_1$}] {};
\node (vsdots2) [draw=black,below=1em of s23,label={[text height=1.5ex,yshift=0.00cm,xshift=0.35em]right:$val_3$}] {};
\node (sat) [draw=black,below=2em of s22,xshift=3em,label={[text height=1.5ex,yshift=0.00cm,xshift=0.35em]right:$sat$}] {};
\draw [dashed,in=20,out=140] (s13) to (sx);
\draw [dashed,in=20,out=140] (s23) to (sx);
\draw [dashed,in=20,out=140] (vsdots2) to (sx);
%
%
%
\draw [dashed,in=-35,out=250] (sat) to (f2);
\draw [dashed,in=-35,out=250] (sat) to (d2);
\draw [dashed,in=-25,out=250] (sat) to (e2);
\draw [dashed,in=15,out=140] (s12) to (y2s);
\draw [dashed,in=15,out=140] (s22) to (y2s);
\draw [dashed,in=15,out=140] (vs22) to (y2s);
\draw [dashed,in=15,out=140] (s12) to (y2);
\draw [dashed,in=15,out=140] (s22) to (y2);
\draw [dashed,in=15,out=140] (vs22) to (y2);
\draw [dashed,in=15,out=140] (s12) to (b2);
\draw [dashed,in=25,out=140] (s22) to (b2);
\draw [dashed,in=35,out=140] (vs22) to (b2);
\draw [dashed,in=-35,out=250] (s1z) to (c2);
\draw [dashed,in=-35,out=250] (s2z) to (c2);
\draw [dashed,in=-35,out=250] (s3z) to (c2);
\draw [dashed,in=-35,out=250] (s4z) to (c2);
\draw [dashed,in=-35,out=270] (s4z) to (sat);
%
%
\draw [dashed,in=-52,out=260] (s12) to (z2);
\draw [dashed,in=-60,out=260] (s22) to (z2);
\draw [dashed,in=-70,out=260] (vs22) to (z2);
\draw [dashed,in=-50,out=260] (s12) to (e2);
\draw [dashed,in=-65,out=260] (s22) to (e2);
\draw [dashed,in=-80,out=260] (vs22) to (e2);
\draw [dashed,in=80,out=-70] (s12) to (sat9);
\draw [dashed,in=80,out=-70] (s22) to (sat9);
\draw [dashed,in=80,out=-70] (vs22) to (sat9);
%
%
%
%
%
%
\end{tikzpicture}
\vspace{-1.25em}
\caption{Structure-aware reduction~$\mathcal{R}$ for some QBF~$Q$; dashed lines show potentially dense graph parts. (Left): A primal graph~$G_Q$ together with a sparse feedback vertex set~$S$ that connects~$G_Q$. (Right): Corresponding primal graph~$G_{Q'}$ (simplified) and the resulting sparse feedback vertex set~$S'$, where~$Q'$ and~$S'$ are obtained by~$\mathcal{R}(Q,S)$. The illustration depicts three different kind of clauses of~$\matr(Q)$:  type (i) using only variables in~$S$, like $sat_8$; type (ii): using two variables in~$S$, like $sat_1, sat_2$, and type (iii) using only one variable in~$S$, like $sat_3, sat_4, sat_5, sat_6, sat_7$.}\label{fig:sketch}
\end{figure*}
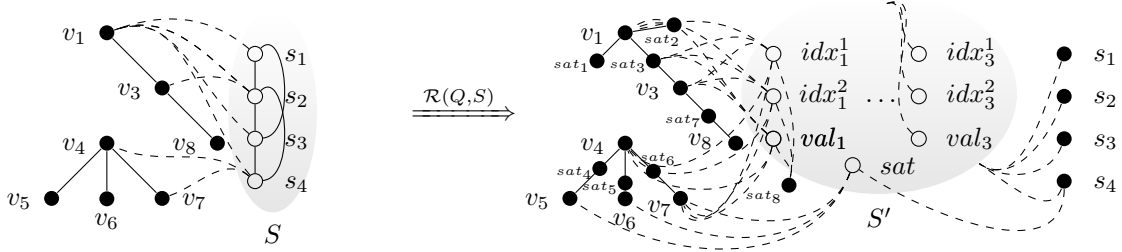

\paragraph*{Structure-Awareness}

Besides~$Q'$, reduction~$\mathcal{R}$ above further gives rise to the sparse feedback vertex set of~$Q'$ defined by~$S'\eqdef \lbvs{S} \cup \lbvvs{S} \cup \{sat\}$. 
Indeed, the size of~$S'$ compared to~$\Card{S}$ is exponentially smaller and therefore~$\mathcal{R}$ is indeed a structure-aware reduction.
The reduction and the relations between~$Q$ and~$Q'$, as well as~$S$ and~$S'$ are visualized in Figure~\ref{fig:sketch}.
Formally, we obtain the following result stating that~$\mathcal{R}$ is indeed a SAW reduction for sparse feedback vertex sets.

\begin{LEM}[Decrease Feedback Vertex Number,~$\star$]\label{lem:compr}
Given QBF~$Q$ in 3,1-CDNF and a sparse feedback vertex set~$S$ of $Q$, $\mathcal{R}$ constructs QBF~$Q'$ with sparse feedback vertex set~$S'$ of~$Q'$ such that~$\Card{S'}$ is in~$\mathcal{O}(\log(\Card{S}))$.
\end{LEM}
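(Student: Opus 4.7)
My plan is to verify three properties of the set $S' = \lbvs{S}\cup\lbvvs{S}\cup\{sat\}$: the claimed size bound, that $G_{Q'}-S'$ is acyclic, and that $S'$ is sparse with respect to~$\matr(Q')$. The size bound is immediate by counting: $\lbvs{S}$ contributes $3\lceil\log(\Card{S})\rceil$ variables, $\lbvvs{S}$ contributes three, and $sat$ adds one, so $\Card{S'}=3\lceil\log(\Card{S})\rceil+4\in\mathcal{O}(\log(\Card{S}))$. The real work is structural.

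To study $G_{Q'}-S'$, I will inspect each of the families~(\ref{red:guessatomj})--(\ref{red:usatnotd}) of the reduction and record which variables survive after deleting~$S'$. The key observation is that in every term of Formulas~(\ref{red:guessatomj})--(\ref{red:guessnegatomj}) only the single variable~$x\in S$ remains, in~(\ref{red:usatxp})--(\ref{red:usatxnegv}) only the variable~$sat_i$ remains, in~(\ref{red:usatvd}) only $\var(l)$ remains, and the unit clauses~(\ref{red:usatnot})--(\ref{red:usatnotd}) cannot contribute edges at all. The only family producing two surviving variables is~(\ref{red:usatv}), which yields edges of the form $\{sat_i,\var(l)\}$ with $\var(l)\in\var(c_i)\setminus S$. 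Thus $G_{Q'}-S'$ is bipartite between $\{sat_i\mid c_i\in C\}$ and $\var(C)\setminus S$.

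From this catalogue, sparsity of~$S'$ is essentially immediate: for distinct $u,v\notin S'$, only a term of Formula~(\ref{red:usatv}) can contain both, and each such term is uniquely determined by its pair $(sat_i,\var(l))$ once we assume (as is standard) that clauses of~$C$ contain no repeated or complementary literals on the same variable. For the feedback-vertex property I will argue by contradiction: suppose $G_{Q'}-S'$ contains a cycle $sat_{i_1}v_1sat_{i_2}v_2\cdots sat_{i_k}v_ksat_{i_1}$. Because $v_j$ and $v_{j+1}$ both occur in clause $c_{i_{j+1}}$ of~$C$ and neither is in~$S$, the pair $\{v_j,v_{j+1}\}$ is an edge of $G_Q-S$, giving a closed walk $v_1\cdots v_kv_1$ in~$G_Q-S$. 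If $k=2$, the two clauses $c_{i_1},c_{i_2}$ share both variables $v_1,v_2\notin S$, violating sparsity of~$S$ in~$Q$. If $k\ge 3$, the $v_j$'s are pairwise distinct and form a genuine cycle in~$G_Q-S$, contradicting that~$S$ is a feedback vertex set. Either case yields a contradiction.

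The main obstacle I anticipate is bookkeeping through the many term families of~$\mathcal{R}$ to confirm that no edge outside~(\ref{red:usatv}) survives removal of~$S'$; doing this carefully is what lets both the sparsity and acyclicity arguments reduce to the two short contradictions above. Once those verifications are in place, collecting the three bullets completes the proof.
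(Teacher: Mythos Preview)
Your proposal is correct and follows essentially the same approach as the paper's proof: both catalogue the term families to conclude that after removing $S'$ the only surviving edges are of the form $\{sat_i,\var(l)\}$ from Formulas~(\ref{red:usatv}), then argue acyclicity by contradiction via a case split on cycle length (short cycles contradict sparsity of~$S$, longer cycles contradict the feedback-vertex property of~$S$), and derive sparsity of~$S'$ from the fact that only Formulas~(\ref{red:usatv}) contribute edges. Your explicit bookkeeping through each family is a touch more detailed than the paper's, but the structure and the two contradictions are identical.
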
\iflong
\begin{proof}\vspace{-.5em}
  As described above, $\mathcal{R}$ gives rise to a sparse feedback vertex set~$S'$ of~${Q'}$. Indeed, (i) $G_{Q'}-S'$ results in an acyclic graph, since each~$x\in S$ is isolated in~$G_{Q'}-S'$ and the only edges remaining in~$G_{Q'}-S'$ involve some~$sat_i$ and~$y$ of~$G_{Q'}-S'$.
  Towards a contradiction, assume that~$G_{Q'}-S$ contains a cycle $x_1,sat_{i_1},x_2,sat_{i_2},\dots ,sat_{i_{r-1}}, x_{r}$ with $x_{r} = x_1$.
  Variables $x_j$ and $x_{j+1}$ are adjacent to~$sat_{i_j}$ in~$G_{Q'}$ only if $x_j,x_{j+1} \in \var(c_i)$, so $x_j$ and $x_{j+1}$ are adjacent in~$G_{Q}$.
So if $r > 2$, we get a cycle in $G_Q-S$, contradicting the assumption that $S$ is a feedback vertex set.
If $r = 2$ the cycle is of the form $x_1,sat_{i_1},x_2,sat_{i_2},x_1$ and there are distinct clauses $c_{i_1}, c_{i_2}$ such that $x_1,x_2 \in \var(c_{i_1})$ and $x_1,x_2 \in \var(c_{i_2})$, contradicting the assumption that $S$ is sparse.
Further, (ii) the only terms where adjacent vertices of~$G_{Q'}-S'$ may occur together is in those of Formulas~(\ref{red:usatv}), since the other DNF terms use at most one variable that is not in $S'$.
This proves that $S'$ is a sparse feedback vertex set of $G_{Q'}$.
Finally,  by construction $\Card{S'}\leq 3\cdot\ceil{\log(\Card{S})}+4$, which is in~$\mathcal{O}(\log(\Card{S}))$.
\end{proof}\fi

\paragraph*{Towards 3-DNF of~$C'$}
Observe that the formula~$C'$ generated by the reduction~$\mathcal{R}$ is almost in 3-DNF. The only formulas that are not already in the required format are Formulas~(\ref{red:guessatomj}) and~(\ref{red:guessnegatomj}).
It is easy to observe that, however, even those formulas can be transformed such that only at most 3 literals per term are used. To this end, one needs to introduce additional auxiliary variables (that are added to the innermost $\forall$ quantifier).
Indeed, a straight-forward transformation recursively splits Formulas~(\ref{red:guessatomj}) and~(\ref{red:guessnegatomj}) into two terms, where the first term consists of two literals and a new auxiliary variable~$v$ that is added (positively), and the second term consists of the remaining literals of the term and~$\neg v$.
In turn, each term has at most two new auxiliary variables and one only needs to take care that the resulting term that contains~$x\in S$ or~$\neg x\in S$ does not use two of these auxiliary variables (preventing cycles in the resulting primal graph).

\paragraph*{Runtime and Correctness}
Next, we show runtime and correctness of~$\mathcal{R}$, followed by main~results.

\begin{THM}[Runtime, $\star$]\label{lab:runtime}
For a QBF~$Q$ in 3,1-CDNF with $\matr(Q)=C\wedge D$ and set~$S\subseteq\var(Q)$ of variables of~$Q$, $\mathcal{R}$ runs in time~$\mathcal{O}(\ceil{\log(\Card{S}+1)} \cdot (\Card{S} + \Card{C}) + \Card{D})$.
\end{THM}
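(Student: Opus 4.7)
The plan is to bound the running time of $\mathcal{R}$ by separately accounting for each of the six families of terms comprising $C'$ and the 1-CNF $D'$, then summing. A preparatory step is to assign, for each $x\in S$ and each $j\in\{1,2,3\}$, the fixed literal-combination $\bval{x}{j}$ over the $\ceil{\log(\Card{S})}$ index variables of $\lbvs{S}$; this can be done by enumerating $x\in S$ and writing out the binary encoding of its rank, taking time $\mathcal{O}(\Card{S}\cdot\ceil{\log(\Card{S}+1)})$. Introducing the auxiliary variable sets $\lbvs{S}$, $\lbvvs{S}$, and $\lsat$ and extending the prefix takes time $\mathcal{O}(\ceil{\log(\Card{S}+1)}+\Card{C})$.

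Next I would go through the term families:
\begin{itemize}
\item Formulas~(\ref{red:guessatomj}) and~(\ref{red:guessnegatomj}) produce $6\Card{S}$ terms, each of length $\ceil{\log(\Card{S})}+2$, contributing $\mathcal{O}(\Card{S}\cdot\ceil{\log(\Card{S}+1)})$.
\item Formulas~(\ref{red:usatv}) produce at most $3\Card{C}$ terms of length $2$, contributing $\mathcal{O}(\Card{C})$.
\item Formulas~(\ref{red:usatxp}) produce at most $3\Card{C}\cdot\ceil{\log(\Card{S})}$ terms of length $2$, contributing $\mathcal{O}(\Card{C}\cdot\ceil{\log(\Card{S}+1)})$; this is the dominant contribution from $C$.
\item Formulas~(\ref{red:usatxv}) and~(\ref{red:usatxnegv}) together produce at most $3\Card{C}$ terms of length $2$, contributing $\mathcal{O}(\Card{C})$.
\item Formulas~(\ref{red:usatvd}) produce $\Card{D}$ terms of length $2$, contributing $\mathcal{O}(\Card{D})$.
\item The 1-CNF $D'$ (Formulas~(\ref{red:usatnot}) and~(\ref{red:usatnotd})) consists of $\Card{C}+1$ singletons, contributing $\mathcal{O}(\Card{C})$.
\end{itemize}
Summing all contributions and absorbing constants gives $\mathcal{O}(\ceil{\log(\Card{S}+1)}\cdot(\Card{S}+\Card{C})+\Card{D})$ as claimed.

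There is no real obstacle here: the argument is a straightforward accounting, the only subtlety being to note that each clause $c_i\in C$ has a bounded (at most $3$) number of literals so the inner loops over literals contribute only a constant factor, while the $\ceil{\log(\Card{S}+1)}$ factor arises uniformly from the length of the binary encodings $\bval{x}{j}$ used both in the guessing terms~(\ref{red:guessatomj})--(\ref{red:guessnegatomj}) and in the satisfaction-check terms~(\ref{red:usatxp}). The ``$+1$'' inside the logarithm merely handles the degenerate case $\Card{S}=0$. Producing the feedback vertex set $S'=\lbvs{S}\cup\lbvvs{S}\cup\{sat\}$ alongside $Q'$ takes time linear in $\ceil{\log(\Card{S}+1)}$, which is absorbed into the stated bound.
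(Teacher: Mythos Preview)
Your proposal is correct and follows essentially the same accounting approach as the paper's proof: both enumerate the formula families~(\ref{red:guessatomj})--(\ref{red:usatnotd}) and bound the number and size of terms in each. Your version is in fact slightly more careful (you correctly treat the singletons of $D'$ as constant-size, and you explicitly account for the preprocessing of the binary encodings and the output of $S'$), but the structure and level of the argument match the paper.
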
\iflong
\begin{proof}\vspace{-.5em}
There are~$\mathcal{O}(\Card{S})$ many instances of Formulas~(\ref{red:guessatomj}), (\ref{red:guessnegatomj}), each of size~$\mathcal{O}(\log(\Card{S}))$.
Further, there are~$\mathcal{O}(\Card{C})$ many instances of constant-size Formulas~(\ref{red:usatv}), (\ref{red:usatxv}), and (\ref{red:usatxnegv}). 
Finally, there are~$\mathcal{O}(\Card{C})$ many instances of Formulas~(\ref{red:usatnot}), whose size is bounded by~$\mathcal{O}(\log(\Card{S}))$,
$\mathcal{O}(\Card{C}\log(\Card{S}))$ many instances of Formulas~(\ref{red:usatxp}) of size 2,
as well as~$\mathcal{O}(\Card{D})$ many constant-size Formulas~(\ref{red:usatvd}).
%
%
%
%
%
%
\end{proof}\fi

\begin{THM}[Correctness, $\star$]\label{lab:corr}
  Given a QBF~$Q$ in 3,1-CDNF and a set~$S\subseteq\var(Q)$ of variables of~$Q$, 
  reduction $\mathcal{R}$ computes an instance~$Q'$ that is equivalent to $Q$. In fact, any assignment~$\alpha$ to variables of~$\matr(Q)$ satisfies~$\matr(Q)$ iff every extension~$\alpha'$ of~$\alpha$ to variables~$\lbvs{S} \cup \lbvvs{S} \cup \lsat$ satisfies~$\matr(Q')$.
\end{THM}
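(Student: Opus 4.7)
It suffices to verify the matrix-level ``iff''; once we have that, QBF equivalence follows because the prefixes of $Q$ and $Q'$ coincide through the innermost block $\exists V_\ell$, and $Q'$ merely appends a trailing $\forall$-block over $\lbvs{S}\cup\lbvvs{S}\cup\lsat$, which by the ``iff'' reduces to a no-op. We prove the equivalence by direct case analysis on the disjunctive structure of $\matr(Q')=C'\vee D'$.

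For direction $\Rightarrow$, fix any extension $\alpha'$ of some $\alpha$ satisfying $C\wedge D$. If every $\lsat$-variable is false under $\alpha'$, then $\alpha'$ satisfies $D'$ and we are done. Otherwise, some $sat_i$ (resp.\ $sat$) is true under $\alpha'$, and we select a literal $l$ of the corresponding clause $c_i$ (resp.\ a singleton $\{l\}\in D$) that $\alpha$ makes true; such an $l$ exists since $\alpha\models c_i$ and $\alpha\models D$. If $\var(l)\notin S$, then a term of Formula~(\ref{red:usatv}) (resp.\ (\ref{red:usatvd})) fires immediately. If instead $\var(l)=x\in S$ with $l=\lit{c_i}{j}$, we case-split on whether $\alpha'$ assigns the $j$-th index block consistently with $\bval{x}{j}$: if not, some $b\in\bval{x}{j}$ is falsified by $\alpha'$ and a term of Formula~(\ref{red:usatxp}) is satisfied; if yes, then Formulas~(\ref{red:usatxv})/(\ref{red:usatxnegv}) and (\ref{red:guessatomj})/(\ref{red:guessnegatomj}) together cover both possible values of $val_j$ (using that $\alpha$ makes $l$ true), so at least one term of $C'$ is satisfied regardless of $\alpha'(val_j)$.

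For direction $\Leftarrow$, we argue by contraposition, constructing a bad extension $\alpha'$ when $\alpha$ falsifies either some $c_i$ or $D$. In the clause case, set $sat_i=1$ and all other $\lsat$-variables to $0$; this already falsifies $D'$. For each $j$ with $\var(\lit{c_i}{j})=x_j\in S$, align the $j$-th index block with $\bval{x_j}{j}$ and set $val_j$ so as to kill Formulas~(\ref{red:usatxv})/(\ref{red:usatxnegv}); for the remaining $j$, pick an arbitrary anchor $x'\in S$, align the $j$-th index with $\bval{x'}{j}$, and set $val_j=\alpha(x')$. The case of unsatisfied $D$ is handled symmetrically with $sat=1$ as the only active switch. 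A routine verification then shows every term of $C'$ is falsified: terms attached to inactive switches vanish; terms in Formulas~(\ref{red:usatv})--(\ref{red:usatvd}) at the active switch vanish because the corresponding literal is false under $\alpha$; Formulas~(\ref{red:usatxv})/(\ref{red:usatxnegv}) vanish by our choice of $val_j$; and Formulas~(\ref{red:guessatomj})/(\ref{red:guessnegatomj}) vanish either because $\bval{x}{j}$ mismatches the chosen index on at least one bit (for $x$ other than the anchor), or because $\alpha(x)$ and $val_j$ were deliberately set to conflict (at the anchor).

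The delicate point is the last item in the backward verification: when we align the $j$-th index to $\bval{x_j}{j}$ in order to kill the $c_i$-related terms, we must simultaneously ensure that no stray ``consistency'' term from Formula~(\ref{red:guessatomj})/(\ref{red:guessnegatomj}) for some unrelated $x\neq x_j$ is accidentally activated. This is precisely where the \emph{uniqueness} of the encoding $\bval{\cdot}{j}$ across $x\in S$ enters: any two distinct $x,x'\in S$ disagree on at least one bit of their encodings, so aligning the $j$-th index to $\bval{x_j}{j}$ automatically falsifies the bit-matching conjunction in every Formula~(\ref{red:guessatomj})/(\ref{red:guessnegatomj}) attached to $x\neq x_j$. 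Combined with the explicit $\alpha(x_j)$-versus-$val_j$ mismatch built into the active branch, this closes the argument.
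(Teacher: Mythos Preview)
Your proof is correct and follows essentially the same route as the paper's. The forward direction is the same case analysis, just phrased directly rather than by contradiction; the backward direction constructs the same falsifying extension~$\alpha'$. If anything, your treatment is slightly more careful: for positions~$j$ with $\var(\lit{c_i}{j})\notin S$ you explicitly anchor the $j$-th index to some $x'\in S$ with $val_j=\alpha(x')$, whereas the paper's construction leaves those index bits unspecified and then asserts that no instance of Formulas~(\ref{red:guessatomj})/(\ref{red:guessnegatomj}) is satisfied. Your observation that the \emph{uniqueness} of the encodings $\bval{\cdot}{j}$ is what kills the consistency terms for all $x\neq x_j$ is exactly the point, and it is good that you make it explicit.
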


\iflong
\begin{proof}
%
%
%
$\Longrightarrow$: Assume that~$Q$ is valid. Then, in the following we show that for any satisfying assignment~$\alpha: \var(Q) \rightarrow \{0,1\}$ of~$C\wedge D$, we have that any extension~$\alpha'$ of~$\alpha$ to variables~$\lbvs{S} \cup \lbvvs{S} \cup \lsat$ is a satisfying assignment of~$\matr(Q')=C'\vee D'$. 
Assume towards a contradiction that there is such an extension~$\alpha'$ with~$C'[\alpha']\neq\{\emptyset\}$ and~$D'[\alpha']\neq\emptyset$.
Then, none of the terms of~$C'$ as given by Formulas~(\ref{red:guessatomj})--(\ref{red:usatvd}) are satisfied by~$\alpha'$ and at least one of the clauses of~$D'$ given by Formulas~(\ref{red:usatnot}) and~(\ref{red:usatnotd}) is not satisfied by~$\alpha'$ as well.
We distinguish two cases.

Case (a): Some of Formulas~(\ref{red:usatnot}) are not satisfied by~$\alpha'$, i.e., $\alpha'(sat_i)=1$ for some~$1\leq i \leq\Card{C}$.
Since~$\alpha'$ does not satisfy any of Formulas~(\ref{red:usatxp}), we have that the
$j$-th index ($1\leq j\leq 3$) precisely targets the $j$-th variable~$v_j\eqdef\var(\lit{c_i}{j})$ of clause~$c_i$ in case~$v_j\in S$, where we have $\alpha'(\var(\bval{v_j}{j}))=\sgn(\bval{v_j}{j})$. 
Then, since $\alpha'$ satisfies none of Formulas~(\ref{red:guessatomj}) and~(\ref{red:guessnegatomj}), we have that
 assignment~$\alpha'$ sets the value~$val_j$ for the~$j$-th variable~$v_j$ of~$c_i$ that is in~$S$ (with~$v_j\in S$), precisely according to~$\alpha$, i.e., such that
$\alpha'(val_j)=\alpha(v_j)$.
Then, however, $\alpha$ does not satisfy clause~$c_i$ due to the assignment of any variable~$y$ that is in~$S$,
since none of Formulas~(\ref{red:usatxv}) and~(\ref{red:usatxnegv}) are satisfied by~$\alpha'$. 
Finally, since~$c_i$ is still satisfied by~$\alpha$, there is at least one such variable~$y\in \var(c_i)\setminus S$ with~$\alpha(y)=\alpha'(y)$
such that~$\alpha'$ satisfies precisely the instance of Formula~(\ref{red:usatv}), where~$l\in c_i$ is a literal over~$y$, i.e., $\var(l)=y$.
This contradicts the assumption that~$\alpha'$ neither satisfies~$C'$ nor~$D'$, as constructed by Formulas~(\ref{red:guessatomj})--(\ref{red:usatnotd}).

Case (b): Formula~(\ref{red:usatnotd}) is not satisfied by~$\alpha'$, i.e., $\alpha'(sat)=1$. Then, since none of Formulas~(\ref{red:usatvd}) is satisfied by~$\alpha'$, we have that~$\alpha'$ does not satisfy any~$\{l\}\in D$, i.e., $\alpha'(\var(l))\neq \sgn(l)$ for every~$\{l\}\in D$. 
Consequently, $D[\alpha']\neq\{\emptyset\}$ and therefore by construction of~$\alpha'$, we have that~$D[\alpha]\neq\{\emptyset\}$. This, however, contradicts the assumption that~$\alpha$ is a satisfying assignment of~$\matr(Q)$.

$\Longleftarrow$: We show this direction by contraposition, where we take any assignment~$\alpha: \var(Q)\rightarrow\{0,1\}$ that does not satisfy~$\matr(Q)$ and show that then there is an extension~$\alpha'$ of~$\alpha$ to variables~$\lbvs{S} \cup \lbvvs{S} \cup \lsat$ such that~$\alpha'$ does not satisfy $\matr(Q')$.
We proceed again by case distinction.

Case (a): $C[\alpha]\neq\emptyset$ due to at least one clause~$c_i\in C$, i.e., $\{c_i\}[\alpha]\neq\emptyset$.
Then, (i) we set~$\alpha'(sat)\eqdef 0$, $\alpha'(sat_i)\eqdef 1$ as well as $\alpha'(sat_{i'})\eqdef 0$ for any~$1\leq i'\leq \Card{C}$ such that~$i'\neq i$.
Further, for each~$1\leq j\leq 3$ we let~$v_j\eqdef\lit{c_i}{j}$ be the variable of the~$j$-th literal of clause~$c_i$.
Finally, (ii) we let the~$j$-th index point to~$v_j$, i.e., we set~$\alpha'(\var(b))\eqdef \sgn(b)$ for each~$b\in\bval{v_j}{j}$ and every~$1\leq j\leq 3$ with~$v_j\in S$,
and (iii) we set the value of the~$j$-th index such that~$\alpha'(val_j)\eqdef\alpha(v_j)$.
Consequently, by construction of~$\alpha'$, no instance of Formulas~(\ref{red:guessatomj}) or~(\ref{red:guessnegatomj}) is satisfied by~$\alpha'$.
Since~$\alpha$ does not satisfy $\{c_i\}$, we follow that~$\alpha'$ does not satisfy any instance of Formulas~(\ref{red:usatv}).
Further, since by construction (ii) of~$\alpha'$, the~$j$-th index targets~$v_j$, neither one of Formulas~(\ref{red:usatxp}) can be satisfied by~$\alpha'$.
Similarly, by (iii) no instance of Formulas~(\ref{red:usatxv}) or~(\ref{red:usatxnegv}) is satisfied by~$\alpha'$.
Finally, by construction (i), neither Formulas~(\ref{red:usatvd}) are satisfied by~$\alpha'$, nor is~$D'$, since, e.g.,  not all Formulas~(\ref{red:usatnot}) are satisfied by~$\alpha'$. 

Case (b): $D[\alpha]\neq\{\emptyset\}$, i.e., $\alpha(l)\neq\sgn(l)$ for every~$\{l\}\in D$. In this case, (i) we set~$\alpha'(sat)\eqdef 1$ and (ii) $\alpha'(sat_i)\eqdef 0$ for any~$1\leq i\leq \Card{C}$. Further, for each~$1\leq j\leq 3$ and any arbitrary~$x\in S$ as well as~$b\in \bval{x}{j}$ (iii) we let~$\alpha'(\var(b))\eqdef \sgn(b)$, (iv) as well as~$\alpha'(val_j)\eqdef\alpha(x)$.
Then, by Construction (i) of~$\alpha'$ we have that~$D'[\alpha']\neq\emptyset$ due to Formula~(\ref{red:usatnotd}). Since~$D[\alpha]\neq\{\emptyset\}$, and
despite Construction (i), we have that~$\alpha'$ does not satisfy any of Formulas~(\ref{red:usatvd}). Further, by Construction (ii), neither one of Formulas~(\ref{red:usatv})--(\ref{red:usatxnegv}) is satisfied by~$\alpha'$ as well.
Finally, by Construction (iii) and (iv) neither one of Formulas (\ref{red:guessatomj}) or (\ref{red:guessnegatomj}) is satisfied by~$\alpha'$. Therefore, $C'[\alpha']\neq\{\emptyset\}$, which concludes this case.
\end{proof}\fi

QSAT is well known to be polynomial-time tractable when restricted to 2-CNF formulas~\cite{AspvallPT79}.
As an application of our reduction~$\mathcal{R}$, we now observe that allowing a single clause of arbitrary length already leads to intractability.
\begin{COR}[$\star$]\label{cor:hardnesslongclause}
Problem~$\QBFSAT$ over a QBF $Q{=}Q_1 V_1$ $\cdots Q_\ell V_\ell. C \wedge D$ of quantifier depth~$\ell\geq 2$ with~$Q_\ell=\exists$, $C$ being in 2-CNF, and~$D$ being in 1-DNF, is~$\SIGMA{{\ell-1}}{\Ptime}$-complete (if $Q_1{\,=\,}\exists$, $\ell$ odd) and~$\PI{{\ell-1}}{\Ptime}$-complete (if $Q_1{\,=\,}\forall$, $\ell$ even).
\end{COR}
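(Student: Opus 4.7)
I will prove the corollary by establishing matching upper and lower bounds, where the lower bound uses the witness-selection idea underlying reduction~$\mathcal{R}$ of Section~\ref{sec:result}.

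For the \emph{upper bound}, I would observe that the restricted matrix $C \wedge D$ admits a polynomial-time satisfiability procedure: since $D$ is a single long clause (a 1-DNF), one can iterate over its literals~$l$ and test whether the 2-CNF $C \cup \{l\}$ is satisfiable by a standard 2-SAT algorithm; the full matrix is satisfiable iff some choice of $l$ succeeds. Hence, for any assignment to the outer $\ell-1$ quantifier blocks, the innermost $\exists V_\ell.\, C \wedge D$ is a polynomial-time predicate, which together with the outer alternations places the problem in $\SIGMA{\ell-1}{\Ptime}$ when $Q_1 = \exists$ ($\ell$ odd) and in $\PI{\ell-1}{\Ptime}$ when $Q_1 = \forall$ ($\ell$ even).

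For the \emph{lower bound}, I would reduce in polynomial time from the canonical $\SIGMA{\ell-1}{\Ptime}$- (resp.\ $\PI{\ell-1}{\Ptime}$-)complete problem: validity of a QBF $\Phi \eqdef Q_1 X_1 \cdots Q_{\ell-1} X_{\ell-1}.\, F$ whose matrix $F = T_1 \vee \cdots \vee T_m$ is a 3-DNF with each term $T_i$ a conjunction of at most three literals $l_{i,1}, \ldots, l_{i,s_i}$. A parity check shows $Q_{\ell-1} = \forall$ in both corollary cases, so 3-DNF is the natural complete normal form. Introducing a fresh existential selector variable $y_i$ per term $T_i$, I construct
\[
Q \eqdef Q_1 X_1 \cdots Q_{\ell-1} X_{\ell-1}\; \exists y_1, \ldots, y_m.\; C \wedge D,
\]
where $C$ is the 2-CNF $\bigwedge_{i=1}^{m}\bigwedge_{j=1}^{s_i} (\neg y_i \vee l_{i,j})$ and $D$ is the 1-DNF $\{\{y_1\}, \ldots, \{y_m\}\}$ encoding the single long clause $y_1 \vee \cdots \vee y_m$. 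Intuitively, $y_i$ is a witness flag asserting that ``$T_i$ is chosen''. The key correctness check is two-way: for any outer assignment, $F$ is true iff some $T_{i^*}$ is true iff the assignment $y_{i^*} = 1,\ y_i = 0$ ($i \ne i^*$) satisfies $C \wedge D$; the forward direction uses that each implication $\neg y_i \vee l_{i,j}$ is vacuous for $i \ne i^*$ and realized for $i = i^*$ by assumption on $T_{i^*}$, while the reverse direction uses that satisfying $D$ forces some $y_{i^*} = 1$, which via the 2-clauses forces all $l_{i^*,j}$ to be true. Appending the single $\exists$-block yields exactly $\ell$ alternating blocks with $Q_\ell = \exists$ and $Q_1$ as specified; the matrix is in 2,1-CDNF by construction, and the reduction is polynomial-time, so the hardness follows, and combining with the upper bound gives completeness.

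The main subtlety I anticipate is explaining why \emph{exactly one} level of the polynomial hierarchy is absorbed by the matrix restriction. The asymmetric form (many short 2-clauses plus one long disjunction) is just powerful enough to polynomial-time-encode an innermost 3-DNF predicate via the $y_i$ selection gadget, yet remains polynomial-time decidable for every fixed outer assignment; this precise balance is what moves the complexity from $\SIGMA{\ell}{\Ptime}$ (the general $\ell$-alternating QBF problem) down to $\SIGMA{\ell-1}{\Ptime}$, aligning the lower-bound reduction with the upper-bound argument.
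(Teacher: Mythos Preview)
Your proposal is correct and takes essentially the same approach as the paper. The paper's proof obtains hardness by instantiating the reduction~$\mathcal{R}$ with $S=\emptyset$ (leaving only Formulas~(\ref{red:usatv}), (\ref{red:usatnot}), (\ref{red:usatnotd})) on a $\Pi_{\ell-1}$-CNF instance and then negating; unwinding that composition yields precisely your direct selector-variable gadget $C=\bigwedge_{i,j}(\neg y_i\vee l_{i,j})$ together with the long disjunction $D=y_1\vee\cdots\vee y_m$, and the membership argument is identical.
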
\iflong
\begin{proof}[Proof]
We only sketch the proof for~$Q_1{=}\exists$ (the proof for the case~$Q_1{=}\forall$ is similar).
For membership, it is sufficient to observe that satisfiability of~$C \land D$ can be checked in polynomial time.
We simply try, for each literal~$l \in D$, whether the 2-CNF formula obtained by assigning $l$ true is satisfiable.
For hardness, let~$Q'=\forall V'_1 \cdots \exists V'_{\ell-1}. C'$ be a QBF with a $\PI{\ell-1}{}$-prefix and $C'$ in (3-)CNF.
Applying the reduction, we obtain an equivalent QBF~$\overline{Q} = \mathcal{R}(Q', \emptyset)$, effectively only using Formulas~(\ref{red:usatv}),(\ref{red:usatnot}),(\ref{red:usatnotd}).
The QBF~$\overline{Q}$ has a matrix~$C \lor D$, where $C$ is in 2-DNF and $D$ in $1$-CNF.
By negating $\overline{Q}$ (and flipping quantifier types), we get a QBF $Q$ with matrix $C \lor D$ where $C$ is in 2-CNF and $D$ in $1$-DNF.
Since $Q'$ was chosen arbitrarily, evaluating $\overline{Q}$ is $\PI{\ell-1}{P}$-hard, and thus evaluating $Q = \neg \overline{Q}$ is~$\SIGMA{{\ell-1}}{\Ptime}$-hard.
\end{proof}\fi

\noindent A similar result can be obtained with long terms when the innermost quantifier is universal.


\paragraph*{Lower Bound Result}
Having established 
structure-awareness, runtime, as well as correctness of the reduction~$\mathcal{R}$ above, we proceed with the lower bound results.

\begin{THM}[LB for Sparse Feedback Vertex Set,~$\star$]\label{lab:lb}
Given an arbitrary QBF~$Q$ in CDNF of quantifier depth~$\ell$ and a minimum sparse feedback vertex set~$S$ of~$Q$ with~$k=\Card{S}$. 
Then, under ETH, $\QBFSAT_\ell$ on~$Q$ cannot be decided in time~$\tower(\ell, o(k))\cdot\poly(\Card{\var(Q)})$.
\end{THM}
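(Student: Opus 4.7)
The plan is to establish the lower bound by iteratively composing the structure-aware reduction~$\mathcal{R}$ of Section~\ref{sec:result} with the ETH-based pathwidth lower bound of Proposition~\ref{ref:pwlb}. I would argue by contradiction: suppose an algorithm~$A$ decides $\QBFSAT_\ell$ on every CDNF input with sparse feedback vertex number~$k$ in time $\tower(\ell, f(k))\cdot\poly(n)$ for some~$f(k)=o(k)$, which (without loss of generality) may be assumed monotone.

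First I would fix an arbitrary target depth $\ell\geq 1$, choose some $\ell_0\in [1,\ell]$, and set $i=\ell-\ell_0$. By Proposition~\ref{ref:pwlb}, there is a family of CQBF instances~$P$ of quantifier depth~$\ell_0$ whose primal graph has pathwidth $p=\Theta(n)$ and which, under ETH, cannot be decided in time $\tower(\ell_0, o(p))\cdot\poly(n)$. Starting from the trivial sparse feedback vertex set $S_0=\var(P)$ of size~$n$, I would then iteratively apply $\mathcal{R}$ a total of $i$ times; because $\mathcal{R}$ turns an innermost-$\exists$ block into an innermost-$\forall$ block, successive applications must alternate between $\mathcal{R}$ and its symmetric dual $\mathcal{R}^{\forall}$ (obtained by swapping the roles of the CNF and DNF parts and the polarities of the auxiliary variables), whose existence is noted in passing in the paper. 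Lemma~\ref{lem:compr} applied inductively yields a CDNF QBF~$P_i$ of quantifier depth $\ell$ carrying a sparse feedback vertex set~$S_i$ of size $|S_i|=\mathcal{O}(\log^{(i)} n)$, while Theorem~\ref{lab:runtime} keeps $|P_i|=\poly(n)$ for each fixed~$i$. Correctness of the composed chain, namely $P\equiv P_i$, follows from repeated application of Theorem~\ref{lab:corr}.

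Running $A$ on $P_i$ takes time $\tower(\ell, f(|S_i|))\cdot\poly(|P_i|)$, which via the identity $\tower(\ell_0+i,x)=\tower(\ell_0, \tower(i,x))$ becomes $\tower(\ell_0, \tower(i, f(\mathcal{O}(\log^{(i)} n))))\cdot\poly(n)$. Since $f(x)=o(x)$, an easy induction on~$i$ (base case $2^{o(\log m)}=m^{o(1)}=o(m)$, inductive step $2^{o(a)}=o(2^a)$) shows that $\tower(i, o(\log^{(i)} n))=o(n)$. Hence the total cost of solving $P$ via the reduction chain and $A$ is $\tower(\ell_0, o(n))\cdot\poly(n)=\tower(\ell_0, o(p))\cdot\poly(n)$, directly contradicting Proposition~\ref{ref:pwlb} under ETH. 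Since $\ell$ was arbitrary, this rules out $A$ for every quantifier depth.

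The main obstacle I anticipate is the careful bookkeeping around the alternation of innermost quantifiers across successive applications of $\mathcal{R}$: one needs to formalize a dual reduction $\mathcal{R}^{\forall}$ that preserves both the sparse feedback vertex number bound of Lemma~\ref{lem:compr} and the equivalence guarantee of Theorem~\ref{lab:corr} when the innermost block is $\forall$, and then to verify inductively that the composed reduction continues to produce a valid sparse feedback vertex set of the claimed logarithmic size as well as a correctly equivalent instance. A secondary point is that the theorem is phrased in terms of the \emph{minimum} sparse feedback vertex number of~$P_i$, which could in principle be strictly smaller than~$|S_i|$; this is handled by observing that the lower bound only exploits the existence of some sparse feedback vertex set of size~$|S_i|$, since for minimum size $k\leq |S_i|$ the algorithm's runtime is still bounded by $\tower(\ell, f(|S_i|))\cdot\poly(n)$ using monotonicity of~$f$, which suffices for the contradiction.
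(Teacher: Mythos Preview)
Your proposal is correct and follows essentially the same core idea as the paper: iterate the SAW reduction~$\mathcal{R}$ to trade quantifier depth for an exponential shrinkage of the sparse feedback vertex set, then contradict a known ETH-based lower bound at the base. The paper organizes this as a clean induction on~$\ell$ with the base case~$\ell=1$ being ETH for \SAT (so no need to invoke Proposition~\ref{ref:pwlb} at all), and handles the alternation of the innermost quantifier simply by negating the formula rather than building an explicit dual~$\mathcal{R}^{\forall}$. Your direct-iteration formulation works too, but introducing a free~$\ell_0$ and appealing to the pathwidth bound adds bookkeeping that the paper avoids; in particular, your inductive claim $\tower(i,o(\log^{(i)} n))=o(n)$ is exactly what the paper's one-step induction hypothesis encapsulates implicitly. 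Both your observations about the need to alternate/negate and about monotonicity of~$f$ to pass from $|S_i|$ to the \emph{minimum} sparse FVS are on point and match what the paper does (the latter implicitly).
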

\iflong
\begin{proof}\vspace{-.28em}
Without loss of generality, we assume~$Q$ in 3,1-CDNF.
The result with quantifier depth~$\ell=1$ corresponds to \SAT and therefore follows immediately by ETH,
since~$k\leq \Card{\var(Q)}$ and due to the fact that ETH implies that there is no algorithm for solving \SAT on 3-CNFs running in time~$2^{o(\Card{\var(Q)})}$.
For the case of~$\ell>1$, we apply induction.
Assume that the result holds for~$\ell-1$. Let~$Q$ be such a QBF of quantifier depth~$\ell-1$.

Case 1: Innermost quantifier~$Q_{\ell-1}$ of~$Q$ is $\exists$.
Then, we apply the reduction~$\mathcal{R}$ on~$(Q, S)$. Thereby, we obtain a resulting instance~$Q'$ and a feedback vertex set~$S'$ of~$G_{Q'}$ in time~$\mathcal{O}(\poly(\Card{\var(Q)})$,
cf.\ Theorem~\ref{lab:runtime}.
The reduction is correct by Theorem~\ref{lab:corr}, i.e., the set of 
satisfying assignments of~$\matr(Q)$ coincides with the set of satisfying assignments of~$\matr(Q')$ when restricted to variables~$\var(Q)$.
Further, we have that~$\Card{S'}\leq 3\cdot\ceil{\log(\Card{S})}$ by Lemma~\ref{lem:compr}.
Assume towards a contradiction that despite ETH, $\QBFSAT$ on~$Q'$ can be decided in time~$\tower(\ell, o(\log(k)))\cdot\poly(\Card{\var(Q')})$.
Then, the validity of~$Q$ can be decided in time~$\tower(\ell-1, o(k))\cdot\poly(\Card{\var(Q)})$, 
contradicting the hypothesis. 

Case 2: Innermost quantifier~$Q_{\ell-1}$ of~$Q$ is~$\forall$. There, we proceed similar to Case 1, but invert~$Q$ first, resulting in a QBF~$Q^\star$, whose quantifier blocks are flipped such that~$\matr(Q^\star)$ is again in 3,1-CDNF. Then, as in Case 1, after applying reduction~$\mathcal{R}$ on~$(Q^\star, S)$, we obtain~$Q'$. The lower bound follows similar to above, since the truth value of~$Q'$ can be simply inverted in constant~time. 
\end{proof}
\fi

As a consequence, we obtain the following result. 

\begin{COR}[LB for Incidence Feedback Vertex Set,~$\star$]\label{lab:inc-lb}
Given an arbitrary QBF~$Q$ with~$F=\matr(Q)$ in CNF (DNF) such that the innermost quantifier~$Q_\ell$ of~$Q$ is $Q_\ell=\exists$ ($Q_\ell=\forall$) 
with the feedback vertex number of~$I_{F}$ being~$k$.
Then, under ETH, $\QBFSAT_\ell$ on~$Q$ cannot be decided in time~$\tower(\ell, o(k))\cdot\poly(\Card{\var(Q)})$.
\end{COR}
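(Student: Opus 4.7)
The plan is to derive the corollary from Theorem~\ref{lab:lb} by adapting the reduction $\mathcal{R}$ so that each intermediate QBF is in pure CNF or pure DNF form (matching the innermost quantifier as prescribed by the corollary), while preserving a logarithmic decrease of the relevant feedback vertex parameter --- now measured on the incidence graph.

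First, I analyze what happens when $\mathcal{R}$ is applied to a pure $3$-CNF (so the DNF part $D$ of the input $Q$ is empty and its innermost quantifier is $\exists$). The resulting $Q'$ has innermost $\forall$, and its matrix $C' \vee D'$ consists of the DNF part $C'$ (Formulas~(\ref{red:guessatomj})--(\ref{red:usatxnegv})) together with a conjunction $D'$ of singleton clauses (Formulas~(\ref{red:usatnot})). Since a disjunction of a DNF $\bigvee_t t$ with a conjunction $\bigwedge_i \{l_i\}$ of singleton literals is logically equivalent to the DNF $(\bigvee_t t) \vee \{l_1,\ldots,l_m\}$, I replace $\matr(Q')$ by $C' \cup \{t^*\}$, where $t^* = \{\neg sat_1,\ldots,\neg sat_{|C|}\}$; by Theorem~\ref{lab:corr} the resulting formula remains equivalent to $Q$. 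Dually, starting from a pure DNF (innermost $\forall$) and applying the dual form of $\mathcal{R}$, analogously combining the singleton clauses of the CNF part into one clause, produces a pure CNF (innermost $\exists$).

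Next, I exhibit a small incidence feedback vertex set of the purified output. Let $W \eqdef \lbvs{S} \cup \lbvvs{S} \cup \{v^*\}$, where $v^*$ is the vertex representing the combined term $t^*$. After removing $W$ from the incidence graph, every term of type~(\ref{red:guessatomj}),~(\ref{red:guessnegatomj}),~(\ref{red:usatxp}),~(\ref{red:usatxv}), or~(\ref{red:usatxnegv}) becomes a pendant (incident to a single remaining vertex); the only terms retaining degree two are those of type~(\ref{red:usatv}), each incident to one $sat_i$-vertex and one variable in $\var(Q) \setminus S$. Any cycle in the remaining incidence graph therefore traces out a sequence $sat_{i_1} - T_1 - u_1 - T_2 - sat_{i_2} - \cdots$ with $u_j,u_{j+1} \in \var(Q) \setminus S$ both appearing in clause $c_{i_{j+1}}$, so that $u_1,\ldots,u_m$ would form a cycle in $G_Q - S$; the sparseness of $S$ rules out ``two-variable shortcut'' cycles arising from two clauses sharing the same pair of non-$S$ variables, while the acyclicity of $G_Q - S$ rules out the remaining cycles. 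Hence $W$ is an incidence feedback vertex set of size $3\lceil\log|S|\rceil + \mathcal{O}(1)$.

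The remainder of the argument mirrors the induction in the proof of Theorem~\ref{lab:lb}: starting from a hard 3-SAT instance on $n$ variables (for which ETH forbids $2^{o(n)}$-time algorithms and whose incidence graph trivially has feedback vertex number at most $n$), successive applications of this ``reduce-then-purify'' step yield, after $\ell-1$ iterations, an $\ell$-quantifier QBF in pure CNF or pure DNF whose innermost quantifier matches the corollary's requirement and whose incidence feedback vertex number is $\mathcal{O}(\log^{(\ell-1)} n)$. A hypothetical $\tower(\ell, o(k)) \cdot \poly$ algorithm parameterized by incidence FVS would then yield a $2^{o(n)}$-time SAT algorithm, contradicting ETH. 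The main technical obstacle is the FVS-preservation argument, which requires the sparseness condition on $S$ precisely to exclude short cycles between $sat_i$-vertices and variables in $\var(Q) \setminus S$ once the index and value variables have been deleted.
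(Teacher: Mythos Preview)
Your approach is essentially the paper's: merge the $1$-DNF (respectively $1$-CNF) part into a single long term (clause) $f$, then argue that the sparse primal FVS together with this new incidence-graph vertex is an incidence FVS. The paper, however, organises this more economically. Rather than re-running the whole induction, it proves \emph{once} that for any QBF $Q'$ in $3{,}1$-CDNF with sparse primal FVS $S'$, the purified CNF/DNF has incidence FVS $S'\cup\{f\}$ of size $|S'|+1$, and then simply invokes Theorem~\ref{lab:lb}. Your iterative reformulation is not wrong, but it forces you to track both the sparse primal FVS (needed to feed $\mathcal{R}$ in the next round) and the incidence FVS simultaneously, which you only do implicitly via ``mirrors the induction in the proof of Theorem~\ref{lab:lb}''.

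One point to tighten: your set $W = \lbvs{S}\cup\lbvvs{S}\cup\{v^*\}$ omits the variable $sat$. This is harmless in the case you spell out ($D=\emptyset$), since then $sat$ occurs nowhere in $C'$ and can be dropped. But from the second iteration onward the $1$-DNF part is nonempty, Formulas~(\ref{red:usatvd}) produce terms $sat\wedge l$, and after removing only $W$ the vertex $sat$ still has degree $|D|$ via these terms---cycles of the form $sat\!-\!(sat\wedge l_1)\!-\!\var(l_1)\!-\!\cdots\!-\!\var(l_2)\!-\!(sat\wedge l_2)\!-\!sat$ may appear. The paper's choice $S'\cup\{f\}$ (which contains $sat$) is exactly what is needed; adding $sat$ to your $W$ fixes this at the cost of $+1$.
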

\iflong
\begin{proof}
We show that the sparse feedback vertex number of any QBF~$Q'$ in 3,1-CDNF linearly bounds the feedback vertex number of~$I_F$, where~$F$ is obtained by converting~$\matr(Q')$ into CNF or DNF, depending on the innermost quantifier of~$Q'$.
Let~$Q'=Q_1 V_1. \cdots Q_\ell V_\ell. F'$ be any QBF in 3,1-CDNF and~$S'$ be a minimum sparse FVS of~$Q'$ with~$k'=\Card{S'}$.   
We construct QBF~$Q\eqdef Q_1 V_1. \cdots Q_\ell V_\ell. F$, with~$F$ being defined below. 

Case 1: $Q_\ell=\exists$ and therefore $F'=C \wedge D$.
We define~$F\eqdef C\cup \{f\}$ in CNF with~$f\eqdef \{l \mid \{l\} \in D\}$ being a clause. It is easy to see that~$S\eqdef S'\cup\{f\}$ is a feedback vertex set of~$I_F$. Assume towards a contradiction that there is a cycle in~$I_F-S$. By the definition of the incidence graph~$I_F$, the cycle alternates between vertices~$\var(F)\setminus S$ and~$F\setminus S$, i.e., either we have: (1) the cycle restricted to vertices~$\var(F)$ is already present in~$G_Q$, which contradicts that~$S$ is a (sparse) FVS of~$G_Q$; or (2) the cycle is of the form $x,c,y,c',x$ where $c$ and $c'$ are distinct clauses, contradicting the assumption that $S$ is sparse. As a result, assuming ETH and that we can decide the validity of~$Q$ in time~$\tower(\ell, o(\Card{S}))\cdot\poly(\Card{\var(Q)})$ contradicts Theorem~\ref{lab:lb}.

Case 2: $Q_\ell=\forall$, i.e., $F'=D \vee C$.
We define~$F\eqdef D\cup \{f\}$ in DNF with~$f\eqdef \{l \mid \{l\} \in C\}$ and proceed as in Case 1.
\end{proof}\fi

The lower bound for the (sparse) feedback vertex number carries over to even more restrictive parameters, see the appendix. 
Further, the lower bound holds still holds when restricting feedback vertex sets to variables of the innermost quantifier block, see Corollary~\ref{cor:lastquantifier}.

\subsection{Hardness Insights \& New Lower Bounds for Treedepth}\label{sec:tdpth}
So far, we discussed in Section~\ref{sec:result}  why Corollary~\ref{ref:ubs} cannot be significantly improved for feedback vertex number.
In this section, we provide a hardness result in the form
of a conditional lower bound for the parameter treedepth. 
Notably, our approach for treedepth also involves a SAW reduction, 
where it turns out that we can even reuse major parts of reduction~$\mathcal{R}$ as defined by Formulas~(\ref{red:guessatomj})--(\ref{red:usatnotd}). 
Let~$Q=\exists V_1. \forall V_2. \cdots \exists V_\ell. C \wedge D$ be a QBF in CDNF and~$T$ be a treedepth decomposition of~$G_Q$ 
that consists of a path~$S$ of height~$h$, where each element of the path might be connected to a tree of constant height.
The result of applying~$\mathcal{R}$ on~$Q$ and~$S$ 
is visualized in Figure~\ref{fig:sketchtd}.
%
%
%
%
%
%
The final normalization step (from DNF to 3-DNF) results in multiple paths of length~$\mathcal{O}(\log(h))$ due to additional auxiliary variables when normalizing Formulas~(\ref{red:guessatomj2}) and~(\ref{red:guessnegatomj2}).
These paths do not increase the size of a sparse feedback vertex set and could previously be ignored.
But reducing the treedepth requires compressing \emph{each} of these paths, and so we must turn the reduction $\mathcal{R}$ that takes a single set~$S$ as an argument into a SAW reduction $\mathcal{R}_\tdx$ dealing with multiple paths simultaneously.
Formal details of $\mathcal{R}_\tdx$ are given in the appendix. 

%
%

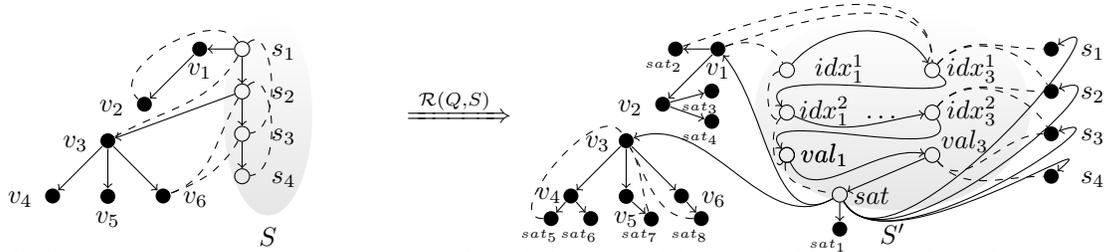
\begin{figure*}[t]\centering
\vspace{-3em}
\begin{tikzpicture}[node distance=7mm,every node/.style={circle,inner sep=2pt}]
\node (c) [tdnode,ellipse,draw=black,text height=5em,yshift=-3em,text width=2em,xshift=-.75em,color=white,label={[text height=1.5ex,yshift=0.00cm,xshift=0.00cm]below:$S$}] {}; 
\node (s1) [draw=black,right=-3em of c,yshift=2.5em,label={[text height=1.5ex,yshift=0.00cm,xshift=0.35em]right:$s_1$}] {};
\node (s2) [draw=black,below=1em of s1,label={[text height=1.5ex,yshift=0.00cm,xshift=0.35em]right:$s_2$}] {};
\node (s3) [draw=black,below=1em of s2,label={[text height=1.5ex,yshift=0.00cm,xshift=0.35em]right:$s_3$}] {};
\node (s4) [draw=black,below=1em of s3,label={[text height=1.5ex,yshift=0.00cm,xshift=0.35em]right:$s_4$}] {};
\draw [->] (s1) to (s2);
\draw [in=5,dashed] (s1) to (s3);
\draw [in=5,dashed] (s2) to (s4);
\draw [->](s2) to (s3);
\draw[->] (s3) to (s4);
\node (b) [fill,left=1em of s1, label={[text height=1.5ex,yshift=0.5em]below:$v_1$}] {};
\node (y) [fill,below left=2.35em of b,label={[text height=1.5ex]left:$v_2$}] {};
\node (z) [fill,below left=1.35em of y,label={[text height=1.5ex]left:$v_3$}] {};
\node (f) [fill,below left=2.35em of z,label={[text height=1.5ex]left:$v_4$}] {};
\node (d) [fill,below=1.5em of z,label={[text height=1.5ex,yshift=.5em]below:$v_5$}] {};
\node (e) [fill,below right=2.35em of z,label={[text height=1.5ex]right:$v_6$}] {};
%
%
\draw [<-] (y) to (b);

\draw [->](z) to (e);
\draw [->](z) to (f);
\draw [<-] (d) to (z);
%

%
\draw [dashed,in=140,out=120] (s1) to (y);
\draw [->] (s1) to (b);
\draw [->] (s2) to (z);
\draw [dashed,in=20,out=-120] (s3) to (e);
\draw [dashed,in=20,out=-120] (s2) to (e);
\draw [dashed,in=30,out=-120] (s1) to (z);
%
\node (arr) [draw=white,right=3em of c,label={}] {$\xRightarrow{\;\mathcal{R}(Q,S)\;}$};
%
%
%
%
%
\node (b2) [fill,right=19em of b, label={[text height=1.5ex,yshift=0.5em]below:$v_1$}] {};
%
%
\node (sb1) [fill,left=1em of b2, label={[text height=1.5ex,yshift=.95em,xshift=-.45em]below:{\tiny $sat_2$}}] {};
\draw [->] (b2) to (sb1);
\node (y2) [fill,below left=2.35em of b2,label={[text height=1.5ex]left:$v_2$}] {};
\node (sb3) [fill,right=1.25em of y2,yshift=.5em, label={[text height=1.5ex,yshift=.95em,xshift=-.45em]below:{\tiny $sat_3$}}] {};
\draw [->] (y2) to (sb3);
\node (sb4) [fill,right=1.25em of y2, yshift=-.65em, label={[text height=1.5ex,yshift=.95em,xshift=-.45em]below:{\tiny $sat_4$}}] {};
\draw [->] (y2) to (sb4);
\node (z2) [fill,below left=1.35em of y2,label={[text height=1.5ex,xshift=.15em]left:$v_3$}] {};
\node (f2) [fill,below left=2.35em of z2,label={[text height=1.5ex,xshift=.4em,yshift=.15em]left:$v_4$}] {};
\node (d2) [fill,below=1.5em of z2,label={[text height=1.5ex,yshift=.5em]below:$v_5$}] {};
\node (e2) [fill,below right=2.35em of z2,label={[text height=1.5ex]right:$v_6$}] {};
\node (sb5) [fill,below=0.25em of f2,xshift=-.75em, label={[text height=1.5ex,yshift=.95em,xshift=-.45em]below:{\tiny $sat_5$}}] {};
\node (sb6) [fill,below=0.25em of f2,xshift=.75em, label={[text height=1.5ex,yshift=.95em,xshift=-.45em]below:{\tiny $sat_6$}}] {};
\draw [->] (f2) to (sb5);
\draw [->] (f2) to (sb6);
\node (sb7) [fill,below=0.25em of d2,xshift=.95em, label={[text height=1.5ex,yshift=.95em,xshift=-.45em]below:{\tiny $sat_7$}}] {};
\node (sb8) [fill,below=0.25em of e2,xshift=.75em, label={[text height=1.5ex,yshift=.95em,xshift=-.45em]below:{\tiny $sat_8$}}] {};
\draw [->] (d2) to (sb7);
\draw [->] (e2) to (sb8);
\draw [dashed, in=120, out=-180] (sb5) to (z2);
\draw [dashed, in=-60, out=110] (sb7) to (z2);
\draw [dashed, in=-60, out=-180] (sb8) to (z2);
%
\draw [<-] (y2) to (b2);

\draw [->](z2) to (e2);
\draw [->](z2) to (f2);
\draw [<-] (d2) to (z2);
\node (c2) [tdnode,ellipse,draw=black,text height=5em,yshift=-2.3em,text width=7em,xshift=-.95em,color=white,right=2em of b2,label={[text height=1.5ex,yshift=0.00cm,xshift=0.00cm]below:$S'$}] {}; 
\node (s1z) [fill=black,right=30em of s1,label={[text height=1.5ex,yshift=0.00cm,xshift=0.35em]right:$s_1$}] {};
\node (s2z) [fill=black,below=1em of s1z,label={[text height=1.5ex,yshift=0.00cm,xshift=0.35em]right:$s_2$}] {};
\node (s3z) [fill=black,below=1em of s2z,label={[text height=1.5ex,yshift=0.00cm,xshift=0.35em]right:$s_3$}] {};
\node (s4z) [fill=black,below=1em of s3z,label={[text height=1.5ex,yshift=0.00cm,xshift=0.35em]right:$s_4$}] {};

\node (s12) [draw=black,right=2em of b2,yshift=-.8em,label={[text height=1.5ex,yshift=0.00cm,xshift=0.35em]right:$idx_1^1$}] {};
\draw [dashed, in=120, out=20] (b2) to (s12);
\node (s13) [draw=black,right=7.5em of b2,yshift=-.8em,label={[text height=1.5ex,yshift=0.00cm,xshift=-0.15em]right:$idx_3^1$}] {};
\draw [dashed, in=90, out=20] (b2) to (s13);
\draw [dashed, in=90, out=20] (sb1) to (s13);
\draw [dashed, in=-180, out=50] (s13) to (s1z);
\draw [dashed, in=120, out=50] (s13) to (s2z);
\node (s22) [draw=black,below=1em of s12,label={[text height=1.5ex,yshift=0.00cm,xshift=-0.15em]right:$idx_1^2$\; $\ldots$}] {};
\draw [dashed, in=-140, out=-170] (s12) to (s22);
\node (s23) [draw=black,below=1em of s13,label={[text height=1.5ex,yshift=0.00cm,xshift=-0.15em]right:$idx_3^2$}] {};
\node (vs22) [draw=black,below=1em of s22,label={[text height=1.5ex,yshift=0.00cm,xshift=-0.15em]right:$val_1$}] {};
\node (vsdots) [draw=black,below=1em of s22,label={[text height=1.5ex,yshift=0.00cm,xshift=-0.15em]right:$val_1$}] {};
\node (vsdots2) [draw=black,below=1em of s23,label={[text height=1.5ex,yshift=0.00cm,xshift=-0.35em,yshift=.5em]right:$val_3$}] {};
\node (sat) [draw=black,below=2.5em of s22,xshift=2em,label={[text height=1.5ex,yshift=0.00cm,xshift=-0.15em]right:$sat$}] {};
\draw [dashed, in=-120, out=-140] (s22) to (vs22);
\draw [dashed, in=170, out=-100] (vs22) to (sat);
\draw [dashed, in=120, out=50] (s23) to (s3z);
\draw [dashed, in=180, out=50] (s23) to (s2z);
\draw [dashed, in=180, out=-50] (vsdots2) to (s3z);
\draw [dashed, in=180, out=-50] (vsdots2) to (s4z);
\node (sb2) [fill,below=1em of sat, yshift=.3em,label={[text height=1.5ex,yshift=.95em,xshift=-.45em]below:{\tiny $sat_1$}}] {};
\draw [->] (sat) to (sb2);
\draw [->,out=50,in=120] (s12) to (s13);
\draw [->,out=-35,in=130] (s13) to (s22);
\draw [->,out=-25,in=-170] (s22) to (s23);
\draw [->,out=-55,in=130] (s23) to (vs22);
\draw [->,out=-65,in=170] (vs22) to (vsdots2);
\draw [->] (vsdots2) to (sat);
\draw [->,out=-150,in=-50] (sat) to (b2);
\draw [->,out=-150,in=20] (sat) to (z2);
%

\draw [->,out=-60,in=33] (sat) to (s1z);
\draw [->,out=-60,in=20] (sat) to (s2z);
\draw [->,out=-60,in=20] (sat) to (s3z);
\draw [->,out=-60,in=40] (sat) to (s4z);
\end{tikzpicture}
\vspace{-2.2em}
\caption{Visualization of structure-aware reduction~$\mathcal{R}$ for some QBF~$Q$. (Left): A primal graph~$G_Q$ aligned in a treedepth decomposition~$T$ of depth~$h$, where~$S$ is a path in~$T$ of height~$\mathcal{O}(h)$; dashed lines show potential edges between ancestors and descendants. (Right): The corresponding primal graph~$G_{Q'}$ (simplified), aligned in a treedepth decomposition~$T'$ and a corresponding path~$S'$ in~$T'$ of height~$\mathcal{O}(\log(h))$; $Q'$ and~$S'$ are obtained by~$\mathcal{R}(Q,S)$. 
}\label{fig:sketchtd}
\end{figure*}

Using~$\mathcal{R}_\tdx$, we obtain the following lower bound for treedepth~$k$, which yields an ETH-tight lower bound if $k\in{\mathcal{O}}(\ell)$ for quantifier depth~$\ell$, see Corollary~\ref{ref:ubs}. 

\begin{THM}[LB for Treedepth Decompositions,~$\star$]\label{lab:lb_pdp}
Given an arbitrary QBF~$Q$ in CDNF of quantifier depth~$\ell$ and a treedepth decomposition~$T$ of~$G_Q$ of height~$k=\pd{G_Q}$. 
Then, under ETH, $\QBFSAT_\ell$ on~$Q$ cannot be decided in time~$\tower(\ell, o(k-\ell))\cdot\poly(\Card{\var(Q)})$.
\end{THM}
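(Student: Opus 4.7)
The plan is to prove this by induction on $\ell$, using a SAW reduction $\mathcal{R}_{\mathsf{td}}$ analogous to the reduction $\mathcal{R}$ from Section~\ref{sec:result} but tailored to treedepth decompositions, and then chaining the argument as in Theorem~\ref{lab:lb}. For the base case $\ell = 1$, the QBF is essentially a 3-CNF or 3-DNF formula. Standard ETH-based constructions (as used for Proposition~\ref{ref:pwlb}) yield families of 3-SAT instances on $n$ variables whose primal graphs have treedepth $\Theta(n)$, so any algorithm in $\tower(1, o(k-1))\cdot \poly(n) = 2^{o(k)}\cdot \poly(n)$ time would give a $2^{o(n)}$-time algorithm for 3-SAT, contradicting ETH.

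For the inductive step, I would define $\mathcal{R}_{\mathsf{td}}$ as a generalization of $\mathcal{R}$ that compresses the depth of an entire treedepth decomposition rather than a single path. A treedepth decomposition of depth $h$ is a union of root-to-leaf paths, each of length at most $h$; $\mathcal{R}_{\mathsf{td}}$ applies the index-variable gadget of Formulas~(\ref{red:guessatomj})--(\ref{red:usatnotd}) to each such path in parallel, using a dedicated block of $O(\log h)$ index and value variables per path together with a single shared $sat$ variable, at the cost of one additional $\forall$ quantifier block. Correctness and runtime follow path-by-path via essentially the same arguments as Theorems~\ref{lab:runtime} and~\ref{lab:corr}.

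The main obstacle is the 3-DNF normalization. The terms produced by Formulas~(\ref{red:guessatomj}) and~(\ref{red:guessnegatomj}) have size $\Theta(\log h)$ and must be split into constant-size terms by introducing auxiliary variables; these auxiliary variables themselves form paths of length $\Theta(\log h)$ in the resulting treedepth decomposition. Unlike the sparse-FVS case, where such auxiliary paths contribute nothing to the parameter, here they contribute directly to the treedepth and must be handled in subsequent rounds. $\mathcal{R}_{\mathsf{td}}$ is therefore designed so that every root-to-leaf path of the new decomposition, including those formed by normalization variables, is itself an input path for the next iteration, while each round incurs a constant additive overhead $c$ in the resulting depth. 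Consequently, $\mathcal{R}_{\mathsf{td}}(Q, T)$ yields an equivalent $Q'$ of quantifier depth $\ell$ with a treedepth decomposition of depth $k' = O(\log h) + c$.

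Given $\mathcal{R}_{\mathsf{td}}$, the inductive step closes as in Theorem~\ref{lab:lb}: an algorithm solving $\QBFSAT_\ell$ in $\tower(\ell, o(k' - \ell))\cdot \poly(|\var(Q')|)$ time, composed with $\mathcal{R}_{\mathsf{td}}$, would solve $\QBFSAT_{\ell - 1}$ on any input of treedepth $k$ in $\tower(\ell, o(\log k))\cdot \poly(|\var(Q)|) = \tower(\ell-1, o(k - (\ell - 1)))\cdot \poly(|\var(Q)|)$ time, contradicting the induction hypothesis. The $-\ell$ correction in the bound absorbs the accumulated $O(\ell)$ overhead from the per-iteration additive constants, matching the tractability ceiling of Corollary~\ref{ref:ubs} for $k \in O(\ell)$.
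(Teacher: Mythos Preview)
Your high-level plan---induction on $\ell$ with ETH as the base case, a SAW self-reduction that trades one quantifier alternation for logarithmic compression of treedepth, and flagging the 3-DNF normalization as the main obstacle---matches the paper's proof structure. You are also right that the auxiliary variables from normalizing Formulas~(\ref{red:guessatomj}) and~(\ref{red:guessnegatomj}) form paths of length $\Theta(\log h)$ that must themselves be compressed in subsequent rounds; this is precisely why the treedepth case is more delicate than the feedback-vertex case.

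However, your proposed fix---a \emph{dedicated} block of $O(\log h)$ index variables per root-to-leaf path---does not work as stated. Root-to-leaf paths in a treedepth decomposition overlap: a variable near the root lies on every such path through its subtree, and there can be unboundedly many of them. The consistency terms (your analogues of Formulas~(\ref{red:guessatomj}) and~(\ref{red:guessnegatomj})) would then connect such a variable to all of the corresponding dedicated blocks in $G_{Q'}$, and no placement of those blocks in a new decomposition keeps the depth at $O(\log h)$: putting them below the original leaves yields depth $h+O(\log h)$, while putting them on top forces all blocks onto a common root path of unbounded length.

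The paper avoids this by working not with arbitrary treedepth decompositions but with a restricted shape it calls an \emph{$\alpha$-treedepth decomposition}: one main path $S$, trees of height at most $\alpha$ attached to nodes of $S$, and pairwise \emph{disjoint} side paths $P_r$ attached at the leaves of those trees. The reduction $\mathcal{R}_{\mathsf{td}}$ compresses $S$ via the global index variables exactly as $\mathcal{R}$ does, and compresses all of the disjoint $P_r$ simultaneously using a \emph{single shared} block of local index and value variables (of size $O(\log h)$) together with one selector variable per side path. The selectors become leaves in the output decomposition and contribute nothing to depth; what does grow is the tree height $\alpha$, by one per round. The output is an $(\alpha{+}1)$-treedepth decomposition of height $O(\log h + \alpha)$ (Lemma~\ref{lab:sawpdp}), and since the invariant is maintained with $\alpha=\ell$ at round $\ell$, the $-\ell$ in the bound comes from this growing tree height rather than from accumulated per-round additive constants as you suggest.
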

\iflong
\begin{proof}
Without loss of generality, we assume~$Q$ in 3,1-CDNF.
The result with quantifier depth~$\ell=1$ corresponds to \SAT and therefore follows immediately by ETH,
since~$k\leq \Card{\var(Q)}$ and due to the fact that ETH implies that there is no algorithm for solving \SAT on 3-CNFs running in time~$2^{o(\Card{\var(Q)})}$.
For the case of~$\ell>1$, we apply induction.
Assume that the result holds for~$\ell-1$. 

Case 1: Innermost quantifier~$Q_{\ell}$ of~$Q$ is $\exists$.
Then, we apply the reduction~$\mathcal{R}_\tdx$ on~$(Q, T)$, where $S$ is the main path of~$T$. 
Thereby, we obtain a resulting instance~$Q'$ and a treedepth decomposition~$T'$ of~$G_{Q'}$ in time~$\mathcal{O}(\poly(\Card{\var(Q)})$,
cf.\ Theorem~\ref{lab:runtime}.
%
The reduction is correct by Proposition~\ref{lab:corrtd}, i.e., the set of 
satisfying assignments of~$\matr(Q)$ coincides with the set of satisfying assignments of~$\matr(Q')$ when restricted to variables~$\var(Q)$.
Further, we have that the height~$k'$ of~$T'$ is bounded by~$\mathcal{O}({\log(\Card{S})+\ell})$ by Lemma~\ref{lab:sawpdp}.
Assume towards a contradiction that despite ETH, $\QBFSAT$ on~$Q'$ can be decided in time~$\tower(\ell, o(k'-\ell))\cdot\poly(\Card{\var(Q')})$.
Then, the validity of~$Q$ can be decided in time~$\tower(\ell-1, o(k))\cdot\poly(\Card{\var(Q)})$, 
contradicting the induction hypothesis. 

Case 2: Innermost quantifier~$Q_\ell$ of~$Q$ is~$\forall$. We proceed similar to Case 1, but invert~$Q$ first, resulting in a QBF~$Q^\star$, whose quantifier blocks are flipped such that~$\matr(Q^\star)$ is in 3,1-CDNF. Then, as in Case 1, after applying reduction~$\mathcal{R}_\tdx$ on~$(Q^\star, S)$, we obtain~$Q'$. The lower bound follows similar to above, since the validity of~$Q'$ can be simply inverted in constant time. 
\end{proof}\fi

\noindent We show that this result still implies a hierarchy of runtimes under ETH, where
the tower height depends linearly on the quantifier depth of the QBF. 
\iflong This consequence 
uses the following observation.

\begin{OBS}\label{obs:exps}
For any integer~$\ell \geq 1$, we have that~$\tower(\floor{\log(\ell)},0)\geq\Omega(\ell)$.
\end{OBS}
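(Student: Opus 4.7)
The plan is to unfold the recursive definition of the tower function and verify that $\tower(h,0)$ dominates $2^h$ once $h$ is large enough. Combined with the fact that $h = \floor{\log(\ell)}$ satisfies $2^h > \ell/2$, this immediately yields the claimed bound.

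First I would compute the initial values using $\tower(0,0)=0$ and the recurrence $\tower(h,0) = 2^{\tower(h-1,0)}$, obtaining $\tower(1,0)=1$, $\tower(2,0)=2$, $\tower(3,0)=4$, and $\tower(4,0)=16$. A routine induction then establishes $\tower(h,0) \geq 2^h$ for every $h \geq 4$: the base case is the equality $\tower(4,0) = 2^4$, and for the inductive step we apply $\tower(h+1,0) = 2^{\tower(h,0)} \geq 2^{2^h} \geq 2^{h+1}$, where the last inequality uses $2^h \geq h+1$ valid for $h \geq 1$.

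To conclude, set $h \eqdef \floor{\log(\ell)}$ and observe $\ell < 2^{h+1}$, so $2^h > \ell/2$. For $\ell \geq 16$ we have $h \geq 4$ and the induction above gives $\tower(\floor{\log(\ell)}, 0) \geq 2^h > \ell/2$, which is $\Omega(\ell)$; the finitely many remaining small values of $\ell$ are absorbed into the implicit constant. There is no substantive obstacle here — the only subtlety is choosing the base case correctly, since $\tower(h,0) \geq 2^h$ just barely fails at $h=3$ (where $\tower(3,0)=4 < 8 = 2^3$).
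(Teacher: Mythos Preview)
Your proof is correct and in fact more careful than the paper's own argument. The paper attempts a direct induction on $\ell$, with the base case $\tower(\floor{\log 1},0)=0\geq 1-1$ suggesting an intended invariant of the form $\tower(\floor{\log(\ell)},0)\geq \ell-1$; but that inequality already fails at $\ell=3$ (where $\tower(1,0)=1<2$), and the induction step as written does not recover the needed bound from the hypothesis for $\ell-1$. Your route avoids this entirely: you induct on the tower height to get $\tower(h,0)\geq 2^h$ for $h\geq 4$, and only then substitute $h=\floor{\log\ell}$ and use $2^{\floor{\log\ell}}>\ell/2$. This decoupling yields an explicit constant and a clean threshold ($\ell\geq 16$), while the paper leans on the asymptotic $\Omega$ to paper over the small-$\ell$ failures. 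Your remark that the base case must start at $h=4$ because $\tower(3,0)=4<8$ is exactly the kind of care the paper's sketch lacks.
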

\begin{proof}
The result follows by induction. Base Case~$\ell=1$: We follow~$\tower(\floor{\log(1)},0) = 0 \geq 1{-}1=0$.
Induction Step $\ell\geq 2$: Assuming the result for~$\ell{-}1$, by induction hypothesis $\tower(\floor{\log(\ell{-}1)},0) \geq \ell{-}1 \in \Omega(\ell)$.
Since $\tower(\floor{\log(\ell)},0) \geq \tower(\floor{\log(\ell{-}1)},0)$, the result~holds. 
\end{proof}
\fi

\begin{COR}[LB for Treedepth, $\star$]\label{lab:lb_tdp}
There is a linear function~$f$ and an integer $\ell_0>0$ where for
any QBF $Q$ in 3,1-CDNF of quantifier depth~$\ell\geq \ell_0$: 
Under ETH, 
$\QBFSAT_\ell$ on $Q$ cannot be decided in time~$\tower(f(\ell), o(\td{G_Q}))\cdot\poly(\Card{\var(Q)})$.
\end{COR}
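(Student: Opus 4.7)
The plan is to deduce Corollary~\ref{lab:lb_tdp} from Theorem~\ref{lab:lb_pdp} by producing, for every sufficiently large quantifier depth $\ell$, a family of QBFs on which the treedepth $k$ grows strictly faster than $\ell$, so that along this family the function classes $o(k-\ell)$ and $o(k)$ coincide. I take $f(\ell) = \ell$ (which is linear) and pick $\ell_0$ a small constant large enough to absorb the asymptotic constants coming from Observation~\ref{obs:exps}.

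Suppose, towards a contradiction, that some algorithm $A$ decides $\QBFSAT_\ell$ on every $3,1$-CDNF QBF of quantifier depth $\ell \geq \ell_0$ in time $\tower(\ell, g(\td{G_Q}))\cdot\poly(\Card{\var(Q)})$ for a computable function $g$ with $g(k)\in o(k)$. For each $\ell \geq 2$ and $m$, let $Q_{\ell, m}$ be the QBF obtained by performing $\ell-1$ iterations of the reduction $\mathcal{R}_\tdx$ on a 3-SAT instance with $m$ variables, exactly as in the inductive construction of the proof of Theorem~\ref{lab:lb_pdp}. Each $Q_{\ell, m}$ is in $3,1$-CDNF with quantifier depth $\ell$, and by iterated application of Lemma~\ref{lab:sawpdp} its treedepth satisfies $k_{\ell, m} = \mathcal{O}(\log^{(\ell-1)} m + \ell)$, which can be made arbitrarily large relative to $\ell$ by picking $m$ as a sufficiently tall tower of twos.

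Now fix $m$ so large that $k_{\ell, m} \geq 2\ell$; then $k_{\ell, m} - \ell \geq k_{\ell, m}/2$, and therefore $g(k_{\ell, m}) \in o(k_{\ell, m}) = o(k_{\ell, m} - \ell)$ on this subfamily. Running $A$ on $Q_{\ell, m}$ would then give a decision procedure in time $\tower(\ell, o(k_{\ell, m} - \ell))\cdot\poly(\Card{\var(Q_{\ell, m})})$, directly contradicting Theorem~\ref{lab:lb_pdp}. The role of Observation~\ref{obs:exps} is in the asymptotic bookkeeping: the inequality $\tower(\floor{\log \ell}, 0) \geq \Omega(\ell)$ lets us trade an additive shift of order $\ell$ in the tower's argument for an increase of only $O(\log \ell)$ in the tower's height, which is what guarantees that, after absorbing this trade into $\ell_0$, the required $f$ is still linear in $\ell$.

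The main obstacle I foresee is the uniform handling of $o$-notation across all $\ell$: the implicit constants in the hypothesis $g = o(k)$ may in principle depend on the family, and one has to ensure that the treedepth-growth bound of Lemma~\ref{lab:sawpdp} dominates these constants for every $\ell \geq \ell_0$. Combined with Observation~\ref{obs:exps} to absorb the $\ell$-vs-$\log\ell$ discrepancy, this should make the argument go through cleanly with $f(\ell) = \ell$ and a fixed absolute constant $\ell_0$.
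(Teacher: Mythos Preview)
Your approach differs from the paper's, and the comparison is instructive. The paper argues by a case split on the relationship between $\ell$ and $k$: in Case~1 ($\ell \leq g(k)$ for a fixed linear $g$) one has $o(k-\ell)=o(k)$, and Theorem~\ref{lab:lb_pdp} applies with $f(\ell)=\ell$; Case~2 ($\ell > g(k)$) is handled separately via Observation~\ref{obs:exps}, trading $\lfloor\log\ell\rfloor$ levels of the tower to absorb the additive $\ell$. You instead observe that the hard family $Q_{\ell,m}$ built in the proof of Theorem~\ref{lab:lb_pdp} can, by taking $m$ large, always be arranged so that $k_{\ell,m}\geq 2\ell$---so only Case~1 is ever relevant on the instances that matter. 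That is a legitimate and arguably cleaner route, since it dispenses with Case~2 entirely.

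Two points worth tightening. First, your phrase ``directly contradicting Theorem~\ref{lab:lb_pdp}'' is slightly off: that theorem is a statement about all instances, whereas you have only exhibited a fast algorithm on the subfamily $\{Q_{\ell,m}\}$ with large $m$. The fix is either to trace the chain of reductions back to \SAT\ and contradict ETH directly (which you have all the pieces for), or---simpler still---to note that for each \emph{fixed} $\ell$ the classes $o(k)$ and $o(k-\ell)$ coincide (shifting the argument by a constant does not change little-$o$), so Theorem~\ref{lab:lb_pdp} already yields the corollary with $f(\ell)=\ell$ on \emph{all} instances, without any restriction to a subfamily. Second, your closing paragraph about Observation~\ref{obs:exps} does not mesh with your main argument: that observation is what drives the paper's Case~2, but your construction sidesteps Case~2 altogether, so the observation plays no role in your line of reasoning and should be dropped.
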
\iflong
\begin{proof}
Theorem~\ref{lab:lb_pdp} implies (A): Under ETH, $\QBFSAT_\ell$
on~$Q$ can not be solved in time~$\tower(\ell, o(k-\ell))\cdot\poly(\Card{\var(Q)})$.  
%
We proceed by case distinction.
Case~$\ell \in \mathcal{O}(k)$: 
By applying Consequence (A), we have that~$\ell'=\ell$ since~$o(k-\ell)=o(k)$.
%
%
Case~$\ell \notin\mathcal{O}(k)$: 
Similar to the previous case, we apply Consequence (A). 
By Observation~\ref{obs:exps}, we have $\tower(\floor{\log(\ell)},0) \in \Omega(\ell)$ and, 
consequently, (i) $\tower(\floor{\log(\ell)},0) \in \Omega(k)$ since $k \in \Omega(\ell)$ by case assumption. 
Further, we have (ii)~$2^{k-\ell} > 0$, even if~$\ell \gg k$. 
By combining (i) and (ii), we conclude~$\ell'\in\Theta(\ell{-}\floor{\log(\ell)}{-}1)$, yielding the desired~result.
\end{proof}\fi

\begin{COR}[LB for Incidence Treedepth, $\star$]\label{lab:inc-tdp}
Given a QBF~$Q$ with~$F=\matr(Q)$ in CNF (DNF) such that the innermost quantifier~$Q_\ell$ of~$Q$ is $Q_\ell=\exists$ ($Q_\ell=\forall$) 
and~$k=\td{I_Q}$. 
Then, under ETH, $\QBFSAT_\ell$ on~$Q$ cannot be decided in time~$\tower(\ell, o(k-\ell))\cdot\poly(\Card{\var(Q)})$.
\end{COR}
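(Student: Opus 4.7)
The plan is to derive this from Theorem~\ref{lab:lb_pdp} exactly the way Corollary~\ref{lab:inc-lb} was derived from Theorem~\ref{lab:lb}: starting from a 3,1-CDNF instance with bounded primal treedepth, I would convert it to an equivalent QBF whose matrix is a pure CNF (or DNF), and then exhibit a treedepth decomposition of the incidence graph whose depth is only a constant more than the primal treedepth. Since a constant additive slack is absorbed by $o(k-\ell)$, any algorithm beating the incidence-treedepth bound would beat the primal-treedepth bound, contradicting Theorem~\ref{lab:lb_pdp}.

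Concretely, for the case $Q_\ell=\exists$ with $F$ in CNF, let $Q'$ be a 3,1-CDNF QBF with innermost $\exists$ and matrix $C \wedge D$, where $C$ is 3-CNF and $D$ is 1-DNF. Since $D$ is a disjunction of singleton terms, it is equivalent to the clause $f \eqdef \{l \mid \{l\} \in D\}$, and hence $Q$ with matrix $F \eqdef C \cup \{f\}$ is an equivalent pure-CNF QBF with the same variables and quantifier prefix. To bound $\td{I_F}$, I would take any treedepth decomposition $T$ of $G_{Q'}$ of depth $k' = \td{G_{Q'}}$ and build a decomposition $T'$ of $I_F$ as follows: set the root to the clause vertex $f$, place $T$ as the subtree below $f$, and for each $c \in C$ attach the clause vertex $c$ as a fresh leaf below the deepest variable appearing in $\var(c)$. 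Because $C$ is 3-CNF, the (at most three) variables of any $c \in C$ form a clique in $G_{Q'}$ and must therefore lie on a common root-to-leaf path in $T$, so the deepest variable is a descendant of the others; consequently $c$ is a descendant of every variable in $\var(c)$, which covers every incidence edge $\{c,x\}$, while edges $\{f,x\}$ for $x \in \var(f)$ are covered because $f$ is the root. This yields $\td{I_F} \leq k' + 2$.

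Putting this together: if $\QBFSAT_\ell$ on $Q$ could be decided in time $\tower(\ell, o(k-\ell)) \cdot \poly(\Card{\var(Q)})$ with $k = \td{I_F}$, then since $k \leq k' + 2$ (so $o(k-\ell) \subseteq o(k'-\ell)$) and $\Card{\var(Q)} = \Card{\var(Q')}$, we could decide $Q'$ in time $\tower(\ell, o(k'-\ell)) \cdot \poly(\Card{\var(Q')})$, contradicting Theorem~\ref{lab:lb_pdp}. For the case $Q_\ell = \forall$ with $F$ in DNF, I would invoke duality: negating the CNF-case $Q$ yields a QBF $\neg Q$ with innermost $\forall$ and matrix in DNF; negation flips literal signs but preserves the variable-clause incidence structure, so $\td{I_{\matr(\neg Q)}} = \td{I_F}$ and hardness transfers verbatim.

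The main step is the construction and verification of $T'$; the only delicate point is using the 3-CNF cliqueness of each clause to guarantee that its variables are totally ordered by the ancestor relation in $T$, so that attaching the clause below the deepest variable is consistent with the treedepth axioms. I do not anticipate any further obstacle beyond that routine check.
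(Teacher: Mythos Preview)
Your proposal is correct and follows essentially the same approach as the paper: convert the 3,1-CDNF instance to a pure CNF by collapsing the 1-DNF into a single long clause $f$, make $f$ the root of the incidence treedepth decomposition, hang the primal treedepth decomposition below it, and attach each 3-CNF clause as a leaf under its deepest variable (using that its variables form a clique in $G_{Q'}$), yielding depth $\leq k'+2$ and hence a contradiction with Theorem~\ref{lab:lb_pdp}. The only cosmetic difference is that the paper handles the $Q_\ell=\forall$ case by writing out the DNF construction explicitly rather than via the duality/negation argument you sketch.
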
\iflong
\begin{proof}
The case for~$\ell=1$ immediately follows from ETH, so we assume~$\ell>1$.
We show that the treedepth of any QBF~$Q'=Q_1 V_1. \cdots Q_\ell V_\ell. F'$ in 3,1-CDNF linearly bounds the treedepth of~$I_F$, where~$F$ is obtained by converting~$\matr(Q')$ into CNF or DNF (exactly as defined in Corollary~\ref{lab:inc-lb}).
First, we compute a treedepth decomposition~$T'=(\var(F'), E')$
of~$G_{Q'}$ of treedepth~$k$ 
in time~$2^{\mathcal{O}(k^2)}\cdot\poly(\Card{\var(F')})$~\cite{ReidlEtAl14}.
%
Then, we construct a QBF~$Q\eqdef Q_1 V_1. \cdots Q_\ell V_\ell. F$, with~$F$ being defined as follows (see, Corollary~\ref{lab:inc-lb}).

Case 1: $Q_\ell=\exists$ and therefore $F'=C \wedge D$.
We define~$F\eqdef C\cup \{f\}$ in CNF with~$f\eqdef \{l \mid \{l\} \in D\}$ being a long clause. Next, we slightly adapt~$T'$, resulting in a treedepth decomposition~$T=(\var(F')\cup C \cup \{f\}, E' \cup E)$ of $I_F$, where~$E$ is defined as follows. First, $f$ is the new root of~$T$, i.e., for the root~$r$ of~$T'$, we define~$(f,r)\in E$. Further, for every clause~$c\in C$ and a variable~$v\in\var(c)$ such that every other variable in~$\var(c)$ is an ancestor of~$v$ in~$T'$, we add an edge~$(v,c)\in E$. Observe that the height of~$T$ is bounded by~$k+2$. As a result, assuming ETH and that we can decide the validity of~$Q$ in time~$\tower(\ell, o({k-\ell}))\cdot\poly(\Card{\var(Q)})$ contradicts 
Theorem~\ref{lab:lb_pdp}.

Case 2: $Q_\ell=\forall$, i.e., $F'=D \vee C$.
We define~$F\eqdef D\cup \{f\}$ in DNF with~$f\eqdef \{l \mid \{l\} \in C\}$ and proceed as in Case 1.
%
%
%
\end{proof}\fi
%

\noindent Note that this corollary immediately yields a result similar to Corollary~\ref{lab:lb_tdp} for the incidence~graph. 
%

\section{Algorithms 
Using Vertex Cover and FES}
\label{sec:vcfen-inc}

Our results from the previous section already provide a rather
comprehensive picture of the (fine-grained) parameterized complexity
of \QBFSATCNF{}, when considering many of the most prominent
structural parameters on the incidence graph. In particular, they rule
our fixed-parameter tractability of treedepth and feedback vertex
set. In this section, we will complement this picture for the
incidence graph by giving
fpt-algorithms for \QBFSATCNF{} parameterized by the vertex cover
number as well as the feedback edge set number. We start with our
algorithm for the vertex cover number, which essentially follows from
the simple observation that formulas with a small vertex cover number
cannot have too many distinct clauses together with the 
well-known result that \QBFSATCNF{} is fpt
parameterized by the number of clauses~\cite{Williams02}.
\begin{THM}[$\star$]\label{thm:incvco}
  Given any CQBF~$Q$ of $\QBFSAT$ with~$k$ being the vertex cover
  number of~${I_Q}$. Then, the validity of~$Q$ can be decided in
  time~$1.709^{3^k}\cdot\poly(\Card{\var(Q)})$.
\end{THM}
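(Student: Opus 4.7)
The plan is to combine a simple structural observation about vertex covers of incidence graphs with the known fpt algorithm of Williams~\cite{Williams02} for \QBFSATCNF{} parameterized by the number of clauses. First, I would compute a minimum vertex cover~$C$ of~$I_Q$ of size~$k$ in fpt time using any standard vertex cover algorithm, and then partition $C$ into its variable part $V_C \eqdef C \cap \var(Q)$ and its clause part $\mathcal{C}_C \eqdef C \cap \matr(Q)$.

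The key structural step is to observe that every clause $c \in \matr(Q) \setminus \mathcal{C}_C$ satisfies $\var(c) \subseteq V_C$. Indeed, since $c \notin C$, every incidence edge $\{c,x\} \in E(I_Q)$ must be covered on its variable endpoint, so $x \in V_C$. Consequently, each such clause is a (non-tautological) clause over the variable set $V_C$ of size at most $k$, and the number of distinct such clauses is bounded by $3^{|V_C|} \leq 3^k$, since each variable in $V_C$ is either absent from $c$, positive, or negative. Removing duplicate clauses (which preserves equivalence for CNF matrices) therefore yields an equivalent CQBF~$Q'$ whose matrix consists of at most $3^k + |\mathcal{C}_C| \leq 3^k + k$ distinct clauses.

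Finally, I would apply Williams' fpt algorithm~\cite{Williams02} for \QBFSATCNF{} parameterized by the number $m$ of clauses, which runs in time $1.709^m \cdot \poly(\Card{\var(Q)})$, to the simplified instance $Q'$ with $m \leq 3^k + k$. This gives an overall running time of $1.709^{3^k+k} \cdot \poly(\Card{\var(Q)})$, matching the claimed bound $1.709^{3^k}\cdot\poly(\Card{\var(Q)})$ after absorbing the additive $k$ in the exponent (which contributes only a factor of $1.709^k$ that is dominated by $1.709^{3^k}$ and by the implicit constant in the base). The only mild obstacle is the bookkeeping around duplicate-clause removal and the discarding of variables that no longer occur in the simplified matrix, but both operations clearly preserve the truth value of the CQBF and leave the quantifier prefix intact on the surviving variables.
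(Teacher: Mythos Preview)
Your approach is essentially identical to the paper's: split a size-$k$ vertex cover of $I_Q$ into $k_1$ variables and $k_2$ clauses, observe that every clause outside the cover uses only cover-variables (hence at most $3^{k_1}$ distinct such clauses), bound the total number of distinct clauses by $3^{k_1}+k_2$, and invoke Williams' $\bigoh(1.709^m)$ algorithm.

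The one loose end is your final arithmetic. You bound the clause count by $3^k+k$ and then claim the extra $1.709^k$ factor can be ``absorbed''; but $1.709^{3^k}\cdot 1.709^k$ is not $\bigoh(1.709^{3^k})$, and there is no implicit slack in the stated base. The clean fix (which the paper uses) is to keep the split: with $k_1+k_2=k$ one has $3^{k_1}+k_2\le 3^{k}$, since $3^{k}-3^{k_1}=3^{k_1}(3^{k_2}-1)\ge k_2$ for $k_2\ge 1$. This gives $m\le 3^k$ directly and the stated bound follows without any absorption argument.
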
\iflong
\begin{proof}
  Let $C$ be a vertex cover of $I_Q$ containing $k_1$ variables and $k_2$ clauses. Then, $Q$
  can contain at most $3^{k_1}+k_2$ distinct clauses. This is because the variables of every clause outside of $C$
  must be a subset of $C$ and every other clause is contained in $C$. The lemma now follows
  because every CQBF with $m$ clauses can be solved in time $\bigoh(1.709^m)$~\cite{Williams02}.
\end{proof}\fi
Note that tractability for the vertex cover number of the incidence
graph does not immediately carry over to the primal graph and
therefore neither Proposition~\ref{prop:vco} nor Theorem~\ref{thm:cnfvcn} are a direct consequence of
Theorem~\ref{thm:incvco}; indeed a small vertex cover number of the
primal graph still allows for an arbitrary number of distinct clauses.

We are now ready to provide our algorithm for the feedback edge number
of the incidence graph. Interestingly and in contrast to vertex cover
number, the parameterized complexity of \QBFSATCNF{} for the feedback
edge number has been open even for the primal graph. While the FEN of
the primal graph and the incidence graph are again orthogonal
parameters (consider, e.g., two variables that occur together in more
than one clause), we will show that the algorithm for the incidence
graph can essentially be obtained using the techniques developed for
the primal graph. We will therefore start by giving our result for
the FEN of the primal graph, which also constitutes the main technical
contribution of this section. We establish the result by proving 
existance of a kernelization algorithm.
\vspace{-.35em}
\begin{THM}[$\star$]\label{thm:kernel}
  Let $\qbfformula$ be a $\QBFCNF$. In polynomial time, we can construct an
  equivalent $\QBFCNF$ with at most $12k-8$ variables and at most
  $10k-9+3^{\lfloor (\sqrt{24k+1}+1)/2\rfloor}$ clauses, where $k$ is the feedback edge number of $G_\qbfformula$.
\end{THM}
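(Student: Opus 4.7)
The plan is to perform a kernelization guided by the structure of a feedback edge set of the primal graph. First, I would compute a feedback edge set $D$ of $G_\qbfformula$ with $|D|\leq k$ in polynomial time (take the complement of a spanning forest). Let $F = G_\qbfformula - D$, which is a forest, and let $T$ be the set of endpoints of edges in $D$, so $|T|\leq 2k$. The vertices of $T$ together with the vertices of degree at least $3$ in $F$ form the natural ``branching core'' of the graph, while the remaining vertices lie on degree-$2$ chains or as leaves hanging off this core.

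Next, I would apply exhaustively a set of equivalence-preserving reduction rules that simplify the QBF while keeping its feedback edge number at most $k$. These rules should include: removing duplicated and tautological clauses; eliminating variables that occur in only a single clause by absorbing them into that clause (this is sound when the variable lies ``deep enough'' in the prefix, e.g.\ innermost existential); eliminating variables of primal-degree $1$ via a resolution-like step; and shortening long chains of primal-degree-$2$ variables in $F$, again by resolving on intermediate variables whose quantifier position permits it. Each rule is polynomial-time and can be implemented so that it never increases $k$. After exhaustive application, every remaining variable either lies in $T$, has degree at least $3$ in $F$, or is a ``short'' chain vertex that cannot be eliminated because of quantifier restrictions.

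For the size analysis, I would argue as follows. By a standard handshake count in the forest $F$, the number of vertices of degree at least $3$ in $F$ is at most $|T|-2 \leq 2k-2$. Since every chain between branching or terminal vertices has bounded length after reduction (the exact constant coming from the chain-shortening rule), the total number of variables is bounded by a linear function of $k$ which, with careful accounting, becomes $12k-8$. For clauses of size at most $2$: each such clause contributes an edge of $G_\qbfformula$, and since $|E(G_\qbfformula)| \leq |V(F)|-1+k$ together with an appropriate bound on unit clauses, the count can be tightened to $10k-9$. For clauses of size at least $3$: any clause of size $w$ induces a $K_w$ in the primal graph, and the $\binom{w}{2}$ edges of that clique must all be covered by spanning forest edges plus feedback edges, which forces all wide clauses to share a single variable set of size at most $w$ with $\binom{w}{2}\leq 3k$; solving this quadratic yields $w \leq \lfloor(\sqrt{24k+1}+1)/2\rfloor$, and the number of distinct clauses on $w$ variables is at most $3^w$ (each variable positive, negative, or absent), giving the claimed bound.

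The main obstacle I anticipate is designing the reduction rules, in particular the chain-compression rule, so that they simultaneously (i) preserve equivalence of the QBF, (ii) respect the quantifier prefix (variable elimination by resolution is only sound for innermost existential variables, or under analogous side conditions for universals), and (iii) do not inadvertently increase the feedback edge number. The interplay between the quantifier prefix and the graph structure means that, for instance, a chain spanning variables from multiple quantifier blocks may not be collapsible in one step; the rules will have to operate on maximal quantifier-homogeneous sub-chains, and the bookkeeping to show that the resulting instance still satisfies the claimed bounds is the delicate part of the argument.
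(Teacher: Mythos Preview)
Your overall strategy---compute a feedback edge set, reduce leaves and degree-$2$ chains in the remaining forest, then count---matches the paper's. The genuine gap is in the chain-compression rule, and it is exactly the obstacle you flag at the end: your proposed workaround (operate on maximal quantifier-homogeneous sub-chains) does not work. A chain whose inner vertices alternate $\exists,\forall,\exists,\forall,\dotsc$ has no homogeneous sub-chain of length greater than one, so your rule never fires and the chain can be arbitrarily long; the kernel bound fails.

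The paper resolves this with a single observation you are missing: on any clean path with at least two inner vertices, pick the inner vertex $v$ that is \emph{innermost in the prefix} (among all inner vertices of that path). Then one of $v$'s two neighbours on the path is also an inner vertex and therefore precedes $v$ in the prefix. This is enough to eliminate $v$ regardless of its quantifier: if $v$ is existential, resolve its two binary clauses into one binary clause (the winning strategy for the reduced formula can set $v$ after seeing that neighbour); if $v$ is universal, replace its two binary clauses by the two unit clauses obtained by dropping $v$ (the universal counter-strategy can set $v$ after seeing that neighbour). No homogeneity is needed, and each application strictly shortens the path.

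A second, smaller gap: before any of this works you need each clean forest edge to correspond to \emph{exactly one} binary clause. The paper enforces this with preliminary rules handling the cases where two adjacent variables share $4$, $3$, or $2$ clauses (in each case one concludes unsatisfiability, a forced assignment, or a literal substitution). You gesture at duplicate removal but not at this reduction, and without it neither leaf removal nor chain compression goes through cleanly. Your clause-counting sketch for wide clauses is close to the paper's, though the claim that ``all wide clauses share a single variable set'' is not literally true; the paper argues instead that the at most $3k$ non-clean edges can, in the worst case, form a clique of the stated size.
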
\vspace{-.15em}
\ifshort
The main ideas behind the kernelization are as
follows. Given $\qbfformula$, we first compute a smallest FES $D$ of primal
graph $G{=}G_\qbfformula$ in polynomial time. Then, graph $H{=}G{-}D$ is a (spanning) forest of
$G$. We introduce a series of reduction rules that allow us to
reduce the size of $\qbfformula$ and $H$. We start by observing that
we can remove \emph{unit clauses}, i.e., clauses containing only one
literal, and \emph{pure literals}, i.e., variables that either only
occur positively or only negatively in $\qbfformula$. We then consider
\emph{clean} edges of $H$, i.e., edges that do not appear in any
triangle of $G$. Note that all but at most $2\cdot|D|$ edges of $H$
are clean, because every edge of $H$ that is not incident to any edge in $D$
is necessarily clean. Crucially, endpoints of a clean edge can only
occur together in clauses of size at most 2. This property allows us
to simplify formula $\qbfformula$ significantly (using three reduction rules,
whose correctness follows by using Hintikka strategies~\cite{Gradel05})
s.t.\ we can assume the endpoints of every clean
edge are contained in exactly one clause of $\qbfformula$. This
allows us to introduce a simple reduction rule for removing 
every leaf of $H$ that is not an endpoint of an edge in~$D$. 
Then, in the reduced instance, $H$ has at
most $2\cdot|D|$ leaves and therefore at most $2\cdot|D|-2$ vertices of
\emph{degree} (number of adjacent vertices) larger than 2. Our last reduction rule, which is involved and based on Hintikka strategies, allows us to reduce
degree 2 vertices in $H$ by showing that any maximal (clean) path of
degree 2 vertices in $H$ must contain a variable (the innermost variable), that can be ``removed''. This shows that the size of $H$ and therefore also $G$ is bounded,
which in turn allows us to obtain a bound for~$\qbfformula$.
\fi

From Theorem~\ref{thm:kernel}, we know that we can brute-force on the
kernel (output) after preprocessing. Immediately, we obtain a
single-exponential fpt algorithm for $\QBFSATCNF$.
\vspace{-1.4em}
\begin{COR}\label{cor:fpt:fes}
  $\QBFSATCNF$ is fpt parameterized by the feedback edge number of the primal graph.
\end{COR}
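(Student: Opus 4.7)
My plan is to derive Corollary~\ref{cor:fpt:fes} as a direct consequence of the kernelization result from Theorem~\ref{thm:kernel}, using the standard equivalence between the existence of a polynomial-size (indeed, any computable-size) kernel and fixed-parameter tractability for decidable problems.

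First, I would invoke Theorem~\ref{thm:kernel} on the input QBFCNF $\qbfformula$ with parameter $k$ equal to the feedback edge number of $G_\qbfformula$. Since a smallest feedback edge set of a graph can be computed in polynomial time (take a spanning forest and let $D$ consist of the remaining edges), the parameter $k$ itself is computable in polynomial time. Applying the kernelization produces, in polynomial time, an equivalent QBFCNF $\qbfformula'$ with at most $12k-8$ variables and at most $10k-9+3^{\lfloor (\sqrt{24k+1}+1)/2\rfloor}$ clauses. In particular, the total bit-size of $\qbfformula'$ is bounded by some computable function $g(k)$ only depending on $k$.

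Second, I would solve $\qbfformula'$ by a brute-force evaluation algorithm for QBFCNF. Since $\qbfformula'$ has at most $n' \le 12k-8$ variables, we can decide validity by enumerating all $2^{n'}$ assignments according to the quantifier prefix and evaluating the matrix in time polynomial in $|\qbfformula'|$ per assignment; this yields a runtime bounded by some computable function $h(k)$. Any correct PSPACE evaluation procedure on the kernel suffices here, since correctness of the overall algorithm is ensured by the equivalence guaranteed in Theorem~\ref{thm:kernel}.

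Combining the two phases, the total runtime is $\poly(\Card{\qbfformula}) + h(k)$, which is of the form $f(k)\cdot \poly(\Card{\qbfformula})$ for the computable function $f(k) = h(k)$. By definition this shows that $\QBFSATCNF$ is fixed-parameter tractable parameterized by the feedback edge number of the primal graph. There is no real obstacle here: the only substantive work is already contained in Theorem~\ref{thm:kernel}, and the corollary is essentially the textbook observation that a kernelization algorithm (into an instance whose size is bounded by any computable function of the parameter) implies fpt membership.
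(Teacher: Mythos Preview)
Your proposal is correct and matches the paper's approach exactly: the paper states just before the corollary that ``we can brute-force on the kernel (output) after preprocessing,'' which is precisely your two-phase argument of applying Theorem~\ref{thm:kernel} and then exhaustively evaluating the resulting bounded-size QBFCNF. The only additional information in the paper is the remark that this yields a single-exponential algorithm, since the kernel has $O(k)$ variables.
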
\vspace{-.15em}
Interestingly, our kernelization even provides a linear kernel,
i.e., the size of the kernel depends only linearly on the parameter,
if we restrict ourselves to $c$-$\QBFCNF$s.
\vspace{-.3em}
\begin{COR}
  Let $c$ be an integer; $\qbfformula$ be a $c$-$\QBFCNF$.~In polynomial time, we obtain an 
  equivalent $c$-$\QBFCNF$ with at most $12k{\,-\,}8$ variables and 
  $10k{\,-\,}9{\,+\,}3k$ $(3^c/\binom{c}{2})$ clauses, s.t.\ $k$ is the size of a smallest
  FES of~$G_\qbfformula$.
\end{COR}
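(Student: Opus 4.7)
The plan is to invoke Theorem~\ref{thm:kernel} as a black box and then re-examine the clause bound under the $c$-CNF restriction. First, I verify that each reduction rule appearing in the proof of Theorem~\ref{thm:kernel}---removal of unit clauses and pure literals, the three simplifications along clean edges, and the degree-$2$ reduction---either deletes clauses or rewrites them without increasing clause width. Hence, running the kernelization on a $c$-$\QBFCNF$ $\qbfformula$ produces an equivalent $c$-$\QBFCNF$ on at most $12k-8$ variables, so the variable bound carries over verbatim. The linear summand $10k-9$ in the clause count also carries over, since it stems from the same structural accounting of clauses associated with the reduced forest and the feedback edges.

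The only term that must be reanalyzed is the brute-force enumeration $3^{\lfloor(\sqrt{24k+1}+1)/2\rfloor}$ that, in the proof of Theorem~\ref{thm:kernel}, bounds the clauses over the ``dense'' part of the kernel by counting arbitrary sign patterns over a vertex set of size $\mathcal{O}(\sqrt{k})$. Under the $c$-CNF restriction, I replace this term by an argument relating clauses to edges of the reduced primal graph. Inspecting the reductions shows that the primal graph of the kernel has a number of edges that is linear in $k$; in particular, at most $3k$ edges contribute to clauses of size at least $2$, as the $k$ feedback edges of $D$ together with a linear-in-$k$ remainder from the reduced forest cover all such contributions.

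The key counting step is then a standard double count. In a $c$-CNF, every clause $C$ with $|C|\geq 2$ induces a clique on its variables and therefore contributes $\binom{|C|}{2}\leq\binom{c}{2}$ edges to the primal graph. Conversely, every fixed edge $\{u,v\}$ of the primal graph can occur in at most $4\cdot 3^{c-2}\leq 3^c$ distinct clauses of width at most $c$: four sign combinations for $u,v$, and three choices (positive, negative, or absent) for each of the remaining up to $c-2$ literal slots. Summing clause--edge incidences both ways bounds the number of non-unit clauses by $3k\cdot 3^c/\binom{c}{2}$, which, combined with the $10k-9$ linear term from Theorem~\ref{thm:kernel}, yields the claimed kernel size. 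The main obstacle I anticipate is pinning down the $3k$ edge bound precisely from the internals of Theorem~\ref{thm:kernel}, as this requires carefully chasing the degree-$2$ reduction along clean paths; the remainder is a routine double-counting argument that does not interact with the QBF prefix.
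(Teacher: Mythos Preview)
Your double-counting step has a genuine error. The claim ``every fixed edge $\{u,v\}$ can occur in at most $4\cdot 3^{c-2}$ distinct clauses of width at most $c$'' treats the remaining $c-2$ literal slots as if they were drawn from a fixed pool of $c-2$ variables. They are not: the other literals may come from any of the $\Theta(k)$ remaining variables of the kernel. Concretely, with $c=3$ and $12k-8$ variables, an edge $\{u,v\}$ can sit inside $\Theta(k)$ distinct $3$-clauses (one per choice of the third variable, times sign patterns). So your per-edge bound grows with $k$, and the double count only yields a bound quadratic in $k$, not the linear one stated. Even setting this aside, the direction of your inequality is off: summing $\binom{|C|}{2}\leq\binom{c}{2}$ over clauses gives an \emph{upper} bound on the incidence count, so dividing by $\binom{c}{2}$ does not produce an upper bound on the number of clauses.

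The paper does not argue via per-edge incidences at all. It works at the level of clause \emph{supports} (the cliques $\var(C)$). The key structural fact is that any clause of size $\geq 3$ has every edge of its support inside a triangle, so all these edges are non-clean; thus each such support is a clique of size at most $c$ using only the $3k$ non-clean edges. The paper then argues, as a worst-case packing estimate analogous to the single-big-clique worst case in the general theorem, that $3k$ edges accommodate at most $3k/\binom{c}{2}$ such size-$c$ cliques, and each clique carries at most $3^c$ sign patterns. This clique-level count is what gives the $3k\cdot 3^c/\binom{c}{2}$ term; your edge-level double count cannot reach it without first controlling how many distinct supports share a given edge.
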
\vspace{-.1em}
Similarly, we can show the existance of a smaller kernel for
the feedback edge number of the incidence~graph.
\vspace{-.2em}
\begin{THM}[$\star$]\label{thm:kernel-inc}
  Let $\qbfformula$ be a $\QBFCNF$ with  feedback edge number~$k$ of $I_\qbfformula$. In polynomial time, we can construct an
  equivalent $\QBFCNF$ with $\leq 24k{-}17$ variables and clauses.
\end{THM}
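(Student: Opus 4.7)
The plan is to mirror the reduction-rule framework sketched for Theorem~\ref{thm:kernel} but to carry it out on the incidence graph, exploiting its bipartite structure. Given $\qbfformula$, first compute a minimum feedback edge set $D$ of $I_\qbfformula$ in polynomial time, and let $H := I_\qbfformula - D$, a bipartite spanning forest on $\var(\qbfformula) \cup \matr(\qbfformula)$ with at most $2k$ endpoints of edges in $D$. I would first port over the basic ``easy'' reductions from the primal kernel verbatim: eliminate unit clauses and pure literals (and also variables that occur only in clauses already guaranteed to be satisfied), each of which in the incidence graph corresponds to removing a leaf of a specific type.

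The analogue of the \emph{clean edge} notion becomes: an edge $\{x,C\}$ of $H$ is clean if it lies on no cycle of $I_\qbfformula$, equivalently if there is no $x$-$C$ path in $I_\qbfformula - \{x, C\}$. Because every unclean edge lies on one of the at most $k$ short $H$-paths joining the two endpoints of some edge in $D$, all but $\mathcal{O}(k)$ edges of $H$ are clean. Using the same Hintikka-strategy arguments as in the primal case (but phrased over the variable--clause incidence rather than over shared-variable pairs), one shows that along a clean edge the incident variable and clause can be strongly normalized: essentially, the variable interacts with the rest of the formula only through this single clause, which lets one collapse such local configurations. After this normalization, a leaf-removal rule safely deletes any leaf of $H$ that is not an endpoint of an edge in $D$.

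The hard part, exactly as for the primal kernel, is reducing the remaining degree-2 vertices. In the bipartite setting, a maximal degree-2 path in $H$ alternates variable--clause--variable--clause, so the rule must be designed to take either vertex type at each position. The plan is to prove, via a Hintikka game argument adapted to alternating paths, that whenever such a path exceeds a fixed constant length it contains an ``innermost'' variable (in the quantifier order) together with its companion clause whose joint contribution to the winning strategy is redundant and can be excised while preserving validity. This bounds the length of every maximal degree-2 path by a constant independent of the input.

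For the final counting, after exhaustive application of all rules, $H$ contains at most $2k$ leaves (all necessarily $D$-endpoints) and therefore at most $2k - 2$ vertices of degree at least $3$, and thus at most $4k - 3$ maximal paths of degree-2 vertices; multiplied by the constant length bound and summed with the leaves and high-degree vertices, the total number of vertices of $I_\qbfformula$, which is exactly the number of variables plus clauses of the reduced $\qbfformula$, is bounded by $24k - 17$. The main obstacle, as in the primal case, is verifying the correctness of the alternating degree-2 path reduction via Hintikka strategies; the bipartite alternation means each ``template'' from Theorem~\ref{thm:kernel} must be duplicated to cover both starting parities, which is the source of the doubling of the constant relative to the primal bound.
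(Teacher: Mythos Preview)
Your high-level skeleton---compute a minimum FES $D$, work in the spanning forest $H=I_\qbfformula-D$, bound leaves and high-degree vertices, then shorten long degree-$2$ paths, and finish with the count $5(4k-3)+2k+(2k-2)=24k-17$---matches the paper. But the ``clean edge'' step you insert is both unnecessary and wrong. Under your definition (an $H$-edge is clean iff it lies on no cycle of $I_\qbfformula$), an unclean edge is just one lying on the fundamental cycle of some $D$-edge, and those cycles can be arbitrarily long; the claim ``all but $\mathcal{O}(k)$ edges of $H$ are clean'' is therefore false. The $\mathcal{O}(k)$ bound in the primal proof came from the \emph{triangle} condition, which has no analogue in a bipartite graph.

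The paper sidesteps this entirely and is much simpler than what you propose. After unit-clause and pure-literal elimination, every leaf of $H$ outside $V(D)$ is already gone (a clause-leaf not in $V(D)$ is a unit clause; a variable-leaf not in $V(D)$ is pure), so $H$ has at most $2k$ leaves with no extra rule. For the degree-$2$ paths, no new Hintikka argument is developed: if $v$ is a variable that is an inner vertex of a clean path and both of its (clause) neighbours are also inner, then $v$ has degree $2$ in $I_\qbfformula$ and its two clauses have size $2$ and contain $v$ with opposite signs---exactly the hypotheses of Observation~\ref{obs:fes-path}---so Lemma~\ref{lem:fes-degt} from the primal kernel applies \emph{verbatim}. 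Such a $v$ exists once the path has at least five inner vertices, which is what drives the constant; the ``doubling'' you anticipate is not from duplicating templates for parity but from needing the two clause-neighbours (not only $v$ itself) to be inner.
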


\vspace{-.7em}
\section{Tractabilty 
for \QBFSATCNF 
on Primal Graphs}\label{sec:tdp}
\vspace{-.1em}

Above, our results draw a comprehensive picture of the
fine-grained complexity of \QBFSATCNF{} with respect
to the incidence graph. 
However, when
considering the primal graph there is a gap between the
tractability for vertex cover number and feedback edge number and the known intractability for
treewidth. 
To address this, one may ask what is the complexity of
\QBFSATCNF{} with respect to parameters feedback vertex number and
treedepth? 
We progress towards %
resolving the question for treedepth, 
which 
%
is not only completely open, but existing techniques do not even allow us to solve the problem for significant restrictions of treedepth.  Such a parameter is
$c$-deletion set, i.e.,~deletion distance to components of size at most $c$, which is well-known to be 
inbetween vertex cover number and treedepth, see related work on \emph{vertex integrity}~\cite{LampisM21}. 
We provide three novel algorithms,
each representing a step towards generalizing the tractability of
\QBFSATCNF\ for vertex cover number. Each 
uses a
different 
approach providing new insights 
that are promising for 
treedepth. 


First, we show that 
different variants of $c$-deletion sets can be efficiently computed, which we achieve by
the following proposition.
\newcommand{\mP}{\mathcal{P}}
Let $\mP(\qbfformula,D,c)$ be any property that can  be true or false for
a \QBFCNF{} $\qbfformula$ and $c$-deletion set $D$ of~$Q$. We say
$\mP$ is \emph{efficiently computable} if there is an algorithm that given
$\qbfformula$, $D$, and $c$ decides whether $\mP(\qbfformula,D,c)$ holds in fpt-time
parameterized by $|D|{\,+\,}c$.
\iflong
\begin{PROP}
\fi
\ifshort\vspace{-.2em}
\begin{PROP}[$\star$]
\fi  
  \label{pro:computedelsetgeneral}
  Let $\mP$ be any efficiently computable property and let $\qbfformula$ be a
  \QBFCNF{}. Then, computing a smallest $c$-deletion set $D$
  of $\qbfformula$ that satisfies $\mP(\qbfformula,D,c)$ is fixed-parameter tractable
  parameterized by $|D|+c$.
\end{PROP}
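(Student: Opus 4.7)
The plan is to combine a bounded-search-tree algorithm for $c$-deletion sets with the assumed fpt-check of $\mP$ used as a black box. I would iterate the target budget $k = 0, 1, 2, \ldots$ upward, and for each fixed $k$ run the standard $c$-deletion-set branching: starting with $D := \emptyset$, as long as some connected component of $G_\qbfformula - D$ has more than $c$ vertices, I would find (via BFS) a connected subgraph $H \subseteq G_\qbfformula - D$ on exactly $c+1$ vertices and branch into $c+1$ recursive calls, each decrementing the budget and adding a distinct vertex of $H$ to $D$. A branch halts once the current $D$ is a $c$-deletion set of $G_\qbfformula$ or the budget is exhausted; at that point I would invoke the assumed fpt-check for $\mP(\qbfformula, D, c)$. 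The algorithm returns the first $D$ for which $\mP$ holds, and upward iteration in $k$ ensures this set has smallest possible size.

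For correctness I would appeal to the standard hitting-set view: every $c$-deletion set must contain at least one vertex of each $(c+1)$-size connected subgraph in $G_\qbfformula - D$. Hence for the target smallest set $D^*$ satisfying $\mP$, the branch that consistently picks a vertex of $D^* \cap H$ at every step is guaranteed to produce $D^*$ at iteration $k = |D^*|$, provided the halt condition is first met exactly when $D = D^*$. Should the halt fire earlier along that branch because a proper subset of $D^*$ is already a $c$-deletion set, that subset was tested at a strictly smaller iteration of $k$, and upward iteration together with returning the first $\mP$-positive leaf still yields a set of minimum size.

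For the runtime, the branching tree at budget $k$ has at most $(c+1)^k$ leaves; each leaf performs polynomial-time graph bookkeeping plus one fpt-check of $\mP$ costing $g(k+c) \cdot \poly(\Card{\var(\qbfformula)})$, where $g$ witnesses the efficient computability hypothesis. Summing over $k \le |D^*|$ yields a total runtime of $\bigO((c+1)^{|D^*|+1} \cdot g(|D^*|+c) \cdot \poly(\Card{\var(\qbfformula)}))$, which is fpt in $|D^*| + c$. The main obstacle I foresee is precisely the correctness argument above---verifying that the branching leaves actually reach the target $D^*$---which I plan to handle via the hitting-set observation combined with the upward iteration in $k$; everything else (BFS to find $H$, maintaining the partial deletion set, and the single black-box $\mP$-call per leaf) is routine.
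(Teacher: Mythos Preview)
Your high-level approach---iterate $k$ upward, run a bounded-depth search tree to generate candidate $c$-deletion sets of size at most $k$, test $\mP$ at each leaf, and return the first success---is exactly what the paper does (the paper simply cites an external branching result rather than spelling it out).

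There is, however, a genuine gap in your correctness argument. By halting a branch as soon as the current $D$ is a $c$-deletion set, your tree only produces $c$-deletion sets that are inclusion-minimal along the chosen branching order. But the smallest $D^*$ satisfying $\mP$ need not be inclusion-minimal as a $c$-deletion set: it may strictly contain a $c$-deletion set $D'$ that fails $\mP$. Your attempted patch---``that subset was tested at a strictly smaller iteration of $k$''---does not help, since $D'$ failing $\mP$ yields nothing, and at iteration $k=|D^*|$ the branch that follows vertices of $D^*$ still halts early at $D'$, so $D^*$ is never generated. Concretely: take $G_\qbfformula$ to be a star with center $v$ plus an isolated vertex $w$, set $c=1$, and let $\mP$ hold on $\{v,w\}$ but fail on $\{v\}$. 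Every $(c{+}1)$-subgraph is an edge incident to $v$, so $w$ is never branched on and $\{v,w\}$ is never a leaf of your tree at any $k$. The paper's write-up sidesteps this by asserting that the cited search tree can enumerate \emph{all} $c$-deletion sets of size at most $k$; to make your version go through you would need either to justify that stronger enumeration claim, or to keep branching after $D$ becomes a $c$-deletion set and then argue the resulting blow-up is still bounded in $|D|+c$.
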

\iflong
\begin{proof}
  It is shown in~\cite[Theorem 5]{DBLP:journals/ai/KroneggerOP19} that
  deciding whether a graph $G$ has a $c$-deletion set of size at most
  $k$ is fixed-parameter tractable parameterized by $k+c$. The proof
  uses a bounded-depth search tree algorithm that can actually be used
  to enumerate all possible $c$-deletion sets of size at most $k$ in
  fpt-time parameterized by $k+c$. Therefore, using this algorithm, we
  can enumerate all $c$-deletion sets of size at most $k$ of a
  \QBFCNF{} $\qbfformula$ in the
  required time. Then, for each such $c$-deletion set $D$ we can use the
  algorithm for fact that the property $\mP$ is efficiently
  computable to decide whether $\mP(\qbfformula,D,c)$ is true or false.
  We can then return the smallest set such that $\mP(\qbfformula,D,c)$ is
  true or return false if no such set exists. Finally, by starting with $k=1$
  and increasing $k$ by long as the algorithm returns false, we can
  find a smallest $c$-deletion set for $\qbfformula$.
\end{proof}
\fi

We can eliminate all universal variables in a
$c$-deletion set $D$ of a $\QBFCNF$ without losing the
structure of the formula, i.e., 
we obtain a formula, which is not too large and still has a $2^cc$-deletion set of
size at most $2^cc$.
\iflong
\begin{PROP}
  \fi
\ifshort\vspace{-.4em}
\begin{PROP}[$\star$]
\fi  
  \label{pro:delset-exists}
  Let $\qbfformula$ be a $\QBFCNF$ and let $D$ be a $c$-deletion set for $\qbfformula$. Then, in time
  $\mathcal{O}(2^u\CCard{\qbfformula})$, where $u=|D\cap \varu(\qbfformula)|$, we can construct an equivalent $\QBFCNF$
  $\qbfformula'$ and a set $D'\subseteq \vare(\qbfformula')$ with $|D'|\leq 2^u|D|$
  s.t.\ $D'$ is a $2^uc$-deletion set for~$\qbfformula'$.
\end{PROP}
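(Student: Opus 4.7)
The plan is to eliminate all universal variables of $D$ by \emph{Shannon expansion}: since $\forall x.\, G \equiv G[x{:=}0] \wedge G[x{:=}1]$, iterating this over all variables of $U \eqdef D \cap \varu(\qbfformula)$ produces a conjunction of $2^u$ matrix copies, one per assignment $\beta:U\to\{0,1\}$. To preserve the quantifier structure correctly, I would keep variables quantified \emph{strictly before} the outermost block hitting $U$ as a single shared block, while replicating every variable quantified at or after that block (both the remaining existentials and the non-$U$ universals) into $2^u$ fresh, independently quantified copies indexed by $\beta$.

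Concretely, let $i^\star$ be the smallest index with $V_{i^\star}\cap U\neq\emptyset$. For each $\beta:U\to\{0,1\}$ and each variable $x\in \big(\bigcup_{i\geq i^\star}V_i\big)\setminus U$, introduce a fresh copy $x^\beta$ of the same quantifier type, and build $F_\beta$ from $F$ by substituting $\beta(x)$ for every $x\in U$ and renaming every other variable of $\bigcup_{i\geq i^\star}V_i$ to its $\beta$-copy. Define $\qbfformula'$ to have prefix $Q_1V_1\cdots Q_{i^\star-1}V_{i^\star-1}$ followed, for each $i=i^\star,\ldots,\ell$, by a block $Q_i\bigcup_\beta V_i^\beta$ with $V_i^\beta\eqdef\{x^\beta:x\in V_i\setminus U\}$, and with matrix $\bigwedge_\beta F_\beta$. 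Equivalence $\qbfformula\Leftrightarrow\qbfformula'$ follows by induction on the blocks from $i^\star$ outward, using Shannon expansion at each $U$-variable together with the standard facts that independent existential copies below a universal are sound (existentials may depend on universal choices) and that duplicating universal variables across disjoint copies is harmless. Take $D'$ to be all shared $D$-existentials together with, for every $x\in D\cap\vare(\qbfformula)$ in a block $V_i$ with $i\geq i^\star$, all $2^u$ copies $x^\beta$; then $D'\subseteq\vare(\qbfformula')$ and a direct count gives $|D'|\leq 2^u|D|$.

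For the component bound, observe that any edge of $G_{\qbfformula'}$ lies inside a single matrix copy $F_\beta$, so it connects shared vertices either to other shared vertices or to $\beta$-copies for one fixed $\beta$. Lifting any path in $G_{\qbfformula'}-D'$ to $G_\qbfformula-D$ by projecting each $\beta$-copy to its original shows that two vertices lie in the same component of $G_{\qbfformula'}-D'$ only if their originals lie in the same component $K$ of $G_\qbfformula-D$; this component contributes its shared vertices once and its copied vertices $2^u$ times, yielding at most $2^u|K|\leq 2^uc$ vertices. Hence $D'$ is a $2^uc$-deletion set. The construction writes each clause of $F$ exactly $2^u$ times, giving the claimed runtime $\mathcal{O}(2^u\CCard{\qbfformula})$. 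The main obstacle is the bookkeeping around interleaved quantifiers: when $U$-variables sit in the middle of the prefix, one must carefully verify that merging the $2^u$ copies $V_i^\beta$ into a single block of type $Q_i$ does not change the truth value, which is exactly where the above ``universals duplicate freely, existentials may depend'' principles come into play.
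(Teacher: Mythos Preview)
Your construction is not equivalence-preserving. The problem is that you index \emph{every} variable in $\bigcup_{i\ge i^\star}V_i$ by the \emph{full} assignment $\beta:U\to\{0,1\}$, including by those $U$-variables that sit \emph{after} the variable in the prefix. This illegitimately decouples occurrences that must share a value. Concretely, take
\[
\qbfformula=\forall y_1\,\exists x_1\,\forall y_2\,\exists x_2.\;(\neg x_1\lor y_2)\land(x_1\lor\neg y_2)\land(y_1\lor x_2)\land(\neg y_1\lor x_2),
\]
with $D=\{y_1,y_2\}=U$ a $1$-deletion set. Here $\qbfformula$ is false, since $\exists x_1\forall y_2.(x_1\leftrightarrow y_2)$ fails. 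Your construction produces four copies $x_1^{b_1b_2}$ and the matrix simplifies to $\neg x_1^{00}\land x_1^{01}\land\neg x_1^{10}\land x_1^{11}\land\bigwedge_\beta x_2^\beta$, which is satisfiable; hence your $\qbfformula'$ is true. The ``existentials may depend on universal choices'' principle you invoke only licenses dependence on \emph{earlier} universals, but your indexing lets $x_1$ vary with the value of $y_2$, which comes later.

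The fix is to index each non-$U$ variable $x$ only by the restriction $\beta|_{U_{<x}}$ of $\beta$ to the $U$-variables preceding $x$; then $F_\beta$ uses $x^{\beta|_{U_{<x}}}$ rather than $x^\beta$. The paper avoids this bookkeeping by eliminating the universal variables of $D$ one at a time, innermost first: expanding $v_i$ duplicates only the suffix $v_{i+1},\dots,v_n$, and iterating yields exactly the ``prefix-restricted'' indexing. Both routes give the same final formula and the same size and component bounds; your obstacle paragraph correctly senses a difficulty but locates it in the wrong place.
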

\iflong
\begin{proof}
  Let $\qbfformula=Q_{1}v_1Q_2 v_2\cdots Q_n v_n F$.
  If $\varu(D)=\emptyset$, then we simple return $\qbfformula'=\qbfformula$ and
  $D'=D$. Otherwise, let $v_i \in \varu(D)$ be the universal variable in $D$
  that occurs last in the prefix of $\qbfformula$. We will use quantifier
  elimination to eliminate $v_i$. That is, let $\qbfformula(v_i)$ be the formula
  obtained from $\qbfformula$ after eliminating $v_i$, i.e.:

  \[\begin{array}{cc}
      \qbfformula(v_i) = & Q_{1}v_1\cdots Q_{i-1} v_{i-1}((Q_{i+1}v_{i+1}\cdots Q_{n}
      v_{n}) F[v_i=0]) \land (Q_{i+1}v_{i+1}\cdots Q_{n}v_{n}
      F[v_i=1]))
    \end{array}
  \]
  Note that $\qbfformula(v_i)$ is equivalent to $\qbfformula$ and can be computed
  in time $\bigO(\CCard{\qbfformula})$. However,
  $\qbfformula(v_i)$ is not in prenex normal form. To bring $\qbfformula_{v_i}$
  into prenex normal form, we introduce a copies $v'$ for every
  variable $v \in \{v_{i+1},\dotsc,v_n\}$ and we then rewrite
  $\qbfformula(v_i)$ into the equivalent formula $\qbfformula_c(v_i)$ given as:
  \[\begin{array}{cc}
      \qbfformula_c(v_i) = & Q_{1}v_1\cdots Q_{i-1} v_{i-1}Q_{i+1}v_{i+1}\cdots Q_{n}v_nQ_{i+1}v_{i+1}'\cdots Q_{n}
      v_{n}' F[v_i=0]) \land F'[v_i=1]
    \end{array}
  \]
  where $F'$ is the CNF formula obtained from $F$ after renaming every
  occurrence of a variable $v \in \{v_{i+1},\dotsc,v_n\}$ to $v'$. Then:
  \begin{itemize}
  \item $\qbfformula_c(v_i)$ has at most twice the size of $\qbfformula$ and can be
    computed in time $\bigO(\CCard{\qbfformula})$,
  \item Let $A$ be the set of all (existential) variables in $D$
    that occur after $v_i$ in the prefix of $\qbfformula$ and let $A'=\SB
    v' \SM v \in A\SE$. Then, $D''=(D\setminus \{v_i\})\cup A'$ is a $2c$-deletion set of
    $\qbfformula_{v_i}'$. Moreover, $D''$ contains one less universal
    variable than $D$. Therefore, $\qbfformula_c(v_i)$ has a $2c$-deletion
    set that is at most twice the size of $D$ and contains one less
    universal variable.
  \end{itemize}
  It follows that if we repeat the above process for every universal
  variable in $D$ in the reverse order of their occurrence in the
  prefix of $\qbfformula$, we obtain an equivalent formula $\qbfformula'$ having a
  $2^uc$-deletion set $D'$ of size at most $2^u|D|$ with $D' \subseteq
  \vare(\qbfformula')$ in time $\bigO(\CCard{\qbfformula'})=\bigO(2^u\CCard{\qbfformula})$.
\end{proof}
\fi

\vspace{-.8em}
\subsection{Components of Type $\exists^{\leq 1}\forall$}

$\QBFSATCNF$ is fpt
parameterized by the size of a $c$-deletion set into components of the
form $\exists^{\leq 1}\forall$, i.e., components have at most
one existential variable occurring before all its (arbitrarily many)
universal
variables in the prefix. 
\vspace{-.2em}
\begin{THM}\label{the:fpt:cdels}
  \QBFSATCNF{} is fixed-parameter tractable parameterized by $k+c$,
  where $k$ is the size of a smallest $c$-deletion set into components
  of the form $\exists^{\leq 1}\forall$.
\end{THM}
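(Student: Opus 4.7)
The plan is to first compute a deletion set and then decompose the QBF evaluation across its components. I would invoke Proposition~\ref{pro:computedelsetgeneral} to compute, in fpt-time, a smallest $c$-deletion set $D$ of size $k$ such that every component of $G_\qbfformula - D$ is of the form $\exists^{\leq 1}\forall$; the underlying property is efficiently computable, since for a candidate $D$ one can enumerate the components of $G_\qbfformula - D$ in polynomial time and verify for each that it contains at most one existential variable which, in the prefix of $\qbfformula$, precedes all universal variables of the same component.

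The key structural observation I would exploit is that no clause of the matrix can contain variables from two distinct components of $G_\qbfformula - D$ — otherwise those two variables would share an edge in $G_\qbfformula - D$. Hence the matrix $F$ decomposes as $F = F_D \wedge \bigwedge_C F_C$, where $F_D$ contains only clauses whose variables lie in $D$, and each $F_C$ contains only clauses whose variables lie in $C \cup N(C)$ for some $N(C) \subseteq D$. For each component $C$ and each assignment $\sigma: N(C) \to \{0,1\}$, I would then precompute the truth value $v_C(\sigma)$ of the sub-QBF obtained by restricting the prefix of $\qbfformula$ to $C$ (preserving order) with matrix $F_C[\sigma]$. By the $\exists^{\leq 1}\forall$ structure, this sub-QBF has the form $\exists e_C \forall U_C. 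F_C[\sigma]$ (or merely $\forall U_C. F_C[\sigma]$ when $C$ has no existential), and can be evaluated by brute force in $\mathcal{O}(2^c)$ time. Since $|N(C)| \leq k$ and there are at most $|\var(\qbfformula)|$ components, the overall precomputation runs in $\mathcal{O}(|\var(\qbfformula)| \cdot 2^{k+c})$ time.

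The heart of the argument will be establishing that $\qbfformula$ is equivalent to the ``reduced'' QBF $\qbfformula^\star$ whose prefix is that of $\qbfformula$ restricted to $D$ and whose matrix is $F_D \wedge \bigwedge_C v_C(\sigma|_{N(C)})$, where each $v_C$ is viewed as a Boolean function on $D$-variables via its precomputed truth table. To show this, I would fix an arbitrary leaf assignment $\sigma_D$ of the $D$-prefix game; the residual QBF over non-$D$ variables then has matrix $F_D[\sigma_D] \wedge \bigwedge_C F_C[\sigma_D]$. Since the $F_C[\sigma_D]$ have pairwise disjoint variable sets, a straightforward induction using the quantifier identities $\exists x. (\phi \wedge \psi) \equiv (\exists x. \phi) \wedge \psi$ and $\forall x. (\phi \wedge \psi) \equiv (\forall x. \phi) \wedge (\forall x. \psi)$ (valid whenever $x \notin \var(\psi)$) factors the residual into $F_D[\sigma_D] \wedge \bigwedge_C v_C(\sigma_D|_{N(C)})$, matching the matrix of $\qbfformula^\star$ under the same $\sigma_D$. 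Since $\qbfformula^\star$ has only $k$ variables, its game tree can be brute-forced in $\mathcal{O}(2^k \cdot |\var(\qbfformula)|)$ time, yielding the claimed fpt bound in $k+c$.

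The main obstacle I anticipate is justifying the factorization of the residual QBF despite the arbitrary interleaving of component and $D$-variables in the original prefix of $\qbfformula$: although the matrix is a conjunction over variable-disjoint components, the prefix mixes their variables freely, so the decomposition must be proven variable-by-variable, and one must ensure that the $\exists^{\leq 1}\forall$ property is inherited by the restricted sub-QBF on $C$. The inheritance itself is immediate from the definition, but the quantifier rearrangement across disjoint-variable conjuncts requires a careful prefix induction to rule out any spurious dependency introduced by alternations between the different components.
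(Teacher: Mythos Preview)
Your proposal has a genuine gap in the central equivalence claim between $\qbfformula$ and the reduced formula $\qbfformula^\star$. The step ``fix an arbitrary leaf assignment $\sigma_D$ of the $D$-prefix game'' is not well-defined, because the $D$-variables are interleaved with component variables in the original prefix: an existential $d\in D$ that sits \emph{after} a universal component variable $y$ may legitimately depend on $y$ in a winning existential strategy, and your reduction to $\qbfformula^\star$ (whose prefix contains only $D$) erases exactly this dependency. Concretely, take $D=\{d\}$, a single component $C=\{y\}$ of form $\exists^{\leq 1}\forall$, and
\[
\qbfformula=\forall y\,\exists d.\ (y\vee d)\wedge(\neg y\vee\neg d).
\]
This QBF is true (play $d\eqdef\neg y$), yet for both fixed values of $d$ the component sub-QBF $\forall y.\,F_C[d]$ is false, so your $\qbfformula^\star=\exists d.\,v_C(d)$ evaluates to false. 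The quantifier identities you cite are correct, but they only let you factor the residual \emph{after} all of $D$ has been assigned; they do not justify treating the $D$-variables as an outer block, which is what your argument needs.

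The paper takes an entirely different route. Rather than factoring over assignments of $D$, it proves (Lemma~\ref{lem:delset-ef-red}) that for each component type $t$ of the form $\exists^{\leq 1}\forall$ one may discard all but the first $2^c$ components of type $t$ (ordered by their unique existential variable) without changing the truth value; the proof is a delicate strategy-surgery argument that explicitly handles the interleaving you identified as the obstacle. Combined with the bound on the number of component types (Proposition~\ref{pro:ds-ct-nr}), this yields an equivalent formula whose total number of variables is bounded in $|D|+c$, which is then brute-forced. The missing idea in your approach is precisely such a reduction that respects the prefix interleaving rather than trying to sidestep it.
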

This generalizes fixed-parameter tractability of
$\QBFSATCNF$ parameterized by vertex cover, since
every component can have arbitrary many variables as well as
one quantifier alternation, as opposed to containing only one variable. 


Checking whether every component is of the form
$\exists^{\leq 1}\forall$ can be achieved in polynomial time.
Since we can compute a smallest $c$-deletion set into components of
the form $\exists^{\leq 1}\forall$ in fpt-time parameterized by its
size plus $c$ due to Proposition~\ref{pro:computedelsetgeneral}, it suffices to show the
following. 
\vspace{-1.4em}
\begin{THM}[$\star$]\label{thm:delsetformfpt}
  Let $\qbfformula$ be a $\QBFCNF$ and $D \subseteq \var(\qbfformula)$ be a
  $c$-deletion set for $\qbfformula$ into components of the
  form $\exists^{\leq 1}\forall$. Then, deciding $\qbfformula$  is fpt 
parameterized by $|D|+c$.
\end{THM}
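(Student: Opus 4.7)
The plan is to first use Proposition~\ref{pro:computedelsetgeneral} to compute, in fpt-time parameterized by $|D|+c$, a smallest $c$-deletion set of $\qbfformula$ whose components (in $G_\qbfformula{-}D$) all have the form $\exists^{\leq 1}\forall$. The requisite property, namely that every component of $G_\qbfformula{-}D$ contains at most one existential variable appearing before all its universals in the prefix of $\qbfformula$, is efficiently computable: one inspects each bounded-size component and verifies the quantifier structure in polynomial time.

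I then reduce to the case $D \subseteq \vare(\qbfformula)$ by invoking Proposition~\ref{pro:delset-exists}, which yields an equivalent QBF~$\qbfformula'$ and a $2^u c$-deletion set $D' \subseteq \vare(\qbfformula')$ of size at most $2^u|D|$, where $u = |D \cap \varu(\qbfformula)|$. Crucially, the construction duplicates variables while preserving their quantifier type and their relative prefix order within each copy, so every component of $G_{\qbfformula'}{-}D'$ remains of the form $\exists^{\leq 1}\forall$. Hence, it suffices to prove the theorem under the additional assumption that all deletion-set variables are existential.

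The core argument is a Hintikka-style~\cite{Gradel05} evaluation that branches existentially on $D$-variables in prefix order and handles each component symbolically. Since no clause of $\qbfformula$ spans two distinct components of $G_\qbfformula{-}D$, once a $D$-assignment is fully specified the matrix decomposes into independent per-component sub-QBFs, each on at most $c$ variables and again of the form $\exists^{\leq 1}\forall$, hence decidable in $2^{O(c)}$ time. For each component $K$ I would maintain a compact ``residual Hintikka state'' of size $2^{O(c)}$ summarising which choices for the component's existential and which universal-responses remain viable under the partial $D$-assignment so far, updating it whenever a $D$-variable that shares a clause with a variable of $K$ is processed.

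The main obstacle will be handling those $D$-variables that occur in the prefix after some component-universals and whose Skolem strategy might therefore legitimately depend on the latter. I would address this by using the $\exists^{\leq 1}\forall$ structure to decompose E's global strategy into per-component strategies: since variables of distinct components are disconnected in the primal graph, E's choice of a component existential interacts only with same-component universals and with the $D$-variables that share a clause with the component, yielding only $2^{O(c)}$ meaningful behaviours per component that can be precomputed and combined with the branching on $D$ through the per-component summaries above. Combining the $2^{O(|D|)}$ branches over $D$ with the per-component symbolic evaluations gives a total running time of $2^{O(|D|)} \cdot f(c) \cdot \poly(\CCard{\qbfformula'})$, establishing fixed-parameter tractability in $|D|+c$.
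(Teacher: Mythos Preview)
Your proposal identifies the right obstacle but does not actually overcome it, and the core algorithm you sketch does not work as stated.

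The problem is your treatment of the $D$-variables. After Proposition~\ref{pro:delset-exists} they are all existential, but they may still sit \emph{after} universals coming from arbitrarily many components in the prefix. A legitimate existential strategy for a $D$-variable $d$ is therefore a function of \emph{all} those component universals, not just those of a single component. Your ``$2^{O(|D|)}$ branches over $D$'' suggests enumerating assignments to $D$, but a fixed assignment is only sound when the $D$-variables are outermost; here you would need to enumerate Skolem functions, whose domain has size exponential in the number of components. Your per-component summaries are indeed bounded, but you never explain how to combine unboundedly many such summaries into a decision about whether a global Skolem strategy for $D$ exists. The sentence ``E's choice of a component existential interacts only with same-component universals'' is true but beside the point: it is the $D$-variables, not the component existentials, whose dependencies span components.

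The paper's proof takes a completely different route and avoids this difficulty altogether. It first proves a reduction lemma (Lemma~\ref{lem:delset-ef-red}) showing that for every component type of the form $\exists^{\leq 1}\forall$, all but at most $2^c$ components of that type can be deleted without changing the truth value; the argument constructs an explicit existential winning strategy for the full formula from one for the reduced formula by a careful replay in the surviving components. Since the number of component types is bounded by a function of $|D|+c$ (Proposition~\ref{pro:ds-ct-nr}), exhaustively applying the lemma yields an equivalent formula with at most $(2|D|+1)^c 2^{3^{c+|D|}}\cdot 2^c\cdot c + |D|$ variables, which is then decided by brute force. In particular, the paper never needs Proposition~\ref{pro:delset-exists} here and never reasons directly about Skolem dependencies of $D$-variables across components; the kernelization step is the missing idea in your approach.
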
\vspace{-.35em}
\ifshort
The main ingredient for the proof of Theorem~\ref{thm:delsetformfpt}
is Lemma~\ref{lem:delset-ef-red} ($\star$). It allows us to remove all but at most~$2^c$ components of every component type. 
Together with bounding the number of component types,~we  
reduce $\qbfformula$ to a bounded-size formula that we~brute-force.
\fi


\vspace{-.35em}
\subsection{Single-Variable Deletion Sets}
Next, we consider deletion sets consisting of only a single variable~$e$, but where the quantifier prefix restricted to variables occurring in a component can have an arbitrary shape.
Without loss of generality, we may assume~$e$ to be existentially quantified and innermost.

The argument involves an evaluation game where a universal player and an existential player take turns assigning their respective variables, in the order of the quantifier prefix.
The universal player tries to assign so that no assignment of $e$ is left for the existential player to satisfy all clauses.
If universal has a strategy ensuring some assignment of $e$ cannot be played by existential, we say the strategy \emph{forbids} this assignment.
The QBF is false if and only if universal can forbid both assignments.

Since components do not share universal variables, the universal strategy can be decomposed into strategies played in the individual components.
If there are distinct components where universal can forbid assignments $e \mapsto 0$ and $e \mapsto 1$, respectively, corresponding strategies can be composed into a universal winning strategy.
An interesting case arises when there is a single component where universal must choose an assignment to forbid, and existential must similarly choose which assignment to play in the remaining components.
Universal wins if and only if the latest point (that is, the innermost variable) where they can choose which assignment to forbid
comes after the latest point where existential can choose which assignment to play.
A formal development of these intuitions leads to the following. 
\vspace{-.2em}
\begin{THM}[$\star$]\label{thm:1cdeletionset}
  Let $\qbfformula$ be a $\QBFCNF$ with a $c$-deletion set of size $1$. Deciding $\qbfformula$ is fpt parameterized by $c$.
\end{THM}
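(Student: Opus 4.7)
The plan is to reduce evaluating $\qbfformula$ to deciding a constant number of small sub-QBFs, one per component of $G_\qbfformula - \{e\}$. First, by Proposition~\ref{pro:delset-exists} I reduce to the case where $e$ is existentially quantified, at the cost of at most doubling the relevant parameters. A quantifier-shift argument using that $e$ separates the components of $G_\qbfformula - \{e\}$ further lets me assume $e$ is innermost in the prefix. Denote the components by $C_1,\ldots,C_m$, each of size at most $c$, let $\Pi_i$ be the sub-prefix induced by $\var(C_i)$, and partition the clauses of $\matr(\qbfformula)$ into $F_1,\ldots,F_m$ where $F_i$ collects the clauses whose non-$e$ variables lie in $C_i$; this partition is well defined because any clause containing variables from two distinct components would contradict the choice of $\{e\}$ as the deletion set. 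For each $i$ and each $b \in \{0,1\}$ I will compute the Boolean invariants $T_i^b := [\Pi_i.\,F_i|_{e=b}\text{ valid}]$ and $T_i^{0,1} := [\Pi_i.\,F_i|_{e=0} \vee F_i|_{e=1}\text{ valid}]$ by brute force, taking time $2^{\bigO(c)}$ per component.

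I then proceed by case analysis on these invariants, using Hintikka-style semantics and the fact that Skolem and Herbrand functions for different components can be chosen with disjoint dependencies. If some $T_i^{0,1} = 0$, universal falsifies the matrix in $C_i$ under both $e=0$ and $e=1$, so $\qbfformula$ is false. If there are distinct $i \neq j$ with $T_i^0 = T_j^1 = 0$, universal composes independent winning strategies to forbid both values of $e$, again giving $\qbfformula$ false. Dually, if $T_i^0 = 1$ for every $i$ (symmetrically, every $T_i^1 = 1$), existential combines the per-component winning strategies and commits $e$ to $0$ (respectively $1$), making $\qbfformula$ true.

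The only remaining case is the ``tricky'' one: a unique component $C_{i^*}$ has $T_{i^*}^0 = T_{i^*}^1 = 0$ with $T_{i^*}^{0,1} = 1$, while all other $C_j$ satisfy $T_j^0 = T_j^1 = 1$. Here I plan to reduce $\qbfformula$ to a single-bit meta-game: universal must commit in $C_{i^*}$ to forbidding some $b_U \in \{0,1\}$, existential must commit in the non-$i^*$ components to playing for some $b_E$, and universal wins iff $b_U = b_E$. Let $p_U^*$ and $p_E^*$ be the prefix positions of the innermost universal variable of $C_{i^*}$ and the innermost existential variable of $\bigcup_{j\neq i^*} C_j$, respectively (with natural conventions when either is undefined); the central claim is that $\qbfformula$ is true iff $p_E^* > p_U^*$, since whichever player commits later can observe the other's bit and react. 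The main obstacle will be making this rigorous, which I plan to do by constructing explicit ``delayed-commit'' Skolem and Herbrand strategies: a universal strategy in $C_{i^*}$ that plays identically under both choices of $b_U$ for all universal moves strictly before $p_U^*$ and only commits at $p_U^*$, and a symmetric existential construction in each $C_j$. The existence of such strategies will follow from the invariants---$T_{i^*}^{0,1} = 1$ rules out any universal deviation that simultaneously falsifies both disjuncts, and $T_j^0 = T_j^1 = 1$ yields two winning existential strategies that can be made to coincide before $p_E^*$. With the meta-game equivalence established, the total running time is $2^{\bigO(c)} \cdot \poly(\CCard{\qbfformula})$, proving the theorem.
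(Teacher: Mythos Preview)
Your early case analysis is correct and matches the paper: the reduction to $e$ existential and innermost, the invariants $T_i^b$ and $T_i^{0,1}$, and the handling of the non-tricky cases via strategy composition all work. The gap is in the tricky case, where your criterion $p_E^* > p_U^*$ is wrong because the commit points are semantic, not syntactic.

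Concretely, take $C_{i^*}$ with universal variables $y_1,y_2$ and clauses $(y_1\vee y_2\vee e),\ (y_1\vee\neg y_2\vee e),\ (\neg y_1\vee y_2\vee\neg e),\ (\neg y_1\vee\neg y_2\vee\neg e)$; and $C_j$ with existential $x$ and clauses $(\neg x\vee e),\ (x\vee\neg e)$. With prefix $\forall y_1\,\exists x\,\forall y_2\,\exists e$ this is the tricky case: $T_{i^*}^0=T_{i^*}^1=0$, $T_{i^*}^{0,1}=1$, $T_j^0=T_j^1=1$. Your $p_U^*=3$ (position of $y_2$) and $p_E^*=2$ (position of $x$), so you declare the formula false. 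But it is true: after $y_1$ is played the value of $e$ is already forced (independently of $y_2$), so existential can read $y_1$ and set $x$ accordingly. A dual example with two existentials in $C_j$ breaks the other inequality. The flaw is your claim that the two winning strategies ``can be made to coincide before $p_E^*$'' (and dually for universal): having an irrelevant innermost variable does not let a player postpone a commitment that was already forced by an earlier move.

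What the paper does instead is compute, for each component, the genuine latest index at which universal still leaves both options open (denoted $LC$) and the earliest index at which universal is guaranteed to forbid \emph{some} value (denoted $EE$); these are determined by brute force over the at most $c$ variables of the component. The correct criterion is $EE(\bigwedge_{j\neq i^*}F_j)\le LC(F_{i^*})$, and a key additional step (Lemma~\ref{lem:indexminimum}) shows that $EE$ of a conjunction of disjoint components is the minimum of the individual $EE$'s, so everything is computable per component. Your meta-game intuition is exactly right, but the thresholds must be these semantic indices, not the innermost-variable positions.
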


\vspace{-.65em}
\subsection{Formulas with Many Components of Each Type}

Let $\qbfformula$ be a $\QBFCNF$
and let $D\subseteq \var(\qbfformula)$ be a $c$-deletion set for
$\qbfformula$. Moreover, let $\qbfformula'$ and $D'$ be the $\QBFCNF$ and
$2^cc$-deletion set of $\qbfformula'$ of size at most $2^c|D|$ obtained after
eliminating all universal variables in $D$ using
Proposition~\ref{pro:delset-exists}. We say $D'$ is
\emph{universally complete} if every component type of $\qbfformula'$
consists of at least $2^{|D'|}$ components. We show 
\QBFSATCNF{} to be fpt parameterized by the size of
a smallest universally complete $c$-deletion~set.
\begin{THM}\label{thm:qbfcnfunifpt}
  \QBFSATCNF{} is fpt by $k+c$,
  where $k$ is the size of a smallest universally complete $c$-deletion set.
\end{THM}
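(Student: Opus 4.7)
The plan is to reduce the problem to an ``oblivious'' form via Propositions~\ref{pro:delset-exists} and~\ref{pro:computedelsetgeneral}, and then use universal completeness to show that obliviousness is complete. First I would compute a smallest universally complete $c$-deletion set $D$ by invoking Proposition~\ref{pro:computedelsetgeneral} with the predicate ``$D$ is universally complete'', which is efficiently computable because verifying it reduces to applying Proposition~\ref{pro:delset-exists} and then tallying components by type. Applying Proposition~\ref{pro:delset-exists} then yields an equivalent $\qbfformula'$ together with an all-existential $2^{|D|}c$-deletion set $D'\subseteq\vare(\qbfformula')$ of size at most $2^{|D|}|D|$, for which by assumption every component type of $\qbfformula'-D'$ has at least $2^{|D'|}$ representatives.

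The central structural claim driving the algorithm is that $\qbfformula'$ is valid if and only if there exists an oblivious assignment $\alpha^{*}:D'\to\{0,1\}$ that (i) satisfies every clause of $\qbfformula'$ whose variables lie entirely in $D'$, and (ii) makes the standalone subgame $K[\alpha^{*}]$ valid for every component $K$ of $\qbfformula'-D'$. The ``if'' direction is immediate: the existential player commits to $\alpha^{*}$ on $D'$, and since universal variables do not occur in two distinct components, can independently play a winning strategy of each $K[\alpha^{*}]$ in the global game; this suffices to satisfy every clause because clauses partition into $D'$-only, $K$-internal, and $K$-attached ones.

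The ``only if'' direction is the crux and is where universal completeness is exploited. Suppose no such $\alpha^{*}$ exists. For each $\alpha\in\{0,1\}^{D'}$, either a $D'$-only clause already fails under $\alpha$, or some component type $t(\alpha)$ carries a representative on which the universal has a standalone winning strategy $\tau_{\alpha}$ in $K[\alpha]$. Since there are at most $2^{|D'|}$ such $\alpha$'s and each type contains at least $2^{|D'|}$ components, a greedy matching yields pairwise distinct representatives $K_{\alpha}$ of type $t(\alpha)$; the universal then plays $\tau_{\alpha}$ inside each such $K_{\alpha}$ and arbitrarily elsewhere. These local strategies compose into a valid global strategy because the $K_{\alpha}$'s have pairwise disjoint variable sets, and whatever $\alpha^{*}$ the existential eventually commits to, $\tau_{\alpha^{*}}$ falsifies some clause of $K_{\alpha^{*}}[\alpha^{*}]$: a standalone winning universal strategy beats \emph{every} mapping from the component's own universals to its existentials, and restricting this mapping via additional dependence on universals outside $K_{\alpha^{*}}$ cannot help the existential.

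Given the claim, the algorithm enumerates all $2^{|D'|}$ candidates $\alpha^{*}$; for each, it verifies the $D'$-only clauses and decides validity of one $K[\alpha^{*}]$ per component type by brute-force, each check running in time $2^{\mathcal{O}(2^{|D|}c)}$. Since the number of types is also bounded by a computable function of $|D|+c$, the overall running time is fpt in $k+c$. The hardest step I anticipate is the ``only if'' composition: one must define ``component type'' finely enough that same-type components share validity on each $\alpha$ (so that one representative per type suffices) while keeping the type count fpt-bounded, and then justify the strategy composition in the globally interleaved prefix, which is the subtlety that the rest of the proof must navigate carefully.
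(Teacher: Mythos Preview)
Your proposal is correct and follows essentially the same approach as the paper. The paper phrases the central claim as $\bigcap_{t}A(t)\neq\emptyset$, where $A(t)$ is the set of assignments $\beta:D'\to\{0,1\}$ for which Eloise has a winning strategy on $\qbfind{\qbfformula'}{D'\cup C_t}$ that \emph{forces} $\beta$; your equivalent formulation via validity of each $K[\alpha^*]$ is justified by the paper's Lemma~\ref{lem:ex-uni-str} (which you effectively rederive inline), and your ``only if'' composition---allocating one private component per bad assignment and playing the local universal counter-strategy there---is exactly the paper's argument.
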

This result is surprising and counter-intuitive at first, as it seems to
indicate that deciding a large \QBFCNF{} $\qbfformula$ (with many
components of each type) is simple, while we do not know whether this holds for sub-formulas of $\qbfformula$. However, we 
show that many components of the same type
allows the universal player to play all possible local
counter-strategies for each type. This makes the
relative ordering of various components in the prefix irrelevant.

Note that deciding whether a given $c$-deletion set is universally
complete is an efficiently computable property due to
Proposition~\ref{pro:delset-exists}. This implies that we can compute
a smallest universally complete $c$-deletion set in fpt-time parameterized by its
size plus $c$ by 
Proposition~\ref{pro:computedelsetgeneral}. So, to show
Theorem~\ref{thm:qbfcnfunifpt}, we assume we are given a
smallest universally complete $c$-deletion set $D$ for $\qbfformula$. Moreover,
by Proposition~\ref{pro:delset-exists}, we can assume
we are given the corresponding $\qbfformula'$ and $D'$. 
It suffices to show:
\vspace{-.15em}
\begin{THM}[$\star$]\label{thm:qbfcnfunifptex}
  Let $\qbfformula$ be a $\QBFCNF$ and let $D \subseteq \vare(\qbfformula)$ be a
  universally complete $c$-deletion set for $\qbfformula$. Then deciding  $\qbfformula$ is fpt 
parameterized by $|D|+c$.
\end{THM}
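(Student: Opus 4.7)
The plan hinges on reducing the global QBF game to independent local games, one per component of $G_\qbfformula - D$, parametrised by an assignment $\beta \colon D \to \{0,1\}$. Since $D$ is a $c$-deletion set and $D \subseteq \vare(\qbfformula)$, every clause of $\matr(\qbfformula)$ lies either entirely in $D$ or in $V_i \cup D$ for the variable set $V_i$ of some component $C_i$ with $|V_i| \le c$; hence $\matr(\qbfformula)$ splits as $F_D \wedge \bigwedge_i F_i$ in the obvious way. I attach to each component a \emph{type} $T_i$ recording the quantifiers on $V_i$, their interleaving with the $D$-variables in the global prefix, and the clause set $F_i$ up to renaming within $V_i$. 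A direct counting argument bounds the number of possible types by some computable function $f(|D|,c)$. For a type $T$ and a $\beta$, let $L_T(\beta)$ denote the local QBF obtained from a single $T$-representative by substituting $\beta$ and deleting $D$ from the prefix; since $L_T(\beta)$ has at most $c$ variables, its validity can be checked in $2^{\mathcal{O}(c)} \cdot \poly$ time.

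The central claim I will prove is that $\qbfformula$ is valid iff there exists a single $\beta \colon D \to \{0,1\}$ such that $F_D[\beta]$ is satisfied and $L_T(\beta)$ is valid for every type $T$ realised in $\qbfformula$. The backward direction is routine: have the existential player commit the constant $\beta$ for the $D$-variables (ignoring universal moves), which decouples the components under the substitution, and play the local winning strategy in each. Given the equivalence, the decision algorithm enumerates all $2^{|D|}$ assignments $\beta$, checks $F_D[\beta]$ together with the at most $f(|D|,c)$ local validity questions $L_T(\beta)$, and accepts iff some $\beta$ passes; total runtime $2^{|D|} \cdot f(|D|,c) \cdot 2^{\mathcal{O}(c)} \cdot \poly(\Card{\matr(\qbfformula)})$, which is fpt in $|D|+c$.

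The main obstacle, and precisely where universal completeness is indispensable, is the forward direction: I must rule out that existential wins only by adaptively choosing $D$-values as a function of the universal moves that precede each $d \in D$. I argue by contrapositive. Assume that every $\beta$ with $F_D[\beta]$ satisfied admits some realised ``bad'' type $T_\beta$ carrying a universal winning strategy $\mu_\beta$ for $L_{T_\beta}(\beta)$. Exploiting that each realised type has at least $2^{|D|}$ copies, I injectively dedicate to every such $\beta$ a private copy $C_{i(\beta)}$ of type $T_\beta$ and command the universal player to play $\mu_\beta$ within $C_{i(\beta)}$, obliviously of everything outside that component (and arbitrarily in the remaining positions). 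Whichever $\beta^*$ existential ultimately plays, either $F_D[\beta^*]$ is falsified or the copy dedicated to $\beta^*$ falsifies its local formula $F_{i(\beta^*)}[\beta^*]$. The delicate step is to verify that $\mu_{\beta^*}$, originally designed against arbitrary \emph{local} existential strategies, still defeats the globally adaptive existential; this follows because the induced existential moves in $C_{i(\beta^*)}$ arising from the completed global play coincide with the responses of some local existential strategy against $\mu_{\beta^*}$, and $\mu_{\beta^*}$ by hypothesis beats every such strategy. Stitching these per-component strategies together yields the desired global universal winning~strategy and the contradiction.
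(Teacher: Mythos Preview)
Your proof is correct and follows essentially the same approach as the paper. Both arguments establish that $\qbfformula$ is valid iff some single $\beta$ works uniformly across all component types, with universal completeness supplying enough private copies to let the universal player simultaneously forbid every $\beta$ in the contrapositive direction. The one notable difference is packaging: the paper keeps $D$ in the local formulas and introduces the dual notions of an existential strategy \emph{forcing} $\beta$ versus a universal strategy \emph{forbidding} $\beta$, then proves a dedicated duality lemma (Lemma~\ref{lem:ex-uni-str}) showing exactly one of the two exists. By substituting $\beta$ up front to obtain $L_T(\beta)$, you sidestep this lemma entirely and appeal directly to ordinary determinacy of the $\leq c$-variable local QBF, which is arguably cleaner. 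Your ``delicate step''---that the globally adaptive existential's moves inside $C_{i(\beta^*)}$ constitute \emph{some} existential play against the oblivious local strategy $\mu_{\beta^*}$ and are therefore defeated---is exactly the point the paper leaves implicit when it says the composed universal strategy ``forbids every possible $\beta$''.
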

\ifshort

\fi

\section{Conclusion}

We consider 
evaluating quantified Boolean
formulas (\QBFSAT)
%
under structural restrictions. 
%
While the classical complexity and 
the 
parameters 
treewidth and vertex cover number are well understood on primal graphs, 
we address 
the incidence graph and the gap between both parameters. 
%
We provide new upper and lower bounds and establish a comprehensive complexity-theoretic picture for \QBFSAT\ concerning the most fundamental graph-structural parameters of this graph. 
We thereby sharpen the boundaries between
parameters where one can drop the quantifier depth in the
parameterization and those where one cannot, providing a nearly-complete picture of
parameters of the incidence graph, cf., Figure~\ref{fig:params-in}.
We show lower bounds for feedback vertex number and treedepth by 
designing 
structure-aware (SAW) reductions. 
We then complement known upper bounds for vertex cover number
by tractability (fpt) results for
feedback edge~number.

Despite this paper closing many gaps and providing deeper insights into the hardness of \QBFSAT
for structural parameters, it does not fully settle 
\QBFSAT for 
parameters 
of 
the primal graph.
A single clause makes the difference: if we omit edges induced by one clause, our lower bounds for feedback vertex number and treedepth carry over, as indicated in Figure~\ref{fig:saw} (left).
As a first step towards~filling this gap and analyzing treedepth of the primal graph, 
we establish fpt for variants of the deletion set parameter. 
%

\textit{Techniques:}
%
While the ideas behind structure-aware (SAW) reductions have been implicitly used in limited  contexts,~e.g.,~\cite{FichteHecherPfandler20}, we formalize and fully develop the technique. 
%
We establish a template to design specific self-reductions from \QBFSAT to \QBFSAT when we are
interested in precise lower bounds under the exponential time hypothesis (ETH) for various
parameters. 
%
As illustrated in Figure~\ref{fig:saw} (right), SAW reductions allow us to trade an exponential
decrease 
of one parameter (structure of the matrix) 
for an
exponential increase (increasing the tower height) of runtime
dependency on a second parameter. 
%
%
%
%
%


%

\textit{Future Work:}
Our analysis opens up several interesting questions: 
Is \QBFSAT on CNFs fpt parameterized by either
the feedback vertex number or the treedepth of the primal graph?
We have indications for both possible outcomes.
(N) 
Our lower bounds are close since allowing merely one additional clause yields intractability for treedepth and feedback vertex number. 
%
%
This points toward hardness,
and our SAW reductions might provide a good starting point to understand and obtain intractability for
one or both parameters on the primal graph.
(Y) 
The algorithmic techniques we developed for variants of
the $c$-deletion set, point in the other direction and may serve as the underpinning for an fpt result.
%
%
%
%
While our lower bounds for feedback vertex number of the incidence
graph are tight under ETH, this is open 
for 
treedepth. 
Besides, lower bounds under SETH might be interesting. 
A further question is whether our techniques carry
over to 
DQBF~\shortversion{\cite{AzharPR01}}\longversion{\cite{AzharPR01,GanianEtAl20}}
and QCSP~\cite{FichteHecherKieler20}.

Finally, we expect 
SAW~reductions to be a useful tool
for problems 
within the polynomial hierarchy.
Indeed, many problems of practical interest would benefit
from precise bounds
; see, e.g.,~\cite{PanVardi06,FichteHecherPfandler20}. 
Some 
 are highly relevant 
in
other
communities, e.g., explainability~\cite{Darwiche20}.

\bibliography{qbf_param23_short}

\clearpage

\shortversion{\clearpage
\section*{Appendix}

\section{Additional Preliminaries}

For the ease of algorithm presentation, we 
equivalently view QBFs in a different (atomic) form
$\qbfformula=Q_{1} v_1 Q_2 v_2\cdots Q_{n} v_{n}.F$, where~$v_j$
for~$1\leq j \leq n$ are the variables of~$F$ and we
\emph{do not} have~$Q_i \neq Q_{i-1}$ for~$2\leq i\leq n$.

\paragraph{Hintikka Games}
Let $\qbfformula=Q_{1} v_1 Q_2v_2 \cdots Q_n v_n.F$ be a
QBF. Let $V$ be a set of variables.
We denote by $V^\exists$ and $V^\forall$ the set of all existential
respectively universal variables in $V$. For a variable $v \in
\var(\qbfformula)$, we denote by $V_{<v}$ ($V_{>v}$) the set of all
variables in $V$ that appear before (after) $v$ in the prefix of
$\qbfformula$. Similarly, for an assignment $\delta : V' \rightarrow
\{0,1\}$, where $V' \subseteq \var(\qbfformula)$, we denote by $\delta_{<v}$ ($\delta_{>v}$)
the assignment $\delta$ restricted to all variables in $V_{<v}$
($V_{>v}$) and by
$\delta^{Q}$ for $Q \in \{\exists,\forall\}$ the assignment $\delta$
restricted to all variables in $V^Q$. Finally, for two assignments
$\delta$ and $\delta'$ over disjoint sets of variables of
$\var(\qbfformula)$, we denote by $\delta \cup \delta'$ the disjoint
combination of the two assignments, i.e., $\delta \cup \delta'$ is
equal to the assignment $\delta'' :
\var(\delta)\cup\var(\delta')\rightarrow \{0,1\}$ defined by setting
$\delta''(v)=\delta(v)$ if $v \in \var(\delta)$ and
$\delta''(v)=\delta'(v)$ if $v \in \var(\delta')$.

For ease of notation, let $V=\var(\qbfformula)$ in the following.
A \emph{strategy for Eloise} (an \emph{existential strategy}) is a sequence of mappings
$\Tau=(\tau_v: \{0,1\}^{V_{<v}^{\forall}}\rightarrow \{0,1\})_{v \in
  V^\exists}$. For $\delta: V^\forall\rightarrow \{0,1\}$, we let $\alpha(\Tau, \delta) :
\var(\qbfformula) \rightarrow \{0,1\}$ be the assignment defined by setting:
\begin{itemize}
\item $\alpha(\Tau, \delta)(y)=\delta(y)$ if $y \in V^\forall$ and
\item $\alpha(\Tau, \delta)(x)=\tau_x(\delta_{<x})$ if $x \in V^\exists$.
\end{itemize}
An
existential strategy $\Tau$ is \emph{winning} if, for any assignment
$\delta: V^\forall\rightarrow \{0,1\}$, in the following also called
\emph{universal play}, the formula $F$ is satisfied
by $\alpha(\Tau,\delta)$.

A \emph{strategy for Abelard} (a \emph{universal strategy}) is defined
analogously, whereas the mappings $\delta$ and $\tau$ are swapped, and
we call a universal strategy winning if $F$ is not true. Formally,
it is a sequence of mappings
$\Lambda=(\lambda_v: \{0,1\}^{V_{<v}^{\exists}}\rightarrow
\{0,1\})_{v \in V^\forall}$. A universal strategy $\Lambda$ is
\emph{winning} if, for any assignment $\delta:V^\exists\rightarrow
\{0,1\}$, in the following also called \emph{existential play}, the formula $F$ is
false under the assignment $\alpha(\Lambda,\delta)$ defined
analogously to the existential case. That is, $\alpha(\Lambda, \delta) :
V \rightarrow \{0,1\}$ is the assignment defined by setting:
\begin{itemize}
\item $\alpha(\Lambda, \delta)(x)=\delta(x)$ if $x \in V^\exists$ and
\item $\alpha(\Lambda, \delta)(y)=\lambda_y(\delta_{<y})$ if $y \in V^\forall$.
\end{itemize}

Let $\Tau=(\tau_v: \{0,1\}^{V_{<v}^{\forall}}\rightarrow
\{0,1\})_{v \in V^\exists}$ be an existential
strategy and let $\Lambda=(\lambda_v: \{0,1\}^{V_{<v}^{\exists}}\rightarrow
\{0,1\}_{v \in V^\forall}$ be a universal strategy. Then,
playing $\Tau$ against $\Lambda$ gives rise to an assignment
$\alpha(\Tau,\Lambda) : V \rightarrow \{0,1\}$ of $\qbfformula$ that
can be defined recursively as follows. Let $\alpha_0$ be the empty
assignment. Moreover, for every $i$ with $1 \leq i \leq n$, we
distinguish two cases. If $Q_i=\exists$, then
$\alpha_i$ is the extension of $\alpha_{i-1}$ by the assignment
$\tau_{v_i}(\alpha_{i-1}^{\forall})$ for $v_i$. Similarly, if $Q_i=\forall$, then
$\alpha_i$ is the extension of $\alpha_{i-1}$ by the assignment
$\lambda_{v_i}(\alpha_{i-1}^{\exists})$ for $v_i$. Finally, we set $\alpha(\Tau,\Lambda)=\alpha_n$.
Note that $\alpha(\Tau,\Lambda)$ gives rise to the existential play
$\alpha(\Tau,\Lambda)^{\exists}$ and the universal play
$\alpha(\Tau,\Lambda)^{\forall}$ and it holds that
$\alpha(\Tau,\Lambda)=\alpha(\Tau,\alpha(\Tau,\Lambda)^{\forall})$ and 
$\alpha(\Tau,\Lambda)=\alpha(\Lambda,\alpha(\Tau,\Lambda)^{\exists})$.

A mapping $\delta$ from a subset of $V^\forall$ to $\{0,1\}$ is called a
\emph{universal play}, and similarly a mapping $\delta$ from a subset
of $V^\exists$ to $\{0,1\}$ is called an \emph{existential play}. 

\begin{PROP}[Folklore]
  A \textsc{QBF} $\qbfformula$ is true iff there exists a winning existential
  strategy on $\qbfformula$ iff there exists no winning universal strategy on
  $\qbfformula$.
\end{PROP}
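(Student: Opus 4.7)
The plan is to establish the two equivalences simultaneously by induction on the number $n$ of variables in the prefix of $\qbfformula$. Concretely, I will prove the stronger claim that exactly one of the following holds: (i) $\qbfformula$ is true and there exists a winning existential strategy (and no winning universal strategy); or (ii) $\qbfformula$ is false and there exists a winning universal strategy (and no winning existential strategy). This equivalence immediately yields the proposition.

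For the base case $n=0$, the matrix $F$ contains no variables and is either true or false under the empty assignment. The empty sequence of mappings is vacuously both the unique existential and the unique universal strategy, and in either case precisely one of the two players wins. For the inductive step, I would do a case split on $Q_1$. If $Q_1=\exists$, then for each $b\in\{0,1\}$ consider the QBF $\qbfformula_b$ obtained by removing $v_1$ from the prefix and replacing $v_1$ with $b$ in $F$; by the induction hypothesis exactly one of $\qbfformula_0$, $\qbfformula_1$ is true iff there is some $b$ for which $\qbfformula_b$ admits a winning existential strategy $\Tau_b$. From $\Tau_b$ and the choice $\tau_{v_1}(\emptyset)\eqdef b$ one assembles a winning existential strategy for $\qbfformula$, and conversely any winning existential strategy for $\qbfformula$ projects (by fixing the value $\tau_{v_1}(\emptyset)$) to a winning existential strategy on some $\qbfformula_b$. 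The conversion of universal strategies works by taking the \emph{meet} of two strategies: given winning universal strategies $\Lambda_0,\Lambda_1$ for $\qbfformula_0,\qbfformula_1$, define a universal strategy for $\qbfformula$ that at each $y\in V^\forall$ consults $\Lambda_b$ depending on the value existential has just played for $v_1$. The case $Q_1=\forall$ is perfectly dual.

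The crux of the argument is then the exclusivity statement, namely that a winning existential and a winning universal strategy cannot both exist. This follows directly from the definition of $\alpha(\Tau,\Lambda)$: playing any existential strategy $\Tau$ against any universal strategy $\Lambda$ yields a single total assignment under which $F$ either evaluates to true (contradicting that $\Lambda$ is winning) or to false (contradicting that $\Tau$ is winning). Combined with the induction step, which shows that \emph{at least} one of the two players has a winning strategy, this gives the desired determinacy.

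The main obstacle I anticipate is purely notational rather than conceptual: one must carefully write down how an existential strategy $\Tau$ on $\qbfformula$ restricts to an existential strategy on $\qbfformula_b$ (by fixing the first coordinate $\tau_{v_1}(\emptyset)=b$ and, for every later $v\in V^\exists$, precomposing $\tau_v$ with the appropriate embedding $\{0,1\}^{V_{<v}^\forall\setminus\{v_1\}}\hookrightarrow\{0,1\}^{V_{<v}^\forall}$), and symmetrically for universal strategies. Once this bookkeeping is set up, each direction of the equivalence reduces to a one-line invocation of the induction hypothesis, and the three-way equivalence in the proposition follows by combining the existential-side equivalence with the exclusivity observation above.
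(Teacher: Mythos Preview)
Your proof plan is correct and follows the standard argument for determinacy of Hintikka games: induction on the prefix length, with a case split on the outermost quantifier, combined with the observation that playing $\Tau$ against $\Lambda$ yields a single assignment $\alpha(\Tau,\Lambda)$ that cannot simultaneously satisfy and falsify $F$. The bookkeeping you flag (restricting/extending strategies when peeling off the first variable) is indeed the only place where care is needed, and your description of it is accurate.

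There is nothing to compare against in the paper: the proposition is labelled ``Folklore'' and is stated without proof. Your argument is precisely the textbook one the authors are implicitly invoking.
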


\paragraph{Kernelization}

%
%
  A \emph{kernelization} is an algorithm that, given an
  instance~$(\mathcal{I},k) \in \Sigma^* \times \Nat$ outputs 
in time $\bigO{(\poly(\CCard{\mathcal{I}'} + k))}$ a
  pair~$(\mathcal{I}',k') \in \Sigma^* \times \Nat$, such that
  (i)~$(\mathcal{I},k) \in L$ if and only if $(\mathcal{I'},k') \in L$
  and (ii)~$\CCard{\mathcal{I}} + k' \leq g(k)$ where $g$ is an
  arbitrary computable function, called the size of the kernel.
  If $g$ is a polynomial then we say that $L$ admits a
  \emph{polynomial kernel}.
  It is well-known that a parameterized problem is fixed-parameter
  tractable if and only if it is decidable and has a
  kernelization~\cite{DowneyFellows13}. 

\paragraph{Feedback Edge Set}
For this parameter, we use the following result.
\begin{PROP}[Folklore]\label{pro:comp-fes}
  Let $G$ be an undirected graph. Then, a smallest feedback edge set
  for $G$ can be computed in polynomial time.
\end{PROP}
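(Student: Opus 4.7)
The plan is to establish this folklore result by reducing the computation of a smallest feedback edge set to the computation of a spanning forest. The key observation is that for any graph $G=(V,E)$, a set $D \subseteq E$ is a feedback edge set if and only if $E \setminus D$ is the edge set of a spanning forest of $G$ (possibly not a maximum one, but an acyclic spanning subgraph). Therefore, minimizing $|D|$ is equivalent to maximizing $|E \setminus D|$ subject to $(V, E\setminus D)$ being acyclic, i.e., to computing a spanning forest with the maximum number of edges.

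First, I would recall that every spanning forest of $G$ has exactly $|V| - c$ edges, where $c$ is the number of connected components of $G$, since the restriction of the spanning forest to each connected component is a spanning tree. Hence any maximal acyclic subgraph of $G$ realizes this maximum, and so a smallest feedback edge set has size exactly $|E| - |V| + c$.

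Next, I would describe the algorithm: run a standard graph traversal (DFS or BFS) on each connected component of $G$ to construct a spanning forest $F \subseteq E$, then output $D \eqdef E \setminus F$. Computing connected components and performing the traversal takes time $\mathcal{O}(|V| + |E|)$, which is polynomial. Correctness follows because $G - D = (V, F)$ is acyclic by construction (it is a forest), so $D$ is a feedback edge set, and $|D| = |E| - (|V| - c)$ matches the lower bound derived above, so $D$ is smallest possible.

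There is no real obstacle here; the proof is essentially a direct translation of the spanning forest characterization. The only subtlety worth mentioning is to note explicitly that the result is purely combinatorial (unweighted), so no sophisticated matroid or minimum spanning tree machinery is required, a linear-time traversal suffices.
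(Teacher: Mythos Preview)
Your proposal is correct and takes essentially the same approach as the paper: compute a spanning forest and return the complement edge set. The paper's proof is a one-line version of exactly this, without spelling out the optimality argument via $|E|-|V|+c$ that you include.
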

\begin{proof}
  A smallest FES $D$ can be
  computed in polynomial time by computing a spanning forest for
  $G$ and taking all edges outside the spanning forest into $D$.
\end{proof}



\section{Omitted Proof for Section~\ref{sec:saw}}

\begin{restatetheorem}[thm:cnfvcn]
\begin{THM}
There is an algorithm that, given a QBF $\qbfformula$ in CDNF with vertex cover
number $k$ of~$G_\qbfformula$, decides whether $\qbfformula$ is true in time $2^{2^{\bigO(k)}}\cdot
\poly(\CCard{\qbfformula})$.
If $\qbfformula$ is in $d$-CDNF, the algorithm runs in time $2^{k^{\bigO(d)}} \cdot \poly(\CCard{\qbfformula})$.
\end{THM}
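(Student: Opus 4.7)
The plan is to employ a DPLL-style backtracking procedure enhanced with formula caching, in the spirit of caching algorithms developed for propositional model counting. The algorithm branches on variables in the order in which they appear in the quantifier prefix, returning the disjunction (resp.\ conjunction) of the two branch values at an existentially (resp.\ universally) quantified variable. At each recursive call, the residual matrix obtained by simplifying $F$ under the current partial assignment is used as a cache key so that each distinct residual is evaluated only once.

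To bound the total work, I would first compute a minimum vertex cover $C$ of $G_\qbfformula$ of size $k$ in $O(1.2^k + k \cdot \Card{\var(\qbfformula)})$ time, and set $I = \var(\qbfformula) \setminus C$. The key structural property is that $I$ is independent in the primal graph, so every clause of the CNF part and every term of the DNF part of $F$ contains at most one $I$-variable; otherwise two $I$-variables would form an edge in $G_\qbfformula$. Consequently each clause or term of $F$ is completely described by (i) a \emph{$C$-pattern}, which lists for each variable in $C$ whether it occurs positively, negatively, or not at all, of which there are at most $3^k = 2^{O(k)}$; together with (ii) at most one $I$-literal.

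The central combinatorial step is to argue that at most $2^{2^{O(k)}}$ distinct residual matrices can arise during execution. A residual clause obtained by simplifying an original clause under a partial assignment is characterised by the restriction of its $C$-pattern to the variables still unassigned, together with its surviving $I$-literal (if any). Since there are at most $3^k$ such restricted patterns and the residual matrix is then essentially determined by which of the $2^{O(k)}$ pattern classes remain active, the number of syntactically distinct residual matrices is bounded by $2^{2^{O(k)}}$; the DNF part is handled symmetrically. Combined with polynomial-time per-call simplification and cache operations, this yields the total runtime $2^{2^{O(k)}} \cdot \poly(\CCard{\qbfformula})$.

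For the refined $d$-CDNF bound, each clause and term of $F$ contains at most $d-1$ variables of $C$, so the number of $C$-patterns collapses from $3^k$ to $O(k^{d-1}) \cdot 2^{d-1} = k^{O(d)}$; repeating the counting argument then gives at most $2^{k^{O(d)}}$ residual matrices and the improved runtime. The main technical obstacle I foresee is making the bound on the number of residuals fully precise: some care is needed to handle the $I$-literal component of each residual clause, since naively tracking individual $I$-variables could blow up the count, and to ensure that the cache keys are canonicalised so that partial assignments inducing equivalent residual matrices at the same prefix depth are identified.
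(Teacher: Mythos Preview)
Your approach is essentially the same as the paper's: DPLL-style backtracking in prefix order with formula caching, with the runtime governed by the number of distinct residual matrices $F[\alpha]$.

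The one step you leave unresolved---the $I$-literal issue you yourself flag---is exactly where your counting (``which of the $2^{O(k)}$ pattern classes remain active'') is too coarse, and the paper handles it with a clean partition you may want to adopt. Fix a prefix depth $i$ and split $F = F_{>i} \cup F_X \cup F''$, where $F_{>i}$ contains clauses/terms with no variable among $\{v_1,\dots,v_i\}$ (this part depends only on $i$, not on $\alpha$), $F_X$ contains clauses whose assigned variables all lie in the cover $X$, and $F''$ contains clauses with at least one assigned $I$-variable. Writing $\alpha=\alpha_X\cup\beta$, one has $F_X[\alpha]=F_X[\alpha_X]$, which ranges over at most $2^k$ values. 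The crucial point is that each clause in $F''$ contains \emph{exactly one} $I$-variable (by independence of $I$) and that variable has been assigned by $\beta$; hence every surviving clause of $F''[\alpha_X][\beta]$ lives entirely over $X$, so this piece is a CNF (resp.\ DNF) over $k$ variables, of which there are at most $2^{3^k}$. This sidesteps tracking individual $I$-literals altogether: the clauses that still carry an $I$-literal sit in $F_X\cup F_{>i}$ and are determined by $(\alpha_X,i)$ alone. Multiplying out gives $n\cdot 2^k\cdot 2^{O(3^k)}$ residuals, and the $d$-CDNF refinement goes through exactly as you describe with $(2k+1)^d$ replacing $3^k$.
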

\end{restatetheorem}
\begin{proof}
 Let $\qbfformula$ be a QBF with matrix $F = C \land D$, where $C$ is in CNF and $D$ in DNF. Further, let $X$ be a vertex cover of the primal graph such that $\Card{X} = k$, which can be computed in time $2^k \cdot poly(\CCard{\qbfformula})$.
 For simplicity, we assume that each quantifier block contains a single variable, so that the prefix of $\qbfformula$ can be written as $Q_1v_1 \dots Q_nv_n$.
Consider a backtracking algorithm that maintains a variable assignment $\alpha: \{v_1, \dots, v_i\} \rightarrow \{0, 1\}$ for some $1 \leq i \leq n$ and a cache containing pairs $(F', \mathit{val})$ consisting of previously encountered subformulas $F'$ of the matrix and their truth values $\mathit{val} \in \{0, 1\}$.
The algorithm first checks whether there is an entry $(F[\alpha], \mathit{val})$ in the cache and if so, returns the corresponding value.
If there is no entry in the cache, it recursively determines the truth value $\mathit{val}$ of $Q_{i+1}v_{i+1}\dots Q_nv_n.F[\alpha]$ as follows.
If $i = n$, no unassigned variables remain and $\mathit{val} = 1$ if $F$ is satisfied by $\alpha$ and $\mathit{val} = 0$ if $F$ is falsified.
Otherwise, the algorithm branches on the next unassigned variable $v_{i+1}$ and recursively calls itself with the assignments $\alpha \cup \{v_{i+1} \mapsto 0\}$ and $\alpha \cup \{v_{i+1} \mapsto 1\}$.
The algorithm sets $\mathit{val}$ to the maximum of the return values if $v_{i+1}$ is existentially quantified, and to the minimum if it is universally quantified.
It then adds the pair $(F[\alpha], \mathit{val})$ to the cache and returns $\mathit{val}$.

Clearly, the algorithm correct. Discarding recursive calls, it runs in time $poly(K)$, where $K$ is the number of entries in the cache.
Both the number of nodes in the search tree and $K$ are bounded by the number of subformulas $F[\alpha]$ for partial assignments $\alpha: \{v_1, \dots, v_i\} \rightarrow \{0, 1\}$.
We argue that there are at most $2^k \cdot 2^{(3^k)^2}$ such subformulas for each $1 \leq i \leq n$.
A partial assignment $\alpha$ can be written as a disjoint union $\alpha = \alpha_X \cup \beta$, where $\alpha_X$ assigns variables in the vertex cover $X$, and $\beta$ assigns variables outside the vertex cover.
We can also partition the clauses and terms of $F$ as $F = F' \cup F_{> i}$, where $F_{>i}$ does not contain variables in $\{v_1, \dots, v_i\}$, so that $F[\alpha] = F'[\alpha] \cup F_{>i}$. It is sufficient to bound the number of distinct subformulas $F'[\alpha]$.
We further partition $F'$ as $F' = F_X \cup F''$, where $\var(F_X) \cap \{v_1, \dots, v_i\} \subseteq X$, and each clause or term in $F''$ contains a variable in $\{v_1, \dots, v_i\} \setminus X$.
This allows us to write $F'[\alpha] = F'[\alpha_X][\beta] = F_X[\alpha_X] \cup F''[\alpha_X][\beta]$.
There are at most $2^k$ many subformulas $F_X[\alpha_X]$ and $F''[\alpha_X]$. For each subformula $F''[\alpha_X]$, the subformula $F''[\alpha_X][\beta]$ is obtained by assigning variables that do not occur in the vertex cover $X$, which means that all their neighbors must be inside the vertex cover.
Thus $F''[\alpha_X][\beta]$ is a subformula constructed from variables in~$X$.
There are at most $3^k$ clauses (or terms) on $k$ variables, and thus at most $2^{3^k}$ CNF (or DNF) formulas.
Because $F$ consists of a CNF and a DNF, there are at most $2^{(3^k)^2}$ CDNF formulas, and $2^k \cdot 2^{(3^k)^2}$ subformulas $F'[\alpha]$ overall, which is in $2^{3^{\bigO(k)}}$.
In the case of $d$-CDNF, there are $(2k+1)^d$ clauses or terms of size at most $d$, so we get a bound of $2^{k^{\bigO(d)}}$ on the number of subformulas.
Since we get a different domain of $\alpha$ for each $1 \leq i \leq n$, we have a bound of $n \cdot 2^{3^{\bigO(k)}}$ on the number of cache entries and nodes in the search tree, or $n \cdot 2^{k^{\bigO(d)}}$ in for $d$-CDNF.
Thus the overall running time is $2^{2^{\bigO(k)}}\cdot \poly(\CCard{\qbfformula})$ for general CDNF, and $2^{k^{\bigO(d)}} \cdot \poly(\CCard{\qbfformula})$ for $d$-CDNF.
\end{proof}

\begin{restatetheorem}[ref:ub]
\begin{THM}[UB for $\QBFSAT_\ell$ and Treewidth]
Given any QBF~$Q$ in CDNF of quantifier depth~$\ell$ with~$k=\tw{G_Q}$. Then, $\QBFSAT_\ell$ on~$Q$ can be decided in time~$\tower(\ell, \mathcal{O}(k))\cdot\poly(\Card{\var(Q)})$.
\end{THM}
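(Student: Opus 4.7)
The plan is to reduce the CDNF case to the CNF case in a way that preserves the treewidth up to a constant factor, so that Proposition~\ref{ref:ubo} can be applied directly. Without loss of generality, we handle the case where the innermost quantifier is $\exists$ and $\matr(Q) = C \wedge D$ with $C$ in CNF and $D$ in DNF; the dual case with innermost $\forall$ and $\matr(Q) = D \vee C$ is symmetric (or handled by negating the instance).

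First, I would compute a tree decomposition $\mathcal{T} = (T, \chi)$ of $G_Q$ of width $\mathcal{O}(k)$ using, e.g., Korhonen's single-exponential algorithm, in time $2^{\mathcal{O}(k)} \cdot \poly(|\var(Q)|)$. I may also assume that each node of $T$ has at most two children, by subdividing high-degree nodes and duplicating bags (this adds only linearly many nodes and does not change the width). The key obstacle is that the DNF part $D$ cannot be translated into CNF while preserving treewidth using a naive Tseytin construction, since a term of size $\Theta(n)$ would produce clauses whose variables may be scattered across the tree decomposition. To overcome this, I would use a tree-decomposition-guided Tseytin-style transformation: for each node $t$ of $T$ I introduce a fresh auxiliary variable $sat_t$, to be interpreted as ``some term of $D$ is satisfied in the subtree rooted at $t$''. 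These auxiliary variables are added to the innermost existential quantifier block $V_\ell$. I then add the implications of the form
\[
  sat_t \;\rightarrow\; \bigvee_{t' \in \children(t)} sat_{t'} \;\vee\; \bigvee_{\substack{d \in D \\ \var(d) \subseteq \chi(t)}} d,
\]
for every node $t$ of $T$, together with the unit clause $sat_r$ for the root $r$. Each such implication can be expanded into a set of clauses (whose sizes are bounded by a constant plus $|\chi(t)|$) using distribution; since the bags and terms involved have bounded size, the number of resulting clauses per node is bounded by a function of $k$.

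The correctness of this encoding follows from the fact that $D$ (interpreted as a disjunction of terms over $\var(Q)$) is satisfied under some assignment $\alpha$ iff some term $d \in D$ is satisfied, iff there is a node $t$ of $T$ whose bag contains $\var(d)$ (such a node exists by the first TD axiom applied to the clique $\var(d)$, once we assume WLOG that the primal graph contains the edges inside each term --- if we are using CDNF semantics, $G_Q = G_C \sqcup G_D$, which already gives us these cliques). Consequently, the forward direction is witnessed by setting $sat_t = 1$ along the path from $t$ to the root, and the backward direction follows by induction on $T$ from the root down.

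It remains to argue the treewidth bound. I would extend $\mathcal{T}$ to a tree decomposition $\mathcal{T}' = (T, \chi')$ of the primal graph of the new CNF instance $Q'$ by setting $\chi'(t) \eqdef \chi(t) \cup \{sat_t\} \cup \{sat_{t'} : t' \in \children(t)\}$. Since each node has at most two children, this adds at most three vertices to each bag, so $\width(\mathcal{T}') \leq \width(\mathcal{T}) + 3 \in \mathcal{O}(k)$. It is straightforward to check the two TD axioms for $\mathcal{T}'$: each auxiliary variable $sat_t$ appears only in bags of $t$ and its parent, and every clause derived from the construction has its variables contained within a single bag $\chi'(t)$. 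Finally, applying Proposition~\ref{ref:ubo} to the resulting CNF instance $Q'$, which has quantifier depth $\ell$ and primal treewidth $\mathcal{O}(k)$, yields the desired runtime bound of $\tower(\ell, \mathcal{O}(k)) \cdot \poly(|\var(Q)|)$.
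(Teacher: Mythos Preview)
Your proposal is correct and matches the paper's own proof essentially line for line: the same auxiliary variables $sat_t$ added to the innermost existential block, the same implications $sat_t \rightarrow \bigvee_{t'\in\children(t)} sat_{t'} \vee \bigvee_{d\in D,\ \var(d)\subseteq\chi(t)} d$ together with the unit clause $sat_r$, the same extended tree decomposition $\chi'(t)=\chi(t)\cup\{sat_t\}\cup\{sat_{t'}:t'\in\children(t)\}$ with width increase at most~$3$, and the same appeal to Proposition~\ref{ref:ubo}. If anything, you supply slightly more justification (for correctness and for why each term fits in some bag via the clique property of $G_D$) than the paper's sketch does.
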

\end{restatetheorem}
\begin{proof}[Proof] 
We illustrate the proof on the case where for~$Q$ the innermost quantifier~$Q_\ell=\exists$. 
Let $Q=Q_1 V_1. \cdots Q_\ell V_\ell.$ $C\wedge D$ and let $\mathcal{T}=(T,\chi)$ be a TD
of~$G_Q$ of width~$\mathcal{O}(k)$, computable in time~$2^{\mathcal{O}(k)}\cdot\poly(\Card{\var(Q)})$~\cite{Korhonen22}. 
For each node~$t$ of~$T$, let the set of child nodes be given by~$\children(t)$;
we assume without loss of generality that~$\Card{\children(t)}\leq 2$ (obtainable by adding auxiliary nodes).
We use auxiliary variables~$S\eqdef\{sat_t \mid t\text{ in }T\}$. 
Then, we define a linear-SAW reduction from~$Q$ and~$\mathcal{T}$, constructing
a QBF~$Q'\eqdef Q_1 V_1. \cdots Q_\ell (V_\ell \cup S). (C\cup C'')$, 
whose matrix is in CNF: 
\vspace{-.25em}
{\smallalign{\normalfont\small}
\begin{align}
	&\label{t:aux}sat_t \rightarrow\hspace{-.5em}\bigvee_{t'\in\children(T)}sat_{t'} \vee\hspace{-2em}\bigvee_{d\in D, \var(d)\subseteq\chi(t)}\hspace{-2em}d& \text{for every }t\text{ of }T\\
	&\label{t:root}sat_{r} & \text{for root }r\text{ of }T
\end{align}}
\noindent Formulas~(\ref{t:aux}) define when a term is satisfied for a node~$t$, 
which together with Formula~(\ref{t:root}) can be easily converted to 
the set~$C'$ of CNFs (using distributive law).
Observe that 
at least one term has to be satisfied
at the root node.
The reduction yields 
a TD~$\mathcal{T}'\eqdef(T,\chi')$ of~$G_{Q'}$ 
%
where, for every~$t$ of~$T$, $\chi'(t)\eqdef \chi(t)\cup \{sat_t\} \cup \{sat_{t'}\mid t'\in\children(t)\}$. 
The width of $\mathcal{T}'$ is $ \width(\mathcal{T}) + 3 \in\mathcal{O}(\width(\mathcal{T}))$,
so on~$Q'$ the algorithm from Proposition~\ref{ref:ubo} runs in 
time~$\tower(\ell, \mathcal{O}(k))\cdot\poly(\Card{\var(Q)})$. 
\end{proof}

\section{Lower Bounds for More Restricted Parameters}\label{app:strongerlowerbounds}
The reduction $\mathcal{R}$ presented in Section~\ref{sec:result} can be shown to reduce parameters that are even more restrictive than the sparse feedback vertex number (Lemma~\ref{lem:compr}).
\begin{COR}[Decreasing Distance to Sparse Half-Ladder
]\label{cor:compr}
Given a QBF~$Q$ in 3,1-CDNF such that~$S$ is 
a distance set to sparse half-ladder of~$Q$. Then, the reduction~$\mathcal{R}$ constructs a QBF~$Q'$ with distance set~$S'$ to sparse half-ladder of~${Q'}$ such that~$\Card{S'}$ is bounded by~$\mathcal{O}(\log(\Card{S}))$.
\end{COR}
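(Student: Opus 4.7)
The plan is to reuse the reduction $\mathcal{R}$ together with the set $S' = \lbvs{S} \cup \lbvvs{S} \cup \{sat\}$ from Lemma~\ref{lem:compr}, which already delivers $\Card{S'} \in \mathcal{O}(\log(\Card{S}))$ and that $G_{Q'} - S'$ is acyclic. Two structural properties remain to be checked: sparseness of $S'$ with respect to $\matr(Q')$, and that $G_{Q'} - S'$ has the shape of a half-ladder. Sparseness carries over directly: the only terms of $\matr(Q')$ containing two non-$S'$ variables are the size-two terms $\{sat_i, l\}$ produced by Formula~(\ref{red:usatv}), and each such pair appears together in at most one such term.

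For the structural claim, observe that in $G_{Q'} - S'$ every $x \in S$ is isolated, and each vertex $sat_i$ is adjacent precisely to the non-$S$ variables occurring in clause $c_i$. The hypothesis that $H \eqdef G_Q - S$ is a half-ladder prohibits any clause from having three non-$S$ variables (this would yield a triangle in the acyclic $H$), so each $sat_i$ has at most two non-$S$ neighbors. Hence, up to isolated vertices, $G_{Q'} - S'$ is obtained from $H$ by subdividing each edge via the unique $sat_i$ witnessing it (uniqueness from sparseness of $S$), and attaching at each non-$S$ vertex $v$ one leaf per clause in which $v$ is the sole non-$S$ variable.

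To exhibit the half-ladder decomposition, traverse each connected component of $G_{Q'} - S'$. The subdivisions of the path edges of $H$ enlarge each path into a longer path, the subdivisions of the pendant edges of $H$ extend the main path at its endpoints, and the clause-leaves become legitimate pendants at their attachment vertices, yielding a disjoint union of paths in which each vertex carries at most one pendant.

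The hard part will be handling non-$S$ vertices $v$ that are the sole non-$S$ variable in several clauses; such a $v$ collects multiple parallel clause-leaves, potentially violating the one-pendant-per-vertex constraint of a half-ladder. Overcoming this requires using the 3,1-CDNF restriction together with the sparseness of $S$ to either merge excess leaves into the main-path reorganization or, failing that, to include a constant number of additional witnesses in $S'$ without disturbing the $\mathcal{O}(\log(\Card{S}))$ bound.
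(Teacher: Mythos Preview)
Your diagnosis is on track and matches the paper up to identifying the obstacle: a vertex $v\in\var(Q)\setminus S$ that is the sole non-$S$ variable of several clauses picks up several $sat_i$-pendants in $G_{Q'}-S'$, breaking the half-ladder property. The gap is your proposed remedy. ``Merging excess leaves into the main-path reorganization'' is not an argument, and ``including a constant number of additional witnesses in $S'$'' cannot work: nothing in 3{,}1-CDNF or in sparseness bounds how many vertices $v$ carry surplus leaves, nor how many clauses of $C$ contain a given $v$ together with two $S$-variables, so absorbing these into $S'$ destroys the $\mathcal{O}(\log|S|)$ bound. (A smaller issue: your claim that subdivided pendant edges ``extend the main path at its endpoints'' only covers pendants at path endpoints; an interior pendant $\{u,w\}$ becomes a length-two attachment $u-sat_i-w$ that is neither a leaf nor an endpoint extension.)

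The paper's resolution is different in kind: it \emph{modifies the reduction} rather than enlarging $S'$. For each clause $c_i$ whose unique non-$S$ variable is $v$, it introduces a fresh copy variable $v_i$, chains the copies of $v$ together via size-two equivalence terms $(v_j\wedge\neg v_{j+1})$, $(\neg v_j\wedge v_{j+1})$ (with $v_0:=v$), and rewrites the instance of Formula~(\ref{red:usatv}) for $c_i$ so that $sat_i$ is paired with $v_i$ instead of $v$. In $G_{Q'}-S'$ this dissolves the star at $v$ into a path $v-v_1-v_2-\cdots$ along which each $v_i$ carries exactly the one pendant $sat_i$, restoring the half-ladder shape without touching $S'$.
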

\begin{proof}\vspace{-.5em}
The result follows from Lemma~\ref{lem:compr}
and the observation that~$S'$ is a distance set to sparse \emph{caterpillar} of~${Q'}$ whenever~$S$ is a distance set to sparse half-ladder of~$Q$.
This caterpillar is due to those 3-CNF clauses~$c_i$ of~$Q$ using only one variable~$v$ from~$\var(\matr(\Q))\setminus S$, cf.\ type (ii) clauses of Figure~\ref{fig:sketch}. However, the caterpillar can be turned into a half-ladder by introducing one additional innermost existential copy variable~$v_i$ of~$v$ for every such~$c_i$. Then, for an arbitrary total ordering among those copies, we ensure equivalence of~$v$ with the first copy~$v_i$ by ($v_i \wedge \neg v$) and ($v\wedge \neg v_i$), as well as equivalence between the first and the second copy, and so on. We adapt Formulas~(\ref{red:usatv}) such that instead of~$v$ having all neighbors~$sat_i$ in~$G_{Q'}$, each copy~$v_i$ gets one neighbor~$sat_i$, thereby dissolving star subgraphs into half-ladders.
%
%
\end{proof}

\begin{COR}[LB for Distance Sparse Half-Ladder]\label{cor:disjpaths}
Given an arbitrary QBF~$Q$ in CDNF of quantifier depth~$\ell$ and a minimum distance set~$S$
to sparse half-ladder of~$Q$ with~$k=\Card{S}$.
Then, under ETH, $\QBFSAT_\ell$ on~$Q$ cannot be decided in time~$\tower(\ell, o(k))\cdot\poly(\Card{\var(Q)})$.
\end{COR}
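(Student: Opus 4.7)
The plan is to mirror the inductive strategy used for Theorem~\ref{lab:lb}, but replace the compression step for sparse feedback vertex sets with the analogous statement for distance to sparse half-ladder given by Corollary~\ref{cor:compr}. As before, it suffices to establish the result for QBFs in 3,1-CDNF, since an arbitrary CDNF instance can be normalized without affecting the parameter order of magnitude.

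The base case $\ell=1$ reduces to \SAT\ on 3-CNF: since any distance set to sparse half-ladder trivially satisfies $k\leq \Card{\var(Q)}$, ETH directly rules out an algorithm running in time $2^{o(k)}\cdot\poly(\Card{\var(Q)})$. For the inductive step, assume the claim for quantifier depth~$\ell-1$, and let $Q$ be a 3,1-CDNF QBF of depth~$\ell$ with a minimum distance set~$S$ to sparse half-ladder of size~$k$. If the innermost quantifier is~$\exists$, I would apply the SAW reduction~$\mathcal{R}$ from Section~\ref{sec:result} to $(Q,S)$, obtaining in polynomial time (Theorem~\ref{lab:runtime}) an equivalent QBF~$Q'$ (Theorem~\ref{lab:corr}) of quantifier depth~$\ell+1$, together with a distance set~$S'$ to sparse half-ladder of~$G_{Q'}$ whose size is bounded by~$\mathcal{O}(\log k)$ by Corollary~\ref{cor:compr}. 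If the innermost quantifier is~$\forall$, first invert $Q$ (flipping quantifiers and matrix) so that its innermost quantifier is~$\exists$, apply $\mathcal{R}$, then invert the truth value of the output in constant time.

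Now suppose, for contradiction, that $\QBFSAT_\ell$ on~$Q$ could be decided in time~$\tower(\ell, o(k))\cdot\poly(\Card{\var(Q)})$. Because $\Card{S'}\in\mathcal{O}(\log k)$ and $\mathcal{R}$ increases the quantifier depth by exactly one, this would translate into an algorithm for $\QBFSAT_{\ell-1}$ applied to instances parameterized by their distance to sparse half-ladder running in time~$\tower(\ell-1, o(k'))\cdot\poly(\Card{\var(Q')})$ on~$Q'$ of depth~$\ell-1$ after solving one quantifier block (or equivalently, applied at depth $\ell$ with one extra exponentiation available), which collapses the tower by one level and contradicts the induction hypothesis. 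Chaining the compressions for $i$ iterations trades an $i$-fold exponential drop of the parameter for an~$i$-level increase of the tower height, so the stated ETH-tight bound follows.

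The main obstacle is purely bookkeeping: making sure the reduction~$\mathcal{R}$ preserves the 3,1-CDNF shape after normalizing the non-3-bounded terms in Formulas~(\ref{red:guessatomj}) and~(\ref{red:guessnegatomj}) without destroying the sparse half-ladder structure underlying~$S'$. Since Corollary~\ref{cor:compr} already handles this (it extends Lemma~\ref{lem:compr} with the copy-variable trick that dissolves the stars around each $sat_i$ into half-ladder ``legs''), the inductive argument goes through essentially unchanged from the proof of Theorem~\ref{lab:lb}, and no new technical ingredient is required.
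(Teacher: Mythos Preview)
Your high-level strategy is exactly the paper's: rerun the induction of Theorem~\ref{lab:lb} with Corollary~\ref{cor:compr} in place of Lemma~\ref{lem:compr}. However, the inductive step you wrote has the direction of the reduction reversed. The reduction~$\mathcal{R}$ \emph{adds} a quantifier block, so to establish the bound at depth~$\ell$ from the hypothesis at depth~$\ell-1$ you must start from an arbitrary 3,1-CDNF QBF~$Q$ of depth~$\ell-1$ (not~$\ell$) with distance set~$S$ of size~$k$, apply~$\mathcal{R}$ to obtain an equivalent~$Q'$ of depth~$\ell$ with distance set~$S'$ of size~$\mathcal{O}(\log k)$, and then argue that a hypothetical $\tower(\ell,o(\cdot))$ algorithm for depth-$\ell$ instances would solve~$Q'$---and hence~$Q$---in time $\tower(\ell,o(\log k))\cdot\poly(\Card{\var(Q')})=\tower(\ell-1,o(k))\cdot\poly(\Card{\var(Q)})$, contradicting the induction hypothesis.

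As written, you start with~$Q$ of depth~$\ell$, produce~$Q'$ of depth~$\ell+1$, and then try to extract a fast depth-$(\ell-1)$ algorithm from a fast algorithm for~$Q$; that chain does not connect, and the parenthetical about ``solving one quantifier block'' does not repair it---you never exhibit a depth-$(\ell-1)$ instance whose fast solution would follow. Once you swap the roles (take~$Q$ at depth~$\ell-1$ and land at depth~$\ell$), the rest of your write-up is fine and no new ingredient is needed, which is precisely what the paper's one-line proof records.
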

\begin{proof}\vspace{-.28em}
The result follows from Theorem~\ref{lab:lb}, where instead of Lemma~\ref{lem:compr}, Corollary~\ref{cor:compr} is used. 
\end{proof}


Interestingly, for finding the respective distance set~$S$, 
variables among the innermost quantifier are enough, which becomes apparent 
when inspecting reduction~$\mathcal{R}$ and Lemma~\ref{lem:compr}.

\begin{COR}[LB for Distance to Sparse Half-Ladder (Innermost Quantifier)]\label{cor:lastquantifier}
Given any QBF~$Q=Q_1 V_1 \cdots Q_\ell V_\ell. F$ in CDNF and a minimum distance set~$S\subseteq V_\ell$ 
to sparse half-ladder of~$Q$ with~$k=\Card{S}$.
Then, under ETH, $\QBFSAT_\ell$ on~$Q$ cannot be decided in time~$\tower(\ell, o(k))\cdot\poly(\Card{\var(Q)})$.
\end{COR}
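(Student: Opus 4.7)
The plan is to revisit the inductive proof of Theorem~\ref{lab:lb} (specialized to distance sets to sparse half-ladder via Corollary~\ref{cor:disjpaths}) while maintaining, by direct inspection of reduction~$\mathcal{R}$, the additional invariant that the distance set~$S'$ of the constructed QBF~$Q'$ lies entirely within the innermost quantifier block of~$Q'$. Concretely, $\mathcal{R}$ appends a single fresh $\forall$-block consisting exactly of the auxiliary variables $\lbvs{S}\cup\lbvvs{S}\cup\lsat$, and by construction the output distance set $S'=\lbvs{S}\cup\lbvvs{S}\cup\{sat\}$ is a subset of these variables. Hence if~$Q'$ has quantifier depth~$\ell$, then $S'\subseteq V_\ell$, which is exactly the restriction required by the corollary.

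The inductive argument then proceeds as in Theorem~\ref{lab:lb} / Corollary~\ref{cor:disjpaths}. The base case $\ell=1$ is immediate because the only quantifier block~$V_1$ coincides with~$V_\ell$, so any distance set trivially satisfies $S\subseteq V_\ell$ and ETH for~\SAT{} supplies the bound. For the inductive step, I would start from an arbitrary QBF~$Q$ of depth~$\ell{-}1$ with a distance set $S\subseteq V_{\ell-1}$ (guaranteed by the IH), apply~$\mathcal{R}$ to obtain~$Q'$ of depth~$\ell$ with $S'\subseteq V_\ell$ and $|S'|\in\bigO(\log|S|)$ by (the half-ladder variant of) Lemma~\ref{lem:compr}, and conclude that any hypothetical $\tower(\ell, o(|S'|))\cdot\poly$ algorithm for~$Q'$ collapses via~$\mathcal{R}$ into a $\tower(\ell{-}1, o(|S|))\cdot\poly$ algorithm for~$Q$, contradicting the IH. For Case~2, where the innermost quantifier of~$Q$ is~$\forall$, we first invert~$Q$ into~$Q^\star$; since $G_{Q^\star}=G_Q$ and the variables of each block are preserved (only quantifier types are flipped), the assumption $S\subseteq V_{\ell-1}$ carries over to the innermost block of~$Q^\star$, after which the same argument applies and the truth value is inverted in constant time.

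Finally, the half-ladder refinement from Corollary~\ref{cor:compr} inserts copy variables~$v_i$ into the innermost \emph{existential} block of~$Q'$, which is~$V_{\ell-1}$ of~$Q'$ (since the new innermost block~$V_\ell$ is~$\forall$). Consequently, the copies never enter~$V_\ell$ and the invariant $S'\subseteq V_\ell$ is preserved verbatim. The main obstacle I expect lies in the bookkeeping for the inversion step in Case~2: one must verify that after inverting, applying~$\mathcal{R}$, and passing the validity judgment back through a final negation, the distance set of the resulting QBF indeed still resides in its innermost block. This is however purely syntactic, since inversion only toggles quantifier types (leaving block membership of variables untouched) while~$\mathcal{R}$ only introduces fresh innermost-block variables; no new parameter-theoretic insight beyond the inspection of~$\mathcal{R}$ is required.
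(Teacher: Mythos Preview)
Your proposal is correct and follows essentially the same approach as the paper: the paper's proof is the one-line observation that $\mathcal{R}$ always places $S'=\lbvs{S}\cup\lbvvs{S}\cup\{sat\}$ inside the freshly introduced innermost block, so the hard instances manufactured in the inductive proof of Corollary~\ref{cor:disjpaths} already live in the restricted class; you spell out the same induction explicitly. One minor simplification: you do not need to carry the constraint $S\subseteq V_{\ell-1}$ on the \emph{source} formula in the inductive hypothesis---$\mathcal{R}$ deposits $S'$ into the new innermost block regardless of where the input $S$ sits, so invoking the unrestricted Corollary~\ref{cor:disjpaths} at depth~$\ell{-}1$ (as the paper does) suffices and avoids the extra bookkeeping.
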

\begin{proof}\vspace{-.28em}
The result follows from Corollary~\ref{cor:disjpaths}, by observing that~$\mathcal{R}$ constructs resulting distance sets~$S'$ only over the innermost quantifier, cf.\ Lemma~\ref{lem:compr}. 
\end{proof}

Obviously, the results of Theorem~\ref{lab:lb} and Corollaries~\ref{cor:disjpaths}, \ref{cor:lastquantifier} 
immediately carry over to the (non-sparse) feedback vertex number and the distance to half-ladder, respectively, cf.~Figure~\ref{fig:params}.
However, it turns out that this result can be strengthened even further, which we show below.
In particular, the reduction~$\mathcal{R}$ even works for a slightly weaker parameter than the distance to caterpillar: 
If the \emph{height (largest component)} of the caterpillar is bounded by a fixed value~$h$, 
we have that~$\mathcal{R}$ does not significantly increase the height, i.e., the height of the caterpillar of the resulting graph is in~$\mathcal{O}(h)$.
%

\begin{COR}
Given an arbitrary QBF~$Q$ in CDNF of quantifier depth~$\ell$ and a minimum distance set~$S$ to sparse caterpillar of~$Q$ of bounded height~$\mathcal{O}(h)$ with~$k=\Card{S}$. Then, under ETH, $\QBFSAT_\ell$ on~$Q$ cannot be decided in time~$\tower(\ell, o(k))\cdot\poly(\Card{\var(Q)})$.
\end{COR}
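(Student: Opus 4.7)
My plan is to follow the inductive scheme of Theorem~\ref{lab:lb} and Corollary~\ref{cor:disjpaths}, viewing the bounded-height sparse caterpillar as a structural refinement of the sparse caterpillar to which the SAW reduction $\mathcal{R}$ of Section~\ref{sec:result} can be applied with only minor adjustments. The first step is to establish a structural analog of Corollary~\ref{cor:compr}: if $S$ is a distance set to sparse caterpillar of $G_Q$ such that every component of $G_Q - S$ has size at most $\mathcal{O}(h)$, then $\mathcal{R}(Q,S) = (Q',S')$ admits a distance set $S'$ of size $|S'| \in \mathcal{O}(\log|S|)$ whose corresponding sparse caterpillar components of $G_{Q'} - S'$ are again of size at most $\mathcal{O}(h)$. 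The size bound on $S'$ is inherited directly from Lemma~\ref{lem:compr}; what needs care is the height preservation.

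The critical observation is that since $Q$ is in 3,1-CDNF and $G_Q - S$ is triangle-free (being a caterpillar), every 3-clause contains at most two variables outside $S$. Consequently, each $sat_i$ variable added by $\mathcal{R}$ has degree at most two in $G_{Q'} - S'$, so $sat_i$ can only be (a) isolated, (b) a leaf attached to a single non-$S$ variable, or (c) a length-two connector between two non-$S$ variables that by sparsity already share a caterpillar edge. Case (a) produces singleton components and is harmless; case (c) merely subdivides an existing caterpillar edge, adding one vertex per edge (and hence at most $h-1$ vertices to each component of size $h$); the problematic case is (b), where a single non-$S$ variable $v$ could accumulate unboundedly many $sat_i$ leaves coming from clauses of the form $\{v,s,s'\}$ with $s,s' \in S$.

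The main obstacle, and where I expect the real work to lie, is handling case (b) without blowing up the logarithmic budget for $S'$. I would apply the copy-variable fix of Corollary~\ref{cor:compr}: introduce innermost existentially quantified copies $v_1,\dots,v_m$ of $v$, chained by equivalences, and redirect each $sat_i$ of type (b) to its own copy. Grouping the copies into batches of size $\mathcal{O}(h)$ and adding every batch's endpoint to $S'$ dissolves the long equivalence chain into pieces each of height $\mathcal{O}(h)$. Because the number of batches added to $S'$ is only $\mathcal{O}(\log|S|)$ when distributed across the original non-$S$ vertices (the same reasoning that bounds $|S'|$ in Lemma~\ref{lem:compr}), the overall size of $S'$ stays within $\mathcal{O}(\log|S|)$, while every resulting component is a sparse caterpillar of size at most $\mathcal{O}(h)$.

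Once this structural lemma is in place, the lower bound follows verbatim from the inductive scheme of Theorem~\ref{lab:lb}. The base case $\ell=1$ reduces to \textsc{Sat} on a 3-CNF, for which ETH excludes any algorithm running in time $2^{o(k)} \cdot \poly(\Card{\var(Q)})$. The inductive step applies the structure-preserving version of $\mathcal{R}$: a hypothetical algorithm solving $Q'$ in time $\tower(\ell, o(\log k)) \cdot \poly(\Card{\var(Q')})$ would solve $Q$ in time $\tower(\ell-1, o(k)) \cdot \poly(\Card{\var(Q)})$, contradicting the inductive hypothesis; the case where the innermost quantifier of $Q$ is $\forall$ is handled by the standard quantifier-inversion trick in constant time.
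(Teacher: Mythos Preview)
Your batching argument contains a genuine gap. If a non-$S$ vertex $v$ participates in $m_v$ clauses of the form $\{v,s,s'\}$ with $s,s'\in S$, your fix produces a chain of $m_v$ copies and then places $\lceil m_v/h\rceil$ cut points into $S'$. Summed over all non-$S$ vertices this adds $\Theta\bigl(\sum_v m_v/h\bigr)$ vertices to $S'$, and nothing in the hypotheses bounds $\sum_v m_v$ by anything close to $h\log|S|$; it can be polynomial in $|S|$. Your appeal to ``the same reasoning that bounds $|S'|$ in Lemma~\ref{lem:compr}'' is misplaced: that lemma simply counts the $3\lceil\log|S|\rceil+4$ index/value/$sat$ variables and involves no distribution argument whatsoever. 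So the structural lemma you state does not follow from your construction, and the induction cannot be closed.

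The paper takes a much simpler route and, crucially, does \emph{not} use the copy trick here. Its one-line proof invokes Corollary~\ref{cor:disjpaths} together with the observation (referencing Figure~\ref{fig:sketch}) that the plain reduction $\mathcal{R}$ takes caterpillar height $h$ to at most $2h+1$. The point you missed is that in a caterpillar, unlike in a half-ladder, spine vertices are allowed arbitrarily many pendant leaves, so your case~(b) is not a structural violation at all---the leaves are absorbed by the caterpillar class rather than fought against. The paper even remarks immediately after this corollary that applying the copy procedure of Corollary~\ref{cor:compr} would blow the height up to $\mathcal{O}(h\cdot\log k)$, which is exactly why copies are the wrong tool for the bounded-height statement.
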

\begin{proof}
The result follows from Corollary~\ref{cor:disjpaths} with the observation that the height of the caterpillar~$G_{Q'}$ for~$Q'$, obtained by~$\mathcal{R}(Q, S)$, is bounded by~$2\cdot h+1$, i.e., $\mathcal{O}(h)$, cf.\ Figure~\ref{fig:sketch}.
\end{proof}

\noindent Note that we cannot lift this result to the distance to half-ladder by applying a similar procedure as in the proof of Corollary~\ref{cor:compr}, which would increase the height to~$\mathcal{O}(h\cdot\log(k))$.

Interestingly, when the half-ladder \emph{size (number of components)} is bounded, the problem $\QBFSAT_\ell$ stays hard as well. 

\begin{COR}
Given an arbitrary QBF~$Q$ in CDNF of quantifier depth~$\ell$ and a minimum distance set~$S$ to sparse half-ladder (of~$Q$) of size~$m$ with~$k=\Card{S}$. Then, under ETH, $\QBFSAT_\ell$ on~$Q$ cannot be decided in time~$\tower(\ell, o(k))\cdot\poly(\Card{\var(Q)})$.
\end{COR}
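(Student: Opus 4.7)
My plan is to mirror the proof of Corollary~\ref{cor:disjpaths} essentially verbatim, using the SAW reduction~$\mathcal{R}$ together with the copy-variable modification from Corollary~\ref{cor:compr}. The key observation is that the half-ladder size~$m$ can be tracked as an additional parameter through each iteration while only the distance-set size~$k$ determines the runtime lower bound.

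First, I would invoke Corollary~\ref{cor:compr} to ensure that applying~$\mathcal{R}$ to~$(Q, S)$ with the copy-variable modification produces an equivalent QBF~$Q'$ and a distance set~$S' = \lbvs{S} \cup \lbvvs{S} \cup \{sat\}$ of size~$\mathcal{O}(\log k)$ such that~$G_{Q'} - S'$ is a sparse half-ladder. Decomposing its connected components yields: (a)~the $m$ subdivided and pendant-augmented components arising from~$G_Q - S$ (preserved as half-ladder components via the copy-variable trick); (b)~the $k$ isolated vertices~$x \in S$ (each a single-vertex half-ladder component); and (c)~the~$\mathcal{O}(k^3)$ isolated~$sat_i$-vertices from type-(i) clauses (those with all three variables in~$S$). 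Thus~$G_{Q'} - S'$ remains a sparse half-ladder, of size~$m' \in \mathcal{O}(m + k^3)$.

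Second, I would perform induction on the quantifier depth~$\ell$, following Theorem~\ref{lab:lb}. The base case~$\ell = 1$ is~\SAT, where ETH yields the~$2^{o(k)}$ bound via~$k \leq \Card{\var(Q)}$. For~$\ell > 1$, applying~$\mathcal{R}$ reduces an~$\ell$-depth instance with parameter~$k$ to an equivalent~$(\ell{-}1)$-depth instance with parameter~$\Card{S'} \in \mathcal{O}(\log k)$ and half-ladder of size~$m' \in \mathcal{O}(m + k^3)$; a hypothetical~$\tower(\ell, o(k))\cdot\poly(\Card{\var(Q)})$-time algorithm would then yield, via this reduction, an~$(\ell{-}1)$-depth algorithm contradicting the inductive hypothesis. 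Case~2 where the innermost quantifier is universal is handled by first inverting~$Q$ so that the matrix can be cast in 3,1-CDNF before applying~$\mathcal{R}$.

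The main subtlety, essentially bookkeeping, will be verifying that the~$m$-parameter growth ($m \mapsto m' \in \mathcal{O}(m + k^3)$) remains polynomial in~$k$ across iterations and does not interfere with the runtime bound. Since the bound depends only on~$k$, and~$m$ enters merely as an explicit descriptive parameter of the input class, this suffices, and the corollary follows as a direct strengthening of Corollary~\ref{cor:disjpaths} with~$m$ made explicit.
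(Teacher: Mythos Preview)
Your argument misses the point of the corollary. The phrase ``of size~$m$'' is not a merely descriptive tag: the paper introduces this corollary with the sentence ``when the half-ladder \emph{size (number of components)} is bounded, the problem $\QBFSAT_\ell$ stays hard as well''. That is, $m$ is intended as an additional bounded structural parameter, and the content of the corollary is precisely that hardness persists even when~$m$ does not grow with~$\ell$. Under your interpretation (``$m$ enters merely as an explicit descriptive parameter''), the statement collapses into Corollary~\ref{cor:disjpaths} and there would be no reason to state it separately.

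Consequently, your reduction does not preserve the relevant instance class: you correctly observe that each application of $\mathcal{R}$ (with the copy-variable modification) produces~$k$ isolated vertices for~$x\in S$ together with up to $\mathcal{O}(k^3)$ isolated~$sat_i$-vertices from type-(i) clauses, giving $m' \in \mathcal{O}(m+k^3)$. But after the iterated applications needed to reach quantifier depth~$\ell$, the half-ladder size is no longer bounded by any function of the original~$m$; you have left the class of instances the corollary is about. The paper fixes this by \emph{modifying} the reduction to a variant~$\mathcal{R}'$: the~$sat_i$ variables of type-(i) clauses are chained into a single path (by replacing each~$sat_i$ in Formulas~(\ref{red:usatxp}) with $sat_j\wedge sat_i$ for the preceding type-(i) clause~$c_j$), and similarly the vertices of~$S$ are connected into one path via auxiliary variables. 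This yields $m' \leq m+2$, so the half-ladder size stays bounded throughout the induction, and the claim follows from Corollary~\ref{cor:disjpaths} applied to~$\mathcal{R}'$.
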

\begin{proof}
The result follows from the observation that~$\mathcal{R}$ as given above can be extended by additional paths for clauses that only use variables over~$S$. 
These clauses can be easily connected to one (large) path. 
Thereby, Formulas~(\ref{red:usatxp}) 
for clauses~$c_i\in C$ with~$\var(c)\subseteq S$ (``type (i) clause'', cf.\ Figure~\ref{fig:sketch}) are modified such that
every occurrence of $sat_i$ is replaced by a conjunction consisting of~$sat_j \wedge sat_i$, where~$j$ is the largest value s.t.\ type~(i) clause~$c_j$ of~$C$ directly precedes~$c_i$ of~$C$ in the total ordering. 
Similarly, elements of~$S$ can be connected to one path, via, e.g., additional auxiliary variables. The resulting, modified reduction of~$\mathcal{R}$ is referred to by~$\mathcal{R}'$.
Then, Corollary~\ref{cor:disjpaths} can be easily lifted in order to show the claim, since the size of the half-ladder~$G_{Q'}$ for~$Q'$ obtained by~$\mathcal{R}'(Q, S)$ is bounded by~$m+2$.
\end{proof}

However, if both height and size are bounded, we obtain a result similar to vertex cover number.

\begin{COR}
Given any QBF~$Q$ in 3,1-CDNF of quantifier depth~$\ell$ and a distance set~$S$ to caterpillar (of~$G_Q$) of size~$m$ and height~$h$ with~$k=\Card{S}$. Then, there is an algorithm for $\QBFSAT_\ell$ on~$Q$ running in time~$2^{\mathcal{O}((k+hm)^3)}\cdot\poly(\Card{\var(Q)})$.
\end{COR}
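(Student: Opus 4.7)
The plan is to reduce this to the vertex cover result of Theorem~\ref{thm:cnfvcn} via the observation that bounding \emph{both} the height (largest component size) and size (number of components) of the caterpillar forces $G_Q - S$ itself to have bounded size. Concretely, since $G_Q - S$ is a caterpillar with at most $m$ components, each of at most $h$ vertices, we have $\Card{V(G_Q) \setminus S} \leq h \cdot m$.

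Next, I build a small vertex cover of $G_Q$ out of $S$. Let $U$ be any vertex cover of the caterpillar $G_Q - S$; trivially $\Card{U} \leq hm$ since we may even take $U = V(G_Q) \setminus S$. Then $S \cup U$ covers every edge of $G_Q$: edges with an endpoint in $S$ are covered by $S$, and the remaining edges lie entirely in $G_Q - S$ and are covered by $U$. Hence the vertex cover number of $G_Q$ is bounded by $k + hm$.

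Finally, since $Q$ is in 3,1-CDNF, it is in particular in $d$-CDNF for $d = 3$. Applying the $d$-CDNF case of Theorem~\ref{thm:cnfvcn} to $Q$ with vertex cover number $k' \leq k + hm$ yields an algorithm running in time
\[
2^{(k')^{\mathcal{O}(3)}} \cdot \poly(\Card{\var(Q)}) \;=\; 2^{\mathcal{O}((k+hm)^{3})} \cdot \poly(\Card{\var(Q)}),
\]
which matches the claimed bound.

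There is no real obstacle here: once one recognizes that ``bounded height \emph{and} bounded size'' together force the caterpillar to have at most $hm$ vertices, the argument reduces in one step to the existing vertex cover algorithm for $d$-CDNF. The only minor point to verify is that the 3,1-CDNF input is admissible as a 3-CDNF input for Theorem~\ref{thm:cnfvcn}, so that the cubic exponent (rather than the double-exponential bound of the unrestricted CDNF case) applies.
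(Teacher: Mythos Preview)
Your proof is correct and takes essentially the same approach as the paper: both observe that $G_Q-S$ has at most $hm$ vertices, so $G_Q$ has vertex cover number at most $k+hm$, and then invoke a vertex-cover upper bound. The only cosmetic difference is that the paper cites Proposition~\ref{prop:vco} (the CQBF/3-CNF result of Lampis--Mitsou) whereas you cite Theorem~\ref{thm:cnfvcn} (its CDNF extension); since the input is 3,1-CDNF, your choice is arguably the cleaner reference, and the one minor wrinkle in your write-up is the step $2^{(k')^{\mathcal{O}(3)}}=2^{\mathcal{O}((k')^3)}$, which is not a literal identity of asymptotic notations but is justified once one traces the proof of Theorem~\ref{thm:cnfvcn} for fixed $d=3$.
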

\begin{proof}
The result follows from Proposition~\ref{prop:vco}, since one can easily construct a vertex cover of~$G_Q$ of size~$\mathcal{O}(k+h\cdot m)$ that contains all vertices of~$G_Q$.
\end{proof}

%
%
%

\section{Omitted Details and Proofs for Section~\ref{sec:main}}

\subsection{Tight Lower Bound for Feedback Vertex Number}

\begin{EX}\label{ex:redu}
Recall QBF~$Q'$ from Example~\ref{ex:running3} with~$\matr(Q')=C{\wedge}D$. Consider the sparse feedback vertex set~$S\eqdef \{a,c\}$ of~$Q'$. Since~$\Card{S}=2$, we only need two indices and we assume that for every clause~$c_i\in C$, $a=\var(\lit{c_i}{1})$ is always the first variable and~$c=\var(\lit{c_i}{2})$ the second variable.
We further assume that~$\bval{a}{1}\eqdef\{\neg idx_1^1\}$, $\bval{c}{2}\eqdef \{\neg idx_2^1\}$ as well as~$\bval{a}{2}\eqdef\{idx_2^1\}$, $\bval{c}{1}\eqdef \{idx_2^1\}$. Note that, however, since $a$ ($c$) is always the first (second) variable, respectively, $\bval{a}{2}$ and~$\bval{c}{1}$ are not used.
$\mathcal{R}(Q',S)$ amounts to $Q''{=}\forall a,b.$ $ \exists c,d. \forall idx_1^1, idx_2^1, val_1, val_2, sat, sat_1, sat_2, sat_3, sat_4.$ $C' \vee D'$, where we discuss the corresponding DNF terms of~$C'$ and the singleton clauses of~$D'$ below.
\vspace{-.01em}%

	%
		%
\hspace{-1.75em}
		\begin{tabular}[h]{l@{\hspace{0.5em}}lll}%
%
%
		(\ref{red:guessatomj}) & $a \wedge \neg idx_1^1 \wedge \neg val_1$,\qquad $\neg c \wedge \neg idx_2^1 \wedge \neg val_2$ \\
		(\ref{red:guessnegatomj}) & $\neg a \wedge \neg idx_1^1 \wedge \neg val_1$,\qquad $\neg c \wedge \neg idx_2^1 \wedge \neg val_2$  \\
(\ref{red:usatv}) & $sat_1\wedge \neg b$,\quad $sat_2\wedge b$,\quad $sat_3\wedge d$,\quad $sat_4\wedge \neg d$\\
		(\ref{red:usatxp}) & $sat_1 \wedge idx_1^1$, $sat_2 \wedge idx_1^1$, $sat_3\wedge idx_1^1$, $sat_4\wedge idx_1^1$,\\  
 & $sat_1 \wedge idx_2^1$, $sat_2 \wedge idx_2^1$, $sat_3\wedge idx_2^1$, $sat_4\wedge idx_2^1$\\  
%
%
%
(\ref{red:usatxv}) &$sat_2\wedge val_1$, $sat_4\wedge val_1$,  
 $sat_1 \wedge val_2$, $sat_2\wedge val_2$ \\ 
(\ref{red:usatxnegv}) &  $sat_1 \wedge \neg val_1$,  \hspace{-.1em}$sat_3\wedge \neg val_1$,  \hspace{-.1em}$sat_3\wedge \neg val_2$, \hspace{-.1em}$sat_4\wedge$\\
(\ref{red:usatvd}) &  $\neg val_2$  \qquad $sat \wedge b$,\qquad $sat\wedge \neg d$  \\
(\ref{red:usatnot}) & $\neg sat_1$,\qquad $\neg sat_2$,\qquad $\neg sat_3$,\qquad $\neg sat_4$\\
(\ref{red:usatnotd})& $\neg sat$
%
%
%
		\end{tabular}

\vspace{-1em}
\end{EX}

	\begin{restatetheorem}[lem:compr]
\begin{LEM}[Decrease Feedback Vertex Number] 
Given QBF~$Q$ in 3,1-CDNF and a sparse feedback vertex set~$S$ of $Q$, $\mathcal{R}$ constructs QBF~$Q'$ with sparse feedback vertex set~$S'$ of~$Q'$ such that~$\Card{S'}$ is in~$\mathcal{O}(\log(\Card{S}))$.
\end{LEM}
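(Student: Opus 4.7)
The plan is to verify the three defining properties of $S'$ separately: its size bound, that $G_{Q'}-S'$ is acyclic (so $S'$ is a feedback vertex set of $G_{Q'}$), and that $S'$ is sparse in $Q'$. The size bound is the easiest part: by construction $S'=\lbvs{S}\cup\lbvvs{S}\cup\{sat\}$, so $\Card{S'}\leq 3\ceil{\log(\Card{S})}+3+1$, immediately in $\mathcal{O}(\log(\Card{S}))$. The rest of the proof is structural analysis of $G_{Q'}-S'$.

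First I would enumerate the edges of $G_{Q'}$ contributed by each of Formulas~(\ref{red:guessatomj})--(\ref{red:usatnotd}) and see which survive the removal of $S'$. Formulas~(\ref{red:guessatomj}),(\ref{red:guessnegatomj}) couple each $x\in S$ only to index and value variables, which are all in $S'$; so in $G_{Q'}-S'$ each $x\in S$ becomes isolated. Formulas~(\ref{red:usatxp})--(\ref{red:usatxnegv}) and~(\ref{red:usatvd}),(\ref{red:usatnot}),(\ref{red:usatnotd}) only involve $\lbvs{S}\cup\lbvvs{S}\cup\{sat\}$-variables besides $sat_i$, so after removal these contribute no surviving edges. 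The only remaining edges in $G_{Q'}-S'$ come from Formulas~(\ref{red:usatv}) and connect some $sat_i$ to a variable $y\in\var(c_i)\setminus S$.

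Next I would show acyclicity by contradiction: suppose $G_{Q'}-S'$ contains a cycle. Since $sat_i$-vertices have no $sat_j$-neighbors (no term contains two such) and $S$-vertices are isolated, any cycle must be of the form $x_1,sat_{i_1},x_2,sat_{i_2},\dots,sat_{i_{r-1}},x_r=x_1$ with $x_j\in \var(Q)\setminus S$ and $x_j,x_{j+1}\in\var(c_{i_j})$. If $r>2$ this yields a cycle in $G_Q-S$, contradicting $S$ being a feedback vertex set of $G_Q$. If $r=2$, we have two distinct clauses $c_{i_1},c_{i_2}$ both containing $\{x_1,x_2\}\subseteq\var(Q)\setminus S$, contradicting the sparsity of $S$. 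Then for sparsity of $S'$: every adjacent pair in $G_{Q'}-S'$ is of the form $(sat_i,y)$ with $y\in\var(c_i)\setminus S$, and such a pair appears together only in the single term from Formula~(\ref{red:usatv}) associated with that literal of $c_i$ — no other term in $\matr(Q')$ simultaneously mentions $sat_i$ and a non-$S$ matrix-variable.

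The main subtlety I expect is precisely the $r=2$ case in the acyclicity argument: without the sparsity assumption on the input set $S$, two distinct clauses sharing a non-$S$ variable pair would induce a length-4 cycle through the corresponding $sat$-vertices, so the lemma would fail. Thus the bookkeeping that must be carried through carefully is to confirm that sparsity of $S$ is used exactly where needed (the $r=2$ case) and that no survivor-edge of $G_{Q'}-S'$ is contributed by any formula other than~(\ref{red:usatv}); both are checked by going through Formulas~(\ref{red:guessatomj})--(\ref{red:usatnotd}) once and observing which variables they touch outside of~$S'$.
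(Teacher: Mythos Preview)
Your proposal is correct and follows essentially the same approach as the paper's proof: both set $S'=\lbvs{S}\cup\lbvvs{S}\cup\{sat\}$, enumerate the formulas to isolate Formula~(\ref{red:usatv}) as the sole source of surviving edges, and then rule out cycles by the same $r>2$ / $r=2$ case split (the latter using sparsity of $S$), before concluding sparsity of $S'$ from the fact that each pair $(sat_i,y)$ occurs in a single term. Your explicit edge enumeration and the remark on why the $r=2$ case is where input sparsity is essential are, if anything, slightly more detailed than the paper's version.
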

\end{restatetheorem}
\begin{proof}\vspace{-.5em}
  As described above, $\mathcal{R}$ gives rise to a sparse feedback vertex set~$S'$ of~${Q'}$. Indeed, (i) $G_{Q'}-S'$ results in an acyclic graph, since each~$x\in S$ is isolated in~$G_{Q'}-S'$ and the only edges remaining in~$G_{Q'}-S'$ involve some~$sat_i$ and~$y$ of~$G_{Q'}-S'$.
  Towards a contradiction, assume that~$G_{Q'}-S$ contains a cycle $x_1,sat_{i_1},x_2,sat_{i_2},\dots ,sat_{i_{r-1}}, x_{r}$ with $x_{r} = x_1$.
  Variables $x_j$ and $x_{j+1}$ are adjacent to~$sat_{i_j}$ in~$G_{Q'}$ only if $x_j,x_{j+1} \in \var(c_i)$, so $x_j$ and $x_{j+1}$ are adjacent in~$G_{Q}$.
So if $r > 2$, we get a cycle in $G_Q-S$, contradicting the assumption that $S$ is a feedback vertex set.
If $r = 2$ the cycle is of the form $x_1,sat_{i_1},x_2,sat_{i_2},x_1$ and there are distinct clauses $c_{i_1}, c_{i_2}$ such that $x_1,x_2 \in \var(c_{i_1})$ and $x_1,x_2 \in \var(c_{i_2})$, contradicting the assumption that $S$ is sparse.
Further, (ii) the only terms where adjacent vertices of~$G_{Q'}-S'$ may occur together is in those of Formulas~(\ref{red:usatv}), since the other DNF terms use at most one variable that is not in $S'$.
This proves that $S'$ is a sparse feedback vertex set of $G_{Q'}$.
Finally,  by construction $\Card{S'}\leq 3\cdot\ceil{\log(\Card{S})}+4$, which is in~$\mathcal{O}(\log(\Card{S}))$.
\end{proof}

\begin{restatetheorem}[lab:runtime]
\begin{THM}[Runtime] 
For a QBF~$Q$ in 3,1-CDNF with $\matr(Q)=C\wedge D$ and set~$S\subseteq\var(Q)$ of variables of~$Q$, $\mathcal{R}$ runs in time~$\mathcal{O}(\ceil{\log(\Card{S}+1)} \cdot (\Card{S} + \Card{C}) + \Card{D})$.
\end{THM}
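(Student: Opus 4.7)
The plan is to bound the runtime by counting, for each formula family (\ref{red:guessatomj})--(\ref{red:usatnotd}) used by $\mathcal{R}$, the number of generated terms/clauses and the size of each. Since $\mathcal{R}$ first reads $Q$ and $S$ and then writes out these formulas one by one together with the fixed sets $\lbvs{S}, \lbvvs{S}, \lsat$ of auxiliary variables in the prefix, the total runtime is dominated by (and proportional to) the total output size, so it suffices to bound that.

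First I would account for the precomputation cost: fixing the ordering of literals in each clause of $C$, fixing a total ordering on $C$, and assigning each pair $(x,j) \in S \times \{1,2,3\}$ a unique combination $\bval{x}{j}$ of $\ceil{\log(\Card{S}+1)}$ literals over the index variables, which is doable in $\mathcal{O}(\ceil{\log(\Card{S}+1)}\cdot\Card{S})$ time. Then I would go through the formulas in order:
(\ref{red:guessatomj}) and (\ref{red:guessnegatomj}) each contribute $3\cdot\Card{S}$ terms of size $\ceil{\log(\Card{S}+1)}+2$, giving $\mathcal{O}(\ceil{\log(\Card{S}+1)}\cdot\Card{S})$;
(\ref{red:usatv}), (\ref{red:usatxv}), (\ref{red:usatxnegv}) each produce $\mathcal{O}(\Card{C})$ terms of constant size;
(\ref{red:usatxp}) produces $\mathcal{O}(\Card{C}\cdot\ceil{\log(\Card{S}+1)})$ terms of constant size, since for each $c_i\in C$, each $j\in\{1,2,3\}$ and each $b\in \bval{x}{j}$ with $\var(\lit{c_i}{j})=x$ we emit one binary term, and $\Card{\bval{x}{j}}=\ceil{\log(\Card{S}+1)}$;
(\ref{red:usatvd}) yields $\mathcal{O}(\Card{D})$ constant-size terms;
and finally (\ref{red:usatnot}) and (\ref{red:usatnotd}) together contribute $\mathcal{O}(\Card{C})$ singleton clauses of $D'$.

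Summing these contributions gives
$\mathcal{O}\bigl(\ceil{\log(\Card{S}+1)}\cdot\Card{S} + \Card{C} + \ceil{\log(\Card{S}+1)}\cdot\Card{C} + \Card{D}\bigr) = \mathcal{O}\bigl(\ceil{\log(\Card{S}+1)}\cdot(\Card{S}+\Card{C}) + \Card{D}\bigr)$,
as claimed. No step poses a real obstacle; the only point that needs care is ensuring the per-term size for families (\ref{red:guessatomj}), (\ref{red:guessnegatomj}) (which carry a full index encoding of length $\ceil{\log(\Card{S}+1)}$) is counted separately from the constant-size binary terms of (\ref{red:usatxp})--(\ref{red:usatxnegv}) and (\ref{red:usatvd}), and that the bookkeeping for computing and writing out each $\bval{x}{j}$ is charged exactly once.
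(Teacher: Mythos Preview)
Your proposal is correct and follows essentially the same approach as the paper: both proofs bound the runtime by enumerating, for each formula family (\ref{red:guessatomj})--(\ref{red:usatnotd}), the number of instances generated and the size of each, then summing to obtain $\mathcal{O}(\ceil{\log(\Card{S}+1)}\cdot(\Card{S}+\Card{C})+\Card{D})$. If anything, your write-up is slightly more explicit than the paper's (e.g., accounting for the precomputation of the encodings $\bval{x}{j}$ and for the prefix output), but the underlying argument is identical.
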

\end{restatetheorem}
\begin{proof}
There are~$\mathcal{O}(\Card{S})$ many instances of Formulas~(\ref{red:guessatomj}), (\ref{red:guessnegatomj}), each of size~$\mathcal{O}(\log(\Card{S}))$.
Further, there are~$\mathcal{O}(\Card{C})$ many instances of constant-size Formulas~(\ref{red:usatv}), (\ref{red:usatxv}), and (\ref{red:usatxnegv}). 
Finally, there are~$\mathcal{O}(\Card{C})$ many instances of Formulas~(\ref{red:usatnot}), whose size is bounded by~$\mathcal{O}(\log(\Card{S}))$,
$\mathcal{O}(\Card{C}\log(\Card{S}))$ many instances of Formulas~(\ref{red:usatxp}) of size 2,
as well as~$\mathcal{O}(\Card{D})$ many constant-size Formulas~(\ref{red:usatvd}).
%
%
%
%
%
%
\end{proof}

\begin{restatetheorem}[lab:corr]
\begin{THM}[Correctness]
  Given a QBF~$Q$ in 3,1-CDNF and a set~$S\subseteq\var(Q)$ of variables of~$Q$, 
  reduction $\mathcal{R}$ computes an instance~$Q'$ that is equivalent to $Q$. In fact, any assignment~$\alpha$ to variables of~$\matr(Q)$ satisfies~$\matr(Q)$ iff every extension~$\alpha'$ of~$\alpha$ to variables~$\lbvs{S} \cup \lbvvs{S} \cup \lsat$ satisfies~$\matr(Q')$.
\end{THM}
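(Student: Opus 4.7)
The plan is to prove both directions by case analysis, relying carefully on the design intent of each family of formulas: Formulas~(\ref{red:guessatomj})--(\ref{red:guessnegatomj}) enforce that for each index $j$, the guessed value $val_j$ agrees with $\alpha$ on the variable of $S$ pointed to by $\bval{\cdot}{j}$; Formulas~(\ref{red:usatv})--(\ref{red:usatxnegv}) certify that, whenever $sat_i$ is ``raised'', clause $c_i\in C$ is satisfied by the underlying assignment~$\alpha$ (either via a variable outside $S$, or via the addressed variable of $S$ together with the corresponding~$val_j$); Formulas~(\ref{red:usatvd}) certify that $D$ is satisfied whenever $sat$ is raised; and Formulas~(\ref{red:usatnot})--(\ref{red:usatnotd}) force at least one of $sat, sat_1,\dots, sat_{|C|}$ to be raised. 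Since the auxiliary variables are innermost-universally quantified in $Q'$, the reduction is correct iff for every extension $\alpha'$ of~$\alpha$ to these auxiliary variables, $\matr(Q')$ evaluates to true.

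\textbf{Direction $(\Rightarrow)$.} Assume $\alpha$ satisfies $C\wedge D$, and fix an arbitrary extension $\alpha'$. Assume for contradiction that $\alpha'$ falsifies both $C'$ and $D'$. Falsifying $D'$ means some singleton clause is falsified, so either $\alpha'(sat_i)=1$ for some $i$, or $\alpha'(sat)=1$. In the first case, since $\alpha'$ falsifies every term of Formulas~(\ref{red:usatxp}), for each $1\le j\le 3$ with $v_j\eqdef\var(\lit{c_i}{j})\in S$ the $j$-th index must exactly address $v_j$, i.e., $\alpha'(\var(b))=\sgn(b)$ for each $b\in\bval{v_j}{j}$. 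Then Formulas~(\ref{red:guessatomj})--(\ref{red:guessnegatomj}), also falsified by~$\alpha'$, force $\alpha'(val_j)=\alpha(v_j)$. Because $\alpha$ satisfies~$c_i$, there is a literal $\ell\in c_i$ made true by~$\alpha$, and the corresponding instance of Formulas~(\ref{red:usatv}), (\ref{red:usatxv}), or~(\ref{red:usatxnegv}) must be satisfied by $\alpha'$, contradicting the assumption. In the second case, since no instance of Formulas~(\ref{red:usatvd}) is satisfied, no singleton~$\{l\}\in D$ is satisfied by~$\alpha'$ (equivalently, by~$\alpha$), contradicting that $\alpha$ satisfies~$D$.

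\textbf{Direction $(\Leftarrow)$.} We argue the contrapositive: given $\alpha$ that does not satisfy $C\wedge D$, we construct a falsifying extension~$\alpha'$. If some clause $c_i\in C$ is not satisfied by~$\alpha$, set $\alpha'(sat_i)\eqdef 1$, $\alpha'(sat)\eqdef 0$, and $\alpha'(sat_{i'})\eqdef 0$ for $i'\neq i$; for each $j\in\{1,2,3\}$ with $v_j=\var(\lit{c_i}{j})\in S$ aim the $j$-th index at $v_j$ by setting $\alpha'(\var(b))\eqdef \sgn(b)$ for every $b\in\bval{v_j}{j}$, and put $\alpha'(val_j)\eqdef \alpha(v_j)$. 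Then Formulas~(\ref{red:guessatomj})--(\ref{red:guessnegatomj}) are falsified by construction of the $val_j$, Formulas~(\ref{red:usatxp}) are falsified because each index hits its target, Formulas~(\ref{red:usatv})--(\ref{red:usatxnegv}) are falsified because $c_i$ is falsified by~$\alpha$, Formulas~(\ref{red:usatvd}) are falsified since $sat=0$, and Formula~(\ref{red:usatnot}) for~$i$ is violated since $sat_i=1$; thus $\alpha'$ falsifies $C'\vee D'$. Otherwise $\alpha$ falsifies every singleton $\{l\}\in D$; set $\alpha'(sat)\eqdef 1$, $\alpha'(sat_i)\eqdef 0$ for all~$i$, and set the index/value variables in any way consistent with Formulas~(\ref{red:guessatomj})--(\ref{red:guessnegatomj}) (e.g., pick an arbitrary $x\in S$ as the target of each index and set $\alpha'(val_j)\eqdef\alpha(x)$). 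A symmetric verification shows every term of $C'$ is falsified and Formula~(\ref{red:usatnotd}) is violated.

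The main obstacle is bookkeeping in Direction~$(\Rightarrow)$, Case~(a): one has to chain the falsification of Formulas~(\ref{red:usatxp}) (fixing the index pointers), then of Formulas~(\ref{red:guessatomj})--(\ref{red:guessnegatomj}) (fixing the values~$val_j$ to match~$\alpha$ on~$S$), and finally use Formulas~(\ref{red:usatv})--(\ref{red:usatxnegv}) to conclude that no literal of~$c_i$ can be true under~$\alpha$, contradicting satisfiability of~$C$. Care must be taken to treat uniformly the case when $\var(\lit{c_i}{j})\notin S$ (handled directly by Formula~(\ref{red:usatv})) and the case when $\var(\lit{c_i}{j})\in S$ (handled by the chain above via Formulas~(\ref{red:usatxv}) or~(\ref{red:usatxnegv}) depending on the sign of the literal).
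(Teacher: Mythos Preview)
Your proposal is correct and follows essentially the same approach as the paper's proof: the same two-case split in each direction (on whether some $sat_i$ or $sat$ is raised, respectively whether $C$ or $D$ is violated), the same chaining through Formulas~(\ref{red:usatxp}) to pin the indices, then Formulas~(\ref{red:guessatomj})--(\ref{red:guessnegatomj}) to pin the values, and finally Formulas~(\ref{red:usatv})--(\ref{red:usatxnegv}) to derive the contradiction. The only cosmetic difference is that in the $(\Leftarrow)$ direction, Case~(a), you restrict the index assignment to those $j$ with $v_j\in S$ without saying how to set the remaining index/value variables; this is harmless (any setting consistent with Formulas~(\ref{red:guessatomj})--(\ref{red:guessnegatomj}), e.g.\ the one you use in Case~(b), works), and the paper is equally informal on this point.
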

\end{restatetheorem}

\begin{proof}
%
%
%
$\Longrightarrow$: Assume that~$Q$ is valid. Then, in the following we show that for any satisfying assignment~$\alpha: \var(Q) \rightarrow \{0,1\}$ of~$C\wedge D$, we have that any extension~$\alpha'$ of~$\alpha$ to variables~$\lbvs{S} \cup \lbvvs{S} \cup \lsat$ is a satisfying assignment of~$\matr(Q')=C'\vee D'$. 
Assume towards a contradiction that there is such an extension~$\alpha'$ with~$C'[\alpha']\neq\{\emptyset\}$ and~$D'[\alpha']\neq\emptyset$.
Then, none of the terms of~$C'$ as given by Formulas~(\ref{red:guessatomj})--(\ref{red:usatvd}) are satisfied by~$\alpha'$ and at least one of the clauses of~$D'$ given by Formulas~(\ref{red:usatnot}) and~(\ref{red:usatnotd}) is not satisfied by~$\alpha'$ as well.
We distinguish two cases.

Case (a): Some of Formulas~(\ref{red:usatnot}) are not satisfied by~$\alpha'$, i.e., $\alpha'(sat_i)=1$ for some~$1\leq i \leq\Card{C}$.
Since~$\alpha'$ does not satisfy any of Formulas~(\ref{red:usatxp}), we have that the
$j$-th index ($1\leq j\leq 3$) precisely targets the $j$-th variable~$v_j\eqdef\var(\lit{c_i}{j})$ of clause~$c_i$, i.e., $\alpha'(\var(\bval{v_j}{j}))=\sgn(\bval{v_j}{j})$. 
Then, since $\alpha'$ satisfies none of Formulas~(\ref{red:guessatomj}) and~(\ref{red:guessnegatomj}), we have that
 assignment~$\alpha'$ sets the value~$val_j$ for the~$j$-th variable~$v_j$ of~$c_i$ that is in~$S$ (with~$v_j\in S$), precisely according to~$\alpha$, i.e., such that
$\alpha'(val_j)=\alpha(v_j)$.
Then, however, $\alpha$ does not satisfy clause~$c_i$ due to the assignment of any variable~$y$ that is in~$S$,
since none of Formulas~(\ref{red:usatxv}) and~(\ref{red:usatxnegv}) are satisfied by~$\alpha'$. 
Finally, since~$c_i$ is still satisfied by~$\alpha$, there is at least one such variable~$y\in \var(c_i)\setminus S$ with~$\alpha(y)=\alpha'(y)$
such that~$\alpha'$ satisfies precisely the instance of Formula~(\ref{red:usatv}), where~$l\in c_i$ is a literal over~$y$, i.e., $\var(l)=y$.
This contradicts the assumption that~$\alpha'$ neither satisfies~$C'$ nor~$D'$, as constructed by Formulas~(\ref{red:guessatomj})--(\ref{red:usatnotd}).

Case (b): Formula~(\ref{red:usatnotd}) is not satisfied by~$\alpha'$, i.e., $\alpha'(sat)=1$. Then, since none of Formulas~(\ref{red:usatvd}) is satisfied by~$\alpha'$, we have that~$\alpha'$ does not satisfy any~$\{l\}\in D$, i.e., $\alpha'(\var(l))\neq \sgn(l)$ for every~$\{l\}\in D$. 
Consequently, $D[\alpha']\neq\{\emptyset\}$ and therefore by construction of~$\alpha'$, we have that~$D[\alpha]\neq\{\emptyset\}$. This, however, contradicts the assumption that~$\alpha$ is a satisfying assignment of~$\matr(Q)$.

$\Longleftarrow$: We show this direction by contraposition, where we take any assignment~$\alpha: \var(Q)\rightarrow\{0,1\}$ that does not satisfy~$\matr(Q)$ and show that then there is an extension~$\alpha'$ of~$\alpha$ to variables~$\lbvs{S} \cup \lbvvs{S} \cup \lsat$ such that~$\alpha'$ does not satisfy $\matr(Q')$.
We proceed again by case distinction.

Case (a): $C[\alpha]\neq\emptyset$ due to at least one clause~$c_i\in C$, i.e., $\{c_i\}[\alpha]\neq\emptyset$.
Then, (i) we set~$\alpha'(sat)\eqdef 0$, $\alpha'(sat_i)\eqdef 1$ as well as $\alpha'(sat_{i'})\eqdef 0$ for any~$1\leq i'\leq \Card{C}$ such that~$i'\neq i$.
Further, for each~$1\leq j\leq 3$ we let~$v_j\eqdef\lit{c_i}{j}$ be the variable of the~$j$-th literal of clause~$c_i$.
Finally, (ii) we let the~$j$-th index point to~$v_j$, i.e., we set~$\alpha'(\var(b))\eqdef \sgn(b)$ for each~$b\in\bval{v_j}{j}$ and every~$1\leq j\leq 3$,
and (iii) we set the value of the~$j$-th index such that~$\alpha'(val_j)\eqdef\alpha(v_j)$.
Consequently, by construction of~$\alpha'$, no instance of Formulas~(\ref{red:guessatomj}) or~(\ref{red:guessnegatomj}) is satisfied by~$\alpha'$.
Since~$\alpha$ does not satisfy $\{c_i\}$, we follow that~$\alpha'$ does not satisfy any instance of Formulas~(\ref{red:usatv}).
Further, since by construction (ii) of~$\alpha'$, the~$j$-th index targets~$v_j$, neither one of Formulas~(\ref{red:usatxp}) can be satisfied by~$\alpha'$.
Similarly, by (iii) no instance of Formulas~(\ref{red:usatxv}) or~(\ref{red:usatxnegv}) is satisfied by~$\alpha'$.
Finally, by construction (i), neither Formulas~(\ref{red:usatvd}) are satisfied by~$\alpha'$, nor is~$D'$, since, e.g.,  not all Formulas~(\ref{red:usatnot}) are satisfied by~$\alpha'$. 

Case (b): $D[\alpha]\neq\{\emptyset\}$, i.e., $\alpha(l)\neq\sgn(l)$ for every~$\{l\}\in D$. In this case, (i) we set~$\alpha'(sat)\eqdef 1$ and (ii) $\alpha'(sat_i)\eqdef 0$ for any~$1\leq i\leq \Card{C}$. Further, for each~$1\leq j\leq 3$ and any arbitrary~$x\in S$ as well as~$b\in \bval{x}{j}$ (iii) we let~$\alpha'(\var(b))\eqdef \sgn(b)$, (iv) as well as~$\alpha'(val_j)\eqdef\alpha(x)$.
Then, by Construction (i) of~$\alpha'$ we have that~$D'[\alpha']\neq\emptyset$ due to Formula~(\ref{red:usatnotd}). Since~$D[\alpha]\neq\{\emptyset\}$, and
despite Construction (i), we have that~$\alpha'$ does not satisfy any of Formulas~(\ref{red:usatvd}). Further, by Construction (ii), neither one of Formulas~(\ref{red:usatv})--(\ref{red:usatxnegv}) is satisfied by~$\alpha'$ as well.
Finally, by Construction (iii) and (iv) neither one of Formulas (\ref{red:guessatomj}) or (\ref{red:guessnegatomj}) is satisfied by~$\alpha'$. Therefore, $C'[\alpha']\neq\{\emptyset\}$, which concludes this case.
\end{proof}


	\begin{restatetheorem}[lab:lb]
\begin{THM}[LB for Sparse Feedback Vertex Set]
Given an arbitrary QBF~$Q$ in CDNF of quantifier depth~$\ell$ and a minimum sparse feedback vertex set~$S$ of~$Q$ with~$k=\Card{S}$. 
Then, under ETH, $\QBFSAT_\ell$ on~$Q$ cannot be decided in time~$\tower(\ell, o(k))\cdot\poly(\Card{\var(Q)})$.
\end{THM}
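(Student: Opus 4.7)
The plan is to proceed by induction on the quantifier depth $\ell$, using the reduction $\mathcal{R}$ of Section~\ref{sec:result} as the inductive engine. Throughout, we may restrict attention to QBFs in 3,1-CDNF, since Corollary~\ref{cor:hardnesslongclause} and the normalization steps after $\mathcal{R}$ allow us to stay within this fragment.

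For the base case $\ell=1$, the statement reduces to ordinary \SAT (or \UNSAT) on a 3-CNF (3-DNF) formula. Here we exploit that the \emph{entire variable set} $\var(Q)$ is trivially a sparse feedback vertex set, since $G_Q-\var(Q)$ is edgeless. Hence there exist instances with $k=\Card{\var(Q)}$, and the standard ETH consequence that \SAT\ on 3-CNF admits no $2^{o(n)}\cdot\poly(n)$ algorithm rules out a $\tower(1,o(k))\cdot\poly(\Card{\var(Q)}) = 2^{o(k)}\cdot\poly(\Card{\var(Q)})$ algorithm.

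For the inductive step, assume the bound holds at depth $\ell-1$ and, aiming at a contradiction, suppose there is an algorithm $A$ for $\QBFSAT_\ell$ running in $\tower(\ell,o(k'))\cdot\poly(n')$ on any depth-$\ell$ CDNF instance with sparse feedback vertex number $k'$ and $n'$ variables. Given an arbitrary depth-$(\ell-1)$ instance $Q$ in 3,1-CDNF with sparse feedback vertex set $S$ of size $k$, we first ensure that the innermost quantifier is $\exists$: if it is $\forall$ we replace $Q$ by its negation $Q^\star$, whose matrix is again in 3,1-CDNF after swapping $C$ and $D$, and solve the flipped decision problem. We then apply the SAW reduction $\mathcal{R}$ to obtain $(Q',S')=\mathcal{R}(Q,S)$. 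By Theorem~\ref{lab:runtime} this takes polynomial time; by Theorem~\ref{lab:corr} the instance $Q'$ is equivalent to $Q$; and by Lemma~\ref{lem:compr} the set $S'$ is a sparse feedback vertex set of $G_{Q'}$ with $\Card{S'}\in\mathcal{O}(\log k)$. Note that $Q'$ has depth $\ell-1+1=\ell$ (a fresh innermost $\forall$-block is introduced by $\mathcal{R}$), so $A$ applies.

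Running $A$ on $Q'$ therefore decides $Q$ in time
\[
\tower(\ell,o(\log k))\cdot\poly(\Card{\var(Q')}) \;=\; \tower\bigl(\ell-1,\,2^{o(\log k)}\bigr)\cdot\poly(\Card{\var(Q)}),
\]
and since $2^{o(\log k)}=k^{o(1)}\subseteq o(k)$, this collapses to $\tower(\ell-1,o(k))\cdot\poly(\Card{\var(Q)})$, contradicting the inductive hypothesis. The main subtleties I expect are (i) verifying that $\mathcal{R}$, together with the DNF$\to$3-DNF normalization sketched after Lemma~\ref{lem:compr}, preserves the 3,1-CDNF form and increases the sparse feedback vertex number only by an additive constant so that Lemma~\ref{lem:compr} still yields an $\mathcal{O}(\log k)$ bound; and (ii) the clean handling of the $\forall$-innermost case via negation, which must preserve both the CDNF shape and the sparse FVS parameter so that the induction goes through uniformly.
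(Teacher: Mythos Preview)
Your proposal is correct and follows essentially the same inductive argument as the paper: base case via ETH using $k\le |\var(Q)|$, and inductive step by applying $\mathcal{R}$ (after negating if the innermost quantifier is $\forall$) together with Lemma~\ref{lem:compr}, Theorem~\ref{lab:runtime}, and Theorem~\ref{lab:corr} to collapse $\tower(\ell,o(\log k))$ to $\tower(\ell-1,o(k))$. Two minor remarks: the reference to Corollary~\ref{cor:hardnesslongclause} is misplaced (staying in 3,1-CDNF is guaranteed by the construction of $\mathcal{R}$ and the 3-DNF normalization paragraph, not by that corollary), and in the base case the relevant fact is simply $k\le |\var(Q)|$ rather than the existence of instances with equality.
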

\end{restatetheorem}
\begin{proof}\vspace{-.28em}
Without loss of generality, we assume~$Q$ in 3,1-CDNF.
The result with quantifier depth~$\ell=1$ corresponds to \SAT and therefore follows immediately by ETH,
since~$k\leq \Card{\var(Q)}$ and due to the fact that ETH implies that there is no algorithm for solving \SAT on 3-CNFs running in time~$2^{o(\Card{\var(Q)})}$.
For the case of~$\ell>1$, we apply induction.
Assume that the result holds for~$\ell-1$. Let~$Q$ be such a QBF of quantifier depth~$\ell-1$.

Case 1: Innermost quantifier~$Q_{\ell-1}$ of~$Q$ is $\exists$.
Then, we apply the reduction~$\mathcal{R}$ on~$(Q, S)$. Thereby, we obtain a resulting instance~$Q'$ and a feedback vertex set~$S'$ of~$G_{Q'}$ in time~$\mathcal{O}(\poly(\Card{\var(Q)})$,
cf.\ Theorem~\ref{lab:runtime}.
The reduction is correct by Theorem~\ref{lab:corr}, i.e., the set of 
satisfying assignments of~$\matr(Q)$ coincides with the set of satisfying assignments of~$\matr(Q')$ when restricted to variables~$\var(Q)$.
Further, we have that~$\Card{S'}\leq 3\cdot\ceil{\log(\Card{S})}$ by Lemma~\ref{lem:compr}.
Assume towards a contradiction that despite ETH, $\QBFSAT$ on~$Q'$ can be decided in time~$\tower(\ell, o(\log(k)))\cdot\poly(\Card{\var(Q')})$.
Then, the validity of~$Q$ can be decided in time~$\tower(\ell-1, o(k))\cdot\poly(\Card{\var(Q)})$, 
contradicting the hypothesis. 

Case 2: Innermost quantifier~$Q_{\ell-1}$ of~$Q$ is~$\forall$. There, we proceed similar to Case 1, but invert~$Q$ first, resulting in a QBF~$Q^\star$, whose quantifier blocks are flipped such that~$\matr(Q^\star)$ is again in 3,1-CDNF. Then, as in Case 1, after applying reduction~$\mathcal{R}$ on~$(Q^\star, S)$, we obtain~$Q'$. The lower bound follows similar to above, since the truth value of~$Q'$ can be simply inverted in constant~time. 
\end{proof}

\begin{restatetheorem}[lab:inc-lb]
\begin{COR}[LB for Incidence Feedback Vertex Set]
Given an arbitrary QBF~$Q$ with~$F=\matr(Q)$ in CNF (DNF) such that the innermost quantifier~$Q_\ell$ of~$Q$ is $Q_\ell=\exists$ ($Q_\ell=\forall$) 
with the feedback vertex number of~$I_{F}$ being~$k$.
Then, under ETH, $\QBFSAT_\ell$ on~$Q$ cannot be decided in time~$\tower(\ell, o(k))\cdot\poly(\Card{\var(Q)})$.
\end{COR}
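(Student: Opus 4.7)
The natural plan is a reduction from Theorem~\ref{lab:lb}: given any QBF $Q'$ in $3{,}1$-CDNF with a minimum sparse feedback vertex set $S'$ of size $k'$, I will construct an equivalent QBF $Q$ whose matrix $F$ is purely in CNF (resp.\ DNF, when $Q_\ell=\forall$) and whose incidence graph $I_F$ admits a feedback vertex set of size $k'+1$. Then any algorithm deciding $\QBFSAT_\ell$ on $Q$ in time $\tower(\ell, o(k))\cdot\poly(\Card{\var(Q)})$ immediately yields an algorithm of the forbidden form for $Q'$, contradicting Theorem~\ref{lab:lb} under ETH.

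For the case $Q_\ell=\exists$, write $\matr(Q')=C\wedge D$ where $C$ is $3$-CNF and $D$ is $1$-DNF (i.e., a disjunction of singleton terms). Because $D$ is logically equivalent to the single clause $f\eqdef\{l\mid\{l\}\in D\}$, replacing $D$ by $f$ gives an equivalent QBF $Q$ with matrix $F\eqdef C\cup\{f\}$ in CNF, keeping the same prefix and variables. The symmetric construction, where I collapse the innermost $1$-CNF into one long term, handles $Q_\ell=\forall$ and produces a pure DNF matrix. This transformation is clearly polynomial and preserves validity.

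The crucial step is to argue that $S\eqdef S'\cup\{f\}$ is a feedback vertex set of $I_F$, so that $k\leq k'+1$. Since $I_F$ is bipartite between variables and clauses, any cycle in $I_F-S$ is of the form $x_1,c_1,x_2,c_2,\dots,x_r,c_r,x_1$ with $x_i\in\var(Q')\setminus S'$ and $c_i\in C$ (note $f$ is removed by $S$). For $r\geq 3$, each consecutive pair $x_i,x_{i+1}$ belongs to $\var(c_i)$, so they are adjacent in the primal graph $G_{Q'}$, producing a cycle in $G_{Q'}-S'$ and contradicting that $S'$ is a feedback vertex set. For $r=2$, the pair $x_1,x_2$ occurs in two distinct clauses $c_1,c_2\in C$, contradicting the sparseness of $S'$. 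The $r=1$ case cannot form a cycle. Thus $I_F-S$ is acyclic.

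Putting it together: since $k=\Card{S}\in\mathcal{O}(\Card{S'})$, an algorithm running in time $\tower(\ell, o(k))\cdot\poly(\Card{\var(Q)})$ for $Q$ would decide $Q'$ in time $\tower(\ell, o(k'))\cdot\poly(\Card{\var(Q')})$, violating Theorem~\ref{lab:lb}. The main technical obstacle is really only the case analysis verifying that $S'\cup\{f\}$ hits every cycle of $I_F$; the sparsity hypothesis on $S'$ from Theorem~\ref{lab:lb} is precisely what rules out the problematic $r=2$ cycles created by the newly inserted long clause. The $Q_\ell=\forall$ case follows by the dual argument (or equivalently by first inverting $Q'$ and applying the $Q_\ell=\exists$ case).
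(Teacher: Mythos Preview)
Your proposal is correct and follows essentially the same route as the paper: collapse the $1$-DNF (resp.\ $1$-CNF) part into a single long clause $f$, take $S:=S'\cup\{f\}$, and verify acyclicity of $I_F-S$ via the same two-case analysis (longer cycles project to a primal cycle outside $S'$; length-$4$ cycles violate sparseness). One small inaccuracy in your closing commentary: the $r=2$ cycles are not ``created by the newly inserted long clause''---$f$ is in $S$ and hence removed---they arise from pairs of clauses already in $C$, and it is exactly the sparsity of $S'$ that excludes them.
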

\end{restatetheorem}
\begin{proof}
We show that the sparse feedback vertex number of any QBF~$Q'$ in 3,1-CDNF linearly bounds the feedback vertex number of~$I_F$, where~$F$ is obtained by converting~$\matr(Q')$ into CNF or DNF, depending on the innermost quantifier of~$Q'$.
Let~$Q'=Q_1 V_1. \cdots Q_\ell V_\ell. F'$ be any QBF in 3,1-CDNF and~$S'$ be a minimum sparse FVS of~$Q'$ with~$k'=\Card{S'}$.   
We construct QBF~$Q\eqdef Q_1 V_1. \cdots Q_\ell V_\ell. F$, with~$F$ being defined below. 

Case 1: $Q_\ell=\exists$ and therefore $F'=C \wedge D$.
We define~$F\eqdef C\cup \{f\}$ in CNF with~$f\eqdef \{l \mid \{l\} \in D\}$ being a clause. It is easy to see that~$S\eqdef S'\cup\{f\}$ is a feedback vertex set of~$I_F$. Assume towards a contradiction that there is a cycle in~$I_F-S$. By the definition of the incidence graph~$I_F$, the cycle alternates between vertices~$\var(F)\setminus S$ and~$F\setminus S$, i.e., either we have: (1) the cycle restricted to vertices~$\var(F)$ is already present in~$G_Q$, which contradicts that~$S$ is a (sparse) FVS of~$G_Q$; or (2) the cycle is of the form $x,c,y,c',x$ where $c$ and $c'$ are distinct clauses, contradicting the assumption that $S$ is sparse. As a result, assuming ETH and that we can decide the validity of~$Q$ in time~$\tower(\ell, o(\Card{S}))\cdot\poly(\Card{\var(Q)})$ contradicts Theorem~\ref{lab:lb}.

Case 2: $Q_\ell=\forall$, i.e., $F'=D \vee C$.
We define~$F\eqdef D\cup \{f\}$ in DNF with~$f\eqdef \{l \mid \{l\} \in C\}$ and proceed as in Case 1.
\end{proof}

\noindent Note that one can obtain a minimum FVS of a graph~$G$ 
in time~$2^{\mathcal{O}(k)}\cdot\poly(|{\var(G)}|)$, where~$k$ is the feedback vertex number of~$G$~\cite[Fig.\ 1]{LiNederlof20}. For computing a minimum sparse FVS of~$Q$, one might adapt standard FVS dynamic programming algorithms, using the stronger parameter treewidth $k'$ that can be $2$-approximated in time~$2^{\mathcal{O}(k')}\cdot\poly(|{\var(G_Q)}|)$~\cite{Korhonen22}. These algorithms can be adapted for sparse FVS and run in time~$2^{\mathcal{O}(k'\cdot\log(k'))}\cdot\poly(|{\var(G_Q)}|)$~\cite{CyganEtAl15}.

QSAT is well-known to be polynomial-time tractable when restricted to 2-CNF formulas~\cite{AspvallPT79}.
As an application of our reduction~$\mathcal{R}$, we now observe that allowing a single clause of arbitrary length already leads to intractability.

\begin{restatetheorem}[cor:hardnesslongclause]
\begin{COR}
The problem~$\QBFSAT$ over QBFs $Q=Q_1 V_1 \cdots Q_\ell V_\ell. C \wedge D$ of quantifier depth~$\ell\geq 2$ with~$Q_\ell=\exists$, $C$ being in 2-CNF, and~$D$ being in 1-DNF, is~$\SIGMA{{\ell-1}}{\Ptime}$-complete (if $Q_1=\exists$, $\ell$ odd) and~$\PI{{\ell-1}}{\Ptime}$-complete (if $Q_1=\forall$, $\ell$ even).
\end{COR}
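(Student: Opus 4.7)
The proof naturally splits into a membership argument and a hardness argument, mirroring the two complexity upper/lower bounds asserted. For membership, the key observation is that satisfiability of a matrix $C \wedge D$ where $C$ is 2-CNF and $D$ is 1-DNF (equivalently, $D$ is a single disjunction of literals $\ell_1 \vee \cdots \vee \ell_m$) is decidable in polynomial time: the matrix is satisfied precisely when there exists some literal $\ell_i \in D$ such that the 2-CNF formula $C \cup \{\{\ell_i\}\}$ is satisfiable, which can be tested in polynomial time via 2-SAT for each of the $m$ candidates. Consequently, for any assignment $\alpha$ to the outer $\ell{-}1$ quantifier blocks, evaluating $\exists V_\ell.(C \wedge D)[\alpha]$ is a polynomial-time check. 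This collapses the innermost quantifier, placing the overall problem in $\SIGMA{\ell-1}{P}$ when $Q_1 = \exists$ (the remaining prefix is $\exists V_1 \forall V_2 \cdots \forall V_{\ell-1}$ with $\ell{-}1$ alternations starting existentially) and symmetrically in $\PI{\ell-1}{P}$ when $Q_1 = \forall$.

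For hardness, the plan is to invoke the structure-aware reduction $\mathcal{R}$ from Section~\ref{sec:result} with the degenerate choice $S = \emptyset$. Inspecting the definition of $\mathcal{R}$, when $S$ is empty the index variables $\lbvs{S}$, the value variables $\lbvvs{S}$, and all families of terms that quantify over some $x \in S$ (namely Formulas~(\ref{red:guessatomj}), (\ref{red:guessnegatomj}), (\ref{red:usatxp}), (\ref{red:usatxv}), (\ref{red:usatxnegv})) vanish. If moreover the input instance has no DNF part (pure CNF matrix), Formulas~(\ref{red:usatvd}) also vanish. What remains are only Formulas~(\ref{red:usatv}), (\ref{red:usatnot}), and (\ref{red:usatnotd}). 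Hence starting from an arbitrary 3-$\QBFCNF$ $Q^\star = Q'_1 V'_1 \cdots \exists V'_{\ell-1}.C'$ of quantifier depth $\ell{-}1$ with innermost $\exists$ (so that $\mathcal{R}$ applies), the output $\overline{Q} = \mathcal{R}(Q^\star, \emptyset)$ is a QBF of depth $\ell$ whose additional innermost block is $\forall$ and whose matrix has the form $\widetilde{C} \vee \widetilde{D}$ with $\widetilde{C}$ a 2-DNF (each term $sat_i \wedge l$ has exactly two literals) and $\widetilde{D}$ a 1-CNF. By Theorem~\ref{lab:corr}, $\overline{Q}$ is equivalent to $Q^\star$.

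To land in the exact syntactic shape the corollary requires, I negate: $\neg \overline{Q}$ flips every quantifier, has innermost $\exists$, and its matrix becomes $\neg \widetilde{C} \wedge \neg \widetilde{D}$, which is a 2-CNF conjoined with a 1-DNF, as required. Evaluating $\neg \overline{Q}$ is polynomial-time equivalent to evaluating $\overline{Q}$, hence to evaluating $Q^\star$. For case 1 ($Q_1 = \exists$, $\ell$ odd), I instantiate $Q^\star$ as a canonical $\PI{\ell-1}{P}$-complete instance (prefix $\forall V'_1 \exists V'_2 \cdots \exists V'_{\ell-1}$, which ends with $\exists$ since $\ell{-}1$ is even), so $\overline{Q}$ is $\PI{\ell-1}{P}$-hard and thus $\neg \overline{Q}$ is $\SIGMA{\ell-1}{P}$-hard. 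For case 2 ($Q_1 = \forall$, $\ell$ even), I instantiate $Q^\star$ as a canonical $\SIGMA{\ell-1}{P}$-complete instance (prefix ending with $\exists$ since $\ell{-}1$ is odd), and the same negation trick yields $\PI{\ell-1}{P}$-hardness for $\neg \overline{Q}$.

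The only subtlety worth flagging is alignment of the quantifier alternation pattern across the negation step: I need to verify carefully that the $\ell{-}1$ outer alternations of $Q^\star$, extended by one innermost $\forall$ block added by $\mathcal{R}$, yield after negation a $\Sigma$- or $\Pi$-prefix of the correct length and start quantifier matching the corollary's statement; once the parities of $\ell$ and $\ell{-}1$ are tracked, this is routine. There is no analytical obstacle beyond this bookkeeping, since the heavy lifting (correctness and structural properties of $\mathcal{R}$ applied to $S = \emptyset$) is already furnished by Theorem~\ref{lab:corr}.
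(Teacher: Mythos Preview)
Your proposal is correct and follows essentially the same approach as the paper: polynomial-time membership via 2-SAT on each literal of $D$, and hardness via $\mathcal{R}(Q^\star,\emptyset)$ (retaining only Formulas~(\ref{red:usatv}), (\ref{red:usatnot}), (\ref{red:usatnotd})) followed by negation to convert the 2-DNF$\,\vee\,$1-CNF matrix into the required 2-CNF$\,\wedge\,$1-DNF shape. Your bookkeeping of the quantifier parities and your explicit treatment of both cases are slightly more detailed than the paper's sketch, but the argument is the same.
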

\end{restatetheorem}
\begin{proof}[Proof (Sketch)]
We only sketch the proof for~$Q_1{=}\exists$ (the proof for the case~$Q_1{=}\forall$ is similar).
For membership, it is sufficient to observe that satisfiability of~$C \land D$ can be checked in polynomial time.
We simply try, for each literal~$l \in D$, whether the 2-CNF formula obtained by assigning $l$ true is satisfiable.
For hardness, let~$Q'=\forall V'_1 \cdots \exists V'_{\ell-1}. C'$ be a QBF with a $\PI{\ell-1}{}$-prefix and $C'$ in (3-)CNF.
Applying the reduction, we obtain an equivalent QBF~$\overline{Q} = \mathcal{R}(Q', \emptyset)$, effectively only using Formulas~(\ref{red:usatv}),(\ref{red:usatnot}),(\ref{red:usatnotd}).
The QBF~$\overline{Q}$ has a matrix~$C \lor D$, where $C$ is in 2-DNF and $D$ in $1$-CNF.
By negating $\overline{Q}$ (and flipping quantifier types), we get a QBF $Q$ with matrix $C \lor D$ where $C$ is in 2-CNF and $D$ in $1$-DNF.
Since $Q'$ was chosen arbitrarily, evaluating $\overline{Q}$ is $\PI{\ell-1}{P}$-hard, and thus evaluating $Q = \neg \overline{Q}$ is~$\SIGMA{{\ell-1}}{\Ptime}$-hard.
\end{proof}

\newpage
\subsection{Hardness Insights \& New Lower Bounds for Treedepth}
\label{appdx:tdpth}

The reduction~$\mathcal{R}_\tdx$ works with restricted treedepth decompositions, which we call \emph{$\alpha$-treedepth decompositions}.
For a given integer~$\alpha$, an $\alpha$-treedepth decomposition~$T$ consists of a path $S$, called \emph{main path}, where each node $v\in S$ has a tree of height at most~$\alpha$ attached to it.
Each leaf~$r$ of these trees is in turn connected to a path~$P_r$ of height at most~$\Card{S}$. Let~$\mathcal{P}$ denote the set of these paths. 

In addition to the variables introduced by~$\mathcal{R}$, we need the following sets of variables, thereby assuming an $\ell$-treedepth decomposition~$T$ of~$G_Q$ for the quantifier depth~$\ell$ of~$Q$.
For every~$P_r\in\mathcal{P}$ we require fresh index variables similar
to~$\lbvs{S}$ that are shared among different paths in~$\mathcal{P}$. We let
$\lbvsl{S}\eqdef \{idx_j^{\loc,1}, \ldots, idx_j^{\loc,\ceil{\log(\max_{P_r\in \mathcal{P}} \Card{P_r})}} \mid 1\leq j\leq 3\}$.
%
These~$2^{\ceil{\log(\max_{P_r\in{\mathcal{P}}} \Card{P_r})}}$ many combinations of variables per index~$j$ are sufficient to address any of the~$\Card{P_r}$ elements of~$P_r$.
Similar to above, we assign each element~$x\in P_r$ and each~$1\leq j\leq 3$ a set consisting of an arbitrary, but fixed \emph{combination of literals (unique within~$P_r$)} over these index variables~$idx_j^{\loc,1}, \ldots, idx_j^{\loc,\ceil{\log(\max_{P_r\in{\mathcal{P}}} \Card{P_r})}}$, which we denote by~$\bvali{x}{j}$.
Further, 
we also require three Boolean variables~$\lbvvsl{S}\eqdef \{val_1^\loc, val_2^\loc, val_3^\loc\}$, 
where~$val_j^\loc$ captures the \emph{truth value} for the element that is addressed via the variables in~$\lbvsl{S}$ for the~$j$-th index.
%
%
%
Finally, in order to set the \emph{context for these index variables} we define a set~$\lsel\eqdef\{sel_r^\loc \mid P_r\in{\mathcal{P}}\}$ of
\emph{selector variables}, where~$sel_r^\loc$ indicates whether the variables in~$\lbvsl{S}$ are used to address elements in~$P_r$. 
The reduction~$\mathcal{R}_\tdx$ takes~$Q$ and~$T$ (which gives rise to $S$ and~$\mathcal{P}$), 
and constructs
an instance~$Q'$ as well as a treedepth decomposition~$T'$ of~${Q'}$.
%
The constructed QBF equals $Q'\eqdef\exists V_1.\ \forall V_2.\ \cdots \exists V_\ell.\ 
\forall 
\lbvs{S}, \lbvvs{S},$
$\lsat, \lbvsl{S}, \lbvvsl{S}, \lsel
.\ C' \vee D', $
%
%
where $C'$
is in DNF, defined as a disjunction of terms:

\vspace{-.5em}
\begingroup
\allowdisplaybreaks
{\smallalign{\normalfont\small}
\begin{align}
	\label{red:guessatomj2}&x \wedge \bigwedge_{b \in \bval{x}{j}} b \wedge \neg val_j&\pushleft{\text{for each } x\in S, 1\leq j\leq 3}\qquad\raisetag{2.6em} \tag{\ref{red:guessatomj}}\\
	\label{red:guessnegatomj2}&\neg x \wedge \bigwedge_{b \in \bval{x}{j}} b \wedge val_j&\pushleft{\text{for each } x\in S, 1\leq j\leq 3}\qquad\raisetag{2.6em}\tag{\ref{red:guessnegatomj}}\\
%
%
%
%
%
	&sat_{i} \wedge l & \pushleft{\text{for each } c_i\in C, 1\leq j\leq 3\text{ with}}\notag\\[-.3em]
	&&\pushleft{\lit{c_i}{j}=l,
\var(l)\in \var(C)\setminus}\notag\\[-.3em]
\label{red:usatv2} && \pushleft{(\hspace{-.35em}\bigcup_{P_r\in{\mathcal{P}}}\hspace{-.6em}P_r \cup S)}\raisetag{1em}\\
	 &sat_{i} \wedge \neg b & \pushleft{\text{for each } c_i\in C, 1\leq j\leq 3, x\in S,}\notag\\[-.3em]
	\label{red:usatxp2}&& \pushleft{b\in \bval{x}{j}
\text{ with }\var(\lit{c_i}{j})=x}\tag{\ref{red:usatxp}}\\
%
	&sat_{i} \wedge val_j & \pushleft{\text{for each } c_i\in C, 1\leq j\leq 3, x\in S}\notag\\[-.3em] 
\label{red:usatxv2} &&\pushleft{\text{with }\lit{c_i}{j}=x}\tag{\ref{red:usatxv}}\raisetag{1.3em}\\
%
%
%
%
%
%
%
%
%
	&sat_{i} \wedge \neg val_j & \pushleft{\text{for each } c_i\in C, 1\leq j\leq 3, x\in S}\notag\\[-.3em]
\label{red:usatxnegv2} &&\pushleft{\text{with }\lit{c_i}{j}=\neg x} \tag{\ref{red:usatxnegv}}\raisetag{1.3em}\\[.5em]
%
%
%
%
%
%
%
	\label{red:usatvd2} &sat \wedge l & \pushleft{\text{for each } \{l\}\in D}\qquad\raisetag{1.3em}\tag{\ref{red:usatvd}} \\[1em]
	\label{red:guessatomloc}&sel_r^\loc \wedge x \wedge \hspace{-.75em}\bigwedge_{b \in \bvali{x}{j}} \hspace{-.75em}b \wedge \neg val_j^\loc&\pushleft{\text{for each } P_r\in\mathcal{P}, x\in P_r, 1{\,\leq\,} j{\,\leq\,} 3}\qquad\raisetag{1.6em}\\
	\label{red:guessnegatomloc}&sel_r^\loc \wedge \neg x \wedge\hspace{-.75em} \bigwedge_{b \in \bvali{x}{j}}\hspace{-.75em} b \wedge val_j^\loc&\pushleft{\text{for each } P_r\in\mathcal{P}, x\in P_r, 1{\,\leq\,}j{\,\leq\,}3}\qquad\raisetag{1.6em}\\
%
%
%
%
%
%
%
%
	&sat_{i} \wedge \neg b & \pushleft{\text{for each } c_i\in C, 1\leq j\leq 3, P_r}\notag\\[-.3em]
	&&\pushleft{\in\mathcal{P},x\in P_r, b\in \bvali{x}{j}
\text{ with}}\notag\\[-.3em]
	\label{red:usatxpl} &&\pushleft{\var(\lit{c_i}{j})=x}\raisetag{1.1em}\\
%
%
	 &sat_{i} \wedge val_j^\loc & \pushleft{\text{for each } c_i\in C, 1\leq j\leq 3, P_r}\notag\\[-.3em] 
	\label{red:usatxvl}&&\pushleft{\in\mathcal{P}, x\in P_r
\text{ with }\lit{c_i}{j}=x}\\
%
%
%
%
%
%
%
%
%
	 &sat_{i} \wedge \neg val_j^\loc & \pushleft{\text{for each } c_i\in C, 1\leq j\leq 3, P_r}\notag\\[-.3em] 
	\label{red:usatxnegvl} &&\pushleft{\in\mathcal{P}, x\in P_r
\text{ with }\lit{c_i}{j}{=}\neg x}\\
	&sat_{i} \wedge \neg sel_r^\loc & \pushleft{\text{for each } c_i\in C, P_r\in\mathcal{P},}\notag\\[-.3em]
	\label{red:usatsel} && \pushleft{\var(c_i)\cap P_r\neq\emptyset}\raisetag{1.1em}
%
%
%
%
%
%
\end{align}}\endgroup

\noindent Analogously to above, we define~$D'$ in 1-CNF, which is a conjunction of the following singletons.
\vspace{-.2em}
{\smallalign{\normalfont\small}
\begin{align}
	\label{red:usatnot2} &\pushleft{\neg sat_i} & \pushright{\makebox[2em]{}\text{for each }1\leq i \leq \Card{C}\hspace{3em}} \tag{\ref{red:usatnot}}\\
	\label{red:usatnotd2} &\pushleft{\neg sat} \tag{\ref{red:usatnotd}} 
\end{align}}
\vspace{-.85em}

Observe that thereby we basically reuse Formulas~(\ref{red:guessatomj2})--(\ref{red:usatvd2}), since Formulas~(\ref{red:usatv2}) are only a slight modification of~(\ref{red:usatv}).
Formulas~(\ref{red:guessatomloc}) and~(\ref{red:guessnegatomloc}) work similarly to~(\ref{red:guessatomj}) and~(\ref{red:guessnegatomj}), but synchronize indices and values for variables on a path in~$P_r\in\mathcal{P}$ and therefore additionally depend on the selector~$sel_r^\loc$ for~$P_r$ being true.
Then, Formulas~(\ref{red:usatxpl})--(\ref{red:usatxnegvl}) work analogously to~(\ref{red:usatxp})--(\ref{red:usatxnegv}), but for a specific path~$P_r\in\mathcal{P}$ (instead of~$S$).
Additionally, we only consider cases, where during determination of satisfiability ($sat_i$) of a clause~$c_i\in C$, also the 
corresponding selector~$sel_r^\loc$ for a path~$P_r\in\mathcal{P}$ holds, if~$P_r$ contains relevant variables. This is ensured by Formulas~(\ref{red:usatsel}).

Note that runtime as stated in Theorem~\ref{lab:runtime} works analogously for~$\mathcal{R}_\tdx$. 
Further, also the proof of correctness very similar to Theorem~\ref{lab:corr}, as observed below.
\begin{PROP}[Correctness]\label{lab:corrtd}
  Given a QBF~$Q$ of quantifier depth~$\ell$ with~$\matr(Q)=C\wedge D$ being in 3,1-CDNF and an $\ell$-treedepth decomposition~$T$ of~$G_Q$ comprising main path~$S$ and paths~$\mathcal{P}$. 
  Then, reduction $\mathcal{R}_\tdx(Q,T)$ computes an instance~$Q'$ that is equivalent to $Q$. In fact, any assignment~$\alpha$ to variables of~$\matr(Q)$ satisfies~$\matr(Q)$ if and only if every extension~$\alpha'$ of~$\alpha$ to variables~$\lbvs{S} \cup \lbvvs{S} \cup \lsat \cup \lbvsl{S} \cup \lbvvsl{S} \cup \lsel$ satisfies~$\matr(Q')$.
\end{PROP}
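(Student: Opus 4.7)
The plan is to mirror the two-direction proof of Theorem~\ref{lab:corr}, with the reduction $\mathcal{R}_\tdx$ being seen as $\mathcal{R}$ plus an additional ``local copy'' of the indexing machinery for each compressed path $P_r\in\mathcal{P}$, coordinated through the selector variables $\lsel$. The key conceptual point is that Formulas~(\ref{red:guessatomj2})--(\ref{red:usatnotd2}) still play exactly the same role as in~$\mathcal{R}$ for variables in $S$ (and for tree variables, which enter only via Formulas~(\ref{red:usatv2}) as ordinary literals), while Formulas~(\ref{red:guessatomloc})--(\ref{red:usatsel}) replicate that role for every $P_r$, guarded by $sel_r^\loc$. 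Thus, once we verify that in any assignment the combined effect of the two indexing layers faithfully forces a single truth value on each addressed variable of each clause $c_i$, correctness will reduce to the analogous case analysis carried out in Theorem~\ref{lab:corr}.

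For the ($\Longrightarrow$) direction, assume $\alpha$ satisfies $\matr(Q)$ and suppose towards a contradiction that some extension $\alpha'$ falsifies $\matr(Q')$. Then no term of $C'$ is satisfied and some clause of $D'$ is violated, so either $\alpha'(sat_i){=}1$ for some $c_i\in C$ (Case a) or $\alpha'(sat){=}1$ (Case b). In Case (a), Formulas~(\ref{red:usatsel}) force $sel_r^\loc{=}1$ for every $P_r$ with $\var(c_i)\cap P_r\neq\emptyset$; Formulas~(\ref{red:usatxp2}) and~(\ref{red:usatxpl}) then force the $j$-th global (respectively local) index to point exactly at the $j$-th variable of $c_i$ when that variable lies in $S$ (respectively in $P_r$); and Formulas~(\ref{red:guessatomj2})--(\ref{red:guessnegatomj2}) together with~(\ref{red:guessatomloc})--(\ref{red:guessnegatomloc}) synchronize the corresponding values $val_j$ and $val_j^\loc$ with $\alpha$. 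Combined with Formulas~(\ref{red:usatv2}), (\ref{red:usatxv2}), (\ref{red:usatxnegv2}), (\ref{red:usatxvl}), (\ref{red:usatxnegvl}), this witnesses that $\alpha$ falsifies $c_i$, contradicting our assumption. Case (b) is unchanged from Theorem~\ref{lab:corr}.

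For the ($\Longleftarrow$) direction, given $\alpha$ falsifying $\matr(Q)$ we construct an extension $\alpha'$ falsifying $\matr(Q')$. If some clause $c_i\in C$ is unsatisfied, set $sat_i{=}1$, $sat{=}0$ and every other $sat_{i'}{=}0$; for each literal $\lit{c_i}{j}$ whose variable $v_j$ lies in $S$, point the $j$-th global index at $v_j$ via $\bval{v_j}{j}$ and set $val_j{=}\alpha(v_j)$; for each $v_j$ lying in some path $P_r$, set $sel_r^\loc{=}1$, point the $j$-th local index at $v_j$ via $\bvali{v_j}{j}$, and set $val_j^\loc{=}\alpha(v_j)$; set $sel_{r'}^\loc{=}0$ for every path not intersecting $\var(c_i)$, and handle tree-variables through Formulas~(\ref{red:usatv2}) exactly as in Theorem~\ref{lab:corr}. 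The case where $D$ is unsatisfied by $\alpha$ is again a direct adaptation of the original argument, now setting all selectors to $0$ so none of Formulas~(\ref{red:guessatomloc})--(\ref{red:usatxnegvl}) fires.

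The main obstacle will be verifying the construction of $\alpha'$ in the $(\Longleftarrow)$ direction when a single clause $c_i$ has variables distributed across several paths $P_r,P_{r'},\dots$: the local index variables $\lbvsl{S}$ and the values $\lbvvsl{S}$ are \emph{shared} across paths, so one has to check that pointing index $j$ at $v_j\in P_r$ does not accidentally also satisfy an instance of Formula~(\ref{red:guessatomloc}) or~(\ref{red:guessnegatomloc}) for some other path $P_{r'}$ with $sel_{r'}^\loc{=}1$. This reduces to showing that whenever a literal-combination $\bvali{v_j}{j}$ happens to coincide with $\bvali{v'}{j}$ for some $v'\in P_{r'}$, we have $\alpha(v_j)=\alpha(v')$, which in turn can be ensured by choosing the (arbitrary but fixed) per-path codings $\bvali{\cdot}{j}$ so that across paths only variables sharing the same quantifier block and the same intended role can collide; alternatively, one may use the fact that any such spurious coincidence corresponds to an unused codepoint in the shorter paths, which can be reserved as a dedicated ``dummy'' pattern. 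Once this coordination lemma is in place, the rest of the verification proceeds formula-by-formula in direct analogy with the proof of Theorem~\ref{lab:corr}.
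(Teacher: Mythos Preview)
Your two-direction case analysis following Theorem~\ref{lab:corr} is correct and is exactly what the paper does (in sketch form). However, the ``main obstacle'' you identify in the $(\Longleftarrow)$ direction is spurious, and the fix you propose for it is both unnecessary and unsound.

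The situation ``a single clause $c_i$ has variables distributed across several paths $P_r,P_{r'},\dots$'' cannot occur. In any treedepth decomposition, every edge of the graph connects an ancestor to a descendant, so the (at most three) variables of $c_i$ all lie on a single root-to-leaf path of $T$. In an $\ell$-treedepth decomposition this path traverses a prefix of $S$, then part of one attached tree down to one leaf $r$, and then into $P_r$. Hence $\var(c_i)$ meets at most one $P_r$. In your construction of $\alpha'$ you therefore set exactly one selector $sel_r^\loc$ to $1$ (or none, if $c_i$ avoids all paths), and there is no cross-path collision to worry about: Formulas~(\ref{red:guessatomloc}) and~(\ref{red:guessnegatomloc}) for any other $P_{r'}$ are killed by $sel_{r'}^\loc=0$. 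For indices $j$ with $v_j\notin P_r$ you simply point the $j$-th local index at some arbitrary element of $P_r$ and set $val_j^\loc$ to its $\alpha$-value, exactly as in the original proof's Case~(b).

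Your proposed remedy---choosing the per-path codings $\bvali{\cdot}{j}$ so that a collision $\bvali{v_j}{j}=\bvali{v'}{j}$ across paths can only happen when $\alpha(v_j)=\alpha(v')$---could not work anyway: the codings are fixed by the reduction, while $\alpha$ is an arbitrary assignment given afterwards, so no choice of codings can force such an equality. The paper sidesteps all of this by observing (in the forward direction) that if $\alpha'$ happens to set \emph{several} selectors to $1$, this only makes it easier to satisfy some term of~(\ref{red:guessatomloc}) or~(\ref{red:guessnegatomloc}), and (in the backward direction) by using the structural fact above.
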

\begin{proof}[Proof (Sketch)]
Observe that the correctness proof is almost identical to the one of Theorem~\ref{lab:corr}.
The index variables in~$\lbvsl{S}$ and corresponding values~$\lbvvsl{}$ behave analogously (as~$\lbvs{S}$ and~$\lbvvs{}$ for the set~$S$).
The only difference lies in the additional selector variables~$\lsel$, where, crucially, Formulas~(\ref{red:usatsel})
ensure that satisfiability of a clause~$c_i$ is vacuously given whenever~$sat_i$ is set to~$1$,
but the corresponding selector variable~$sel_r^\loc$ for a path~$P_r\in\mathcal{P}$
containing variables of~$c_i$ is set to~$0$.
Note that setting more than one selector variables to~$1$ is also not an issue, since 
then, compared to one variable, 
potentially even more instances of Formulas~(\ref{red:guessatomloc}) and~(\ref{red:guessnegatomloc}) are~satisfied. 
\end{proof}
%

The missing ingredient for the lower bound is structure-awareness,
formalized as follows.

\begin{LEM}[Decreasing Treedepth
]\label{lab:sawpdp}
Given a QBF~$Q$ of quantifier depth~$\ell$ with~$\matr(Q)=C\wedge D$ being in 3,1-CDNF and an $\ell$-treedepth decomposition~$T$ of $G_Q$, comprising main path~$S$ and paths~$\mathcal{P}$, with $h=\Card{S}$. Then, reduction~$\mathcal{R}_\tdx$ on~$Q$ and~$T$ constructs a QBF~$Q'=\mathcal{R}_\tdx(Q,T)$ with an $(\ell{+}1)$-treedepth decomposition~$T'$ of~$G_{Q'}$ of height 
$\mathcal{O}(\log(h){+}\ell)$.
\end{LEM}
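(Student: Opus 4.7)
The plan is to construct $T'$ in three layers: the compressed global auxiliary variables on top as the main path, the preserved middle-tree structure in the middle, and the selectors together with the compressed path-attachments at the bottom. The $O(\log h)$ term in the height arises from stacking the $\lbvs{S}, \lbvvs{S}, \lbvsl{S}, \lbvvsl{S}$ variables along the main path of $T'$, whereas the $\ell$ term comes from preserving each middle tree $T_v$ essentially verbatim.

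First, I will read off $G_{Q'}$ from the terms~(\ref{red:guessatomj2})--(\ref{red:usatvd2}) and (\ref{red:guessatomloc})--(\ref{red:usatsel}) produced by $\mathcal{R}_\tdx$. Inspecting them yields the following adjacencies: each $x \in S$ touches only vertices in $\lbvs{S}\cup\lbvvs{S}$; each $x \in P_r$ touches only $\lbvsl{S}\cup\lbvvsl{S}\cup\{sel_r^\loc\}$; each middle-tree variable outside $S\cup\bigcup_r P_r$ is adjacent only to the witnesses $sat_i$ of the clauses containing it; and each $sat_i$ is adjacent only to the (global and local) index/value variables encoding the variables of $c_i$, to the middle-tree variables of $c_i$, to $sel_r^\loc$ for the (at most one) path $P_r$ meeting $c_i$, and possibly to $sat$ via Formula~(\ref{red:usatvd2}). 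Two structural facts drive the placement: no two middle-tree variables and no two $P_r$-variables ever occur together in a term, so they are pairwise non-adjacent in $G_{Q'}$; and since the variables of any clause of $Q$ lie on a single root-to-leaf chain of $T$ while distinct paths $P_r$ occupy pairwise incomparable subtrees, each clause meets at most one $P_r$.

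Next, I will define $T'$ by taking as its main path the chain consisting of $sat$ followed by all variables of $\lbvs{S}, \lbvvs{S}, \lbvsl{S}, \lbvvsl{S}$ in any order, of total length $6\lceil\log h\rceil+7 \in O(\log h)$. Below the bottom of this main path, for each $v\in S$, I will attach a subtree whose root is $v$; below $v$ I will place the middle tree $T_v$ exactly as in $T$; and at every leaf $r$ of $T_v$ I will put the selector $sel_r^\loc$, with the variables of $P_r$ attached as independent leaves below $sel_r^\loc$. Every witness $sat_i$ is placed as follows: if $c_i$ contains a variable of some $P_r$, attach $sat_i$ as a child of $sel_r^\loc$; otherwise, if $c_i$ has middle-tree variables, attach $sat_i$ as a child of the deepest such variable in $T_v$; otherwise attach $sat_i$ directly under the main-path bottom. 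Correctness is then a routine verification: main-path vertices dominate every $sat_i$ and every $sel_r^\loc$; the $S$- and $P_r$-variables sit below their only neighbours; and for each clause $c_i$ the chain structure of $c_i$ in $T_v$, together with the unique $P_r$ meeting $c_i$ when applicable, ensures that $sat_i$ becomes a descendant of every one of its neighbours in $G_{Q'}$.

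Finally, I will bound the height: the main path contributes $O(\log h)$; below any main-path node the subtree consists of $v$ (one level), $T_v$ (at most $\ell$ levels), $sel_r^\loc$ (one level), and the $P_r$-variables together with the witnesses attached at $sel_r^\loc$ (one level), giving total height $O(\log h)+\ell+O(1)=O(\log(h)+\ell)$. The middle portion $\{v\}\cup T_v\cup\{sel_r^\loc\}$ has height exactly $\ell+1$, which fits the format of an $(\ell{+}1)$-treedepth decomposition with the $P_r$-variables playing the role of the paths attached at the leaves of the middle trees. The main obstacle in the argument is the ``at most one $P_r$ per clause'' observation: this is precisely what makes the single-selector placement of $sat_i$ respect every clause edge simultaneously, so that the shared local index/value machinery truly compresses all the $P_r$'s together without blowing up the depth of $T'$.
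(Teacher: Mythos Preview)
Your approach is essentially the same as the paper's: stack the global auxiliaries $S'=\lbvs{S}\cup\lbvvs{S}\cup\{sat\}\cup\lbvsl{S}\cup\lbvvsl{S}$ into the new main path, hang the original middle trees below it, append the selectors $sel_r^\loc$ at their leaves, fan out the $P_r$-variables under the selectors, and place each $sat_i$ at the deepest relevant position. You also single out the right structural observation---that every clause meets at most one $P_r$---which is exactly what makes the placement of $sat_i$ under a single $sel_r^\loc$ sound.

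Two minor points where your write-up diverges from the paper and could be tightened. First, you nest $T_v$ \emph{below} $v\in S$, whereas in $G_{Q'}$ no $v\in S$ is adjacent to any middle-tree variable (the $S$-variables appear only in Formulas~(\ref{red:guessatomj2}),(\ref{red:guessnegatomj2})). The paper therefore attaches the $S$-variables and the roots of the $T_v$'s as \emph{siblings} under the bottom of the main path; this shaves one level and makes the middle portion genuinely of height $\leq \ell{+}1$ rather than $\ell{+}2$ (your count ``$v$ one level, $T_v$ at most $\ell$, $sel_r^\loc$ one level'' actually gives $\ell{+}2$). Second, you omit the 3-DNF normalization of Formulas~(\ref{red:guessatomj2}),(\ref{red:guessnegatomj2}),(\ref{red:guessatomloc}),(\ref{red:guessnegatomloc}), which the paper treats explicitly: each such term has $\mathcal{O}(\log h)$ literals, and splitting it into 3-DNF attaches a chain of $\mathcal{O}(\log h)$ fresh auxiliaries to every $x\in S$ and every $x\in P_r$. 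These chains are precisely what become the ``paths at leaves'' in the resulting $(\ell{+}1)$-treedepth decomposition (and the $\mathcal{P}$ of the next iteration); since the $S$- and $P_r$-variables are pairwise incomparable in $T'$, the chains add only $\mathcal{O}(\log h)$ to the total height. A small slip: Formula~(\ref{red:usatvd2}) makes $sat$ adjacent to the literals of $D$, not to any $sat_i$; this is harmless since you place $sat$ on the main path anyway.
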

\begin{proof}
We show how~$\mathcal{R}_\tdx$ gives rise to an $(\ell{+}1)$-treedepth decomposition~$T'$ of height~$\mathcal{O}(\log(h){+}\ell)$, cf., Figure~\ref{fig:sketchtd}, depicting the special case of~$\mathcal{P}=\emptyset$. 
%
To this end, let~$S'\eqdef\lbvs{}\cup \lbvvs{} \cup\{sat\} \cup \lbvsl{} \cup \lbvvsl{}$ consist of auxiliary variables 
obtained by~$\mathcal{R}_\tdx$. 
We construct~$T'\eqdef(\var(Q'), F')$, where 
we assume any fixed total ordering~$s_1, \ldots, s_{\Card{S'}}$ among vertices~$s_i$ in~$S'$ with~$1\leq i\leq \Card{S'}$.
Let therefore~$F'$ be as follows: (i) For every~$1\leq i\leq \Card{S'}$, we define~$(s_i, s_{i+1})\in F'$, i.e., vertices of $S'$ are linked.
Additionally, (ii) we link variables in~$S$ to those in~$S'$, where for every~$s\in S$ we let~$(s_{\Card{S'}},s)\in F'$.
Then, (iii) we preserve edges within~$\var(Q) \setminus (\bigcup_{P_r\in\mathcal{P}} P_r \cup S)$, i.e., for every~$u,v\in\var(Q)\setminus (\bigcup_{P_r\in\mathcal{P}} P_r \cup S)$ with~$(u,v)\in F$, 
we let~$(u,v)\in F'$. 
Further, (iv) for every~$v\in\var(Q) \setminus (\bigcup_{P_r\in\mathcal{P}} P_r \cup S)$ with~$(s,v)\in F$ for some~$s\in S$, we construct~$(s_{\Card{S'}}, v)\in F'$.
We (v) link variables in~$\lsel{}$ to~$T'$, where for every~$P_r\in \mathcal{P}$ and~$x\in P_r$, we define $(r,sel_r^\loc)\in F'$ as well as~$(sel_r^\loc, x)\in F'$.
Finally, (vi) the variables in~$\lsat$ will be fresh leaves in~$F'$, where we define for every~$sat_j\in \lsat$ with~$c_j\in C$,
a unique descendant node of~$s_{\Card{S'}}$ in~$F'$: (1) if~$\var(c_j)\cap P_r\neq \emptyset$ for some~$P_r\in\mathcal{P}$, we require $(sel_r^\loc,sat_j)\in F'$, (2) if $\var(c_j)\subseteq S$, we define~$(s_{\Card{S'}}, sat_j)\in F'$, and (3) otherwise, let~$x\in\var(c_j)\setminus S$ be the variable such that $\var(c_j)\setminus S$ comprises only~$x$ and ancestors of~$x$ in $F$; then, we define~$(x, sat_j)\in F'$. 
Observe that by construction of~$T'$, the height of~$T'$ is bounded by~$\mathcal{O}(\log(h)+\ell)$,
since~$\Card{S'}\in\mathcal{O}(\log(h))$ and due to the height of~$T'-S'$ being in~$\mathcal{O}(\log(h)+\ell)$ since~$\Card{P_r}\in\mathcal{O}(h)$ for every~$P_r\in\mathcal{P}$. 
Further, note that the normalization of each term in Formulas~(\ref{red:guessatomj}) and~(\ref{red:guessnegatomj}) from DNF to 3-DNF as described in Section~\ref{sec:result} adds two paths of height~$\mathcal{O}(\log(h))$ to every~$x\in S$. 
Analogously, converting a term from Formulas~(\ref{red:guessatomloc}) and~(\ref{red:guessnegatomloc}) from DNF to 3-DNF adds two paths of height~$\mathcal{O}(\log(h))$ to every~$x\in P_r$ with~$P_r\in\mathcal{P}$. Since the variables in~$\bigcup_{P_r\in \mathcal{P}}P_r\cup S$ are not in an ancestor/descendant relationship, normalization preserves the height of~$T'$ ($\mathcal{O}(\log(h)+\ell)$).
Finally, note that $T'$ is indeed a well-defined ($\ell{+}1$)-treedepth decomposition of~$G_{Q'}$ and one can verify the formulas of~$\mathcal{R}_\tdx$ 
as follows. 
Formulas~(\ref{red:guessatomj}) and~(\ref{red:guessnegatomj}) are handled by (i) and (ii), i.e.,
variables of these terms are guaranteed to be in an ancestor/descendant relationship in~$T'$ by (i) and (ii).
By (iii), (iv), and (vi) we ensure that variables of Formulas~(\ref{red:usatv2}) are ancestors/descendants in~$T'$.
Observe that Formulas~(\ref{red:usatnot})--(\ref{red:usatnotd}) do not cause edges in~$G_{Q'}$,
Then, Formulas~(\ref{red:usatxp})--(\ref{red:usatxnegv}) are treated by (i) and (vi), and Formulas~(\ref{red:usatvd})
are handled by (i)--(v).
Formulas~(\ref{red:guessatomloc}) and~(\ref{red:guessnegatomloc}) are covered by (i) and (v). Further, Formulas~(\ref{red:usatxpl})--(\ref{red:usatxnegvl}) are treated by (i) and (vi) as well.
Finally, Formulas~(\ref{red:usatsel}) are covered by~$T'$ due to (v),
which completes the proof.
%
%
%
%
%
%
%
%
\end{proof}

\begin{restatetheorem}[lab:lb_pdp]
\begin{THM}[LB for $\ell$-Treedepth Decompositions] 
Given an arbitrary QBF~$Q$ in CDNF of quantifier depth~$\ell$ and an $\ell$-treedepth decomposition~$T$ of~$G_Q$ of height~$k=\pd{G_Q}$. 
Then, under ETH, $\QBFSAT_\ell$ on~$Q$ cannot be decided in time~$\tower(\ell, o(k-\ell))\cdot\poly(\Card{\var(Q)})$.
\end{THM}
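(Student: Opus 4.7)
The plan is to proceed by induction on the quantifier depth~$\ell$, mirroring the structure of the proof of Theorem~\ref{lab:lb} but driven by the treedepth-specific reduction~$\mathcal{R}_\tdx$ together with Proposition~\ref{lab:corrtd} (correctness) and Lemma~\ref{lab:sawpdp} (structure-awareness). Without loss of generality we first normalize~$Q$ into $3,1$-CDNF. For the base case~$\ell=1$, the problem is \SAT, so since the treedepth~$k$ is bounded by~$\Card{\var(Q)}$ and $o(k-1)=o(k)$, any algorithm running in $2^{o(k-1)}\cdot\poly(\Card{\var(Q)})$ would contradict ETH on 3-CNF \SAT.

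For the inductive step, suppose the statement holds for quantifier depth~$\ell-1$. Let~$Q$ be any QBF of quantifier depth~$\ell-1$ together with an $(\ell-1)$-treedepth decomposition $T$ of $G_Q$ with main path~$S$ of height~$h\leq k=\pd{G_Q}$. We first focus on Case~1, where the innermost quantifier~$Q_{\ell-1}$ is~$\exists$, so that $\mathcal{R}_\tdx$ applies directly. Invoking $\mathcal{R}_\tdx(Q,T)$ produces in polynomial time an equivalent QBF~$Q'$ of quantifier depth~$\ell$ (by Proposition~\ref{lab:corrtd}) together with an $\ell$-treedepth decomposition~$T'$ of $G_{Q'}$ of height $k'\in\mathcal{O}(\log h + \ell-1)\subseteq\mathcal{O}(\log k+\ell)$ (by Lemma~\ref{lab:sawpdp}); the running time of the reduction is polynomial in $\Card{\var(Q)}$ following the same bookkeeping as in Theorem~\ref{lab:runtime}.

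Assume, for contradiction, that there exists an algorithm deciding $\QBFSAT_\ell$ on depth-$\ell$ instances in time $\tower(\ell, o(k'-\ell))\cdot\poly(\Card{\var(Q')})$. Applying this algorithm to~$Q'$ yields a decision for~$Q$ in time
\[
\tower(\ell,\,o(k'-\ell))\cdot\poly(\Card{\var(Q)}) \;=\; \tower(\ell,\,o(\log h))\cdot\poly(\Card{\var(Q)}),
\]
because $k'-\ell\in\mathcal{O}(\log h)$ after the $+\ell$ terms cancel. Using the identity $\tower(\ell,\log x)=\tower(\ell-1,x)$, we rewrite $\tower(\ell,o(\log h))$ as $\tower(\ell-1,h^{o(1)})\subseteq\tower(\ell-1,o(h))\subseteq\tower(\ell-1,o(k-(\ell-1)))$, where the last inclusion uses $h\leq k$ and the fact that $k-(\ell-1)=\Theta(k)$ for growing~$k$. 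This directly contradicts the induction hypothesis for depth~$\ell-1$. Case~2, where $Q_{\ell-1}=\forall$, is handled by inverting~$Q$ (flipping every quantifier and negating the matrix, re-normalizing to $3,1$-CDNF) before applying $\mathcal{R}_\tdx$; decidability is preserved up to a constant-time complementation, so the same contradiction goes through.

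The main delicate step will be the parameter bookkeeping in the inductive reduction: one has to check that the additive $+\ell$ in Lemma~\ref{lab:sawpdp} is precisely absorbed by the $-\ell$ in the target runtime bound $\tower(\ell,o(k-\ell))$, so that after one application of $\mathcal{R}_\tdx$ the surviving ``useful'' parameter is exactly $\log h$, which then collapses by one tower level to~$h$ (and hence to $k-(\ell-1)$) to match the induction hypothesis. Apart from this calibration, the argument is essentially a mechanical replay of Theorem~\ref{lab:lb} using the treedepth-aware ingredients, so no fundamentally new combinatorics is required beyond what $\mathcal{R}_\tdx$, Proposition~\ref{lab:corrtd}, and Lemma~\ref{lab:sawpdp} already provide.
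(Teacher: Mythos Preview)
Your proposal is correct and follows essentially the same approach as the paper's own proof: induction on~$\ell$ with the ETH base case for \SAT, then in the inductive step applying~$\mathcal{R}_\tdx$ (with the two cases on the innermost quantifier, inverting when it is~$\forall$) and invoking Proposition~\ref{lab:corrtd}, Theorem~\ref{lab:runtime}, and Lemma~\ref{lab:sawpdp} to derive a contradiction to the hypothesis at depth~$\ell-1$. Your write-up is actually slightly more explicit than the paper's in spelling out the tower arithmetic (the identity $\tower(\ell,\log x)=\tower(\ell-1,x)$ and the absorption of the additive~$+\ell$ term), whereas the paper simply asserts the resulting bound.
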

\end{restatetheorem}
\begin{proof}
Without loss of generality, we assume~$Q$ in 3,1-CDNF.
The result with quantifier depth~$\ell=1$ corresponds to \SAT and therefore follows immediately by ETH,
since~$k\leq \Card{\var(Q)}$ and due to the fact that ETH implies that there is no algorithm for solving \SAT on 3-CNFs running in time~$2^{o(\Card{\var(Q)})}$.
For the case of~$\ell>1$, we apply induction.
Assume that the result holds for~$\ell-1$. 

Case 1: Innermost quantifier~$Q_{\ell}$ of~$Q$ is $\exists$.
Then, we apply the reduction~$\mathcal{R}_\tdx$ on~$(Q, T)$, where $S$ is the main path of~$T$. 
Thereby, we obtain a resulting instance~$Q'$ and a treedepth decomposition~$T'$ of~$G_{Q'}$ in time~$\mathcal{O}(\poly(\Card{\var(Q)})$,
cf.\ Theorem~\ref{lab:runtime}.
%
The reduction is correct by Proposition~\ref{lab:corrtd}, i.e., the set of 
satisfying assignments of~$\matr(Q)$ coincides with the set of satisfying assignments of~$\matr(Q')$ when restricted to variables~$\var(Q)$.
Further, we have that the height~$k'$ of~$T'$ is bounded by~$\mathcal{O}({\log(\Card{S})+\ell})$ by Lemma~\ref{lab:sawpdp}.
Assume towards a contradiction that despite ETH, $\QBFSAT$ on~$Q'$ can be decided in time~$\tower(\ell, o(k'-\ell))\cdot\poly(\Card{\var(Q')})$.
Then, the validity of~$Q$ can be decided in time~$\tower(\ell-1, o(k))\cdot\poly(\Card{\var(Q)})$, 
contradicting the induction hypothesis. 

Case 2: Innermost quantifier~$Q_\ell$ of~$Q$ is~$\forall$. We proceed similar to Case 1, but invert~$Q$ first, resulting in a QBF~$Q^\star$, whose quantifier blocks are flipped such that~$\matr(Q^\star)$ is in 3,1-CDNF. Then, as in Case 1, after applying reduction~$\mathcal{R}_\tdx$ on~$(Q^\star, S)$, we obtain~$Q'$. The lower bound follows similar to above, since the validity of~$Q'$ can be simply inverted in constant time. 
\end{proof}

\begin{OBS}\label{obs:exps}
  There exists a constant~$c$ such that for any integer~$\ell \geq 1$,
  we have that $\tower(\floor{\log(\ell)},0) \geq c \cdot \ell$.
\end{OBS}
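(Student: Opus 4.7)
The plan is to prove this observation in two stages: first establish, by induction on $i$, the auxiliary bound $\tower(i, 0) \geq 2^{i-1}$ for all $i \geq 2$, and then plug in $i = \floor{\log(\ell)}$ to deduce the desired linear lower bound in $\ell$. The base case $i = 2$ is the direct computation $\tower(2, 0) = 2^{2^0} = 2 = 2^{2-1}$. For the inductive step, I will use the elementary inequality $2^x \geq 2x$ for all $x \geq 2$ (immediate since $2^2 = 4 = 2 \cdot 2$ and $2^x - 2x$ is increasing for $x > 1/\ln 2$); applied with $x = \tower(i, 0) \geq 2^{i-1} \geq 2$, this yields
\[
  \tower(i+1, 0) \;=\; 2^{\tower(i, 0)} \;\geq\; 2 \cdot \tower(i, 0) \;\geq\; 2 \cdot 2^{i-1} \;=\; 2^i,
\]
as required.

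To finish, given any $\ell \geq 4$ I set $i = \floor{\log(\ell)} \geq 2$, so that $2^i \leq \ell < 2^{i+1}$ and hence $2^{i-1} > \ell/4$. Combining with the auxiliary bound yields $\tower(\floor{\log(\ell)}, 0) \geq 2^{i-1} > \ell/4$, so the constant $c = 1/4$ already works for every $\ell \geq 4$. The remaining small values $\ell \in \{1, 2, 3\}$ are disposed of by direct inspection: $\tower(1, 0) = 1$ covers $\ell \in \{2, 3\}$ for any $c \leq 1/3$, while the degenerate case $\ell = 1$ only imposes $0 \geq c$, which is why the statement is most naturally read with $c$ being any real constant (equivalently, in the $\Omega(\ell)$ formulation employed in the long-version proof). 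In particular, shrinking $c$ as needed gives a single constant that works for all $\ell \geq 1$ under either reading.

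The only step that merits any attention is the doubling inequality $2^x \geq 2x$ for $x \geq 2$, which underpins the exponential growth ensuring that $\tower$ outpaces any linear function after a few iterations. Everything else reduces to arithmetic and the monotonicity of $\tower$ in its first argument, so I do not anticipate any serious obstacle; the main care is just to handle the first few values of $\ell$ cleanly so that the chosen $c$ is valid throughout.
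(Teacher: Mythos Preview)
Your proof is correct and takes a genuinely different route from the paper. The paper attempts a direct induction on $\ell$, using monotonicity of $\floor{\log(\cdot)}$ to carry the bound from $\ell-1$ to $\ell$; but as written that argument is at best a sketch—the stated induction hypothesis $\tower(\floor{\log(\ell{-}1)},0)\geq \ell-1$ already fails at $\ell=3$, where one only has $\tower(1,0)=1$. Your approach—inducting instead on the tower height $i$ to establish the explicit bound $\tower(i,0)\geq 2^{i-1}$ via the elementary doubling inequality $2^x\geq 2x$, and then substituting $i=\floor{\log\ell}$—is a self-contained argument that yields a concrete constant ($c=1/4$ for $\ell\geq 4$). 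You also correctly flag the degenerate case $\ell=1$, where $\tower(0,0)=0$ forces $c\leq 0$ under a literal reading; this is consistent with the paper's own $\Omega(\ell)$ phrasing in the long version.
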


\begin{restatetheorem}[lab:lb_tdp]
\begin{COR}[LB for Treedepth]
There exists a linear function~$f$ and an integer~$\ell_0>0$ such that for any
given QBF~$Q$ in 3,1-CDNF of quantifier depth~$\ell\geq \ell_0$, the following holds: 
Under ETH, 
$\QBFSAT_\ell$ on~$Q$ cannot be decided in time~$\tower(f(\ell), o(\td{G_Q}))\cdot\poly(\Card{\var(Q)})$.
\end{COR}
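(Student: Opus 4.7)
My plan is to derive Corollary~\ref{lab:lb_tdp} from Theorem~\ref{lab:lb_pdp} by a case analysis on the relation between $\ell$ and $k := \td{G_Q}$, using Observation~\ref{obs:exps} to bridge the two regimes. Suppose for contradiction that some algorithm $A$ decides $\QBFSAT_\ell$ in time $\tower(f(\ell), g(k)) \cdot \poly(\Card{\var(Q)})$ for some $g \in o(k)$, where the linear function $f$ is still to be fixed. I will derive a violation of Theorem~\ref{lab:lb_pdp} by applying $A$ to the hard instances it produces.

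In the easy regime $\ell \in \mathcal{O}(k)$ with ratio strictly below one (for instance $\ell \leq k/2$, which is attainable by the hard instances of Theorem~\ref{lab:lb_pdp}), we have $k - \ell = \Theta(k)$ and hence the asymptotic classes $o(k)$ and $o(k-\ell)$ coincide. Thus any $g \in o(k)$ also lies in $o(k-\ell)$, and for any linear $f$ with $f(\ell) \leq \ell$ we obtain
\[
  \tower(f(\ell), g(k)) \cdot \poly(\Card{\var(Q)}) \leq \tower(\ell, g(k)) \cdot \poly(\Card{\var(Q)}),
\]
directly contradicting Theorem~\ref{lab:lb_pdp}.

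The delicate regime is $\ell \notin \mathcal{O}(k)$ (equivalently $k = o(\ell)$, a regime realized by Theorem's construction when $k$ is kept close to~$\ell$). Here $o(k-\ell)$ is strictly smaller than $o(k)$---indeed $k-\ell$ may be very small or even non-positive---so Theorem~\ref{lab:lb_pdp} does not match the corollary's bound term by term. The key leverage is Observation~\ref{obs:exps}: there is a constant $c > 0$ with $\tower(\lfloor \log \ell \rfloor, 0) \geq c \cdot \ell$. Combined with the identity $\tower(a+b, x) = \tower(a, \tower(b, x))$, this yields $\tower(\ell, g'(k)) \geq \tower(\ell - \lfloor \log \ell \rfloor,\, c\ell + g'(k))$ for every $g' \geq 0$. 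Since $g \in o(k)$ and $k = o(\ell)$, we have $g(k) \leq c\ell$ for all sufficiently large $\ell$, and therefore $\tower(f(\ell), g(k)) \leq \tower(\ell - \lfloor \log \ell \rfloor, c\ell) \leq \tower(\ell, 0)$ provided $f(\ell) \leq \ell - \lfloor \log \ell \rfloor$. This again contradicts Theorem~\ref{lab:lb_pdp} on its hard instances in this regime.

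The main technical obstacle is to pick a single linear $f$ that works uniformly in both regimes. Since $\lfloor \log \ell \rfloor \leq \ell/2$ holds for every $\ell \geq \ell_0$ for a fixed constant $\ell_0$, the choice $f(\ell) = \lfloor \ell/2 \rfloor$ simultaneously satisfies $f(\ell) \leq \ell$ (used in the easy regime) and $f(\ell) \leq \ell - \lfloor \log \ell \rfloor$ (used in the hard regime), so the argument goes through uniformly. The conceptual heart of the proof is the tower-absorption step encapsulated in Observation~\ref{obs:exps}; the remaining work consists of a careful unfolding of the $o$-classes to ensure the compression $g(k) \leq c\ell$ applies consistently as $\ell \to \infty$.
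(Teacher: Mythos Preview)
Your proposal follows essentially the same approach as the paper's proof: a case distinction on whether $\ell$ is linearly bounded by $k$, invoking Theorem~\ref{lab:lb_pdp} directly in the first case and using Observation~\ref{obs:exps} to absorb $\lfloor \log \ell \rfloor$ levels of the tower in the second. Your explicit choice $f(\ell)=\lfloor \ell/2\rfloor$ is a concrete instantiation of the paper's asymptotic conclusion that $f\in\Omega(\ell-\lfloor\log\ell\rfloor-1)$.
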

\end{restatetheorem}
\begin{proof}
Let~$k\eqdef \td{G_Q}$. 
Theorem~\ref{lab:lb_pdp} implies (A): Under ETH, $\QBFSAT_\ell$
on~$Q$ can not be solved in time~$\tower(\ell, o(k-\ell))\cdot\poly(\Card{\var(Q)})$.  
%
We proceed by case distinction.
Case~$\ell\leq g(k)$ for a linear function~$g$: 
By applying Consequence (A), we have that~$f(\ell)\eqdef \ell$ since~$o(k-\ell)=o(k)$.
%
%
Case~$\ell > g(k)$: 
Similar to the previous case, we apply Consequence (A). 
By Observation~\ref{obs:exps}, we have $\tower(\floor{\log(\ell)},0) \geq c \cdot \ell$ for some constant~$c$ and, 
consequently, (i) $\tower(\floor{\log(\ell)},0) \in \Omega(k)$ since $k< g^{-1}(\ell)$ by case assumption and $g^{-1}$ is a linear function. 
Further, we have (ii)~$2^{k-\ell} > 0$, even if~$\ell \gg k$. 
By combining (i) and (ii), we conclude~$f \in\Omega(\ell{-}\floor{\log(\ell)}{-}1)$, yielding the desired~result that $f$ is linear for sufficiently large~$\ell_0$.
In turn, this establishes the statement.
\end{proof}

\begin{restatetheorem}[lab:inc-tdp]
\begin{COR}[LB for Incidence Treedepth]
Given an arbitrary QBF~$Q$ with~$F=\matr(Q)$ in CNF (DNF) such that the innermost quantifier~$Q_\ell$ of~$Q$ is $Q_\ell=\exists$ ($Q_\ell=\forall$) 
and~$k=\td{I_Q}$. 
Then, under ETH, $\QBFSAT_\ell$ on~$Q$ cannot be decided in time~$\tower(\ell, o(k-\ell))\cdot\poly(\Card{\var(Q)})$.
\end{COR}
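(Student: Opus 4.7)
The plan is to mirror the strategy used for Corollary~\ref{lab:inc-lb}, but replace the feedback-vertex-set argument by one that controls treedepth of the incidence graph. Concretely, the idea is to show that for every QBF~$Q'$ in 3,1-CDNF whose primal graph has treedepth~$k'$, there is an equivalent QBF~$Q$ whose matrix is in plain CNF (respectively plain DNF) and whose incidence graph has treedepth at most $k'+c$ for some constant~$c$. Any algorithm for $\QBFSAT_\ell$ on~$Q$ running in time $\tower(\ell,o(k-\ell))\cdot\poly(|\var(Q)|)$, with $k=\td{I_Q}$, would then solve $\QBFSAT_\ell$ on~$Q'$ in time $\tower(\ell,o(k'-\ell))\cdot\poly(|\var(Q')|)$, contradicting Theorem~\ref{lab:lb_pdp} under ETH.

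First, using the algorithm of~\cite{ReidlEtAl14}, I would compute a treedepth decomposition $T'$ of $G_{Q'}$ of height~$k'$ in time $2^{\mathcal{O}(k'^2)}\cdot\poly(|\var(Q')|)$. Then I would split into the two cases of the statement. If $Q_\ell=\exists$, set $F\eqdef C\cup\{f\}$ in CNF with $f\eqdef\{l\mid\{l\}\in D\}$; if $Q_\ell=\forall$, set $F\eqdef D\cup\{f\}$ in DNF with $f\eqdef\{l\mid\{l\}\in C\}$. In either case, define $Q\eqdef Q_1 V_1.\cdots Q_\ell V_\ell.F$. Equivalence with $Q'$ is immediate from the definition of 3,1-CDNF, exactly as in the proof of Corollary~\ref{lab:inc-lb}.

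Second, I would construct a treedepth decomposition $T$ of $I_F$ of height at most $k'+2$ as follows. Make $f$ the root of $T$ and append $T'$ below it, so all variable vertices sit at depths $2,\ldots,k'+1$. It remains to attach every short clause (or term) $c\in C$ (or $c\in D$) somewhere in $T$. For each such $c$, observe that because every pair of variables of~$c$ induces an edge of~$G_{Q'}$, the variables of~$c$ lie on a common root-to-leaf path in~$T'$; pick the deepest such variable $v_c$ and attach $c$ as a fresh leaf child of $v_c$ in $T$. By definition of the incidence graph, every edge $\{c,x\}\in E(I_F)$ with $x\in\var(c)$ is then covered by an ancestor/descendant relation in $T$ (since $x$ is an ancestor of $v_c$, hence of $c$), and every edge incident to $f$ is covered because $f$ is the root. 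The resulting $T$ therefore has height at most $k'+2$.

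Finally, I would close the argument by contraposition. Assuming an algorithm for $\QBFSAT_\ell$ on $Q$ with running time $\tower(\ell,o(k-\ell))\cdot\poly(|\var(Q)|)$ and using $k\le k'+2$, one obtains an algorithm for $\QBFSAT_\ell$ on $Q'$ of total running time $\tower(\ell,o(k'-\ell))\cdot\poly(|\var(Q')|)$ (the $+2$ is absorbed into the $o(\cdot)$, and the preprocessing for $T'$ is single-exponential in $k'^2$ and hence dominated). Since $Q'$ was an arbitrary 3,1-CDNF instance, this contradicts Theorem~\ref{lab:lb_pdp} under ETH. I expect the only delicate point to be verifying that the attachment of short clauses produces a valid treedepth decomposition; the key observation one must not skip is that within any clause~$c$ of a 3,1-CDNF the pairwise primal edges force the variables of~$c$ to be totally ordered in $T'$, so the ``deepest variable'' $v_c$ is well defined.
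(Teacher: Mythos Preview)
Your proposal is correct and follows essentially the same route as the paper: convert the 3,1-CDNF matrix to plain CNF (resp.\ DNF) by collapsing the 1-DNF (resp.\ 1-CNF) part into a single long clause~$f$, build a treedepth decomposition of~$I_F$ by placing~$f$ at the root, reusing the primal decomposition~$T'$ below it, and hanging each short clause as a leaf under its deepest variable; the height bound $k'+2$ and the contraposition against Theorem~\ref{lab:lb_pdp} are identical to the paper's argument. The ``delicate point'' you flag---that the variables of each short clause form a chain in~$T'$---is exactly the observation the paper uses when it picks, for every~$c\in C$, the variable~$v\in\var(c)$ such that all other variables of~$c$ are ancestors of~$v$.
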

\end{restatetheorem}
\begin{proof}
The case for~$\ell=1$ immediately follows from ETH, so we assume~$\ell>1$.
We show that the treedepth of any QBF~$Q'=Q_1 V_1. \cdots Q_\ell V_\ell. F'$ in 3,1-CDNF linearly bounds the treedepth of~$I_F$, where~$F$ is obtained by converting~$\matr(Q')$ into CNF or DNF (exactly as defined in Corollary~\ref{lab:inc-lb}).
First, we compute a treedepth decomposition~$T'=(\var(F'), E')$
of~$G_{Q'}$ of treedepth~$k$ 
in time~$2^{\mathcal{O}(k^2)}\cdot\poly(\Card{\var(F')})$~\cite{ReidlEtAl14}.
%
Then, we construct a QBF~$Q\eqdef Q_1 V_1. \cdots Q_\ell V_\ell. F$, with~$F$ being defined as follows (see, Corollary~\ref{lab:inc-lb}).

Case 1: $Q_\ell=\exists$ and therefore $F'=C \wedge D$.
We define~$F\eqdef C\cup \{f\}$ in CNF with~$f\eqdef \{l \mid \{l\} \in D\}$ being a long clause. Next, we slightly adapt~$T'$, resulting in a treedepth decomposition~$T=(\var(F')\cup C \cup \{f\}, E' \cup E)$ of $I_F$, where~$E$ is defined as follows. First, $f$ is the new root of~$T$, i.e., for the root~$r$ of~$T'$, we define~$(f,r)\in E$. Further, for every clause~$c\in C$ and a variable~$v\in\var(c)$ such that every other variable in~$\var(c)$ is an ancestor of~$v$ in~$T'$, we add an edge~$(v,c)\in E$. Observe that the height of~$T$ is bounded by~$k+2$. As a result, assuming ETH and that we can decide the validity of~$Q$ in time~$\tower(\ell, o({k-\ell}))\cdot\poly(\Card{\var(Q)})$ contradicts 
Theorem~\ref{lab:lb_pdp}.

Case 2: $Q_\ell=\forall$, i.e., $F'=D \vee C$.
We define~$F\eqdef D\cup \{f\}$ in DNF with~$f\eqdef \{l \mid \{l\} \in C\}$ and proceed as in Case 1.
%
%
%
\end{proof}

\section{Omitted Proofs for Section~\ref{sec:vcfen-inc} (Algorithms for \QBFSATCNF Using Vertex Cover Number and Feedback Edge Number)}

\begin{restatetheorem}[thm:incvco]
\begin{THM}
  Given any CQBF~$Q$ of $\QBFSAT$ with~$k$ being the vertex cover
  number of~${I_Q}$. Then, the validity of~$Q$ can be decided in
  time~$1.709^{3^k}\cdot\poly(\Card{\var(Q)})$.
\end{THM}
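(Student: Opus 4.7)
The plan is to exploit the structural constraint imposed by a small vertex cover of the incidence graph to bound the total number of syntactically distinct clauses in $\matr(Q)$, and then invoke the known single-exponential algorithm for \QBFSATCNF{} parameterized by the clause count.

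First, I would compute (or take as given) a vertex cover $C$ of $I_Q$ with $|C|=k$, using the standard branching algorithm that runs in $\mathcal{O}(2^k\cdot\poly(n))$ time. Partition $C$ as $C = C_v \cup C_c$, where $C_v \subseteq \var(Q)$ are the variable-vertices and $C_c \subseteq \matr(Q)$ are the clause-vertices, with $k_1 \eqdef |C_v|$ and $k_2 \eqdef |C_c|$ and $k_1 + k_2 = k$.

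The key structural observation is as follows. For any clause $c \in \matr(Q) \setminus C_c$, every variable $x \in \var(c)$ must lie in $C_v$: otherwise the edge $\{x,c\}$ of $I_Q$ would have neither endpoint in $C$, contradicting that $C$ is a vertex cover. Hence every such clause is a set of literals over the $k_1$ variables of $C_v$. Since each variable in $C_v$ occurs in $c$ either positively, negatively, or not at all, there are at most $3^{k_1}$ syntactically distinct such clauses. Duplicate clauses in $\matr(Q)$ can be dropped without affecting validity, so $Q$ is equivalent to a CQBF with at most $3^{k_1} + k_2 \leq 3^k$ distinct clauses over $\var(Q)$.

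Finally I would invoke the result of Williams~\cite{Williams02}, which decides any CQBF with $m$ clauses in time $\mathcal{O}(1.709^m)$. Applied to our deduplicated formula this yields the total runtime $1.709^{3^k}\cdot \poly(|\var(Q)|)$ claimed by the theorem. I do not foresee a significant obstacle: the argument is a simple counting bound followed by a black-box invocation of Williams' algorithm, so no dedicated dynamic program or intricate kernelization is required. The only subtlety worth explicitly verifying is that deduplicating clauses preserves equivalence, which is immediate since CNF semantics treats the matrix as a set of clauses; note also that the same argument does \emph{not} carry over to the primal graph, because a small primal vertex cover still permits an unbounded number of distinct clauses.
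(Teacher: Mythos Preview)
Your proposal is correct and essentially identical to the paper's own proof: partition a vertex cover of $I_Q$ into $k_1$ variable-vertices and $k_2$ clause-vertices, observe that any clause outside the cover uses only the $k_1$ cover variables and hence there are at most $3^{k_1}+k_2\le 3^k$ distinct clauses, then apply Williams' $\mathcal{O}(1.709^m)$ algorithm. The paper's argument is terser but uses exactly the same counting bound and black-box call.
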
\end{restatetheorem}
\begin{proof}
  Let $C$ be a vertex cover of $I_Q$ containing $k_1$ variables and $k_2$ clauses. Then, $Q$
  can contain at most $3^{k_1}+k_2$ distinct clauses. This is because the variables of every clause outside of $C$
  must be a subset of $C$ and every other clause is contained in $C$. The lemma now follows
  because every CQBF with $m$ clauses can be solved in time $\bigoh(1.709^m)$~\cite{Williams02}.
\end{proof}

We now give a formal proof of Theorem~\ref{thm:kernel} and Theorem~\ref{thm:kernel-inc} starting with the former.
Our
kernelization  starts by computing a smallest feedback edge
set $D$ for $G_\qbfformula$ using
Proposition~\ref{pro:comp-fes}. We denote by $V(D)$ the set of
vertices incident with the edges of $D$.  
Let $H$ be
the graph $G_\qbfformula-D$.
Then $H$ is a spanning forest of $G_\qbfformula$. 

We start with some simple reduction rules that are applicable to any
$\QBFCNF$. In particular, we observe that we can assume that
$\qbfformula$ does not contain a \emph{unit clause}, i.e., a clause containing
exactly one literal, or a
\emph{pure literal}, i.e., a variable that either only appears
positively or only occurs negatively in $\qbfformula$. 
Towards showing the former assume that $\qbfformula$ contains a
unit clause $C$ on some variable $v$. If $v$ is universal, then $\qbfformula$
is unsatisfiable and we can return a constant size unsatisfiable
formula as our kernel. If, on the other hand, $v$ is existential, then $\qbfformula$
is equivalent to the formula $\qbfformula[v=b]$, where $b$ is $1$ if $v$
occurs positively in $C$ and $b=0$ if $v$ occurs negatively in $C$;
note that in both cases the size of a feedback edge set does not
increase (because $G_{\qbfformula[v=b]}$ is a subgraph of $G_\qbfformula$) and the
number of variables in the formula decreases.

Note that every vertex $v$ of $H$ that is not in $V(D)$ is only
contained in clauses of size at most two. Moreover, since $\qbfformula$ does
no longer contain unit clauses, it follows that every such vertex $v$
is only contained in clauses of size exactly two.

 We now provide a
series of reduction rules that allow us to obtain a reduced instance,
where $u$ and $v$ are contained together in exactly one clause of size two. The following lemma allows us to handle the cases that $u$
and $v$ are contained together in 3 or 4 clauses.

\begin{LEM}
\label{lem:fes-3-4-clauses}    
  Let $e=\{u,v\}$ be a clean edge of $H$. Then:
  \begin{itemize}
  \item[(1)] If $u$ and $v$ are contained in 4 clauses of $\qbfformula$, then
    $\qbfformula$ is unsatisfiable.
  \item[(2)] If $u$ and $v$ are contained in 3 clauses of $\qbfformula$, then
    either $\qbfformula$ is unsatisfiable or there is an assignment
    $\tau: \{u,v\} \rightarrow \{0,1\}$ such that $\qbfformula$ is equivalent
    to $\qbfformula[\tau]$.
  \end{itemize}
\end{LEM}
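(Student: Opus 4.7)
The plan is to exploit cleanness combined with the absence of unit clauses. Since $e=\{u,v\}$ lies in no triangle of $G_\qbfformula$, the variables $u$ and $v$ appear together in no clause that contains a third variable; together with the preprocessing that eliminates unit clauses, this means every clause of $\qbfformula$ containing both $u$ and $v$ has size exactly two. There are therefore at most four possible such clauses, namely the two-literal clauses $\{u,v\}$, $\{\neg u,v\}$, $\{u,\neg v\}$, $\{\neg u,\neg v\}$, and each of the four Boolean assignments to $(u,v)$ falsifies exactly one of them.

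For part (1), I would observe that if all four clauses are present then the matrix is falsified under every assignment to $\{u,v\}$, and hence under every total assignment to $\var(\qbfformula)$. Via the Hintikka game this means that every universal strategy is winning, so $\qbfformula$ evaluates to false, and we can return a constant-size unsatisfiable QBF as the output of the reduction rule.

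For part (2), the three present clauses are jointly satisfied by a unique assignment $\tau:\{u,v\}\to\{0,1\}$, namely the one whose negation coincides with the unique missing clause, while every other assignment to $(u,v)$ falsifies at least one of them. I would then split on the quantification of $u$ and $v$. If both are existential, I argue via Hintikka games that in any winning existential strategy the component strategies $\tau_u$ and $\tau_v$ must be constantly $\tau(u)$ and $\tau(v)$ on all relevant universal plays: otherwise, on some universal play, the assignment to $\{u,v\}$ differs from $\tau$ and one of the three clauses is falsified. In particular, $\qbfformula$ is true iff $\qbfformula[\tau]$ is true, which gives the desired equivalence. If at least one of $u,v$ is universal, I construct a winning universal strategy for $\qbfformula$: the universal player, when committing to its variable (possibly after observing the value already played by the existential partner), picks the value disagreeing with $\tau$ on that variable. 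This forces the final assignment to $\{u,v\}$ to differ from $\tau$ and hence to falsify one of the three clauses, so $\qbfformula$ is false and we land in the first branch of the disjunction.

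The main technical subtlety is the quantifier case analysis in part (2), and in particular the argument that the local winning universal strategy on $\{u,v\}$ extends to a winning strategy on all of $\var(\qbfformula)$. The hard part is in cases with many intervening quantifier alternations, but this is handled uniformly by Hintikka games: a universal strategy that guarantees the falsification of a specific clause of the matrix, regardless of the choices on the other variables, is winning by definition, so no global reasoning about the remainder of the prefix is required. Conversely, the constancy-of-strategy argument in the all-existential case is an immediate consequence of the fact that $u$'s (resp.~$v$'s) value is required to equal $\tau(u)$ (resp.~$\tau(v)$) on every universal play, and $u$'s value depends only on universal variables preceding it in the prefix.
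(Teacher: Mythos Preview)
Your proposal is correct and follows essentially the same approach as the paper: both arguments use that cleanness forces the shared clauses to have size exactly two, observe that four such clauses are jointly unsatisfiable while three determine a unique satisfying assignment $\tau$, and then split on whether one of $u,v$ is universal (unsatisfiable) or both are existential (equivalent to $\qbfformula[\tau]$). The paper's proof is terser and does not spell out the Hintikka-game reasoning you give, but the content is the same.
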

\begin{proof}
 Property (1) is trivial because if $u$ and $v$ are contained in four
 clauses of $\qbfformula$, then there is no assignment of $u$ and $v$
 satisfying all four of these clauses and therefore $\qbfformula$ is
 unsatisfiable.

 Towards showing Property (2), suppose that $u$ and $v$ are contained
 in exactly three clauses of $\qbfformula$. Then, there is a unique
 assignment $\tau : \{u,v\}\rightarrow \{0,1\}$ that satisfies all
 three clauses. Therefore, if one of the variables $u$ and $v$ is
 universal, $\qbfformula$ is unsatisfiable. Moreover, otherwise both
 variables are
 existential and therefore $\qbfformula$ is equivalent with $\qbfformula[\tau]$.
\end{proof}
Note that Lemma~\ref{lem:fes-3-4-clauses}, in particular an exhaustive
application of the lemma to every clean edge of $H$, shows that in
polynomial time we can either compute a constant size kernel or an
equivalent instance $\qbfformula'$ such that $G_{\qbfformula'}$ is a subgraph of
$G_\qbfformula$ and the endpoints of every clean edge of $H$ are contained
together in at most 2 clauses. Therefore, in the following we can
assume that the endpoints of every clean edge of $H$ appear together
in at most 2 clauses of $\qbfformula$. The following lemma now strengthens
this to the point where we can assume that the endpoints of every
clean edge of $H$ appear together in exactly one clause of $\qbfformula$.

\newcommand{\renam}[2]{#1|(#2)}
Before we present the lemma, we need the following notation that
allows us to replace (substitute) a literal with another one.
Let $l$ and $l'$ be two literals on distinct variables. We define
$\renam{\qbfformula}{l\rightarrow l'}$ to be the formula obtained from $\qbfformula$ after
replacing every occurrence of $l$ ($\neg l$) in the matrix of $\qbfformula$
with $l'$ ($\neg l'$) and removing the variable of $l$ from the
prefix of $\qbfformula$.
\begin{LEM}
  \label{lem:fes-2-clauses}
  Let $e=\{u,v\}$ be a clean edge of $H$ such that $u$ and $v$ occur together in exactly 2
  clauses of $\qbfformula$. Then, either:
  \begin{itemize}
  \item $\qbfformula$ is unsatisfiable, or
  \item there is an assignment $\tau$ of either $u$ or $v$ such that
    $\qbfformula$ is equivalent to $\qbfformula[\tau]$, or
  \item $\qbfformula$ is equivalent to $\renam{\qbfformula}{l\rightarrow l'}$,
    where $l$ and $l'$ are literals on $u$ and $v$ and $l$ and $l'$
    are on distinct variables. 
  \end{itemize}
\end{LEM}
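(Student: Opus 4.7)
The plan is to enumerate all possible pairs of clauses on $\{u,v\}$ and deal with each case via Hintikka strategies. Since $e=\{u,v\}$ is clean, no clause containing both $u$ and $v$ can contain a third variable (else that variable, together with $u$ and $v$, would form a triangle using an edge of $D$, contradicting cleanness). Combined with the fact that $\qbfformula$ contains no unit clauses, the two clauses on $\{u,v\}$ must have size exactly $2$ and use only the variables $u,v$. There are $\binom{4}{2}=6$ such pairs chosen from the four 2-clauses $\{u,v\},\{u,\neg v\},\{\neg u,v\},\{\neg u,\neg v\}$.

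Four of the six pairs jointly force a specific assignment of a single variable. For instance, $\{u,v\}$ and $\{u,\neg v\}$ jointly force $u=1$; analogous arguments cover the remaining three pairs. I would show: if the forced variable is universal, the universal player can assign it the opposite value in the Hintikka game, falsifying one of the two 2-clauses irrespective of existential responses, so $\qbfformula$ is unsatisfiable (conclusion~(1)). If instead it is existential, every winning existential strategy must assign it as forced, so $\qbfformula$ is equivalent to $\qbfformula[\tau]$ for the forced $\tau$ (conclusion~(2)).

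The two remaining pairs encode either $u\leftrightarrow v$ (from $\{u,\neg v\},\{\neg u, v\}$) or $u\leftrightarrow \neg v$ (from $\{u,v\},\{\neg u,\neg v\}$). Let $z$ be whichever of $u,v$ appears later in the prefix and $y$ the other. I would argue: if $z$ is universal, the universal player can set $z$ independently of $y$ to violate the forced relation, falsifying one of the two 2-clauses and yielding conclusion~(1). If $z$ is existential, then in every winning existential strategy $\tau_z$ is completely determined by the play at $y$: it equals $y$ in the equivalence case and $\neg y$ in the XOR case. Choosing $l$ to be the positive literal on $z$ and $l'$ to be the corresponding literal on $y$ (namely $y$ in the equivalence case and $\neg y$ in the XOR case), I would show via Hintikka strategies that $\qbfformula\equiv \renam{\qbfformula}{l\to l'}$, giving conclusion~(3).

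The main obstacle is formalizing the strategy translation in the substitution case. In one direction, a winning existential strategy $\Tau$ for $\qbfformula$ must set $z$ according to the forced relation (else one of the two 2-clauses is falsified); deleting the component $\tau_z$ and observing that every occurrence of the literals on $z$ in the matrix has been replaced by the equivalent literal on $y$ yields a winning strategy for $\renam{\qbfformula}{l\to l'}$. Conversely, a winning strategy for $\renam{\qbfformula}{l\to l'}$ extends to one for $\qbfformula$ by reinstating $\tau_z$ to match the current value of the literal on $y$, which automatically satisfies both 2-clauses. A small but important care is needed to confirm that $l$ and $l'$ are literals on distinct variables (which holds because $u\ne v$) and that the substitution removes exactly the variable $z$ from the prefix, as required by the definition of $\renam{\cdot}{\cdot}$.
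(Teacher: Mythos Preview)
Your proposal is correct and follows essentially the same approach as the paper's proof: both perform a case analysis on the two size-2 clauses on $\{u,v\}$, distinguishing the cases where a single variable's value is forced from the equivalence/XOR cases, and handle each via the same Hintikka-strategy arguments. The only cosmetic difference is that the paper organizes the case split by whether the two satisfying assignments of the clause pair agree on some variable, whereas you enumerate the six clause pairs explicitly.
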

\begin{proof}
  Because $u$ and $v$ are contained together in exactly 2 clauses (and
  these clauses do not contain any other variables), say
  $C_1$ and $C_2$, of
  $\qbfformula$, there are exactly two assignments $\tau_1 :
  \{u,v\}\rightarrow\{0,1\}$ and $\tau_2 : \{u,v\}\rightarrow\{0,1\}$
  that satisfy these two clauses. We distinguish the following
  cases. If there is a variable $w \in \{u,v\}$ such that
  $\tau_1(w)=\tau_2(w)$, then the only way to satisfy $C_1$ and $C_2$
  is to set the variable $w$ to $\tau_1(w)$. Therefore, if $w$ is
  universal, then $\qbfformula$ is not satisfiable. Moreover, if $w$ is
  existential, then $\qbfformula$ is equivalent to $\qbfformula[\tau]$, where
  $\tau$ is the assignment setting $w$ to $\tau_1(w)$.

  Otherwise, it holds that $\tau_1(u)\neq \tau_2(u)$ and
  $\tau_1(v)\neq \tau_2(v)$. Therefore, either $\tau_i(u)=\tau_i(v)$
  or $\tau_i(u)=\lnot \tau_i(v)$ for every $i$ with $1 \leq i \leq
  2$. Because both cases are similar, we will only show the statement
  for the former case (i.e., $\tau_i(u)=\tau_i(v)$).
  Without loss of generality, let $v$ be the variable occurring
  behind $u$ in the prefix of $\qbfformula$. If $v$ is universal, then $\qbfformula$ is
  unsatisfiable, because a universal winning strategy can ensure that
  $v$ is assigned different from $u$.
  If, on the other hand, $v$ is existential,
  then, we claim that $\qbfformula$ is equivalent to $\renam{\qbfformula}{v\rightarrow u}$.
  Towards showing the forward direction, let $\Tau$ be winning
  existential strategy for $\qbfformula$ and let $\delta$ be an arbitrary
  play of the universal player. Then, $\alpha(\Tau,\delta)$ satisfies
  the two clauses containing $u$ and $v$ and therefore
  $\alpha(\Tau,\delta)[u]=\alpha(\Tau,\delta)[v]$, which shows that
  $\alpha(\Tau,\delta)$ also satisfies $\renam{\qbfformula}{v\rightarrow
    u}$. If, on the other hand, $\Tau$ is an existential winning
  strategy for $\renam{\qbfformula}{v\rightarrow u}$, then the strategy
  $\Tau'$ obtained from $\Tau$ by setting $v$ to the value of $u$,
  which is possible because $v$ occurs after $u$ in the prefix of
  $\qbfformula$, is easily seen to be an existential winning strategy for
  $\qbfformula$.
\end{proof}
Therefore, due to Lemmas~\ref{lem:fes-2-clauses}
and~\ref{lem:fes-3-4-clauses}, we can from now on assume that the
endpoints of every clean edge of $H$ occur together in exactly 1
clause of $\qbfformula$ (and this clause contains no other variables).
The next lemma now allows us to remove leaves that are contained in a
clean edge of $H$.
\begin{LEM}
\label{lem:fes-leaf}
  Let $l$ be a leaf of $H$ and $l\notin V(D)$. Then, there is an
  assignment $\tau : \{l\}\rightarrow \{0,1\}$ such that $\qbfformula$ is equivalent to $\qbfformula[\tau]$.
\end{LEM}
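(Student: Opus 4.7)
The plan is to argue that under the reductions accumulated so far, the variable $l$ must occur in exactly one clause of $\qbfformula$ and hence is a pure literal, after which the standard pure-literal argument (formulated via Hintikka strategies) gives the claim.

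First, I would extract structural information about $l$. Because $l$ is a leaf of the spanning forest $H=G_\qbfformula-D$ and $l\notin V(D)$, no edge of $D$ is incident to $l$; thus $l$ has degree exactly one in $G_\qbfformula$. Let $u$ denote its unique neighbour. Since $l$ has only one neighbour, no triangle of $G_\qbfformula$ can contain $l$, so the edge $\{u,l\}$ is clean. By the exhaustive application of Lemmas~\ref{lem:fes-3-4-clauses} and~\ref{lem:fes-2-clauses} (preceding this lemma in the chain of reductions), I may assume that the endpoints of every clean edge of $H$ appear together in exactly one clause of $\qbfformula$, and that clause has size two. Therefore $l$ appears in exactly one clause $C=\{l',l''\}$ of $\qbfformula$, where $l'$ is a literal on $l$ and $l''$ a literal on $u$. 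In particular, $l$ is a pure literal.

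Next, I would define the assignment $\tau:\{l\}\to\{0,1\}$ based on the quantifier type of $l$: if $l$ is existential, set $\tau(l)$ so that $l'$ evaluates to $1$; if $l$ is universal, set $\tau(l)$ so that $l'$ evaluates to $0$. In both cases $C[\tau]$ is either removed or reduced to the unit clause $\{l''\}$, and no other clause of $\qbfformula$ is affected because $l$ occurs only in $C$.

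It then remains to establish equivalence $\qbfformula \equiv \qbfformula[\tau]$ via Hintikka games, which I expect to be the most delicate step although it is essentially a standard pure-literal argument. For the existential case, from an existential winning strategy $\Tau$ for $\qbfformula[\tau]$ one obtains a winning strategy for $\qbfformula$ by always choosing $l=\tau(l)$; this satisfies $C$, and all other clauses are unaffected by the value of $l$. Conversely, given a winning strategy for $\qbfformula$ one modifies it to always play $l=\tau(l)$; since $l$ occurs only in $C$, no other clause can be harmed, and $C$ is satisfied by $l'$. For the universal case, given an existential winning strategy $\Tau$ for $\qbfformula[\tau]$, extend it to $\qbfformula$: regardless of the universal play $\delta$ on $l$, either $\delta(l)=\tau(l)$, in which case we are literally in $\qbfformula[\tau]$ and $\Tau$ wins, or $\delta(l)=1-\tau(l)$, in which case $C$ is immediately satisfied by $l'$ and $\Tau$ still satisfies every other clause because they do not mention $l$. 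Conversely, a winning existential strategy for $\qbfformula$ must in particular succeed on universal plays that set $l=\tau(l)$, and this specialization is a winning strategy for $\qbfformula[\tau]$.

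The main obstacle is the universal case of the equivalence argument, since one has to be careful that the value of $l$ genuinely influences only the clause $C$; this is precisely guaranteed by the leaf/clean edge structure together with the preceding reductions, so once these structural facts are in place the strategy transfer goes through cleanly.
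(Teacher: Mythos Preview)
Your proposal is correct and follows essentially the same approach as the paper: establish that $l$ has a unique neighbour in $G_\qbfformula$, that the edge $\{u,l\}$ is clean, and hence (by the preceding reductions) that $l$ occurs in a single size-two clause, then assign $l$ according to its quantifier type. The paper's proof is terser---it simply asserts the equivalence $\qbfformula\equiv\qbfformula[l=b]$ (resp.\ $\qbfformula[l=\neg b]$) without spelling out the Hintikka-game argument---but your more detailed strategy-transfer reasoning is the natural justification and is sound.
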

\begin{proof}
  Because $l$ is a leaf of $H$, $l\notin V(D)$, and we can assume that $\qbfformula$ does not
  contain any unit clause, $l$ occurs only in clauses with its
  parent $p$ in $H$. Note also that $e=\{p,l\}$ 
is a clean edge of $H$,
  because $l$ is a leaf and $l \notin V(D)$. Therefore, $l$ is contained in exactly one clause $C$
  (together with $p$) of $\qbfformula$. But then, if $l$ is existential (universal), then $\qbfformula$
  is equivalent to the formula $\qbfformula[l=b]$ ($\qbfformula[l=\neg b]$), where $b$ is $1$ if $l$
  occurs positively in $C$ and $b=0$ if $l$ occurs negatively in $\qbfformula$.
\end{proof}
Note that after exhaustively applying Lemma~\ref{lem:fes-leaf}, we
obtain an instance such that every leaf of $H$ is one of the at most
$2|D|$ endpoints of the edges in $D$. Therefore, $H$ has at most
$2|D|$ leaves and therefore at most $2|D|-2$ vertices of degree at
least 3. It therefore, only remains to reduce the number of vertices
in $H$ with degree exactly 2, which we will achieve using the
following observation and lemma.

We say that a path $P$ of $H$ is \emph{clean} if all inner vertices of
$P$ have degree two in $H$ and are not in $V(D)$.
\begin{observation}
\label{obs:fes-path}
  Let $P$ be a clean path of $H$ with at least two inner
  vertices. Then, every edge of $P$ is clean. Moreover, if $v$ is an inner vertex
  of $P$, then, $v$ appears in exactly two clauses (both of size 2) in
  $\qbfformula$, i.e., one clause $C_u$ with its left neighbor $u$ in $P$
  and one clause $C_w$ with its right neighbor $w$ in $P$. Finally,
  $v,\neg v \in C_u\cup C_w$, i.e., $v$ appears complementary in
  $C_u$ and $C_w$.
\end{observation}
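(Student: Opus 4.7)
My plan is to exploit what the earlier reductions have already bought us: no unit clauses, no pure literals, and, for every clean edge of $H$, its two endpoints appear together in exactly one clause, which has size $2$ and contains no other variables. The key observation before anything else is that an inner vertex $v$ of $P$ has degree exactly $2$ in $G_\qbfformula$ (not just in $H$): since $v \notin V(D)$, no edge of $D$ is incident to $v$, so the two $H$-neighbors $u$ and $w$ of $v$ on $P$ are its only $G_\qbfformula$-neighbors. Consequently every clause of $\qbfformula$ containing $v$ uses variables only from $\{v,u,w\}$.

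For the first assertion, I would fix an edge $e=\{a,b\}$ of $P$ and suppose towards contradiction that $e$ lies in a triangle $\{a,b,c\}$ of $G_\qbfformula$. At least one of $a,b$ is an inner vertex of $P$; say $a$ is inner. Then by the observation above $a$ has $G_\qbfformula$-degree $2$ with neighbors exactly the $P$-neighbors of $a$, so $c$ must be the $P$-neighbor of $a$ other than $b$. But now $c$ is also inner (because $P$ has at least two inner vertices and by a short case split on whether $b$ is an endpoint or inner), hence $c$ has $G_\qbfformula$-degree $2$ with neighbors only its two $P$-neighbors, and $b$ is not among them. This contradicts $\{b,c\}$ being an edge of $G_\qbfformula$, so every edge of $P$ is clean.

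For the second assertion, let $v$ be an inner vertex of $P$ with path-neighbors $u,w$. Since every edge of $P$ is clean (just proved), the clean-edge reduction guarantees that $\{u,v\}$ lies in exactly one clause $C_u$ of $\qbfformula$, of size $2$, and similarly $\{v,w\}$ lies in exactly one clause $C_w$ of size $2$. Since $v$'s only $G_\qbfformula$-neighbors are $u$ and $w$, every clause containing $v$ uses only variables in $\{u,v,w\}$; but a clause using all three would force the edge $\{u,w\}$ in $G_\qbfformula$, creating a triangle on $\{u,v,w\}$ and contradicting cleanness of $\{u,v\}$. Unit clauses on $v$ are already gone, so every clause containing $v$ is one of $C_u$ or $C_w$, giving exactly two clauses, both of size~$2$.

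Finally, the complementarity $v,\neg v \in C_u \cup C_w$ is immediate from the elimination of pure literals: since $v$ appears only in $C_u$ and $C_w$, the variable $v$ must occur with both polarities across these two clauses, for otherwise $v$ or $\neg v$ would be a pure literal. I do not expect any real obstacle here; the main thing to be careful about is the case split for the cleanness step (making sure the hypothesis ``at least two inner vertices'' is used precisely when handling an edge of $P$ that is incident to an endpoint of $P$, so that $c$ is indeed forced to be inner).
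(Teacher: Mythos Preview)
Your proposal is correct and follows essentially the same reasoning as the paper's (much terser) proof: inner vertices of $P$ have $G_\qbfformula$-degree $2$, which rules out triangles on edges of $P$; the clean-edge reduction then pins down the two size-$2$ clauses, and pure-literal elimination forces complementarity. One small fix to your cleanness case split: when both $a$ and $b$ are inner, $c$ need \emph{not} be inner (take $P=p_0p_1p_2p_3$, $a=p_1$, $b=p_2$, $c=p_0$); in that branch you should instead use that $b$ is inner, so $c$ would also have to be a $P$-neighbor of $b$, which it is not---your parenthetical already anticipates that this branch needs different handling, but the sentence as written asserts ``$c$ is also inner'' unconditionally.
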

\begin{proof}
  Because no inner vertex of $P$ is in $V(D)$ and $P$ has at least two
  inner vertices, it follows that $G_\qbfformula$ cannot contain a triangle
  that contains an edge of $P$. Therefore, all edges of $P$ are clean.
  
  The fact that $v$ appears only in the clauses $C_u$ and $C_w$
  follows from the fact that $v$ has degree two in $H$ and our
  assumption that every clean edge $e$ of $H$ 
  corresponds to exactly one clause. Finally, $v,\neg v \in C_u\cup C_w$
  follows because otherwise $v$ would be a pure literal, but we
  assume that $\qbfformula$ does not contain a pure literal.
\end{proof}

\begin{LEM}
\label{lem:fes-degt}
  Let $P$ be a clean path of $H$ with at least two inner vertices.
  Let $v$ be the innermost variable in the prefix of $\qbfformula$ among all inner vertices
  of $P$ and let $C_u$ and $C_w$ be the two clauses (see
  Observation~\ref{obs:fes-path}) containing $v$
  (and its neighbors $u$ respectively $w$). Then, either:
  \begin{itemize}
  \item $v$ is existential and $\qbfformula$ is equivalent to the formula $\qbfformula_\exists$
    obtained from $\qbfformula$ after removing $C_u$ and $C_w$ and adding
    instead the clause $(C_u\cup C_w)\setminus\{v,\neg v\}$.
  \item $v$ is universal and $\qbfformula$ is equivalent to the formula $\qbfformula_\forall$
    obtained from $\qbfformula$ after removing $C_u$ and $C_w$ and adding
    instead the clauses $C_u\setminus\{v,\neg v\}$ and $C_w\setminus\{v,\neg v\}$.
  \end{itemize}
\end{LEM}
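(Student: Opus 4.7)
The plan is to prove each equivalence by translating between Hintikka winning strategies for $\qbfformula$ and its reduced counterpart. Two structural facts drive the argument: by Observation~\ref{obs:fes-path}, $v$ appears only in $C_u$ and $C_w$, both of size 2 and with complementary $v$-literals, so after renaming we may write $C_u=\{v,l_u\}$ and $C_w=\{\neg v,l_w\}$; and since $P$ has at least two inner vertices and $v$ is the innermost among them, at least one of $u,w$ is itself an inner vertex of $P$ and therefore not in $V(D)$ and preceding $v$ in the prefix. Without loss of generality we take $u$ to be such an earlier neighbour, so the strategy for $u$ cannot depend on the value of $v$.

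For the existential case, note that the resolvent $R=\{l_u,l_w\}$ is precisely the precondition under which $v$ can be set to satisfy $C_u\wedge C_w$. The forward direction is immediate: any winning existential strategy $\Tau$ for $\qbfformula$ satisfies $C_u$ and $C_w$ in every play, and since their $v$-literals are complementary, at least one of $l_u,l_w$ is true, so $R$ is satisfied; discarding $v$ from $\Tau$ yields a winning strategy for $\qbfformula_\exists$. For the converse, given a winning $\Tau'$ for $\qbfformula_\exists$, inherit all non-$v$ choices and define $\tau_v$ by cases on the value of $l_u$ (known at $v$'s position because $u$ precedes $v$): if $l_u$ is true set $v=0$, so $C_u$ is satisfied via $l_u$ and $C_w$ via $\neg v$; otherwise set $v=1$, so $C_u$ is satisfied via $v$ and $C_w$ reduces to $l_w$, which must hold because $R$ is satisfied by $\Tau'$ and $l_u$ is false.

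For the universal case, observe that any winning $\Tau$ for $\qbfformula$ must cope with both adversarial choices of $v$: the branch $v=0$ forces $l_u$ and the branch $v=1$ forces $l_w$. Let $\Tau_{v=b}$ denote the sub-strategy obtained by plugging $v=b$ into $\Tau$. Since $u$ precedes $v$, $\tau_u$ is identical in $\Tau_{v=0}$ and $\Tau_{v=1}$, so the truth of $l_u$ established on the $v=0$ branch carries over to $\Tau_{v=1}$ as well; hence $l_u$ is actually always true. We then take $\Tau'=\Tau_{v=1}$ as a strategy for $\qbfformula_\forall$: it satisfies the new unit $\{l_w\}$ (inherited from $C_w$ under $v=1$), satisfies $\{l_u\}$ by the invariance argument, and satisfies every remaining clause because those clauses do not contain $v$ and were already satisfied by $\Tau$. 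Conversely, a winning strategy for $\qbfformula_\forall$ ensures $l_u$ and $l_w$ are true without consulting $v$, which makes $C_u$ and $C_w$ satisfied for either adversarial choice of $v$, so the same strategy wins $\qbfformula$.

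The main subtlety is the existential case when the neighbour $w$ is an endpoint of $P$ lying after $v$ in the prefix, because then $l_w$ is not yet determined when $v$ must be committed. The fix is that $\tau'_w$ in $\Tau'$ depends only on the universal play before $w$, and removing the existential variable $v$ does not change the sequence of universal variables, so the inherited $\tau_w=\tau'_w$ produces the same value of $l_w$ in both games. Combined with $R$ being satisfied in every play of $\Tau'$, this guarantees $l_w$ becomes true precisely in the plays where $l_u$ is false, validating the committed value $v=1$. A brief case analysis handles quantifier types of $u,w$: should both happen to be universal, the adversary falsifies $R$, forcing both $\qbfformula$ and $\qbfformula_\exists$ to be false and making the equivalence trivial.
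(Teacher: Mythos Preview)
Your proof is correct and, for the existential case, essentially identical to the paper's: both set up $C_u=\{l_v,l_u\}$, $C_w=\{\neg l_v,l_w\}$, pick the neighbour $u$ that is an inner vertex of $P$ (hence precedes $v$), and define $\tau_v$ by a case split on $l_u$. The ``subtlety'' paragraph is not really needed: once you observe that $\Tau$ and $\Tau'$ differ only in the strategy function for the existential variable $v$, the values of $l_u$ and $l_w$ under $\alpha(\Tau,\delta)$ and $\alpha(\Tau',\delta)$ automatically coincide (strategy functions for existential variables take only universal plays as input), so the separate discussion of $w$ coming after $v$, and of both $u,w$ being universal, can be dropped.

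For the universal case you genuinely diverge from the paper. The paper argues via \emph{universal} winning strategies (unsat $\Leftrightarrow$ unsat): the same $\Lambda$ that wins on $\qbfformula$ also wins on $\qbfformula_\forall$ (easy), and conversely one extends a winning $\Lambda$ for $\qbfformula_\forall$ by a strategy function $\lambda_v$ that plays $\neg l_v$ whenever $l_u$ is already false. You instead argue via \emph{existential} winning strategies (sat $\Leftrightarrow$ sat): your backward direction is cleaner than the paper's --- the same $\Tau'$ wins $\qbfformula$ because the unit clauses force $l_u$ and $l_w$, hence $C_u$ and $C_w$ are satisfied regardless of the adversary's choice of $v$. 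The price is that your forward direction needs the invariance argument that $l_u$ is true in \emph{every} play of $\Tau$ (not just those with $v=0$), which the paper's route avoids. A minor imprecision: writing ``$\tau_u$ is identical'' tacitly assumes $u$ is existential; the invariance holds equally when $u$ is universal (then $u$'s value is part of the universal play and independent of $v$ by prefix order), so just phrase it as ``the value of $u$, and hence of $l_u$, is fixed before $v$ is assigned.''
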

\begin{proof}
  Because of Observation~\ref{obs:fes-path}, we obtain that
  $C_u=l_u\lor l_v$ and $C_w=l_w\lor \neg l_v$ for some literals
  $l_u$, $l_w$, and $l_v$ on the variables $u$, $w$, and $v$,
  respectively and moreover $C_u$ and $C_w$ are the only clauses of $\qbfformula$
  containing $v$.
  
  We start by showing the lemma for the case that $v$ is
  existential. Towards showing the forward direction of the
  equivalence between $\qbfformula$ and $\qbfformula_\exists$ suppose that $\qbfformula$ is
  satisfiable and has an existential winning strategy $\Tau$. Then,
  for every play $\delta$ of the universal player, it holds that
  $\alpha(\Tau,\delta)$ must satisfy $C_u$ and $C_w$, which in turn
  implies that $\alpha(\Tau,\delta)$ satisfies $(C_u\cup
  C_v)\setminus\{v,\neg v\}$. Therefore, $\Tau$ is also an
  existential winning strategy for $\qbfformula_\exists$, showing that
  $\qbfformula_\exists$ is satisfiable. Now, suppose that $\qbfformula_\exists$ is
  satisfiable and let $\Tau$ be an existential winning strategy for
  $\qbfformula_\exists$ witnessing this. Because $P$ has at least two inner
  vertices, it holds that either $u$ or $w$ is an inner vertex of
  $P$; in the following we assume without loss of generality that $u$
  is an inner vertex of $P$. Moreover, because $v$ is the innermost variable in the prefix of $\qbfformula$ (among all inner vertices of $P$), it
  follows that $v$ appears after $u$ in the prefix of $\qbfformula$. Let
  $\Tau'$ be the strategy for $\qbfformula$ obtained from $\Tau$ such that
  $\Tau'$ sets $v$ to $l_v$ if the assignment for $u$ does not satisfy
  $C_u$ (note again that because $v$ is after $u$ in the quantifier
  prefix, the assignment for $u$ is already fixed once $\Tau'$ has to
  decide on an assignment for $v$) and otherwise $v$ is set to
  $\neg l_v$. We claim that $\Tau'$ is an existential winning
  strategy for $\qbfformula$. Towards showing this first note that because of
  the clause $(C_u\cup C_w)\setminus\{v,\neg v\}$ in $\qbfformula_\exists$,
  it holds that $\alpha(\Tau,\delta)$ satisfies either $l_u$ or $l_w$
  for every universal play $\delta$. Now, if $\alpha(\Tau,\delta)$
  satisfies $l_u$ ($l_w$), then for $\alpha(\Tau',\delta)$ the clause $C_u$ is
  satisfied by $l_u$ ($l_v$) and the clause $C_w$ is satisfied by
  $\neg l_v$ ($l_w$). This shows that $\Tau'$ is an existential
  winning strategy for $\qbfformula$ and therefore $\qbfformula$ is satisfiable as
  required.
  
  We now continue with the case that $v$ is universal, i.e., we will
  show that $\qbfformula$ is unsatisfiable if and only if so is
  $\qbfformula_\forall$. So suppose that $\qbfformula$ is unsatisfiable and let
  $\Lambda$ be a universal winning strategy for $\qbfformula$. We claim that
  $\Lambda$ is also a universal winning strategy for
  $\qbfformula_\forall$. Towards showing this let $\delta$ be any play of the
  existential player. Then, $\alpha(\Lambda,\delta)$ does not satisfy
  some clause $C$ of $\qbfformula$. If $C$ is also in $\qbfformula_\exists$, then
  also $\qbfformula_\forall$ is not satisfied. Otherwise, $C=C_u$ (or $C=C_w$) and therefore
  $\alpha(\Lambda,\delta)$ does not satisfy $l_u$ ($l_w$), which shows that
  $\alpha(\Lambda,\delta)$ does not satisfy $\qbfformula_\forall$, as
  required.
  Suppose now that $\qbfformula_\forall$ is not satisfiable and let $\Lambda$
  be a universal winning strategy for $\qbfformula_\forall$.
  Because $P$ has at least two inner
  vertices, it holds that either $u$ or $w$ is an inner vertex of
  $P$; in the following we assume without loss of generality that $u$
  is an inner vertex of $P$. Moreover, because $v$ is the innermost variable in the prefix of $\qbfformula$, it
  follows that $v$ appears after $u$ in the prefix of $\qbfformula$. Let
  $\Lambda'$ be the strategy for $\qbfformula$ obtained from $\Lambda$ such that
  $\Lambda'$ sets $v$ to $\neg l_v$ if the assignment for $u$ does not satisfy
  $C_u$ (note again that because $v$ is after $u$ in the quantifier
  prefix, the assignment for $u$ is already fixed once $\Lambda'$ has to
  decide on an assignment for $v$) and otherwise $v$ is set to
  $l_v$. We claim that $\Lambda'$ is a universal winning
  strategy for $\qbfformula$. Towards showing this let $\delta$ be any play
  of the existential player on $\qbfformula_\forall$ (and also on
  $\qbfformula$). Because $\Lambda'$ wins on $\qbfformula_\forall$, it holds that
  $\alpha(\Lambda',\delta)$ does not satisfy some clause $C$ of
  $\qbfformula_\forall$. Clearly, if $C$ is also in $\qbfformula$, then there is
  nothing to show. Otherwise, $C=l_u$ (or $C=l_w$) and therefore
  $\alpha(\Lambda',\delta)$ does not satisfy $C_u$ (or $C_v$), as
  required. Therefore, $\Lambda'$ is a universal winning strategy for
  $\qbfformula$ and therefore $\qbfformula$ is unsatisfiable, as required.
\end{proof}

Note that after an exhaustive application of Lemma~\ref{lem:fes-degt},
we obtain an instance such that every clean
path in $H$ has at most 2 inner vertices. We are now ready to prove
Theorem~\ref{thm:kernel}.

\begin{proof}[Proof of Theorem~\ref{thm:kernel}]
  Let $\qbfformula$ be the instance obtained from the original formula after
  applying all of the above defined reduction rules
  exhaustively. Moreover, let $G$ be the
  primal graph of $\qbfformula$, $D$ be a smallest FES of $G$, $k=|D|$, and let $H=G-D$.
  Then, every clean
  path in $H$ has at most 2 inner vertices. Moreover, the number of
  (inclusion-wise) maximal clean paths in $H$ is at most
  $2k+2k-2-1=4k-3$. This is because, as argued above, $H$ has at most
  $2k$ leaves and at most $2k-2$ vertices of degree larger than $2$ and
  any tree with $2k+2k-2$ vertices has at most $2k+2k-3$
  edges. Therefore, the total number of vertices (edges) of $H$ is at most
  $2k+2k-2+2(4k-3)=12k-8$ ($12k-9$). This already bounds the number of
  variables of $\qbfformula$ to be at most $12k-8$ and it remains to bound the
  number of clauses of $\qbfformula$. Because every edge of $H$ whose endpoints
  are not in $D$ is clean, we obtain that all but at most $2k$ edges
  of $H$ are clean and therefore contribute at most one clause to $\qbfformula$.
  Moreover, in the worst case (w.r.t. the number of clauses that can be
  obtained), the remaining $2k$ edges of $H$ together with the edges
  in $|D|$ can form a clique of size at most $\lfloor (\sqrt{24k+1}+1)/2\rfloor$
  (this is because the number $n$ of vertices of a clique with $2k+k=3k$
  edges satisfies $\binom{n}{2}=3k$). Therefore, the kernel has at
  most $12k-9-2k=10k-9$ clauses coming from the clean edges of $H$ and
  at most $3^{\lfloor (\sqrt{24k+1}+1)/2\rfloor}$ clauses coming from the
  remaining edges of $G$, which gives a total of at most
  $10k-9+3^{\lfloor (\sqrt{24k+1}+1)/2\rfloor}$ clauses for $\qbfformula$.
  
  Note that if we restrict ourselves to $c$-$\QBFCNF$, then the
  resulting kernel can only have cliques of size at most $c$. Therefore,
  the $3k$ edges that are not clean in $G$ can contribute to at most
  $3k/\binom{c}{2}$ cliques of size at most $c$ each, which reduces the
  number of clauses resulting from these edges to
  $3^c(3k/\binom{c}{2})$, which is linear in $k$.
\end{proof}

\begin{proof}[Proof of Theorem~\ref{thm:kernel-inc}]
  Let $\qbfformula$ be the given CQBF formula, let $D\subseteq
  E(I_Q)$ be a FES for $I_Q$, i.e.,  the graph $I_Q-D$ in the
  following denoted by $F$ is a forest.
  As in the case of the primal graph, we
  first observe that we can assume that $\qbfformula$ does not contain
  any unit clauses or pure literals. 
  Consider a leaf $l$ that is not in
  $V(D)$. Then, $l$ is not a clause because $\qbfformula$ does not
  contain unit clauses and $l$ is also not a variable because
  $\qbfformula$ does not contain pure literals. Therefore, every leaf of
  $F$ is in $V(D)$, which implies that $F$ has at most $2k$ leaves.
  Since this implies that $F$ has at most $2k-2$ vertices of degree
  larger than $2$, it only remains to bound the number of vertices of
  $F$ of degree exactly $2$. As for the primal graph, we say that a
  path $P$ of $F$ is \emph{clean} if all inner vertices of $P$ have
  degree two in $F$ and are not in $V(D)$. Let $P$ be a clean path and
  let $v$ be a variable that is also an inner vertex of $P$ such that
  both neighbors of $v$ are also inner vertices of $P$. Then, $v$
  satisfies all properties implied by Observation~\ref{obs:fes-path},
  i.e., $v$ is contained in exactly two clauses $C_u$ and $C_w$ both
  of arity $2$ (i.e., its neighbors on $P$) and $v$ appears
  complementary in $C_u$ and $C_w$. Therefore, for any clean path
  containing such a variable $v$, we can apply
  Lemma~\ref{lem:fes-degt} to eliminate at least one variable on
  $P$. Since such a variable always exists if $P$ has at least $5$
  inner vertices, we can assume that every clean path has at most $5$
  inner vertices. Using the same argument as in the case of the primal
  graph, we obtain that $F$ contains at most $4k-3$(inclusion-wise)
  maximal clean paths. Consequently, $F$ has at most
  $5(4k-3)+2k+2k-2=24k-17$ vertices and therefore the reduced formula
  $\qbfformula$ has at most that many variables and clauses.
\end{proof}

\section{Omitted Proofs for Section~\ref{sec:tdp} (Exploring the Limits of Tractabilty for \QBFSATCNF on Primal Graphs)}

\begin{restatetheorem}[pro:computedelsetgeneral]
\begin{PROP}
  Let $\mP$ be any efficiently computable property and let $\qbfformula$ be a
  \QBFCNF{}. Then, computing a smallest $c$-deletion set $D$
  of $\qbfformula$ that satisfies $\mP(\qbfformula,D,c)$ is fixed-parameter tractable
  parameterized by $|D|+c$.
\end{PROP}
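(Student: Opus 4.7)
The plan is to reduce the problem to two ingredients: (1) enumerating all candidate $c$-deletion sets of a given size, and (2) testing each candidate against property $\mP$. For the first ingredient, I would invoke the known fixed-parameter tractability result of Kronegger, Ordyniak, and Pfandler, which shows that deciding whether a graph admits a $c$-deletion set of size at most $k$ is fpt parameterized by $k+c$ via a bounded-depth branching algorithm. Crucially, the branching tree produced by this algorithm has size bounded by a function of $k$ and $c$ alone, so by collecting a leaf for every accepting branch one obtains an \emph{enumeration} of all $c$-deletion sets of size at most $k$ in fpt-time parameterized by $k+c$.

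For the second ingredient, the assumption that $\mP$ is efficiently computable gives us, for any candidate set $D$, a way to decide $\mP(\qbfformula,D,c)$ in fpt-time parameterized by $|D|+c$. Composing the two, for any fixed $k$ I would enumerate all $c$-deletion sets of size at most $k$ and test $\mP$ on each; since both the number of candidates and the per-candidate test take fpt-time in $k+c$, the total running time remains fpt in $k+c$.

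To obtain a \emph{smallest} such set, I would iterate $k=1,2,\dots$ and stop at the first value $k^\star$ for which the above procedure returns at least one $c$-deletion set $D$ with $\mP(\qbfformula,D,c)$ true; this $D$ is then returned. Since $k^\star \leq |D|$ (the size of the optimal solution), the total work is dominated by the iteration for $k=k^\star$, yielding an fpt algorithm parameterized by $|D|+c$ as required.

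The main (and essentially only) obstacle is verifying that the branching algorithm of Kronegger et al.\ can indeed be used in enumeration mode without blowing up the running time; this follows from the fact that the bounded-depth search tree has at most $f(k,c)$ leaves, so enumerating witnesses costs at most an additional factor proportional to the tree size. No additional structural insight into QBFs is needed, because the property $\mP$ handles all QBF-specific reasoning through its assumed efficient computability.
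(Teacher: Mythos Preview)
Your proposal is correct and matches the paper's own proof essentially step for step: invoke the bounded-depth branching algorithm for $c$-deletion sets (the paper cites the same result), use it to enumerate all candidates of size at most $k$, test $\mP$ on each, and iterate $k=1,2,\dots$ until a witness is found. No substantive differences.
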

\end{restatetheorem}
\begin{proof}
  It is shown in~\cite[Theorem 12]{DBLP:journals/algorithmica/DrangeDH16} that
  deciding whether a graph $G$ has a $c$-deletion set of size at most
  $k$ is fixed-parameter tractable parameterized by $k+c$. The proof
  uses a bounded-depth search tree algorithm that can actually be used
  to enumerate all possible $c$-deletion sets of size at most $k$ in
  fpt-time parameterized by $k+c$. Therefore, using this algorithm, we
  can enumerate all $c$-deletion sets of size at most $k$ of a
  \QBFCNF{} $\qbfformula$ in the
  required time. Then, for each such $c$-deletion set $D$ we can use the
  algorithm for fact that the property $\mP$ is efficiently
  computable to decide whether $\mP(\qbfformula,D,c)$ is true or false.
  We can then return the smallest set such that $\mP(\qbfformula,D,c)$ is
  true or return false if no such set exists. Finally, by starting with $k=1$
  and increasing $k$ by long as the algorithm returns false, we can
  find a smallest $c$-deletion set for $\qbfformula$.
\end{proof}

We need the following notions.
Let $\qbfformula$ be a
$\QBFCNF$, $c$ an integer and $D \subseteq \var(\qbfformula)$ be a
$c$-deletion set of $\qbfformula$. We define
the following equivalence relation $\sim$ over the set of components
of $G_F-D$. That is, $C\sim C'$ for two components $C$ and $C'$ of
$G_F-D$ if there is a bijection $\eta : \var(C) \rightarrow
\var(C')$ such that $\qbfind{\qbfformula}{C'\cup D}$ is equal to the formula obtained
from $\qbfind{\qbfformula}{C\cup D}$ after renaming all variables in $C$ according to
$\eta$ (into variables of $C'$). Therefore, we let, for a subgraph or a set of vertices of $G_\qbfformula$,
$\qbfind{\qbfformula}{A}$ be the \QBFCNF{} obtained
from $\qbfformula$ after removing all variables outside of $A$
together with all clauses that have at least one variable outside of $A$.
We say two components $C$ and $C'$ with $C\sim C'$ have the same
\emph{(component) type} and denote by $\ctypes(\qbfformula,D)$ the set of \emph{all
component types} of $\qbfformula$ (with respect to $D$). Moreover, for a type $t \in \ctypes(\qbfformula,D)$,
we denote by $\CbyT(\qbfformula,D,t)$, the set of \emph{all components} of $\qbfformula$
having type $t$.

\begin{PROP}
\label{pro:ds-ct-nr}
  The relation $\sim$ has at most $(2|D|+1)^c2^{3^{c+|D|}}$ equivalence classes.
\end{PROP}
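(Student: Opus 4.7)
The plan is to bound the number of equivalence classes of $\sim$ by associating each class with a canonical labelled QBF and counting the number of such labelled QBFs.

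First, I would fix a canonical set $X = \{x_1, \ldots, x_c\}$ of fresh variable names disjoint from $D$. For any component $C$ of $G_\qbfformula - D$, we have $|\var(C)| \leq c$ since $D$ is a $c$-deletion set. Order $\var(C) = \{v_1, \ldots, v_{k}\}$ with $k \leq c$ according to the prefix of $\qbfformula$ and define the bijection $\eta_C : v_i \mapsto x_i$. Applying $\eta_C$ to $\qbfind{\qbfformula}{C \cup D}$ yields a QBF whose variables all come from $D \cup X$. By the definition of $\sim$, two components $C, C'$ are equivalent if and only if these canonically-relabelled QBFs coincide. Hence it suffices to count the number of distinct QBFs obtainable in this fashion.

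Next, I would enumerate these QBFs by separately counting possible matrices and possible prefixes. For the matrix: every clause ranges over the at most $c + |D|$ variables in $D \cup X$, and each such variable appears positively, negatively, or not at all in a clause, giving at most $3^{c+|D|}$ distinct clauses. A matrix is a set of clauses, so there are at most $2^{3^{c+|D|}}$ matrices. For the prefix: the order and quantifier types of the $D$-variables are inherited from the original prefix of $\qbfformula$ and are therefore fixed. For each canonical slot $x_i$ (with $i = 1, \ldots, c$) we must specify whether $x_i$ occurs and, if so, its quantifier type and its position relative to the $D$-variables. By associating each present $x_i$ with a unique $D$-variable (for instance, the nearest $D$-variable immediately preceding $x_i$ in the prefix, normalising the case when no such variable exists) and a quantifier type in $\{\exists, \forall\}$, each slot contributes at most $2|D| + 1$ options: one for absence, and $|D|$ choices of associated $D$-variable times two choices of quantifier type for presence. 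Multiplying over all $c$ slots gives at most $(2|D|+1)^c$ prefixes.

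Combining the two bounds produces the claimed count $(2|D|+1)^c \cdot 2^{3^{c+|D|}}$. The main obstacle is making the canonical encoding honestly injective: the choice of the bijection $\eta_C$ must be uniquely determined from $C$, which is handled by fixing the canonical ordering on $\var(C)$ inherited from the prefix of $\qbfformula$; and the encoding of each $x_i$'s position in the interleaving with $D$-variables must be set up so that the $(2|D|+1)$ factor is not violated by edge cases (such as $C$-variables preceding all $D$-variables). Once this bookkeeping is fixed, correctness is immediate since each equivalence class yields exactly one canonical QBF, and the counting argument above bounds the number of such QBFs.
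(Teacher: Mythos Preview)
Your proposal is correct and follows essentially the same approach as the paper: both arguments characterise a component type by (i) the positions and quantifier types of its at most $c$ variables relative to the $D$-variables in the prefix, contributing the $(2|D|+1)^c$ factor, and (ii) the set of clauses over the at most $c+|D|$ relevant variables, contributing the $2^{3^{c+|D|}}$ factor. Your treatment is slightly more explicit about fixing a canonical relabelling (the prefix-order bijection $\eta_C$) and about the ``absence'' option for unused slots, whereas the paper's proof leaves these implicit; but the counting and the underlying idea are the same.
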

\begin{proof}
  Let $C$ be a component of $G_\qbfformula-D$. Then, the type of $C$ is
  completely characterized by where (relative to the variables in $D$)
  and how (existential or universal) the variables of $C$ occur in the
  prefix of $\qbfformula$ as well as the set of clauses containing variables
  of $C$. Since every variable $v$ of $C$ can be placed in at most
  $|D|+1$ distinct positions in the prefix of $\qbfformula$ and can only be
  quantified either existentially or universally, it follows that there
  are at most $(2|D|+1)^c$ distinct ways that the variables of $C$ can
  appear in the prefix of $\qbfformula$. Moreover, since variables of $C$ can
  only appear together in a clause with variables in $C\cup D$, we
  obtain that there are at most $3^{c+|D|}$ distinct clauses
  containing variables of $C$ and therefore at $2^{3^{c+|D|}}$
  distinct sets of such clauses. Therefore, there are at most
  $(2|D|+1)^c2^{3^{c+|D|}}$ distinct types of components, as required.
\end{proof}

\begin{PROP}
  \label{pro:delset-comptypes}
  Let $\qbfformula$ be a $\QBFCNF$
  and let $D$ be a $c$-deletion set for $G_\qbfformula$. Then, we can
  compute $\ctypes(\qbfformula,D)$ and $\CbyT(\qbfformula,D,t)$ for every $t \in
  \ctypes(\qbfformula,D)$ in polynomial time.
\end{PROP}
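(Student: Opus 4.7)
The plan is to compute the data structures by standard graph-theoretic preprocessing followed by a canonicalisation of each component. First, I would compute the connected components of $G_\qbfformula - D$ in linear time using a breadth-first search. Let these components be $C_1,\dots,C_m$. Since $D$ is a $c$-deletion set we have $|\var(C_i)| \leq c$ for each $i$, which will keep the isomorphism testing cheap.

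Next, for each component $C_i$ I would extract the data that determines its type up to $\sim$: the sequence of positions and quantifier labels of the variables of $C_i$ \emph{relative} to the variables in $D$ in the prefix of $\qbfformula$, together with the set of clauses of $\matr(\qbfformula)$ that contain at least one variable of $C_i$ (by definition of the primal graph, these clauses have their variables in $C_i \cup D$). This can be read off in polynomial time by scanning the prefix and the clause list once per component.

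To test whether $C_i \sim C_j$, I would try all bijections $\eta : \var(C_i) \to \var(C_j)$ and for each check that $\eta$ maps the induced labelled structure of $C_i$ (relative positions, quantifier types, renamed clauses) to that of $C_j$. There are at most $c! \leq c^c$ such bijections, and each check takes polynomial time, so one comparison costs $c^c \cdot \poly(\CCard{\qbfformula})$; equivalently, this produces a canonical string for each $C_i$ (taking the lexicographically smallest encoding over all orderings of $\var(C_i)$), and two components are $\sim$-equivalent iff they share this canonical string. Bucketing the $C_i$ by their canonical strings yields $\ctypes(\qbfformula,D)$ together with $\CbyT(\qbfformula,D,t)$ for each $t$.

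The only potential obstacle is the cost of the bijection enumeration when $c$ is large; this gives a running time polynomial in $\CCard{\qbfformula}$ but with an $f(c)$-overhead, which is exactly fpt-admissible in the parameter $|D|+c$ used elsewhere in this section (and is genuinely polynomial when $c$ is treated as a fixed constant). Note that no non-trivial graph isomorphism machinery is required: because $|\var(C_i)|\le c$, brute-force enumeration of bijections suffices, so the approach is elementary and the correctness of the grouping follows directly from the definition of $\sim$.
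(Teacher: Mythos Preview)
Your approach is correct in spirit but does not establish the statement as written: the proposition claims \emph{polynomial time}, whereas your enumeration of all $c!$ bijections yields only $f(c)\cdot\poly(\CCard{\qbfformula})$ time. You acknowledge this as a ``potential obstacle'' and argue it is acceptable for the fpt results downstream, but that is a weaker conclusion than the one being asserted.

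The paper avoids this overhead by a simple observation you overlooked. Recall that $\qbfind{\qbfformula}{C\cup D}$ is itself a \QBFCNF, hence it carries a prefix, and the equivalence $C\sim C'$ requires that renaming the $C$-variables via $\eta$ yields a formula \emph{equal} to $\qbfind{\qbfformula}{C'\cup D}$, prefix included. Since the $D$-variables are not renamed and must occupy identical positions in both prefixes, the only bijection that can possibly witness $C\sim C'$ is the one sending the $j$-th variable of $C$ (in prefix order) to the $j$-th variable of $C'$. Thus exactly one candidate $\eta$ needs to be tested per comparison, and the whole procedure runs in genuine polynomial time. Your canonical-string idea works once you fix this: instead of minimising over all orderings of $\var(C_i)$, simply use the ordering inherited from the prefix of $\qbfformula$.
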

\begin{proof}
  Let $C_1,\dotsc,C_m$ be the components of $G_\qbfformula-D$. We start by
  setting $T=\emptyset$ and then for every $i$ with $1 \leq i \leq m$,
  we do the following: If $T=\emptyset$, then we add the set $\{C_i\}$
  to $T$. Otherwise, we go through every $t \in T$ and test whether
  $C_i$ is of type $t$ as follows. Let $C_t \in t$ be arbitrary. If $|\var(C_i)|\neq
  |\var(C_t)|$, then $C_i$ is not of type $t$ and we proceed to the next $t \in T$. Otherwise, $\eta : \var(C_i)
  \rightarrow \var(C_t)$ be the unique mapping that maps the $i$-the
  variable of $C_i$ in the prefix of $\qbfind{\qbfformula}{C_i}$ to the $i$-th variable of
  $C_t$ in the prefix of $\qbfind{\qbfformula}{C_t}$. We then check, whether
  $\qbfind{\qbfformula}{C_t\cup D}$ is equal to the formula obtained from
  $\qbfind{\qbfformula}{C_i\cup D}$ after renaming every variable in $C_i$ according
  to $\eta$. Note that this can be achieved in polynomial time by
  going over all variables and clauses of $\qbfind{\qbfformula}{C_t\cup D}$. If this
  is the case, then $C_i$ is of type $t$ and we add $C_i$ to $t$, otherwise we proceed to the next
  $t \in T$. Finally, if there is no $t \in T$ such that $C_i$ is of
  type $t$, then we add the set $\{C_i\}$ to $T$. After having
  considered all components in this manner, the resulting set $T$ is
  equal to $\ctypes(\qbfformula,D)$ and for every $t \in T$ the set
  $\CbyT(\qbfformula,D,t)$ is equal to $t$.
\end{proof}

We now show that we can eliminate all universal variables in a
$c$-deletion set $D$ of a $\QBFCNF$ without losing the
structure of the formula, i.e., after eliminating all universal
variables we obtain a formula $\qbfformula'$ and a $2^cc$-deletion set of
size at most $2^cc$.
\begin{restatetheorem}[pro:delset-exists]
\begin{PROP}
  Let $\qbfformula$ be a $\QBFCNF$ and let $D$ be a $c$-deletion set for $\qbfformula$. Then, in time
  $\mathcal{O}(2^u\CCard{\qbfformula})$, where $u=|D\cap \varu(\qbfformula)|$, we can construct an equivalent $\QBFCNF$
  $\qbfformula'$ and a set $D'\subseteq \vare(\qbfformula')$ with $|D'|\leq 2^u|D|$
  such that $D'$ is a $2^uc$-deletion set for $\qbfformula'$.
\end{PROP}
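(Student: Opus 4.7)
The plan is to proceed by induction on $u = |D \cap \varu(\qbfformula)|$. The base case $u = 0$ is immediate by setting $\qbfformula' = \qbfformula$ and $D' = D$. For the inductive step, we pick the \emph{innermost} universal variable $v_i \in D$ (i.e., the one occurring latest in the prefix of $\qbfformula$) and eliminate it via the standard semantic identity $\forall v_i. \psi \equiv \psi[v_i = 0] \land \psi[v_i = 1]$. Concretely, writing $\qbfformula = Q_1 v_1 \cdots Q_n v_n.\, F$, we replace the subformula starting at $v_i$ with a conjunction of two copies: in the first we instantiate $v_i \mapsto 0$, and in the second we instantiate $v_i \mapsto 1$ while renaming every variable $v_j$ with $j > i$ to a fresh copy $v_j'$. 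Bringing the result into prenex normal form yields a QBF $\qbfformula_c(v_i)$ that is equivalent to $\qbfformula$, is at most twice as large, and is computable in time $\bigO(\CCard{\qbfformula})$.

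Next I would track how $D$ evolves. Since $v_i$ is the innermost universal variable of $D$, all variables of $D$ occurring after $v_i$ are existential; let $A$ denote this set and $A' = \{v' \mid v \in A\}$ its set of fresh copies. Setting $D'' = (D \setminus \{v_i\}) \cup A'$ gives $|D''| \leq 2|D| - 1$, strictly fewer universal variables in $D''$ (namely $u-1$), and $D'' \subseteq \vare(\qbfformula_c(v_i))$ for the existential variables newly introduced. The core verification is showing that $D''$ is a $2c$-deletion set. For any component $C$ of $G_\qbfformula - D$ with vertex set $X_1 \cup X_2$ (variables before and after $v_i$, respectively), the vertices in $X_2$ are duplicated into $X_2'$, and the edges inherited from the two copies of the matrix produce exactly two copies of $C$'s edge-structure glued along $X_1$. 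Crucially, no two originally distinct components merge, because edges in the new graph come from duplicated clauses whose variable sets are subsets of those of the original clauses (with some renamings) and thus only respect the old connectivity pattern. Hence each original component of size at most $c$ becomes a single new component of size at most $2c$.

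Iterating this construction $u$ times (always picking the current innermost universal variable in the evolving deletion set) eliminates all universal variables from $D$. The size of the formula, the size of the deletion set, and the component size each at most double per step, giving the claimed bounds $|D'| \leq 2^u |D|$, component size $\leq 2^u c$, and overall running time $\bigO(2^u \CCard{\qbfformula})$.

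The main subtlety I expect to have to argue carefully is the component-structure claim: namely that after the rename-and-split, no two previously separate components merge in $G_{\qbfformula_c(v_i)} - D''$, and that the two ``copies'' of a split component remain in a single new component of bounded size. This comes down to a clause-by-clause check that every edge in the new primal graph arises from a clause $c[v_i=b]$ or its renamed version $c[v_i=b]'$, whose endpoints lie in $\var(c) \setminus \{v_i\}$ (possibly with the post-$v_i$ variables renamed), so any path in the new graph outside $D''$ projects to a path in the original graph outside $D$. Combined with the straightforward bookkeeping on sizes and the inductive application to the remaining $u-1$ universal variables, this yields the proposition.
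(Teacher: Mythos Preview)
Your proposal is correct and follows essentially the same approach as the paper's proof: both eliminate the innermost universal variable of $D$ via Shannon expansion, rename the suffix variables to restore prenex form, update the deletion set to $D'' = (D \setminus \{v_i\}) \cup A'$, and iterate. In fact you supply more justification than the paper for why $D''$ is a $2c$-deletion set (the paper simply asserts it), and your projection argument that no two original components merge is exactly the right reason.
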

\end{restatetheorem}
\begin{proof}
  Let $\qbfformula=Q_{1}v_1Q_2 v_2\cdots Q_n v_n F$.
  If $\varu(D)=\emptyset$, then we simple return $\qbfformula'=\qbfformula$ and
  $D'=D$. Otherwise, let $v_i \in \varu(D)$ be the universal variable in $D$
  that is the innermost variable in the prefix of $\qbfformula$. We will use quantifier
  expansion to eliminate $v_i$. That is, let $\qbfformula(v_i)$ be the formula
  obtained from $\qbfformula$ after eliminating $v_i$, i.e.:

  \[\begin{array}{cc}
      \qbfformula(v_i) = & \pushleft{Q_{1}v_1\cdots Q_{i-1} v_{i-1}((Q_{i+1}v_{i+1}\cdots Q_{n}
      v_{n}) }\\&\pushleft{F[v_i=0]) \land (Q_{i+1}v_{i+1}\cdots Q_{n}v_{n}
      F[v_i=1]))}
    \end{array}
  \]
  Note that $\qbfformula(v_i)$ is equivalent with $\qbfformula$ and can be computed
  in time $\bigO(\CCard{\qbfformula})$. However,
  $\qbfformula(v_i)$ is not in prenex normal form. To bring $\qbfformula_{v_i}$
  into prenex normal form, we introduce a copies $v'$ for every
  variable $v \in \{v_{i+1},\dotsc,v_n\}$ and we then rewrite
  $\qbfformula(v_i)$ into the equivalent formula $\qbfformula_c(v_i)$ given as:
  \[\begin{array}{cc}
      \qbfformula_c(v_i) = & \pushleft{Q_{1}v_1\cdots Q_{i-1} v_{i-1}Q_{i+1}v_{i+1}\cdots Q_{n}v_n}\\&\pushleft{Q_{i+1}v_{i+1}'\cdots Q_{n}
      v_{n}' F[v_i=0]) \land F'[v_i=1]}
    \end{array}
  \]
  where $F'$ is the CNF formula obtained from $F$ after renaming every
  occurrence of a variable $v \in \{v_{i+1},\dotsc,v_n\}$ to $v'$. Then:
  \begin{itemize}
  \item $\qbfformula_c(v_i)$ has at most twice the size of $\qbfformula$ and can be
    computed in time $\bigO(\CCard{\qbfformula})$,
  \item Let $A$ be the set of all (existential) variables in $D$
    that occur after $v_i$ in the prefix of $\qbfformula$ and let $A'=\SB
    v' \SM v \in A\SE$. Then, $D''=(D\setminus \{v_i\})\cup A'$ is a $2c$-deletion set of
    $\qbfformula_{v_i}'$. Moreover, $D''$ contains one less universal
    variable than $D$. Therefore, $\qbfformula_c(v_i)$ has a $2c$-deletion
    set that is at most twice the size of $D$ and contains one less
    universal variable.
  \end{itemize}
  It follows that if we repeat the above process for every universal
  variable in $D$ in the reverse order those are occurring in the
  prefix of $\qbfformula$, we obtain an equivalent formula $\qbfformula'$ having a
  $2^uc$-deletion set $D'$ of size at most $2^u|D|$ with $D' \subseteq
  \vare(\qbfformula')$ in time $\bigO(\CCard{\qbfformula'})=\bigO(2^u\CCard{\qbfformula})$.
\end{proof}

}

\subsection{An Algorithm for Components of Type $\exists^{\leq 1}\forall^u$}

\begin{LEM} 
  \label{lem:delset-ef-red}
  Let $\qbfformula$ be a $\QBFCNF$ and let $D \subseteq \var(\qbfformula)$ be a
  $c$-deletion set for $\qbfformula$. Let $t \in \ctypes(\qbfformula)$ be of the
  form $\exists^{\leq 1}\forall$ with $\CbyT(\qbfformula,D,t)=\{A_1,\dotsc,A_{r}\}$,
  where $(A_1,\dotsc,A_r)$ is the ordering of the components of type
  $t$ according to the occurrence of the unique existential variable in
  the component.
  Then, $\qbfformula$ is equivalent to the formula $\qbfformula'$ obtained from
  $\qbfformula$ after removing all clauses containing
  variables of the components $\{A_{2^{c}},\dotsc,A_r\}$ (and
  all variables of the components $\{A_{2^{c}},\dotsc,A_r\}$).
\end{LEM}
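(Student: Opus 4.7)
The proof splits into forward and backward directions. The forward direction ($\qbfformula$ true implies $\qbfformula'$ true) is straightforward: given a winning existential strategy $\Tau$ for $\qbfformula$, restrict it to the variables of $\qbfformula'$ by fixing each removed universal variable to, say, $0$. Because every clause of $\qbfformula'$ is also a clause of $\qbfformula$ and uses only retained variables, the restricted strategy wins $\qbfformula'$.

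For the backward direction I would exploit two structural observations. First, any clause of $F$ whose variables meet two distinct components of $G_F - D$ would create an edge between those components in $G_F - D$, contradicting the assumption that they lie in different components; hence every clause of $F$ lies either entirely within $D$, or entirely within $A_i \cup D$ for a single $i$. Therefore the removal operation only eliminates clauses local to the removed components. Second, by definition of $\sim$, the induced formulas $\qbfind{\qbfformula}{A_i \cup D}$ for $A_i$ of type $t$ coincide after a bijection on the $A_i$-variables, so the local subgame in each $A_i$ is structurally identical. Since each such component is of type $\exists^{\leq 1}\forall$ with at most $c$ variables, the local subgame admits at most $2$ choices for the existential variable (when present) and at most $2^{c-1}$ responses for the universal block, giving at most $2^c$ combinatorial local plays.

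I would then construct a winning existential strategy $\Tau$ for $\qbfformula$ from a given winning strategy $\Tau'$ for $\qbfformula'$. On the variables of $\qbfformula'$, $\Tau$ mimics $\Tau'$, treating removed universal variables as invisible. For each removed existential variable $e_j$ (with $j \geq 2^c$), I would set $\Tau(e_j) := \Tau'(e_i)$ for a kept $A_i$ matched to $A_j$; since the components are ordered by the prefix position of their existential variable, $e_j$ lies strictly after every such $e_i$ in the prefix, so the information available at $e_j$ contains that available at $e_i$, making this assignment well-defined. The pigeonhole matching is possible because with $2^c - 1$ kept components of type $t$ and at most $2^c$ relevant local plays, every configuration that could arise in $A_j$ under a universal play in $\qbfformula$ corresponds to one already exhibited by some kept $A_i$. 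Verification then amounts to showing that any universal play $\delta$ in $\qbfformula$ restricts to a universal play $\delta'$ in $\qbfformula'$ under which $\Tau'$ wins, and that the local clauses of each removed $A_j$ are satisfied by $\Tau(e_j)$ together with the universal play in $A_j$, via the structural equivalence of local games.

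The main obstacle is the rigorous justification of this pigeonhole matching. The challenge is twofold: one must argue that the $2^c - 1$ kept components collectively simulate every local configuration a universal strategy may force in a removed component, and one must handle the dependency of local play on the interleaving of $D$-variable assignments with the prefix positions of universal responses $u_{j,*}$. I expect this to require a careful Hintikka-style argument showing that any universal winning move in a removed component would translate to a universal winning move in at least one kept component, contradicting the hypothesis that $\Tau'$ wins $\qbfformula'$; the bound $2^c$ is tight for this argument because the local game has exactly $2^c$ possible combinatorial plays.
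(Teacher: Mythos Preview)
Your forward direction is fine and matches the paper. The backward direction, however, has a genuine gap that the paper's proof handles by a fundamentally different construction.

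The problem is with your ``mimicking plus matching'' scheme. You propose that on kept components $\Tau$ plays $\Tau'$ against the \emph{actual} restricted universal play $\delta|_{\var(\qbfformula')}$, and that for each removed $e_j$ you copy $\Tau'(e_i)$ from a kept $A_i$ whose local universal play under $\delta$ matches that of $A_j$. But the universal player chooses $\delta$, and nothing prevents her from playing the \emph{same} local assignment (say all zeros) in every kept component $A_1,\dotsc,A_{2^c-1}$ while playing a different local assignment in a removed $A_j$. Then no kept component exhibits the configuration needed for $A_j$, and your pigeonhole matching simply does not exist. Your acknowledged ``main obstacle'' is not a technicality that a more careful Hintikka argument would overcome; the approach as described is structurally blocked.

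The paper's fix is to have the existential strategy feed $\Tau'$ not the actual play $\delta$ but a \emph{modified} play $\delta^S$: iteratively, for each universal variable $y_i$ in a kept component, one checks how many kept components so far agree with the current partial local play \emph{and} existential answer, and flips $y_i$ whenever that count exceeds $2^{u-a}$ (for the $a$-th universal variable). This redistribution guarantees that, under $\delta^S$, every possible local universal pattern paired with each of the two existential answers is realized in some kept component. The existential answer for a removed $x_j$ is then the \emph{majority} value $b$ among $\tau'_{x_1}(\delta^S),\dotsc,\tau'_{x_{2^{u+1}-1}}(\delta^S)$ (occurring at least $2^u$ times), which together with the coverage property yields a kept $A_i$ with $\delta^S(Y_i)=\delta(Y_j)$ and $\tau'_{x_i}(\delta^S)=b$; the copy of any falsified clause in $A_i$ would then be falsified by $\Tau'$ against $\delta^S$, a contradiction. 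The same coverage property also handles clauses inside kept components, where the actual play $\delta(Y_i)$ may differ from $\delta^S(Y_i)$. The key idea you are missing is precisely this: existential must \emph{pretend} universal played a balanced $\delta^S$ rather than the adversarially chosen $\delta$.
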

\begin{proof}
  We will show the lemma for the case that $t$ is of the form
  $\exists\forall^u$, which is the most general case, where $u$ is the
  number of universal variables in the components of type $t$.
  Let $x_i$ be the unique existential variable of $A_i$ and let $Y_i$
  be the set of all universal variables of $A_i$. Note that
  $x_1,\dotsc,x_r$ is the order the existential variables occur in the
  prefix of $\qbfformula$. Moreover, let $X=\{x_1,\dotsc,x_n\}$, $X_P=\{x_1,\dotsc,x_{2^{u+1}-1}\}$, and let
  $Y=\bigcup_{i=1}^rY_i$.
  
  We show the lemma by showing that $\qbfformula$ has an existential winning
  strategy if and only if $\qbfformula'$ does. The forwards direction of
  the claim is trivial, because the clauses of $\qbfformula'$ are a subset of
  the clauses of of
  $\qbfformula$. Towards showing the reverse direction, let $\Tau'=(\tau_x':
  \{0,1\}^{\var(\qbfformula')_{<x}^{\forall}}\rightarrow \{0,1\})_{x\in \vare(\qbfformula')}$ be
  an existential winning strategy for $\qbfformula'$. We will show how to
  construct an existential winning strategy $\Tau=(\tau_x:
  \{0,1\}^{\var(\qbfformula)_{<x}^{\forall}}\rightarrow \{0,1\})_{x\in \vare(\qbfformula)}$ for
  $\qbfformula$.

  Let $x \in \vare(\qbfformula)$ and let $\delta : V^\forall_{<x}
  \rightarrow \{0,1\}$ be any play of the universal variables that
  occur before $x$ in the prefix of $\qbfformula$. Let $\delta'$ be the
  restriction of $\delta$ to the variables of $\qbfformula'$, i.e.,
  $\delta'=\delta_{\var(\qbfformula')}$. We now define a stronger play
  $\delta^S$ for the variables in $\var(\delta')$ iteratively as
  follows. Let $(y_1,\dotsc,y_n)$ be the variables in $\var(\delta')$
  as they occur in the quantifier prefix of $\qbfformula'$. We set
  $\delta^S=\delta^S_n$ and $\delta^S_0=\delta'$.
  Moreover, for every $i$ with $0<i\leq n$,
  $\delta^S_i$ is obtained from $\delta^S_{i-1}$ as follows.
  \begin{itemize}
  \item If $y_i$ is not a variable in a component of type $t$, i.e.,
    $y_i \notin Y$, then we set $\delta^S_i=\delta^S_{i-1}$.
  \item Otherwise, $y_i$ is the $a$-th universally quantified variable
    of a component $A_l$ for some $1\leq a \leq u$ and $1 \leq l \leq
    2^{u+1}-1$. We start by setting $\delta_i^S=\delta_{i-1}^S$.
    Let $\beta : X_P \cap \var(\delta_{i-1}^S)$ be the assignment of existential variables in
    $X_P \cap \var(\delta_{i-1}^S)$ obtained obtained when $\Tau'$ is
    played against $\delta^S_{i-1}$, i.e.,
    $\beta(v)=(\alpha(\Tau',\delta_{i-1}^S))_{X_P \cap \var(\delta_{i-1}^S)}$. Note that 
    $x_l$ occurs before $y_i$ in the prefix of $\qbfformula'$ and
    therefore $\beta(x_j)$ is well-defined.
    Let $S$ be the set of all indices $j$ such that
    $\delta_{i-1}^S(Y_j^a)=\delta_{i-1}^S(Y_l^a)$ and
    $\beta(x_l)=(\alpha(\Tau',\delta_{i-1}^S))(x_j)$, i.e., all
    indices of the components $(A_1,\dotsc,A_{2^{u+1}-1})$, where the
    assignment of the unique existential variable as well as the assignment
    of the first $a$ universal variables coincides with $A_l$, when
    $\Tau'$ is played against $\delta_{i-1}^S$. Then, we set
    $\delta_i^S(y_i)=\delta_{i-1}^S(y_i)$ if $|S|\leq 2^{u-a}$ and
    $\delta_i^S(y_i)=1-\delta_{i-1}^S(y_i)$, otherwise.
  \end{itemize}

  We are now ready to define $\tau_x(\delta)$ as follows. 
  \begin{itemize}
  \item if $x \in \vare(\qbfformula')$, we set
    $\tau_x(\delta)=\tau_x'(\delta^S)$
  \item Otherwise, $x=x_i$ for some $i$ with $2^{u+1}\leq i \leq r$.
    Therefore, $x$ occurs after all variables
    $x_1,\dotsc,x_{2^{u+1}-1}$ and every such variable $x_j$ has already been
    assigned the value $\tau_{x_j}'(\delta^S)$. Let $b\in \{0,1\}$ be
    the value that occurs most often among the values
    $\tau_{x_1}'(\delta^S),\dotsc,\tau_{x_{2^{u+1}-1}}'(\delta^S)$. Note that
    $b$ occurs at least $(2^{u+1}-1)/2\geq 2^u$ times. Then, we set
    $\tau_x(\delta)=b$.
  \end{itemize}
  This completes the definition of $\Tau$ and it remains to show that
  $\Tau$ is indeed a winning strategy for the existential player on
  $\qbfformula$. Towards showing this, we start by showing the following
  properties for the assignment $\delta^S$ obtained from $\delta$.

  
  Suppose not, then there is a universal play $\delta :
  V^\forall\rightarrow \{0,1\}$ such that $\alpha(\Tau,\delta)$ does
  not satisfy some clause $C$ of $\qbfformula$.

  If $C$ is not in a component $A_i$, then $C$ is also part of $\qbfformula$
  and moreover all variables in $C$ are assigned according to
  $\alpha(\Tau',\delta^S)$. Therefore, $\alpha(\Tau',\delta^S)$ would
  not satisfy $\qbfformula'$ a contradiction to our assumption that $\Tau'$
  is a winning existential strategy for $\qbfformula'$. Therefore, we can
  assume that $C$ is in a component $A_i$.

  If $i\geq 2^{u+1}$, then
  $\tau_{x_i}(\delta)$ plays the value $b$ occurring at least $2^u$
  times among
  $\tau_{x_1}'(\delta^S),\dotsc,\tau_{x_{2^{u+1}-1}}'(\delta^S)$. Therefore,
  by the definition of $\delta^S$, there is a $1\leq j \leq 2^{u+1}-1$ such that $\tau_{x_j}(\delta)=b$
  and $\delta^S(Y_j)=\delta(Y_i)$, which implies that the copy $C'$ of
  $C$ in $A_j$ is also not satisfied. But then, $C'$ is also not
  satisfied by $\alpha(\Tau',\delta^S)$ contradicting our
  assumption that $\Tau'$ is winning strategy for $\qbfformula'$.

  If, on the other hand, $i<2^{u+1}$, then by the definition of
  $\delta^S$, there is a $1\leq j\leq 2^{u+1}-1$ such that
  $\tau_{x_j}(\delta)=\tau_{x_i}(\delta)$ and
  $\delta^S(Y_j)=\delta(Y_i)$, which implies that the copy $C'$ of
  $C$ in $A_j$ is also not satisfied. But then, $C'$ is also not
  satisfied by $\alpha(\Tau',\delta^S)$ contradicting our
  assumption that $\Tau'$ is winning strategy for $\qbfformula'$.
\end{proof}

\begin{restatetheorem}[thm:delsetformfpt]
\begin{THM}
  Let $\qbfformula$ be a $\QBFCNF$ and let $D \subseteq \var(\qbfformula)$ be a
  $c$-deletion set for $\qbfformula$ into components of the
  form $\exists^{\leq 1}\forall^u$. Then, deciding whether $\qbfformula$ is
  satisfiable is fixed-parameter tractable parameterized by $|D|+c$.
\end{THM}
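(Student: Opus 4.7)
The plan is to combine the structural bounds on component types with the equivalence-preserving reduction rule of Lemma~\ref{lem:delset-ef-red} to produce a kernel whose size is bounded purely in terms of $|D|+c$, and then solve the kernel by any (e.g.\ brute-force) decision procedure for \QBFCNF.

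First, I would use Proposition~\ref{pro:delset-comptypes} to compute in polynomial time the set $\ctypes(\qbfformula,D)$ of component types of $\qbfformula-D$, together with the class $\CbyT(\qbfformula,D,t)$ of components realizing each type $t\in \ctypes(\qbfformula,D)$. Since we assume that every component of $G_\qbfformula{-}D$ is of the form $\exists^{\leq 1}\forall^u$ with $u\le c-1$, every such type $t$ falls under the scope of Lemma~\ref{lem:delset-ef-red}. For each $t\in \ctypes(\qbfformula,D)$, I would therefore apply Lemma~\ref{lem:delset-ef-red} exhaustively: delete all but the first $2^{u+1}-1 \le 2^c$ components of type $t$ (ordered as in the lemma), together with the clauses containing their variables. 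The lemma guarantees that each such deletion step yields an equivalent formula, so after processing all types we obtain an equivalent \QBFCNF{} $\qbfformula^\star$ in which every component type is realised by at most $2^c$ components.

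Next, I would bound the size of $\qbfformula^\star$ in terms of $|D|+c$ alone. By Proposition~\ref{pro:ds-ct-nr}, we have $|\ctypes(\qbfformula^\star,D)| \le (2|D|+1)^c 2^{3^{c+|D|}}$; and every type contributes at most $2^c$ components of size at most $c$, so $|\var(\qbfformula^\star)| \le |D| + c\cdot 2^c\cdot (2|D|+1)^c 2^{3^{c+|D|}}$, with an analogous bound on the number of clauses (recalling that each clause uses variables from $D$ together with variables of a single component and there are at most $3^{c+|D|}$ such clauses per type). Hence $\CCard{\qbfformula^\star}$ is bounded by a computable function $g(|D|+c)$. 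Finally, since \QBFSATCNF\ is decidable, we can solve $\qbfformula^\star$ by any standard (e.g.\ alternating) algorithm in time depending only on $\CCard{\qbfformula^\star}$, giving overall running time $f(|D|+c)\cdot \poly(\CCard{\qbfformula})$ for some computable $f$.

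I do not expect any serious obstacle: Lemma~\ref{lem:delset-ef-red} already does the technical work of showing that keeping only $2^{u+1}-1$ components per type preserves validity, and Propositions~\ref{pro:ds-ct-nr} and~\ref{pro:delset-comptypes} supply both the bound on the number of types and the polynomial-time algorithm to compute them. The only point to be careful about is the ordering condition in Lemma~\ref{lem:delset-ef-red} (the components must be enumerated by the position of their unique existential variable in the prefix, if one exists); when a type has no existential variable at all, any order works since the components are then independent universal sub-formulas whose conjunction is equivalent to a single representative. With that observation in place, the combination described above yields the desired fixed-parameter algorithm.
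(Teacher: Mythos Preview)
Your proposal is correct and follows essentially the same route as the paper: apply Lemma~\ref{lem:delset-ef-red} to each component type to keep at most $2^c$ components per type, invoke Proposition~\ref{pro:ds-ct-nr} to bound the number of types by $(2|D|+1)^c 2^{3^{c+|D|}}$, and then brute-force the resulting bounded-size formula. The paper's proof is slightly terser (it does not explicitly invoke Proposition~\ref{pro:delset-comptypes} or discuss the no-existential-variable case), but the argument is the same.
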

\end{restatetheorem}
\begin{proof}
  For every type $t \in \ctypes(\qbfformula,D)$, we use
  Lemma~\ref{lem:delset-ef-red} to reduce the number of components of
  type $t$ to at most $2^c$. Let $\qbfformula'$ be the formula obtained from
  $\qbfformula$ after an exhaustive application of Lemma~\ref{lem:delset-ef-red}. Because of
  Proposition~\ref{pro:ds-ct-nr}, there are at most
  $a=(2|D|+1)^c2^{3^{c+|D|}}$ types in $\ctypes(\qbfformula,D)$. Therefore,
  $\qbfformula'$ has at most $a2^cc+|D|=(2|D|+1)^c2^{3^{c+|D|}}2^cc+|D|$
  variables and can therefore be solved by brute-force in fpt-time
  parameterized by $|D|+c$.
\end{proof}

\subsection{Single-Variable Deletion Sets}\label{sec:1cdeletion}
We now consider the case of deletion sets consisting of a single variable~$e$.
We can assume that the variable~$e$ is existentially quantified---if it is universally quantified, we can apply Shannon expansion to obtain a QBF that decomposes into variable-disjoint subformulas that can be evaluated separately.
Furthermore, we can assume without loss of generality that $e$ is the innermost variable in the quantifier prefix, since an innermost variable that occurs only within a component can be removed by universal reduction if it is universally quantified or Shannon expansion (also called \emph{variable elimination} in this case) if it is existentially quantified.
Note that applying variable elimination to the deletion variable $e$ does not help, since it disjoins every pair of clauses from distinct components and completely obfuscates the formula's structure.

For the remainder of this section, let $\qbfformula = \mathbf{P}.F$ be a QBF with prefix $\mathbf{P} = Q_1v_1\dots Q_\ell v_\ell \exists e$ and matrix $F_1 \land \dots \land F_m$ such that $\var(F_i) \cap \var(F_j) \subseteq \{e\}$ for $1 \leq i < j \leq m$. We will show that the truth value of $\qbfformula$ can be efficiently computed from properties of the individual components. For $1 \leq i \leq m$, we let $\qbfformula^{(i)} = \mathbf{P}.F_i$ denote the \emph{component QBF} that has the same quantifier prefix as $\qbfformula$ and matrix~$F_i$.

For $0 \leq i \leq n$, let $V_i = \{v_1, \dots, v_i\}$ denote the set containing the first $i$ variables of the prefix (defining $V_0 = \emptyset$), and $V^q_i = V_i \cap \var^q(\qbfformula)$ its restriction to quantifier $q \in \{\exists, \forall\}$.
Below, we consider \emph{partial} universal (existential) strategies, by which we mean strategies that are only defined on universal (existential) variables in $V_i$, for some $0 \leq i \leq \ell$.
We say that a partial universal (respectively, existential) strategy $\Tau$ \emph{forbids} (\emph{permits}) an assignment $\sigma$ in a Boolean formula $\booleanformulaA$ if any assignment $\beta$ that is consistent with $\Tau$ and $\sigma$ falsifies (satisfies) $\booleanformulaA$.
Here, an assignment $\beta$ is \emph{consistent} with a partial strategy $\Tau$ and an assignment $\sigma$ if $\beta$ extends $\sigma$ and $\tau_{v_i}(\beta|_{V^q_{i}}) = \beta(v)$ for each variable $v_i$ defined by $\Tau$ such that $V^q_i \subseteq \dom(\beta)$, where $q = \exists$ if $\Tau$ is a partial universal strategy, and $q = \forall$ if $\Tau$ is a partial existential strategy.
A strategy forbids (permits) a \emph{set} $\Sigma$ of assignments in $\booleanformulaA$ if it forbids (permits) each $\sigma \in \Sigma$ in $\booleanformulaA$.
For brevity, we will identify the literal $e$ with the assignment $\sigma: e \mapsto 1$, and the literal $\neg e$ with the assignment $\sigma': e \mapsto 0$.

Clearly, a universal strategy for $\qbfformula$ is a winning strategy if, and only if, it forbids both $e$ and $\neg e$ in $F$.
A simple sufficient condition for unsatisfiability of $\qbfformula$ is the existence of a universal strategy for an individual component QBF~$\qbfformula^{(i)}$ that already forbids both $e$ and $\neg e$.
The following lemma states a---slightly weaker---necessary condition.
\begin{LEM}\label{lem:unsatcomponent}
  If $\qbfformula$ is unsatisfiable, there must be an index $1 \leq i \leq m$ and a universal strategy that forbids~$\sigma$ in $F_i$, for each assignment $\sigma \in \{e, \neg e\}$.
\end{LEM}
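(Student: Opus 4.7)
\medskip

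\noindent\textbf{Proof plan for Lemma~\ref{lem:unsatcomponent}.}\quad
The plan is to reduce the statement to a disjointness argument on the component QBFs. First, I would translate unsatisfiability into the existence of a winning universal strategy $\Lambda$ for $\qbfformula$. Since $e$ is the innermost (existentially quantified) variable, $\Lambda$ must commit its universal responses before the value of $e$ is chosen and must then defeat the matrix $F=F_1\wedge\cdots\wedge F_m$ regardless of which value existential picks for $e$. Writing $\mathbf{P}'=Q_1v_1\cdots Q_\ell v_\ell$ for the prefix obtained from $\mathbf{P}$ by dropping $\exists e$, this observation is equivalent to saying that $\Lambda$ is simultaneously a winning universal strategy for $\mathbf{P}'.\,F[e/1]$ and for $\mathbf{P}'.\,F[e/0]$. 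Consequently, for each $c\in\{0,1\}$ the QBF $\mathbf{P}'.\,\bigwedge_{i=1}^{m}F_i[e/c]$ is false.

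Next, I would fix $c\in\{0,1\}$ and reduce falseness of the conjunction to falseness of a single conjunct. This is the step where the structural assumption is used: because $\var(F_i)\cap\var(F_j)\subseteq\{e\}$ for $i\neq j$, the non-$e$ variables of the components are pairwise disjoint, both existentially and universally. I would therefore state and apply the following composition principle. If every $\mathbf{P}'.\,F_i[e/c]$ were true, one could pick an existential winning strategy $\Tau_i$ for each $i$ and define the combined strategy $\Tau=\bigcup_i\Tau_i$, where the response for an existential variable in $X_i^\exists$ is exactly $\Tau_i$ and only depends on preceding universal variables in $X_i^\forall$. Disjointness makes this definition consistent, and $\Tau$ would be a winning existential strategy for $\mathbf{P}'.\,\bigwedge_i F_i[e/c]$, contradicting the previous step. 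Hence some index $i_c\in\{1,\dots,m\}$ must satisfy that $\mathbf{P}'.\,F_{i_c}[e/c]$ is false, i.e., there is a universal strategy $\Lambda_c$ forbidding the assignment $e\mapsto c$ in $F_{i_c}$.

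Applying this for $c=1$ yields an index $i_1$ and a universal strategy forbidding $e$ in $F_{i_1}$, while $c=0$ yields $i_0$ and a strategy forbidding $\neg e$ in $F_{i_0}$; together these give the two indices and strategies required by the lemma (note the indices may differ, which is precisely why the condition is weaker than the sufficient one mentioned before the lemma). The main obstacle is the composition principle in the second paragraph: one has to be careful that the $\Tau_i$'s really can be combined into a single strategy for $\mathbf{P}'$, which is exactly where variable-disjointness of the components (as opposed to merely a bounded shared interface $\{e\}$ that has already been eliminated by the substitution $e/c$) is essential. Once this is in place the remainder of the argument is routine.
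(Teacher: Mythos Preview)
Your proposal is correct and follows essentially the same approach as the paper's proof: show that $\qbfformula[\sigma]$ is unsatisfiable for each $\sigma\in\{e,\neg e\}$, then use variable-disjointness of the components after substituting for $e$ to locate, for each $\sigma$, a single component $F_{i_\sigma}$ with $\qbfformula^{(i_\sigma)}[\sigma]$ unsatisfiable. The paper compresses your second paragraph into the single phrase ``upon assigning $e$, the matrix is disconnected,'' leaving the composition-of-existential-strategies argument implicit, whereas you spell it out; otherwise the arguments are the same.
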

\begin{proof}
  If $\qbfformula$ is unsatisfiable then both~$\qbfformula[e]$ and $\qbfformula[\neg e]$ must be  unsatisfiable. Upon assigning~$e$, the matrix is disconnected, so for each assignment~$\sigma \in \{e, \neg e\}$, there must be a component QBF $\qbfformula^{(i)}$ such that~$\qbfformula^{(i)}[\sigma]$ is unsatisfiable. The corresponding universal winning strategy forbids the assignment $\sigma$ in ~$F_i$.
\end{proof}
If there are \emph{distinct} component QBFs that forbid $e$ and $\neg e$, respectively, then the corresponding strategies can be unified into a \emph{single} strategy that achieves both.
This is because the component QBFs do not share variables apart from the deletion variable~$e$. More generally, any property of a strategy that is determined by the clauses in a component (or a set of components) is preserved when changing only strategy functions for variables that do not occur in this component (or set of components).
\begin{DEF}
  A class $\mathcal{C}$ of (partial) strategies defined on variables $V_i$ is \emph{indifferent towards variable $w$} if, whenever $\Tau \in \mathcal{C}$ and $\Tau'$ is a partial strategy that differs from $\Tau$ only on $w \in V_i$, then $\Tau' \in \mathcal{C}$.
  We say that $\mathcal{C}$ is indifferent towards a set $W$ of variables if it is indifferent towards each variable $w \in W$.
\end{DEF}

\begin{LEM}\label{lem:combinestrategies}
  Let $\mathcal{C}_1$ be a class of (partial) strategies defined on variables $V_i$ that is indifferent towards $W_1$ and let $\mathcal{C}_2$ be a class of partial strategies defined on $V_i$ that is indifferent towards $W_2$, such that $V_i \subseteq W_1 \cup W_2$. Then the intersection of $\mathcal{C}_1$ and $\mathcal{C}_2$ is non-empty.
\end{LEM}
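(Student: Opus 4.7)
The plan is to combine witnesses from each class into a single strategy that lies in both, exploiting the indifference hypothesis on complementary parts of $V_i$. Assume $\mathcal{C}_1$ and $\mathcal{C}_2$ are both non-empty (otherwise the statement is vacuous in the intended reading), and pick $\Tau_1 \in \mathcal{C}_1$ and $\Tau_2 \in \mathcal{C}_2$. Because both are (partial) strategies of the same type defined on the same variable set $V_i$, they have the same signature, so any mixture of their component functions is again a well-defined partial strategy on $V_i$.

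First I would unpack indifference iteratively: a straightforward induction on the definition shows that if $\mathcal{C}$ is indifferent towards $W$, then modifying a strategy in $\mathcal{C}$ on \emph{any} subset of $W$-variables (not just one) keeps it in $\mathcal{C}$. The key set-theoretic observation is then that $V_i \subseteq W_1 \cup W_2$ implies $(V_i \setminus W_1) \cap (V_i \setminus W_2) = \emptyset$, so the portions of $V_i$ on which $\mathcal{C}_1$- and $\mathcal{C}_2$-membership are ``pinned'' are disjoint.

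Next I would define the combined strategy $\Tau$ explicitly: for each variable $v \in V_i$, let $\Tau$ agree with $\Tau_1$ on $v$ if $v \in V_i \setminus W_1$, let it agree with $\Tau_2$ on $v$ if $v \in V_i \setminus W_2$, and choose an arbitrary value (say, $\Tau_1$'s) if $v \in W_1 \cap W_2$. By construction, $\Tau$ and $\Tau_1$ differ only on variables in $W_1$, so indifference of $\mathcal{C}_1$ towards $W_1$ yields $\Tau \in \mathcal{C}_1$. Symmetrically, $\Tau$ and $\Tau_2$ differ only on variables outside $V_i \setminus W_2$, i.e.\ on variables in $W_2$, so indifference of $\mathcal{C}_2$ towards $W_2$ yields $\Tau \in \mathcal{C}_2$. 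Hence $\Tau \in \mathcal{C}_1 \cap \mathcal{C}_2$.

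There is no substantial obstacle; the only subtle point is making sure that ``differ only on $W_j$'' is read correctly after the mixing step. I would double-check this by explicitly noting that $V_i \setminus W_1 \subseteq W_2$ (and $V_i \setminus W_2 \subseteq W_1$), so each unilateral modification of $\Tau_j$ stays within the permitted set $W_j$.
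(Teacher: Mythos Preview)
Your proof is correct and follows essentially the same approach as the paper: pick a witness from each class and splice them together so that the combined strategy agrees with $\Tau_1$ on $V_i\setminus W_1$ and with $\Tau_2$ elsewhere, then invoke indifference for each class. Your explicit remark that indifference iterates from single variables to arbitrary subsets, and your note on the implicit non-emptiness assumption, are minor elaborations the paper leaves tacit.
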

\begin{proof}
Let~$\Tau_1 \in \mathcal{C}_1$ and $\Tau_2 \in \mathcal{C}_2$. Consider the strategy $\Tau = \{ f_v \in \Tau_1 \:|\: v \in V_i \setminus W_1\} \cup \{ f_w \in \Tau_2 \:|\: w \in V_i \cap W_1\}$.
Since it differs from $\Tau_1$ only on variables in $W_1$, and $\Tau_1$ is indifferent to $W_1$, we have $\Tau \in \mathcal{C}_1$. Similarly, $\Tau$ differs from $\Tau_2$ only on variables in $V_i \setminus W_1 \subseteq W_2$, so $\Tau \in \mathcal{C}_2$ holds as well.
\end{proof}
This result can be applied to strategies that forbid or permit certain assignments of variable $e$.
\begin{LEM}\label{lem:indifferent}
  Let $\booleanformulaA$ be a Boolean formula with $\var(\booleanformulaA) \subseteq \var(\qbfformula)$ and let $W = \var(\qbfformula) \setminus \var(\booleanformulaA)$.
  The following classes of partial strategies are indifferent to $W$:
  \begin{enumerate}
  \item The class of partial universal strategies that forbid  $\Sigma \subseteq \{e, \neg e\}$ in $\booleanformulaA$.
  \item The class of partial existential strategies that permit  $\Sigma \subseteq \{e, \neg e\}$ in~$\booleanformulaA$.
  \end{enumerate}
\end{LEM}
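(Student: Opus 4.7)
The argument is a direct substitution argument exploiting that $w \notin \var(\booleanformulaA)$ and the observation that the functions in a (partial) universal strategy only read existential moves, while those in a (partial) existential strategy only read universal moves. The plan is to treat both cases in parallel, as they are entirely symmetric.

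For case~(1), fix $\Tau$ in the class (i.e., $\Tau$ forbids $\Sigma$ in $\booleanformulaA$) and let $\Tau'$ differ from $\Tau$ only on some $w \in W$. If $w$ is existential or $w \notin V_i$, then $\Tau = \Tau'$ and the claim is trivial, so assume $w$ is a universal variable in $V_i$. Fix $\sigma \in \Sigma$ and any assignment $\beta'$ consistent with $\Tau'$ and $\sigma$; I would then define $\beta$ to agree with $\beta'$ on every variable except $w$, where I set $\beta(w) = \tau_w(\beta'|_{V^\exists_{i_w}})$, with $i_w$ the position of $w$ in the prefix. The goal is to show that $\beta$ is consistent with $\Tau$ and $\sigma$, which is the only delicate step: for every universal $v \neq w$ defined by $\Tau$, we have $\tau_v = \tau'_v$ (since $\Tau$ and $\Tau'$ differ only on $w$), and the input $\beta|_{V^\exists_{i_v}} = \beta'|_{V^\exists_{i_v}}$ coincides because $w$ is universal, hence changing $\beta'(w)$ leaves all existential restrictions intact. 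For $v = w$ itself, consistency holds by construction. Since $\Tau$ forbids $\sigma$ in $\booleanformulaA$, $\beta$ falsifies $\booleanformulaA$; and because $\beta$ and $\beta'$ differ only on $w \in W = \var(\qbfformula)\setminus\var(\booleanformulaA)$, we have $\booleanformulaA[\beta] = \booleanformulaA[\beta']$, so $\beta'$ also falsifies $\booleanformulaA$. Thus $\Tau'$ forbids $\sigma$ in $\booleanformulaA$, as required.

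Case~(2) is fully symmetric: swap the roles of $\exists$ and $\forall$, replace ``forbids''/``falsifies'' by ``permits''/``satisfies'', and note that now $\Tau$ is a partial existential strategy so its functions depend on $V^\forall_\bullet$; consequently, altering a single universal $w$ does not affect which existential strategy functions apply, while altering a single existential $w$ does not affect the restrictions to $V^\forall_\bullet$ that existential functions read. The rest of the argument is verbatim.

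The only point requiring real care, and thus the ``obstacle'' to watch, is the consistency check: one must confirm that modifying $\beta'$ on $w$ alone preserves consistency with the unchanged strategy functions. This is where the asymmetry between the quantifier type of $w$ and the type of the inputs fed to strategy functions is used, and it is precisely the structural feature that makes the lemma hold.
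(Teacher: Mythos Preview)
Your argument is correct and is essentially a fully spelled-out version of the paper's one-line proof, which simply observes that whether a play consistent with a partial strategy satisfies or falsifies $\booleanformulaA$ is unaffected by changing the strategy function for a variable outside $\var(\booleanformulaA)$. You make explicit the consistency-preservation step that the paper leaves implicit, and your handling of the quantifier asymmetry is exactly the mechanism behind the paper's assertion.
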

\begin{proof}
  Whether a partial strategy~$\Tau$ forbids or permits an assignment in $\booleanformulaA$ only depends on whether plays~$\beta$ that are consistent with~$\Tau$ satisfy or falsify $\booleanformulaA$, and this is unaffected by changing functions for variables outside $\booleanformulaA$.
\end{proof}
\begin{LEM}\label{lem:winninguniversal}
  Let $\booleanformulaA$ and $\booleanformulaB$ be Boolean formulas such that $\var(\booleanformulaA) \cap \var(\booleanformulaB) \subseteq \{e\}$ and $\var(\booleanformulaA) \cup \var(\booleanformulaB) \subseteq \qbfformula$.
 Let $\Tau$ be a universal strategy that forbids $e$ in $\booleanformulaA$, and $\Tau'$ a universal strategy that forbids $\neg e$ in $\booleanformulaB$. Then $\mathbf{P}.\booleanformulaA \land \booleanformulaB$ has a universal winning strategy.
\end{LEM}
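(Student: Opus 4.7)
The plan is to prove this by combining $\Tau$ and $\Tau'$ into a single universal strategy $\Tau^*$ that simultaneously forbids $e$ in $\booleanformulaA$ and $\neg e$ in $\booleanformulaB$, and then observing that such a strategy must be winning because the innermost existentially quantified variable $e$ can only take one of two values, and both are blocked.

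First, I would set up the two relevant classes of partial universal strategies over $V_\ell = \var(\qbfformula) \setminus \{e\}$: let $\mathcal{C}_1$ be the class of universal strategies that forbid $e$ in $\booleanformulaA$, and let $\mathcal{C}_2$ be the class of universal strategies that forbid $\neg e$ in $\booleanformulaB$. By hypothesis, $\Tau \in \mathcal{C}_1$ and $\Tau' \in \mathcal{C}_2$, so both are non-empty. By Lemma~\ref{lem:indifferent}(1), $\mathcal{C}_1$ is indifferent towards $W_1 \eqdef \var(\qbfformula) \setminus \var(\booleanformulaA)$ and $\mathcal{C}_2$ is indifferent towards $W_2 \eqdef \var(\qbfformula) \setminus \var(\booleanformulaB)$.

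Next, I would verify the hypothesis of Lemma~\ref{lem:combinestrategies}, namely that every universal variable $v \in V_\ell^\forall$ lies in $W_1 \cup W_2$. Since $\var(\booleanformulaA) \cap \var(\booleanformulaB) \subseteq \{e\}$ and $e$ is existential (hence $v \neq e$), any $v$ fails to appear in at least one of $\booleanformulaA, \booleanformulaB$, and is therefore in $W_1$ or $W_2$. Applying Lemma~\ref{lem:combinestrategies} to $\mathcal{C}_1$ and $\mathcal{C}_2$ (restricted to the partial strategies on universal variables in $V_\ell$), we obtain a universal strategy $\Tau^* \in \mathcal{C}_1 \cap \mathcal{C}_2$ that forbids $e$ in $\booleanformulaA$ and forbids $\neg e$ in $\booleanformulaB$ simultaneously.

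Finally, I would argue that $\Tau^*$ is a winning universal strategy for $\mathbf{P}.\booleanformulaA \land \booleanformulaB$. Consider any existential play $\delta$ and the resulting full assignment $\alpha(\Tau^*,\delta)$. Since $e$ is the innermost variable, $\delta$ assigns $e$ to either $1$ or $0$. If $\delta(e)=1$, then $\alpha(\Tau^*,\delta)$ is consistent with $\Tau^*$ and the assignment $e$, so by the forbidding property of $\Tau^*$ on $\booleanformulaA$ the formula $\booleanformulaA$ is falsified, hence so is $\booleanformulaA \land \booleanformulaB$. If $\delta(e)=0$, the symmetric argument using $\booleanformulaB$ applies. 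In either case the matrix is falsified, so $\Tau^*$ is a universal winning strategy. I do not expect any real obstacle here: the proof is essentially a bookkeeping exercise in applying Lemma~\ref{lem:combinestrategies} and Lemma~\ref{lem:indifferent}, with the only subtle point being the verification that the variables of $\booleanformulaA$ and $\booleanformulaB$ are disjoint (apart from $e$) so that the combined strategy is well-defined.
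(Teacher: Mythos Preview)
Your proposal is correct and follows essentially the same approach as the paper: apply Lemma~\ref{lem:indifferent} to show the two classes of forbidding strategies are indifferent to the complements of $\var(\booleanformulaA)$ and $\var(\booleanformulaB)$, then invoke Lemma~\ref{lem:combinestrategies} to obtain a single universal strategy that forbids both $e$ and $\neg e$. You are in fact slightly more careful than the paper in explicitly checking that the universal variables in $V_\ell$ lie in $W_1 \cup W_2$ (using that $e$ is existential) and in spelling out the final case split on the value of $e$.
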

\begin{proof}
  By Lemma~\ref{lem:indifferent}, the partial strategy $\Tau$ is indifferent to $\var(\qbfformula) \setminus \var(\booleanformulaA)$ and $\Tau'$ is indifferent to $\var(\qbfformula) \setminus \var(\booleanformulaB)$.
  Because $(\var(\qbfformula) \setminus \var(\booleanformulaA)) \cup (\var(\qbfformula) \setminus \var(\booleanformulaB)) = \var(\qbfformula)$, there must be a universal strategy $\Tau''$ that forbids $e$ in $\booleanformulaA$ and $\neg e$ in $\booleanformulaB$ by Lemma~\ref{lem:combinestrategies}.
  Thus $\Tau''$ forbids both $e$ and $\neg e$ in $\booleanformulaA \land \booleanformulaB$, and it is a winning universal strategy.
\end{proof}
We can conclude that, whenever there are distinct component QBFs and universal strategies that forbid $e$ and $\neg e$, there is a universal winning strategy.

The next lemma shows that any component $F_i$ such that there exists a (partial) existential strategy that permits both $e$ and $\neg e$ in $F_i$ can be removed from the matrix without changing the truth value.
\begin{LEM}\label{lem:permitsboth}
  Let $\booleanformulaA$ and $\booleanformulaB$ be Boolean formulas such that $\var(\booleanformulaA) \cup \var(\booleanformulaB) \subseteq \var(\qbfformula)$ and $\var(\booleanformulaA) \cap \var(\booleanformulaB) \subseteq \{e\}$. Let $1 \leq i \leq \ell$ and let $\Tau$ be a partial existential strategy defined on $V^\exists_i$ for such that $\Tau$ permits $e$ and $\neg e$ in $\booleanformulaB$.
  Then $\mathbf{P}.\booleanformulaA$ and $\mathbf{P}.\booleanformulaA \land \booleanformulaB$ have the same truth value.
\end{LEM}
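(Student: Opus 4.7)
The plan is to prove both directions of the equivalence separately. The direction $(\Leftarrow)$ is immediate: any existential winning strategy for $\mathbf{P}.\booleanformulaA \land \booleanformulaB$ produces, for every universal play, an assignment satisfying $\booleanformulaA \land \booleanformulaB$ and hence $\booleanformulaA$, so the same strategy witnesses truth of $\mathbf{P}.\booleanformulaA$.

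For the substantive direction $(\Rightarrow)$, I would construct an existential winning strategy $\Tau^*$ for $\mathbf{P}.\booleanformulaA \land \booleanformulaB$ from a winning strategy $\Tau_\booleanformulaA$ for $\mathbf{P}.\booleanformulaA$ together with the given partial strategy $\Tau$. First, I would invoke Lemma~\ref{lem:indifferent}: the class of partial existential strategies on $V^\exists_i$ that permit both $e$ and $\neg e$ in $\booleanformulaB$ is indifferent to $\var(\qbfformula) \setminus \var(\booleanformulaB)$. In particular, I may redefine $\Tau$ arbitrarily on $V^\exists_i \setminus \var(\booleanformulaB)$ without losing the permit property, so without loss of generality $\Tau$ agrees with $\Tau_\booleanformulaA$ on that set.

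I would then define the hybrid strategy $\Tau^*$ on $\vare(\qbfformula)$ by using $\tau_v \in \Tau$ for $v \in V^\exists_i \cap \var(\booleanformulaB)$ and using the corresponding function of $\Tau_\booleanformulaA$ for all other existential variables. Because $\var(\booleanformulaA) \cap \var(\booleanformulaB) \subseteq \{e\}$ and $e$ is the innermost variable of $\mathbf{P}$ (so $e \notin V^\exists_i$), there is no conflict on $\var(\booleanformulaA)$: every existential variable of $\var(\booleanformulaA)$, including $e$, is assigned by $\Tau^*$ exactly as by $\Tau_\booleanformulaA$.

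Finally, I would verify that $\Tau^*$ is winning. For an arbitrary universal play $\delta$, let $\beta^* = \alpha(\Tau^*, \delta)$. By the previous paragraph $\beta^*$ coincides with $\alpha(\Tau_\booleanformulaA, \delta)$ on $\var(\booleanformulaA)$, so $\beta^* \models \booleanformulaA$. For $\booleanformulaB$, I would check that $\beta^*$ is consistent with $\Tau$ and the assignment $\sigma : e \mapsto \beta^*(e)$: consistency on $V^\exists_i \cap \var(\booleanformulaB)$ is immediate from the construction of $\Tau^*$, and on $V^\exists_i \setminus \var(\booleanformulaB)$ it follows from the initial modification making $\Tau$ match $\Tau_\booleanformulaA$ there. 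Since $\sigma \in \{e, \neg e\}$ and $\Tau$ permits both $e$ and $\neg e$ in $\booleanformulaB$, we obtain $\beta^* \models \booleanformulaB$. The only delicate point, and therefore the main obstacle, is precisely this consistency on $V^\exists_i \setminus \var(\booleanformulaB)$; it is resolved by the opening use of Lemma~\ref{lem:indifferent}.
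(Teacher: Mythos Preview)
Your proof is correct and follows essentially the same approach as the paper. The only cosmetic difference is that the paper invokes Lemma~\ref{lem:combinestrategies} explicitly (applied to the class of partial strategies on $V_i^\exists$ extendable to a winning strategy for $\mathbf{P}.\booleanformulaA$ via the remainder of $\Tau_\booleanformulaA$, and the class permitting both $e$ and $\neg e$ in $\booleanformulaB$), whereas you carry out the splicing construction by hand; since the proof of Lemma~\ref{lem:combinestrategies} is exactly that splicing, the two arguments coincide.
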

\begin{proof}
  If $\mathbf{P}.\booleanformulaA$ is false the result is immediate. Otherwise, there is an existential winning strategy $\Tau'$ for $\mathbf{P}.\booleanformulaA$.
  Let $f_e \in \Tau'$ be the strategy function for variable~$e$, and let $\Tau''$ be the partial existential strategy obtained by removing $f_e$ from $\Tau'$.
  By construction, this partial strategy can be turned into an existential winning strategy for $\mathbf{P}.\booleanformulaA$ by adding the function $f_e$, and the corresponding class of partial existential strategies is indifferent to $V_i^\exists \setminus \var(\booleanformulaA)$.
  By Lemma~\ref{lem:indifferent}, the class of partial existential strategies that permit $e$ and $\neg e$ in $\booleanformulaB$ is indifferent to $V_i^\exists \setminus \var(\booleanformulaB)$.
  Moreover, $(V_i^{\exists} \setminus \var(\booleanformulaA)) \cup (V_i^{\exists} \setminus \var(\booleanformulaB)) = V_i^{\exists}$, so we can apply Lemma~\ref{lem:combinestrategies} and conclude that there is a partial existential strategy defined on $V_i^\exists$ that permits $e$ and $\neg e$ in $\booleanformulaB$ and that can be turned into an existential winning strategy for $\mathbf{P}.\booleanformulaA$ by extending it with $f_e$.
  By adding $f_e$ to this partial strategy, we obtain an existential winning strategy for $\mathbf{P}.\booleanformulaA \land \booleanformulaB$.
\end{proof}
Let us briefly take stock.
We can now assume there are no components for which there is a partial universal strategy that forbids both assignments of $e$ (in this case $\qbfformula$ is false), or a partial existential strategy that permits both assignments (these do not affect the truth value).
Furthermore, we may assume there do not exist distinct components for which universal can forbid $e$ and $\neg e$, respectively ($\qbfformula$ is false in this case).
By Lemma~\ref{lem:unsatcomponent}, for $\qbfformula$ to be false, there must be a component $F_i$ such that there is a universal strategy that forbids $e$ in $F_i$, and a universal strategy that forbids~$\neg e$ in $F_i$.
However, our assumptions tell us that there is no strategy that forbids both assignments in $F_i$.
Any remaining components $F_j$ for $j \neq i$ must have a dual property: there exists a partial existential strategy that permits $e$ in $F_j$, and a partial existential strategy that permits $\neg e$ in $F_j$, but no strategy that permits both assignments.

So for each remaining component, at some point in any play, universal (respectively, existential) must commit to forbidding (permitting) either $e$ or $\neg e$.
Until then, there is a strategy that keeps both options alive.
\begin{DEF}
  Let $\booleanformulaA$ be a Boolean formula with $\var(\booleanformulaA) \subseteq \var(\qbfformula)$, let $0 \leq i \leq \ell$, and let $\Tau$ be a partial universal (respectively, existential) strategy defined on $V_i^\forall$ ($V_i^\exists$) satisfying the following property: for any assignment $\beta: V_ i \rightarrow \{0, 1\}$ that is consistent with $\Tau$, universal (existential) has a partial strategy defined on $V_{\ell}^\forall \setminus V_i$ ($V_{\ell}^{\exists} \setminus V_i$) that forbids (permits) $e$ in $\booleanformulaA[\beta]$, and a strategy that forbids (permits) $\neg e$ in $\booleanformulaA[\beta]$.
We say that \emph{$\Tau$ leaves a choice at index~$i$ (in $\booleanformulaA$)}.
\end{DEF}
There is a dual notion of a partial strategy ensuring that, at a certain point, universal can forbid \emph{some} assignment (respectively, existential can permit some assignment).
\begin{DEF}
  Let $\booleanformulaA$ be a Boolean formula with $\var(\booleanformulaA) \subseteq \var(\qbfformula)$, let $0 \leq i \leq \ell$, and let $\Tau$ be a partial universal (respectively, existential) strategy defined on $V_i^\forall$ ($V_i^\exists$) satisfying the following property: for any play $\beta: V_i \rightarrow \{0, 1\}$ that is consistent with $\Tau$, there is an assignment $\sigma_\beta \in \{e, \neg e\}$ such that universal (existential) has a partial strategy defined on $V_{\ell}^{\forall} \setminus V_i$ ($V_{\ell}^{\exists} \setminus V_i$) that forbids (permits) $\sigma_\beta$ in $\booleanformulaA[\beta]$. In this case, wee say that \emph{$\Tau$ empowers at index~$i$ (in $\booleanformulaA$)}.
\end{DEF}
\begin{LEM}\label{lem:leaveschoice}
  Let $\booleanformulaA$ be a Boolean formula with $\var(\booleanformulaA) \subseteq \var(\qbfformula)$, and let $0 \leq i \leq \ell$.
  The class of partial universal (respectively, existential) strategies that leave a choice at $i$, and the class of partial universal (respectively, existential) strategies that empower at $i$, are indifferent to $V_i \setminus \var(\booleanformulaA)$.
\end{LEM}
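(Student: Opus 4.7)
My plan is to prove Lemma~\ref{lem:leaveschoice} by verifying the indifference property directly from the definitions, leveraging the fact that $\booleanformulaA[\beta]$ does not depend on values assigned to variables outside $\var(\booleanformulaA)$. I will treat the four cases (universal vs.\ existential strategies, leaving a choice vs.\ empowering) uniformly, as they all follow the same template.

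First, I would fix a partial universal strategy $\Tau$ defined on $V_i^\forall$ that leaves a choice at $i$ in $\booleanformulaA$, together with a partial strategy $\Tau'$ that differs from $\Tau$ only on a single variable $w \in V_i \setminus \var(\booleanformulaA)$. If $w$ is existential, then $\Tau$ and $\Tau'$ coincide as partial universal strategies (they assign strategy functions only to universal variables), so there is nothing to show. The only nontrivial case is $w \in V_i^\forall$, i.e., $\tau_w \neq \tau_w'$.

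Next, I would take an arbitrary play $\beta' : V_i \to \{0,1\}$ consistent with $\Tau'$, and construct a companion play $\beta : V_i \to \{0,1\}$ consistent with $\Tau$ that agrees with $\beta'$ on $V_i \setminus \{w\}$. This is well-defined: since $\beta'$ is consistent with $\Tau$ everywhere except possibly at $w$, we simply overwrite $\beta'(w)$ by $\tau_w(\beta'|_{V_{<w}^\exists})$, obtaining a play that is consistent with $\Tau$. Because $\Tau$ leaves a choice at $i$ in $\booleanformulaA$, there exist partial universal strategies $\Lambda_e, \Lambda_{\neg e}$ on $V_\ell^\forall \setminus V_i$ that forbid $e$, respectively $\neg e$, in $\booleanformulaA[\beta]$. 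The crucial observation is that $w \notin \var(\booleanformulaA)$ forces $\booleanformulaA[\beta'] = \booleanformulaA[\beta]$, since the two restrictions differ only in the value assigned to a variable that does not occur in $\booleanformulaA$. Hence the same strategies $\Lambda_e, \Lambda_{\neg e}$ witness that a choice is left for $\beta'$ in $\booleanformulaA$, so $\Tau'$ leaves a choice at $i$.

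The empowering case is handled analogously: for $\beta'$ consistent with $\Tau'$ I build $\beta$ as above, obtain an assignment $\sigma_\beta \in \{e, \neg e\}$ and a partial strategy forbidding $\sigma_\beta$ in $\booleanformulaA[\beta]$ from the empowering property of $\Tau$, and transfer it to $\booleanformulaA[\beta']$ using $\booleanformulaA[\beta'] = \booleanformulaA[\beta]$, setting $\sigma_{\beta'} \eqdef \sigma_\beta$. The two cases for partial existential strategies are fully symmetric, with universal and existential roles swapped, so that $w \in V_i^\exists$ is the nontrivial subcase and the companion play $\beta$ is recovered using the existential strategy function $\tau'_w$ of $\Tau$.

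I do not anticipate a main obstacle here; this is essentially bookkeeping to lift Lemma~\ref{lem:indifferent} from the static notions of forbidding or permitting an assignment to the quantified notions of leaving a choice and empowering at index $i$. The only subtlety is to make sure that when we swap $\beta'$ with its companion play $\beta$, the witnessing partial strategies on $V_\ell^\forall \setminus V_i$ (or $V_\ell^\exists \setminus V_i$) continue to certify the forbidding/permitting property, which is immediate from $\booleanformulaA[\beta] = \booleanformulaA[\beta']$ together with $\var(\booleanformulaA)\cap V_i$ remaining pointwise unchanged under the swap.
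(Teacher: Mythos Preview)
Your proposal is correct and follows essentially the same approach as the paper: both arguments hinge on the observation that the property of leaving a choice or empowering at $i$ depends only on $\booleanformulaA[\beta]$ for plays $\beta$ consistent with the strategy, and that $\booleanformulaA[\beta]=\booleanformulaA[\beta']$ whenever $\beta$ and $\beta'$ differ only at a variable outside $\var(\booleanformulaA)$. The paper states this in a single sentence, whereas you unfold the companion-play construction and the transfer of the witnessing strategies explicitly; the content is the same.
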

\begin{proof}
  Whether a partial strategy~$\Tau$ leaves a choice or empowers at~$i$ depends only on $\booleanformulaA[\beta]$ for each play $\beta$ consistent with~$\Tau$, and modifying $\beta$ on variables in $V_i$ that do not occur in $\booleanformulaA$ does not change this formula.
\end{proof}
\begin{LEM}\label{lem:combineempowerandchoice}
  Let $\booleanformulaA$ and $\booleanformulaB$ be a Boolean formulas with $\var(\booleanformulaA) \cap \var(\booleanformulaB) \subseteq \{e\}$ and $\var(\booleanformulaA) \cup \var(\booleanformulaB) \subseteq \var(\qbfformula)$.
  If there is a partial universal (respectively, existential) strategy $\Tau$ that empowers at $i$ in $\booleanformulaA$, and a partial universal (existential) strategy $\Tau'$ that leaves a choice at $i$ in $\booleanformulaB$, then there is a universal (existential) winning strategy for $\mathbf{P}.\booleanformulaA \land \booleanformulaB$.
\end{LEM}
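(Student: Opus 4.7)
The plan is to first combine $\Tau$ and $\Tau'$ into a single partial strategy defined on $V_i$ that simultaneously empowers at $i$ in $\booleanformulaA$ and leaves a choice at $i$ in $\booleanformulaB$, and then, for each play $\beta$ consistent with this combined partial strategy, to extend it on $V_\ell \setminus V_i$ by merging the strategies promised by the empower and leaves-a-choice properties.

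For the combination at level $V_i$, Lemma~\ref{lem:leaveschoice} says that the class of partial universal (resp.\ existential) strategies that empower at $i$ in $\booleanformulaA$ is indifferent to $V_i \setminus \var(\booleanformulaA)$, and the class of those that leave a choice at $i$ in $\booleanformulaB$ is indifferent to $V_i \setminus \var(\booleanformulaB)$. Since $\var(\booleanformulaA) \cap \var(\booleanformulaB) \subseteq \{e\}$ and $e$ is innermost (hence $e \notin V_i$), we have $(V_i \setminus \var(\booleanformulaA)) \cup (V_i \setminus \var(\booleanformulaB)) = V_i$. Thus Lemma~\ref{lem:combinestrategies} yields a partial strategy $\Tau^\star$ on $V_i$ belonging to both classes.

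Now fix a play $\beta\colon V_i \to \{0,1\}$ consistent with $\Tau^\star$. In the universal case, the empower property produces an assignment $\sigma_\beta \in \{e,\neg e\}$ and a partial universal strategy on $V_\ell^\forall \setminus V_i^\forall$ that forbids $\sigma_\beta$ in $\booleanformulaA[\beta]$, while the leaves-a-choice property supplies one on the same variables forbidding the opposite assignment $\overline{\sigma_\beta}$ in $\booleanformulaB[\beta]$. By Lemma~\ref{lem:indifferent} these classes are indifferent to the variables outside $\booleanformulaA[\beta]$ and $\booleanformulaB[\beta]$ respectively; because $\var(\booleanformulaA)$ and $\var(\booleanformulaB)$ overlap only in $e$, Lemma~\ref{lem:combinestrategies} merges them into a single partial strategy that forbids $\sigma_\beta$ in $\booleanformulaA[\beta]$ and $\overline{\sigma_\beta}$ in $\booleanformulaB[\beta]$, and therefore forbids both $e$ and $\neg e$ in $(\booleanformulaA \land \booleanformulaB)[\beta]$. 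Gluing these per-play extensions on top of $\Tau^\star$ gives a universal winning strategy for $\mathbf{P}.\booleanformulaA \land \booleanformulaB$. The existential case is entirely dual: for each consistent $\beta$ we obtain $\sigma_\beta$ together with partial existential strategies permitting $\sigma_\beta$ in $\booleanformulaA[\beta]$ and $\booleanformulaB[\beta]$ respectively, combine them into one permitting $\sigma_\beta$ in $(\booleanformulaA \land \booleanformulaB)[\beta]$, and finally set $e \mapsto \sigma_\beta$.

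The main obstacle is setting up the two applications of Lemma~\ref{lem:combinestrategies} cleanly: the first one at level $V_i$ relies on $e$ being innermost so that $e \notin V_i$, and the second one (per play $\beta$) operates on $V_\ell \setminus V_i$ after $\beta$ is substituted, so one has to check that indifference still holds over the remaining variables. Both reductions boil down to the single structural hypothesis $\var(\booleanformulaA) \cap \var(\booleanformulaB) \subseteq \{e\}$, so the argument is uniform across universal and existential cases.
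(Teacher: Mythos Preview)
Your proposal is correct and follows essentially the same approach as the paper's proof: first use Lemma~\ref{lem:leaveschoice} and Lemma~\ref{lem:combinestrategies} to merge the empower and leaves-a-choice strategies at level $V_i$, then for each consistent play $\beta$ use Lemma~\ref{lem:indifferent} and Lemma~\ref{lem:combinestrategies} again to merge the resulting continuation strategies, and finally glue the pieces together (adding the assignment of $e$ in the existential case). Your explicit remark that $e\notin V_i$ because $e$ is innermost is a detail the paper leaves implicit but is indeed what makes the first application of Lemma~\ref{lem:combinestrategies} go through.
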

\begin{proof}
  By Lemma~\ref{lem:leaveschoice} and Lemma~\ref{lem:combinestrategies}, there is a partial universal (existential) strategy~$\Tau$ for $V^\forall_i$ ($V^\exists_i$) such that, for any play $\beta: V_i \rightarrow \{0, 1\}$ consistent with $\Tau$, universal (existential) has a partial strategy that forbids (permits) $\sigma_\beta \in \{e, \neg e\}$ in $\booleanformulaA[\beta]$, and a partial strategy that forbids (permits) $\neg \sigma_\beta$ (respectively, $\sigma_\beta$) in $\booleanformulaB[\beta]$.
  By Lemma~\ref{lem:indifferent} and Lemma~\ref{lem:combinestrategies}, this implies that there is a partial universal (existential) strategy $\Tau_\beta$ that forbids (permits) $\sigma_\beta$ in $\booleanformulaA$ and forbids (permits) $\neg \sigma_\beta$ ($\sigma_\beta$) in $\booleanformulaB$.
  That is, any play consistent with $\beta$ and $\Tau_\beta$ falsifies $\booleanformulaA \land \booleanformulaB$ (for the case of existential strategies, any play consistent with $\beta$, $\Tau_\beta$, and $\sigma_\beta$ satisfies $\booleanformulaA \land \booleanformulaB$).
By combining $\Tau$ with $\Tau_{\beta}$ (and, for existential strategies, the function that assigns $e$ according to $\sigma_\beta$), one obtains a universal (existential) winning strategy for $\mathbf{P}.\booleanformulaA \land \booleanformulaB$.
\end{proof}
The notions of ``empowering'' and ``leaving a choice'' are dual in the following sense: a player has a strategy that is empowering at $i$ if, and only if, the other player does not have a strategy that leaves a choice at $i$.
\begin{LEM}\label{lem:choiceduality}
  Let $\booleanformulaA$ be a Boolean formula such that $\var(\booleanformulaA) \subseteq \var(\qbfformula)$ and let $0 \leq i \leq \ell$.
There is a partial universal (respectively, existential) strategy that leaves a choice at index~$i$ if, and only if, there is no partial existential (universal) strategy that empowers at~$i$.
\end{LEM}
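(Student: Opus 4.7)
The plan is to read the statement as a game-theoretic duality and reduce it to two applications of determinacy of finite perfect-information games. I will work out the ``universal leaves a choice $\Leftrightarrow$ no existential empowers'' direction; the other direction follows by swapping the two players and the predicates throughout.

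First, I would set up the following abbreviations: for a play $\beta:V_i\to\{0,1\}$ and $\sigma\in\{e,\neg e\}$, let $P_\sigma(\beta)$ mean ``universal has a partial strategy on $V_\ell^\forall\setminus V_i^\forall$ that forbids $\sigma$ in $\booleanformulaA[\beta]$'' and $Q_\sigma(\beta)$ the dual ``existential has a partial strategy on $V_\ell^\exists\setminus V_i^\exists$ that permits $\sigma$ in $\booleanformulaA[\beta]$''. Unfolding the definitions, a partial universal strategy leaves a choice at~$i$ precisely when every consistent play $\beta$ satisfies $P_e(\beta)\land P_{\neg e}(\beta)$, and a partial existential strategy empowers at~$i$ precisely when every consistent play $\beta$ satisfies $Q_e(\beta)\lor Q_{\neg e}(\beta)$.

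Next I would observe that, once $\beta$ and $\sigma$ are fixed, what remains on the variables $V_\ell\setminus V_i$ is a finite QBF evaluation game on $\booleanformulaA[\beta][\sigma]$; determinacy of finite perfect-information games yields $P_\sigma(\beta)\leftrightarrow\neg Q_\sigma(\beta)$, so the negation of $P_e(\beta)\land P_{\neg e}(\beta)$ is exactly $Q_e(\beta)\lor Q_{\neg e}(\beta)$. I then consider the finite ``outer'' game on $V_i$ in which a play produces $\beta$ and universal wins iff $W(\beta)\eqdef P_e(\beta)\land P_{\neg e}(\beta)$ holds. By construction, a universal winning strategy in this outer game is precisely a partial universal strategy leaving a choice at~$i$; by the equivalence above, an existential winning strategy is precisely a partial existential strategy that empowers at~$i$. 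A second application of determinacy to this finite game gives that exactly one side has a winning strategy, which is the stated duality.

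The main obstacle I anticipate is bookkeeping rather than any deep argument: the ``partial strategies'' in the definitions of leaves-a-choice and empowers are sequences of functions indexed by variables, and I need to glue an outer-game strategy with per-play inner-game strategies into a single such sequence. For this I expect to invoke Lemma~\ref{lem:indifferent} (so that changing strategy functions outside $\var(\booleanformulaA)$ does not affect $P_\sigma$ or $Q_\sigma$) together with Lemma~\ref{lem:combinestrategies} (to splice pieces on disjoint variable sets), exactly in the spirit of the proof of Lemma~\ref{lem:combineempowerandchoice}.
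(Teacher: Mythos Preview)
Your proposal is correct and essentially the same as the paper's: both arguments apply determinacy to the finite outer game on $V_i$ whose winning condition is defined via the inner predicate, and both use an inner duality (you via determinacy of the QBF on $\booleanformulaA[\beta][\sigma]$, the paper via Lemma~\ref{lem:ex-uni-str}) to translate negation of the winning condition into the opponent's predicate. The paper happens to write out the ``existential leaves a choice $\Leftrightarrow$ no universal empowers'' direction and declares the other similar, whereas you do the opposite; the content is identical. One small remark: your anticipated obstacle in the last paragraph does not actually arise, since the definitions of ``leaves a choice'' and ``empowers'' only require a partial strategy on $V_i$ together with \emph{existence} of inner strategies per play $\beta$, so no gluing into a single global strategy (and hence no appeal to Lemmas~\ref{lem:indifferent} or~\ref{lem:combinestrategies}) is needed.
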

\begin{proof}
  Consider the assignment game between universal and existential where universal wins a play~$\beta: V_i \rightarrow \{0, 1\}$ if there is an assignment $\sigma_\beta \in \{e, \neg e\}$ and a partial universal strategy that forbids $\sigma_\beta$ in $\booleanformulaA[\beta]$.
  This game is determined, so either there is a partial universal strategy that empowers at~$i$ (if universal has a winning strategy), or an existential strategy~$\Tau$ such that, for any $\beta$ consistent with $\Tau$, there is no $\sigma \in \{e, \neg e\}$ such that universal has a strategy that forbids $\sigma$ in $\booleanformulaA[\beta]$ (if existential wins).
  By Lemma~\ref{lem:ex-uni-str}, for any assignment~$\beta: V_i \rightarrow \{0, 1\}$, there is either a universal strategy that forbids $\sigma \in \{e, \neg e\}$ in $\booleanformulaA[\beta]$, or an existential strategy that forces (and thus permits) $\sigma$ in $\booleanformulaA[\beta]$.
  That is, there is a partial existential strategy that permits $e$ in $\booleanformulaA[\beta]$, and a partial existential strategy that permits $\neg e$ in $\booleanformulaA[\beta]$, which is to say that $\Tau$ leaves a choice at $i$.
 The proof of the duality between partial existential strategies that leave a choice and partial universal strategies that empower is similar.
 \end{proof}
  Each remaining component can be characterized by the highest index~$i$ such that there is a universal strategy that leaves a choice, and the lowest index~$i$ such that there is a universal strategy that empowers.
 \begin{DEF}
   Let $\booleanformulaA$ be a Boolean formula with $\var(\booleanformulaA) \subseteq \var(\qbfformula)$.
   If there is a partial universal strategy that leaves a choice at some index~$i$ with $0 \leq i \leq \ell$ in $\booleanformulaA$, we let $LC(\booleanformulaA)$ be the largest such index~$i$.
   Similarly, if there is a partial universal strategy that empowers at index~$i$ with $0 \leq i \leq \ell$, we let $EE(\booleanformulaA)$ denote the smallest such index $i$.
 \end{DEF}
 These indices allow us to state a necessary and sufficient condition for the existence of a universal winning strategy in the remaining components.
\begin{LEM}\label{lem:indexcomparison}
  Let $\booleanformulaA$ and $\booleanformulaB$ be Boolean formulas with $\var(\booleanformulaA) \cap \var(\booleanformulaB) \subseteq \{e\}$ and $\var(\booleanformulaA) \cup \var(\booleanformulaB) \subseteq \var(\qbfformula)$, such that
  \begin{enumerate}
  \item there is a partial universal strategy that leaves a choice at some index $0 \leq i \leq \ell$ in $\booleanformulaA$, and
  \item there is a partial universal strategy that empowers at index some index $0 \leq j \leq \ell$ in $\booleanformulaB$.
  \end{enumerate}
  Then there is a winning universal strategy for $\mathbf{P}.\booleanformulaA \land \booleanformulaB$ if, and only if, $EE(\booleanformulaB) \leq LC(\booleanformulaA)$.
\end{LEM}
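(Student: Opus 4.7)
The plan is to prove both implications via Lemma~\ref{lem:combineempowerandchoice}, after first establishing two monotonicity facts about the strategy classes involved. I would show that, for any Boolean subformula, the set of indices at which a partial universal (and, dually, existential) strategy empowering it exists is upward closed, while the set of indices at which a strategy leaving a choice exists is downward closed. Upward closure of empowering at~$j$ is proved by extending a witness~$\Tau$ to $V_{j'}^\forall$ for $j' > j$: on each new universal variable one plays the move prescribed by the empowering continuation $\Lambda_\beta$ associated with the partial play $\beta$ reached so far, and the same witness $\sigma_\beta$ together with the tail of $\Lambda_\beta$ certifies empowering at~$j'$. Downward closure of ``leaves a choice'' is dual, obtained by restricting the witness and using the original strategy to bridge the missing universal moves when exhibiting the two continuations.

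For the forward direction ($EE(\booleanformulaB) \leq LC(\booleanformulaA) \Rightarrow$ universal wins), I would instantiate Lemma~\ref{lem:combineempowerandchoice} at the common index $i = LC(\booleanformulaA)$: by definition universal leaves a choice at~$i$ in~$\booleanformulaA$, and by upward closure of empowering together with $EE(\booleanformulaB) \leq i$, universal also has a strategy empowering at~$i$ in~$\booleanformulaB$. Applying the lemma with $\booleanformulaB$ in the ``empowering'' role and $\booleanformulaA$ in the ``leaves a choice'' role directly yields a universal winning strategy for $\mathbf{P}.\booleanformulaA \land \booleanformulaB$.

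For the converse ($EE(\booleanformulaB) > LC(\booleanformulaA) \Rightarrow$ existential wins), Lemma~\ref{lem:choiceduality} transfers the hypothesis to the existential side: existential empowers in $\booleanformulaA$ at every $i \geq LC(\booleanformulaA)+1$ and leaves a choice in $\booleanformulaB$ at every $i \leq EE(\booleanformulaB)-1$. Whenever $EE(\booleanformulaB) \geq LC(\booleanformulaA)+2$, these ranges overlap at $i = LC(\booleanformulaA)+1$, and the existential form of Lemma~\ref{lem:combineempowerandchoice} finishes the argument. The main obstacle is the boundary case $EE(\booleanformulaB) = LC(\booleanformulaA)+1$, where no single common index witnesses both properties. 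To handle it, I would combine, via Lemma~\ref{lem:combinestrategies} and Lemma~\ref{lem:indifferent}, an existential ``leaves a choice'' strategy for $\booleanformulaB$ on $V_{LC(\booleanformulaA)}^\exists$ with an existential ``empowering'' strategy for $\booleanformulaA$ on $V_{EE(\booleanformulaB)}^\exists$ into a single partial strategy~$\Sigma$ on $V_{EE(\booleanformulaB)}^\exists$. This combination is possible because $\var(\booleanformulaA) \cap \var(\booleanformulaB) \subseteq \{e\}$ with $e$ innermost, so every variable of $V_{EE(\booleanformulaB)}^\exists$ lies outside at least one of $\var(\booleanformulaA), \var(\booleanformulaB)$ and therefore falls into the indifference set of one of the two classes.

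For each play $\beta'$ up to $V_{EE(\booleanformulaB)}$ consistent with~$\Sigma$, the witness $\sigma_{\beta'}$ supplied by the $\booleanformulaA$-empowering property determines both the value to assign to~$e$ and which of the two $\booleanformulaB$-continuations---the one permitting $\sigma_{\beta'}$ in $\booleanformulaB[\beta]$ with $\beta = \beta'|_{V_{LC(\booleanformulaA)}}$---to follow. The apparent circular dependency between existential's move at $v_{EE(\booleanformulaB)}$ and this continuation resolves because $v_{EE(\booleanformulaB)}$ belongs to at most one of $\var(\booleanformulaA), \var(\booleanformulaB)$: on the side not containing it, indifference lets the move be freely aligned with the constrained side, and when $v_{EE(\booleanformulaB)} \notin \var(\booleanformulaA)$ the witness $\sigma_{\beta'}$ can be taken to depend only on $\beta$, so there is no actual circularity. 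Gluing the $\booleanformulaA$- and $\booleanformulaB$-tails on $V_\ell^\exists \setminus V_{EE(\booleanformulaB)}^\exists$ by indifference once more, and setting $e = \sigma_{\beta'}$, yields the desired existential winning strategy. I expect this boundary alignment, rather than the main reduction to Lemma~\ref{lem:combineempowerandchoice}, to be the most delicate step of the argument.
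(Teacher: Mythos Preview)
Your overall plan matches the paper's: both directions are reduced to Lemma~\ref{lem:combineempowerandchoice}, and your monotonicity observations (empowering is upward closed, leaving a choice is downward closed) are exactly what is needed and what the paper uses implicitly. The forward direction is essentially identical to the paper's, only with the roles swapped---you push empowering up to $LC(\booleanformulaA)$, the paper pushes ``leaves a choice'' down to $EE(\booleanformulaB)$; either works.

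Where you diverge from the paper is the reverse direction, and specifically your ``main obstacle'': the boundary case $EE(\booleanformulaB)=LC(\booleanformulaA)+1$. You attack it head-on with a hand-built combined strategy, worrying about circular dependencies between the move at $v_{EE(\booleanformulaB)}$ and the choice of $\sigma_{\beta'}$. The paper's point is that this case simply cannot occur. The variable $v$ at index $EE(\booleanformulaB)$ must lie in $\var(\booleanformulaB)$, because otherwise any universal strategy empowering at $EE(\booleanformulaB)$ in $\booleanformulaB$ already empowers at $EE(\booleanformulaB)-1$, contradicting minimality of $EE(\booleanformulaB)$. Hence $v\notin\var(\booleanformulaA)$, and the universal strategy that leaves a choice at $LC(\booleanformulaA)$ in $\booleanformulaA$ can be extended by an arbitrary function for $v$ and still leaves a choice at $LC(\booleanformulaA)+1$, contradicting maximality of $LC(\booleanformulaA)$. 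So in fact $LC(\booleanformulaA)+1<EE(\booleanformulaB)$ whenever $LC(\booleanformulaA)<EE(\booleanformulaB)$, and your ``non-boundary'' argument via Lemma~\ref{lem:choiceduality} and Lemma~\ref{lem:combineempowerandchoice} already covers everything.

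Note that you essentially stumbled on this observation yourself---``$v_{EE(\booleanformulaB)}$ belongs to at most one of $\var(\booleanformulaA),\var(\booleanformulaB)$''---but used it only to untangle a circularity inside a construction that is never needed. Promoting that observation to a structural fact about the indices $LC$ and $EE$ eliminates the delicate step entirely and gives a much shorter proof.
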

\begin{proof}
  Suppose $EE(\booleanformulaB) \leq LC(\booleanformulaA)$.
  By definition, there is a partial universal strategy that leaves a choice at index $LC(\booleanformulaA)$ in $\booleanformulaA$.
  It follows that there are partial universal strategies that leave a choice at every index $0 \leq i < LC(\booleanformulaA)$, in particular at index~$EE(\booleanformulaB)$.
By Lemma~\ref{lem:combineempowerandchoice}, there is a universal winning strategy for $\mathbf{P}.\booleanformulaA \land \booleanformulaB$.
Now suppose $LC(\booleanformulaA) < EE(\booleanformulaB)$.
We claim that in this case, we must in fact have $LC(\booleanformulaA) + 1 < EE(\booleanformulaB)$. To see this, note that the variable~$v$ at index $EE(\booleanformulaB)$ must occur in $\booleanformulaB$.
Otherwise, the partial universal strategy that empowers at index $EE(\booleanformulaB)$ in $\booleanformulaB$ would already empower at index $EE(\booleanformulaB) - 1$, since the assignment of~$v$ would be irrelevant.
But then $v \notin \var(\booleanformulaA)$, and if $v$ were the variable at index $LC(\booleanformulaA)$, we could extend the partial universal strategy that leaves a choice at index~$LC(\booleanformulaA)$ with an arbitrary function for variable $v$ so that it still leaves a choice at index~$LC(\booleanformulaA) + 1$, a contradiction. Thus the claim is proved.
By Lemma~\ref{lem:choiceduality}, there is a partial existential strategy that empowers at index $LC(\booleanformulaA) + 1$ in $\booleanformulaA$, and a partial existential strategy that leaves a choice at index $EE(\booleanformulaB) - 1$ in $\booleanformulaB$, which means there has to be a partial existential strategy that leaves a choice at every index $i \leq EE(\booleanformulaB) - 1$.
By the claim proved above, this includes the index $LC(\booleanformulaA) + 1$, and it follows from Lemma~\ref{lem:combineempowerandchoice} that there is an existential winning strategy for $\mathbf{P}.\booleanformulaA \land \booleanformulaB$.
\end{proof}
There is only a single component $F_i$ with a universal strategy that leaves a choice, for which the index $LC(F_i)$ can be computed using brute force.
However, one cannot simply use brute force to compute the index $EE(\bigwedge_{j \neq i} F_j)$, since the number of components in this subformula is not bounded by the constant $c$.
The next lemma shows that the index can be computed from the indices of individual components.
\begin{LEM}\label{lem:indexminimum}
  Let $\booleanformulaA$ and $\booleanformulaB$ be Boolean formulas with $\var(\booleanformulaA) \cap \var(\booleanformulaB) \subseteq \{e\}$ and $\var(\booleanformulaA) \cup \var(\booleanformulaB) \subseteq \var(\qbfformula)$ such that $EE(\booleanformulaA)$ and $EE(\booleanformulaB)$ are defined. Then \[EE(\booleanformulaA \land \booleanformulaB) = \min(EE(\booleanformulaA), EE(\booleanformulaB)).\]
\end{LEM}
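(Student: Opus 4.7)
The plan is to prove the two inequalities separately, exploiting the variable-disjointness $\var(\booleanformulaA) \cap \var(\booleanformulaB) \subseteq \{e\}$ together with the indifference and combination lemmas (Lemmas~\ref{lem:indifferent},~\ref{lem:combinestrategies},~\ref{lem:leaveschoice}) and the determinacy-style duality in Lemma~\ref{lem:choiceduality}. The easy direction is $EE(\booleanformulaA \land \booleanformulaB) \le \min(EE(\booleanformulaA),EE(\booleanformulaB))$: given a partial universal strategy $\Tau$ that empowers at $i=EE(\booleanformulaA)$ in $\booleanformulaA$, I would take, for each play $\beta$ consistent with $\Tau$, the extending partial strategy $\Tau_\beta$ on $V_\ell^\forall \setminus V_i$ that forbids some $\sigma_\beta \in \{e,\neg e\}$ in $\booleanformulaA[\beta]$. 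Every play consistent with $\Tau_\beta$ and $\sigma_\beta$ then already falsifies $\booleanformulaA[\beta]$, hence $(\booleanformulaA\land\booleanformulaB)[\beta]$, so $\Tau$ witnesses empowerment at $i$ in $\booleanformulaA\land\booleanformulaB$. The symmetric argument for $\booleanformulaB$ completes this direction.

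For the reverse inequality $EE(\booleanformulaA \land \booleanformulaB) \ge \min(EE(\booleanformulaA),EE(\booleanformulaB))$, I proceed by contradiction: set $i = EE(\booleanformulaA\land\booleanformulaB)$ and assume $EE(\booleanformulaA) > i$ and $EE(\booleanformulaB) > i$. By Lemma~\ref{lem:choiceduality}, there exist partial existential strategies $\Lambda_A$ and $\Lambda_B$, defined on $V_i^\exists$, that leave a choice at $i$ in $\booleanformulaA$ and in $\booleanformulaB$, respectively. By Lemma~\ref{lem:leaveschoice}, the two classes are indifferent to $V_i \setminus \var(\booleanformulaA)$ and $V_i \setminus \var(\booleanformulaB)$. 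Because $e \notin V_i$ and $\var(\booleanformulaA) \cap \var(\booleanformulaB) \subseteq \{e\}$, the union of these indifference sets is all of $V_i$, so Lemma~\ref{lem:combinestrategies} yields a single partial existential strategy $\Lambda$ that leaves a choice at $i$ in \emph{both} $\booleanformulaA$ and $\booleanformulaB$ simultaneously.

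The main obstacle, and what I expect to be the most delicate step, is promoting $\Lambda$ from ``leaves a choice in $\booleanformulaA$ and in $\booleanformulaB$'' to ``leaves a choice in $\booleanformulaA \land \booleanformulaB$''. To do this, I would fix any play $\beta$ consistent with $\Lambda$ and any $\sigma \in \{e,\neg e\}$; by definition there exist extending partial existential strategies $\Lambda_\beta^A$ permitting $\sigma$ in $\booleanformulaA[\beta]$ and $\Lambda_\beta^B$ permitting $\sigma$ in $\booleanformulaB[\beta]$. Since $\booleanformulaA[\beta]$ and $\booleanformulaB[\beta]$ share at most the variable $e$ (already pinned by $\sigma$), applying Lemma~\ref{lem:indifferent} and Lemma~\ref{lem:combinestrategies} to the classes of partial existential strategies (on $V_\ell^\exists \setminus V_i$) permitting $\sigma$ in $\booleanformulaA[\beta]$ and in $\booleanformulaB[\beta]$ merges them into a single partial strategy that permits $\sigma$ in $(\booleanformulaA\land\booleanformulaB)[\beta]$; indeed any play consistent with the merged strategy and $\sigma$ satisfies both conjuncts, hence the conjunction. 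Performing this argument for both $\sigma = e$ and $\sigma = \neg e$ shows that $\Lambda$ leaves a choice at $i$ in $\booleanformulaA \land \booleanformulaB$.

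Finally, applying Lemma~\ref{lem:choiceduality} one more time, the existence of such a $\Lambda$ precludes any partial universal strategy from empowering at $i$ in $\booleanformulaA\land\booleanformulaB$, contradicting $i = EE(\booleanformulaA\land\booleanformulaB)$. Hence $\min(EE(\booleanformulaA), EE(\booleanformulaB)) \le EE(\booleanformulaA\land\booleanformulaB)$, and combined with the earlier inequality we obtain the desired equality. The subtle bookkeeping will be ensuring that the domains of the combined partial strategies match exactly (here $V_i^\exists$ in the first combination step and $V_\ell^\exists \setminus V_i$ in the second) and that the variable-disjointness of $\booleanformulaA$ and $\booleanformulaB$ (after $e$ is substituted) is invoked correctly when composing strategies for permitting a common literal $\sigma$.
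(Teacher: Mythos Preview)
Your proposal is correct and follows essentially the same route as the paper: both directions are handled identically (monotonicity of empowerment for $\leq$; duality via Lemma~\ref{lem:choiceduality}, then combining via Lemmas~\ref{lem:leaveschoice} and~\ref{lem:combinestrategies}, then duality again for $\geq$). The paper frames the second inequality directly rather than by contradiction, and it compresses your ``delicate step'' (merging the per-$\beta$ extensions permitting $\sigma$ in $\booleanformulaA[\beta]$ and in $\booleanformulaB[\beta]$ into one permitting $\sigma$ in $(\booleanformulaA\land\booleanformulaB)[\beta]$) into a single invocation of the combination lemmas; your explicit unpacking of that step is a useful addition, not a deviation.
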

\begin{proof}
  A partial universal strategy $\Tau$ that empowers at index $i$ in $\booleanformulaA$ or $\booleanformulaB$ also empowers at index $i$ in $\booleanformulaA \land \booleanformulaB$, so $EE(\booleanformulaA \land \booleanformulaB) \leq \min(EE(\booleanformulaA), EE(\booleanformulaB))$.
  For the other inequality, assume that $\min(EE(\booleanformulaA), EE(\booleanformulaB)) > 0$, and let $0 \leq i < \min(EE(\booleanformulaA), EE(\booleanformulaB))$.
  By Lemma~\ref{lem:choiceduality}, there is a partial existential strategy $\Tau$ that leaves a choice at index $i$ in $\booleanformulaA$, and a partial existential strategy $\Tau'$ that leaves a choice at index $i$ in $\booleanformulaB$.
  Accordingly, by Lemma~\ref{lem:leaveschoice} and Lemma~\ref{lem:combinestrategies}, there is a partial existential strategy $\Tau''$ that leaves a choice at index $i$ in $\booleanformulaA \land \booleanformulaB$.
  Applying Lemma~\ref{lem:choiceduality} again, we conclude that there cannot be a partial universal strategy that empowers at index~$i$ in $\booleanformulaA \land \booleanformulaB$, so $\min(EE(\booleanformulaA), EE(\booleanformulaB)) \leq EE(\booleanformulaA \land \booleanformulaB)$.
\end{proof}
Since the properties of strategies with respect to a component $F_i$ we are interested in are all indifferent to variables that do not occur in $F_i$, we can determine whether there exists a strategy with any of these properties by a brute-force enumeration of strategies restricted to components. \begin{LEM}\label{lem:indifferentpropertytractable}
  Let $\booleanformulaA$ be a Boolean formula with $\var(\booleanformulaA) \subseteq \var(\qbfformula)$.
  The following problems can be decided in time $f(\Card{\var(\booleanformulaA)})$ for some function $f$:
  \begin{enumerate}
  \item Is there a partial universal (respectively, existential) strategy that forbids (permits) $\Sigma \subseteq \{e, \neg e\}$ in $\booleanformulaA$.
  \item Is there a partial universal (existential) strategy that empowers at index $i$ with $0 \leq i \leq \ell$ in $\booleanformulaA$?
  \item Is there a partial universal (existential) strategy that leaves a choice at index $i$ with $0 \leq i \leq \ell$ in $\booleanformulaA$?
  \end{enumerate}
\end{LEM}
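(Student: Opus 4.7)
The plan is to exploit the indifference results established in Lemma~\ref{lem:indifferent} and Lemma~\ref{lem:leaveschoice}, which together say that each of the three properties in question depends only on how a partial strategy behaves on the variables of $\booleanformulaA$. Because $|\var(\booleanformulaA)|$ bounds the number of strategies once we ignore irrelevant variables, we can afford pure brute force.

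First I would construct a reduced QBF $\qbfformula_{\booleanformulaA}$ whose matrix is $\booleanformulaA$ (plus a placeholder for the literal $e$ when $e\notin\var(\booleanformulaA)$) and whose prefix $\mathbf{P}_{\booleanformulaA}$ is obtained from $\mathbf{P}$ by deleting every variable not in $\var(\booleanformulaA)\cup\{e\}$, preserving quantifier types and relative order. For each index $i$ of $\mathbf{P}$ I would let $\iota(i)$ denote the number of retained variables in $V_i$; this gives the natural index in $\mathbf{P}_{\booleanformulaA}$ corresponding to $i$.

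Next I would argue the reduction of each property to $\qbfformula_{\booleanformulaA}$. For (1), Lemma~\ref{lem:indifferent} tells us that a strategy forbidding/permitting $\Sigma$ in $\booleanformulaA$ exists under $\mathbf{P}$ iff such a strategy exists under $\mathbf{P}_{\booleanformulaA}$: the restriction map discards only information to which the property is blind, and any partial strategy on the reduced prefix can be extended to one on $\mathbf{P}$ by filling in arbitrary functions on deleted variables. For (2) and (3), the same restriction argument applied via Lemma~\ref{lem:leaveschoice} shows that a partial strategy empowers (respectively leaves a choice) at index $i$ in $\booleanformulaA$ under $\mathbf{P}$ iff one does so at index $\iota(i)$ in $\booleanformulaA$ under $\mathbf{P}_{\booleanformulaA}$; here the sub-property ``for every consistent play there exists a partial completion that forbids/permits $\sigma$'' itself reduces, by case (1), to the reduced game, so the reduction is uniform.

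Finally I would observe that each reduced problem can be solved by brute force: the number of partial strategy trees over the prefix of $\qbfformula_{\booleanformulaA}$ is bounded by a function of $|\var(\booleanformulaA)|+1$ (at most $2^{2^{|\var(\booleanformulaA)|+1}}$ function choices per variable), and checking whether a given strategy forbids, permits, empowers, or leaves a choice involves only enumeration over assignments of $\var(\booleanformulaA)\cup\{e\}$ and evaluation of $\booleanformulaA$. Combining these gives a bound of the form $f(|\var(\booleanformulaA)|)$ for some computable $f$, as claimed. The main obstacle will be a careful bookkeeping argument that the ``target index'' $i$ in the original prefix translates correctly to the reduced prefix: in particular, when the variable at position $i$ lies outside $\var(\booleanformulaA)$, the notions of empowerment and leaving a choice at $i$ coincide with those at the largest $j\leq i$ whose variable is retained, since by indifference no additional empowerment or constraint is introduced by traversing deleted positions. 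Once this correspondence is nailed down, the brute-force enumeration completes the proof.
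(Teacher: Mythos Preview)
Your approach is correct and matches the paper's: the paper does not give a formal proof of this lemma, only the one-sentence justification preceding it (``Since the properties \ldots\ are all indifferent to variables that do not occur in $F_i$, we can determine whether there exists a strategy with any of these properties by a brute-force enumeration of strategies restricted to components''), which is exactly the indifference-plus-brute-force argument you spell out. Your treatment of the index translation via $\iota(i)$ is a useful detail that the paper leaves implicit.
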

\begin{restatetheorem}[thm:1cdeletionset]
\begin{THM}
Let $\qbfformula$ be a $\QBFCNF$ with a $c$-deletion set of size $1$. Deciding whether $\qbfformula$ is true is fixed-parameter tractable parameterized by $c$.
\end{THM}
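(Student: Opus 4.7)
The plan is to directly assemble the decision algorithm from the structural lemmas developed in Section~\ref{sec:1cdeletion}. First I would preprocess $\qbfformula$ so that the single deletion variable~$e$ is existentially quantified and innermost, using Shannon expansion on $e$ if it is universal (after which the matrix decomposes into variable-disjoint subformulas that can be evaluated independently) and universal reduction / variable elimination on any innermost variable that occurs only within a single component. This brings $\qbfformula$ into the normal form $\mathbf{P}.F_1\wedge\cdots\wedge F_m$ used throughout the section, where each $F_i$ has at most $c+1$ variables. Because all properties of partial strategies considered in Lemmas~\ref{lem:indifferent} and~\ref{lem:leaveschoice} are indifferent to variables outside the component, Lemma~\ref{lem:indifferentpropertytractable} lets me decide them in time depending only on $c$ by brute-force enumeration of partial strategies restricted to a component.

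Next, I would sweep over the components once. For each $F_i$ I test whether there is a universal strategy that forbids both $e$ and $\neg e$ in $F_i$ (if yes, $\qbfformula$ is false—this is the easy half of Lemma~\ref{lem:unsatcomponent}), and whether there is a partial existential strategy that permits both $e$ and $\neg e$ in $F_i$ (if yes, drop $F_i$ by Lemma~\ref{lem:permitsboth}). After this filtering, I additionally check whether two \emph{distinct} surviving components admit universal strategies forbidding $e$ and $\neg e$ respectively; by Lemma~\ref{lem:winninguniversal} this implies a universal winning strategy and $\qbfformula$ is false. If none of these tests triggers, then by the contrapositive of Lemma~\ref{lem:unsatcomponent}, either $\qbfformula$ is true outright (no component admits a forbidding strategy for either literal of $e$) or there is a unique component $F_i$ with universal strategies forbidding $e$ and $\neg e$ separately, but not together.

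In the remaining nontrivial case I reduce the problem to comparing two integer indices. For the distinguished component $F_i$ I compute $LC(F_i)$ by iterating $j=0,\ldots,\ell$ and invoking Lemma~\ref{lem:indifferentpropertytractable} to decide the largest $j$ at which a partial universal strategy leaves a choice in $F_i$. For every other remaining component $F_j$ I analogously compute $EE(F_j)$ as the smallest empowering index. Lemma~\ref{lem:indexminimum} then gives $EE\bigl(\bigwedge_{j\neq i}F_j\bigr)=\min_{j\neq i}EE(F_j)$ without having to enumerate strategies over the entire (unbounded) remaining conjunction, and Lemma~\ref{lem:indexcomparison} finally decides $\qbfformula$: it is false iff this minimum is at most $LC(F_i)$. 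Each of the $\mathcal{O}(m\ell)$ tests runs in fpt-time in $c$, so the whole procedure does.

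The main obstacle in turning this into a full proof is the bookkeeping around the indices in the prefix. The lemmas in Section~\ref{sec:1cdeletion} quantify over $i\in\{0,\ldots,\ell\}$, yet the property being tested at each such $i$ is determined entirely by variables inside a single component. I would need to justify carefully that, when evaluating an empowering or choice-leaving property for a component~$F_j$ at prefix index~$i$, it suffices to enumerate partial strategies on $V_i\cap\var(F_j)$ and recursively check for forbidding/permitting on the remaining component variables, giving a total running time $f(c)\cdot\mathrm{poly}(|\var(\qbfformula)|)$ for a computable $f$; this is where the indifference lemmas (Lemmas~\ref{lem:indifferent} and~\ref{lem:leaveschoice}) are doing the essential work. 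Once this fpt evaluation of $LC$ and $EE$ per component is in place, everything else is a straightforward application of the lemmas above.
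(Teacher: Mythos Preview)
Your proposal is essentially the same as the paper's proof: both preprocess so that $e$ is existential and innermost, then run the same sequence of tests (a component forbidding both literals; dropping components permitting both via Lemma~\ref{lem:permitsboth}; distinct components forbidding $e$ and $\neg e$ via Lemma~\ref{lem:winninguniversal}; no forbidding component via Lemma~\ref{lem:unsatcomponent}), and finally compare $LC(F_i)$ against $\min_{j\neq i}EE(F_j)=EE(\bigwedge_{j\neq i}F_j)$ using Lemmas~\ref{lem:indexminimum} and~\ref{lem:indexcomparison}, with Lemma~\ref{lem:indifferentpropertytractable} supplying the per-component fpt subroutine. The only point you leave slightly implicit is the edge case where, after filtering, at most one component remains (the paper outputs ``true'' here directly); your index comparison still handles it under the convention $\min\emptyset=+\infty$.
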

\end{restatetheorem}
\begin{proof}
  Let $e$ denote the variable in the deletion set. We assume without loss of generality that it is existentially quantified and innermost in the prefix of $\qbfformula$.
  Algorithm $A$ first determines whether there is a component for which there is a universal winning strategy and outputs $0$ if there is.
  Otherwise, it removes any component for which there is a partial existential strategy that permits $e$ and $\neg e$, which preserves the truth value by Lemma~\ref{lem:permitsboth}. If no more than one component remains, the algorithm outputs $1$.
  Otherwise, the algorithm determines whether there are components and universal strategies that forbid $e$ and $\neg e$ in these (not necessarily distinct) components.
  If that is not the case, the algorithm outputs~$1$, which is correct by Lemma~\ref{lem:unsatcomponent}.
  If there are such components, and they are distinct, $A$ outputs~$0$, which is correct by Lemma~\ref{lem:winninguniversal}.
  Let $F_i$ denote the component such that there are universal strategies that forbid $e$ and $\neg e$, respectively.
  The algorithm computes $LC(F_i)$ and $\min_{j\neq i}\{EE(F_j)\} = EE(\bigwedge_{j \neq i} F_j)$ for the remaining components $F_j$.
  By Lemma~\ref{lem:indifferentpropertytractable} and Lemma~\ref{lem:indexminimum}, this can be done in time $f(c) p(\CCard{\qbfformula})$ for suitable $f$ and $p$.
  Finally, $A$ outputs $0$ if $EE(F_i) \leq  \min_{j\neq i}\{EE(F_j)\}$, and $1$ otherwise, which is correct by Lemma~\ref{lem:indexcomparison}.
\end{proof}

\subsection{An Algorithm for Formulas with Many Components of Each Type}


Here, we provide a formal proof for
Theorem~\ref{thm:qbfcnfunifptex}. We start with introducing the
following notions.
Let $\qbfformula=Q_{1} v_1 Q_2v_2 \cdots Q_n v_n.F$ be a
$\QBFCNF$ and let $D \subseteq \vare(\qbfformula)$ be a
$c$-deletion set for $\qbfformula$.
Consider an assignment $\beta : D \rightarrow \{0,1\}$ for the
variables in $D$. We say that an existential strategy $\Tau$ \emph{forces}
$\beta$ if for every universal play $\delta: \varu(\qbfformula)\rightarrow
\{0,1\}$, it holds that $\alpha(\Tau,\delta)$ extends $\beta$.
Moreover, we say that a universal strategy $\Lambda$ \emph{forbids}
$\beta$ if $\alpha(\Lambda,\delta)$ does not satisfy $F$ for every existential play $\delta$ that extends $\beta$.

The main ideas behind the proof of
Theorem~\ref{thm:qbfcnfunifptex} are as follows. First, we show
that for every component $C$ of $G_\qbfformula-D$ and every assignment $\beta :D
\rightarrow \{0,1\}$ of the (existential) variables in the deletion
set $D$, the following holds for the formula $\qbfind{\qbfformula}{D\cup C}$: either the existential player has a winning
strategy  that always plays the assignment $\beta$ or the universal player has a strategy $\Lambda$ that forbids
$\beta$, i.e., every existential play against $\Lambda$ that plays
$\beta$ loses against $\Lambda$. To decide whether $\qbfformula$ is
satisfiable, we first compute the set $A(t)$ for every component
type $t$. Here $A(t)$ is the set of all assignments
$\beta :D \rightarrow \{0,1\}$ such that the existential player has a
winning strategy on the formula $\qbfind{\qbfformula}{D\cup C_t}$ that
forces $\beta$, where $C_t$ is any
component of type $t$. Note that the sets $A(t)$ can be computed 
efficiently since both the number of component types as well as the
number of variables of the formulas $\qbfind{\qbfformula}{D\cup C_t}$ is bounded by a
function of the parameters. We then show that $\qbfformula$ is satisfiable if
and only if $\bigcap_{t \in \ctypes(\qbfformula,D)}A(t)\neq
\emptyset$. This is because if there is a $\beta \in \bigcap_{t \in
  \ctypes(\qbfformula,D)}A(t)$, then the existential player can win by playing its
winning strategy that forces $\beta$ in every
component. Moreover, if there is no such $\beta$, then the universal
player has sufficiently many copies of each component type to forbid
every possible assignment $\beta$ of the deletion set variables for
the existential player.

So, the following lemma is crucial for our algorithm.
\begin{LEM}
\label{lem:ex-uni-str}
  There is an existential winning straggly that forces $\beta$ if and
  only if there is no universal strategy that forbids $\beta$.
\end{LEM}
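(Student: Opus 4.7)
The plan is to reduce the statement to classical determinacy of Hintikka games (the Folklore Proposition stated in the preliminaries) by encoding the assignment $\beta$ directly into the matrix. Concretely, I would define the QBF $\qbfformula_\beta$ that has the same prefix as $\qbfformula$ but matrix $F_\beta \eqdef F \wedge \bigwedge_{x\in D} l_x^\beta$, where $l_x^\beta \eqdef x$ if $\beta(x)=1$ and $l_x^\beta \eqdef \neg x$ if $\beta(x)=0$. Since $D\subseteq \vare(\qbfformula)$, adding these unit clauses has the effect of ``punishing'' existential if it deviates from $\beta$ on $D$.

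Next, I would establish the following two correspondences between strategies on $\qbfformula$ and winning strategies on $\qbfformula_\beta$.

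\emph{(i)} An existential strategy $\Tau$ is a winning strategy for $\qbfformula_\beta$ if and only if $\Tau$ is a winning strategy for $\qbfformula$ that forces $\beta$. Indeed, $\Tau$ is winning for $\qbfformula_\beta$ iff for every universal play $\delta$, the assignment $\alpha(\Tau,\delta)$ satisfies $F$ and every unit clause $l_x^\beta$; satisfying all unit clauses is equivalent to $\alpha(\Tau,\delta)$ extending $\beta$, i.e., to $\Tau$ forcing $\beta$.

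\emph{(ii)} A universal strategy $\Lambda$ is winning on $\qbfformula_\beta$ if and only if $\Lambda$ forbids $\beta$ on $\qbfformula$. For the forward direction, assume $\Lambda$ is winning on $\qbfformula_\beta$ and let $\delta$ be any existential play extending $\beta$. Then $\alpha(\Lambda,\delta)$ satisfies every unit clause $l_x^\beta$, so $\alpha(\Lambda,\delta)$ must falsify $F$; hence $\Lambda$ forbids $\beta$. For the converse, let $\Lambda$ forbid $\beta$ and let $\delta$ be an arbitrary existential play. If $\delta$ extends $\beta$, then $\alpha(\Lambda,\delta)$ falsifies $F$ by the forbidding property; if $\delta$ does not extend $\beta$, then at least one of the unit clauses $l_x^\beta$ is falsified. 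In either case, $F_\beta$ is falsified, so $\Lambda$ is winning on $\qbfformula_\beta$.

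Finally, I would invoke determinacy of QBF (the Folklore Proposition): exactly one of the two players has a winning strategy on $\qbfformula_\beta$. Combined with \emph{(i)} and \emph{(ii)}, this gives that there exists an existential winning strategy for $\qbfformula$ forcing $\beta$ if and only if there exists no universal strategy forbidding $\beta$, as claimed. The only delicate point—and the part that requires the careful two-direction argument above—is the ``free'' falsification that the unit clauses provide on plays not extending $\beta$, which is exactly what makes universal winning strategies on $\qbfformula_\beta$ coincide with $\beta$-forbidding strategies on $\qbfformula$ rather than a strictly stronger notion.
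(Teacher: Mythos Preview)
Your proof is correct and takes a genuinely different route from the paper's. The paper proves the forward direction the same way you implicitly do (play $\Tau$ against $\Lambda$ and derive a contradiction), but for the reverse direction it builds an existential winning strategy that forces $\beta$ explicitly and iteratively through the prefix, maintaining as an invariant that every universal strategy consistent with the partial play so far can still be beaten by some existential continuation extending $\beta$; the key step shows that at each existential variable outside $D$ some single value preserves the invariant, arguing by composing two ``bad'' universal strategies into one that would violate the invariant at the previous step. Your reduction to determinacy of the auxiliary QBF $\qbfformula_\beta$ obtained by adding the unit clauses $l_x^\beta$ is shorter and more conceptual: correspondences (i) and (ii) are straightforward (crucially using $D\subseteq\vare(\qbfformula)$ so that the unit clauses constrain only existential's moves), and the Folklore Proposition does the rest. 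What the paper's argument buys is an explicit, self-contained construction of the forcing strategy without appealing to general determinacy; what your argument buys is brevity and a clean black-box use of a result already stated in the preliminaries.
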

\begin{proof}
  To ease notation, we set $V=\var(\qbfformula)$.
  Towards showing the forward direction, let $\Tau=(\tau_v: \{0,1\}^{V_{<v}^{\forall}}\rightarrow
  \{0,1\})_{v \in V^\exists}$ be an existential winning strategy
  that forces $\beta$ and suppose for a contradiction
  that there is a universal strategy $\Lambda=(\lambda_v:
  \{0,1\}^{V_{<v}^{\exists}}\rightarrow \{0,1\})_{v\in V^\forall}$
  that forbids $\beta$. Let $\delta : V \rightarrow
  \{0,1\}$ be the assignment resulting from $\Tau$ playing
  against $\Lambda$, i.e., $\delta=(\alpha(\Tau,\Lambda))$.
  Then, $\alpha(\Lambda,\delta^\exists)$ does not satisfy $F$, because $\delta^\exists$
  extends $\beta$ and $\Gamma$ forbids $\beta$. However,
  $\alpha(\Lambda,\delta^\exists)=\alpha(\Tau,\delta^{\forall})$ also satisfies $F$ because $\Tau$ is a
  winning strategy for the existential player, a contradiction.

  Towards showing the reverse direction, we will construct an
  existential winning strategy $\Tau=(\tau_v: \{0,1\}^{V_{<v}^{\forall}}\rightarrow
  \{0,1\})_{v \in V^\exists}$ that forces $\beta$ as follows.
  We build $\Tau$ iteratively from left to right though the quantifier
  prefix. Namely, let $v \in V^\exists$ and suppose that we have
  already defined the partial existential strategy $\Tau_v=(\tau_u: \{0,1\}^{V_{<u}^{\forall}}\rightarrow
  \{0,1\})_{u \in V_{<v}^\exists}$, i.e., $\Tau_v$ is the partial
  existential strategy defined for all existential variables occurring
  before $v$ in the prefix. Let
  $U$ be the set of all universal strategies for $\qbfformula$. Moreover, for
  an assignment $\delta : V_{<v}^{\forall}\rightarrow \{0,1\}$, let $U_{\delta}$ be the set of
  all universal strategies that play $\delta$ against the partially
  defined strategy $\Tau_v$, i.e., $\alpha(\Tau_v,\Lambda)=\delta$ for
  every $\Lambda \in U_{\delta}$. We will maintain the following invariant:

  \begin{itemize}
  \item[(*)] For every $\delta : V_{<v}^{\forall}\rightarrow \{0,1\}$
    and every $\Lambda \in U_{\delta}$, there is a play
    $\delta_{\exists} : V^{\exists}_{\geq v}\rightarrow \{0,1\}$ that extends
    $\beta$ such that $\alpha(\Lambda,\delta'\cup \delta_{\exists})$,
    where $\delta'=\alpha(\Tau_v,\Lambda)^{\exists}$ satisfies $F$.
  \end{itemize}



  Because we assume that there is no universal strategy that forbids
  $\beta$, we obtain that $U$ satisfies (*). Moreover, if the
  invariant holds at the end, then $\Tau=\Tau_n$ is a winning strategy
  for the existential player that forces $\beta$.

  Now, let $v \in V^\exists$ and suppose that we have
  already defined $\Tau_v$, i.e., $\tau_{u}$ for all $u \in V_{<v}^\exists$.
  Let $\delta : V_{<v}^{\forall}\rightarrow \{0,1\}$.
  Note that if $v \in D$, then setting $\tau_v(\delta)=\beta(v)$
  provides an extension of $\Tau_v$ that satisfies (*) by the
  induction hypothesis. So suppose that $v \notin D$.
  Note that if there is an
  assignment $\delta_v : v \rightarrow \{0,1\}$ such that for every
  $\Lambda \in U_{\delta}$, there is a play
  $\delta_{\exists} : V^{\exists}_{> v}\rightarrow \{0,1\}$ that extends
  $\beta$ such that $\alpha(\Lambda,\delta'\cup \delta_v\cup
  \delta_{\exists})$, where $\delta'=(\alpha(\Tau_v,\Lambda))^{\exists}$,
  does satisfy $F$, then we can define the response of $\tau_v$
  for $\delta$ as $\delta_v$. We show next that such an assignment
  $\delta_v : v \rightarrow \{0,1\}$ must always exists, which
  completes the proof.

  Suppose for a contradiction that this is not the case, then for
  every assignment $\delta_v : v \rightarrow \{0,1\}$ there is some
  $\Lambda^{\delta_v} \in U_{\delta}$, such that for every play $\delta_{\exists} : V^{\exists}_{> v}\rightarrow \{0,1\}$ that extends
  $\beta$, the assignment $\alpha(\Lambda^{\delta_v},\delta'\cup  \delta_v\cup
  \delta_{\exists})$, where $\delta'=\alpha(\Tau_v,\Lambda^{\delta_v})^{\exists}$,
  does not satisfy $F$.

  Consider now the following composition
  $\Lambda$ of the strategies $\Lambda^{\delta_v}$, which is defined
  as follows:
  \begin{itemize}
  \item For every $u \in V_{<v}^\forall$, we set $\lambda_u=\lambda_u^{\tau}$, where $\tau$ is the assignment
    that sets $v$ to $0$.
  \item For every $u\in V_{>v}^\forall$, we set $\lambda_u(\delta'\cup
    \delta_v\cup \delta'')=\lambda_u^{\delta_v}(\delta'\cup\delta_v\cup\delta'')$
    for every $\delta' : V_{<v}^\exists\rightarrow \{0,1\}$, every $\delta_v :
    v\rightarrow \{0,1\}$, and every $\delta'' : V_{>v}^\exists\cap
    V_{<u}^\exists\rightarrow \{0,1\}$.
  \end{itemize}
  Then, $\Lambda \in
  U_{\delta}$, however, $\Lambda$ does not satisfy (*), a contradiction
  to our induction hypothesis.
\end{proof}

\begin{restatetheorem}[thm:qbfcnfunifptex]
\begin{THM}
  Let $\qbfformula$ be a $\QBFCNF$ and let $D \subseteq \vare(\qbfformula)$ be a
  universally complete $c$-deletion set for $\qbfformula$. Then deciding whether $\qbfformula$ is satisfiable is fixed-parameter
  tractable parameterized by $|D|+c$.
\end{THM}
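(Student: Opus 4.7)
The plan is to reduce the truth of $\qbfformula$ to a simple intersection test over a bounded collection of local winning patterns. For each component type $t \in \ctypes(\qbfformula, D)$, I would fix a representative component $C_t$ and let $A(t) \subseteq \{0,1\}^D$ be the set of all assignments $\beta: D \rightarrow \{0,1\}$ such that the existential player has a winning strategy on $\qbfind{\qbfformula}{D \cup C_t}$ that forces $\beta$. The claim driving the algorithm is that $\qbfformula$ evaluates to true if and only if $\bigcap_{t \in \ctypes(\qbfformula,D)} A(t) \neq \emptyset$. Given this characterization, the algorithm computes the types and representatives via Proposition~\ref{pro:delset-comptypes}, then brute-forces each $A(t)$ over the at most $|D|+c$ variables of $\qbfind{\qbfformula}{D \cup C_t}$, and finally tests whether the intersection is non-empty.

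For the forward direction, I would suppose $\beta$ lies in every $A(t)$ and compose the local strategies into a global one. For each component $C$ of $G_\qbfformula - D$ of type $t$, the winning existential strategy on the representative $\qbfind{\qbfformula}{D \cup C_t}$ that forces $\beta$ can be transported to $C$ via the variable bijection witnessing $C \sim C_t$. Since $D \subseteq \vare(\qbfformula)$ and distinct components share only variables in $D$, no two transported strategies disagree on any shared variable: both force $\beta$ on $D$, and they are variable-disjoint elsewhere. Hence their union defines a valid existential strategy, and since each local piece is winning on its component and every clause of the matrix lives inside $D \cup C$ for some component $C$, the combined strategy satisfies the whole matrix against any universal play.

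For the reverse direction, I would argue contrapositively. If $\bigcap_t A(t) = \emptyset$, then for each $\beta \in \{0,1\}^D$ there is a type $t(\beta)$ with $\beta \notin A(t(\beta))$, and by Lemma~\ref{lem:ex-uni-str} a universal strategy $\Lambda_\beta$ on $\qbfind{\qbfformula}{D \cup C_{t(\beta)}}$ that forbids $\beta$. The crux, and the step I expect to require the most care, is to combine these $2^{|D|}$ local universal counter-strategies into a single global one. This is where universal completeness enters: each type has at least $2^{|D|}$ components, so I can injectively assign a distinct component $C_\beta$ of type $t(\beta)$ to each assignment $\beta$, transport $\Lambda_\beta$ to $C_\beta$ via the bijection witnessing $C_\beta \sim C_{t(\beta)}$, and set universal play arbitrarily on any remaining components. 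Since the $C_\beta$ are pairwise variable-disjoint away from $D$, and $D$ contains no universal variables, the composition is well-defined; for any existential play that sets $D$ to some $\beta$, the restriction to $C_\beta$ falsifies that component of the matrix, so the whole formula is falsified.

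For the runtime, Proposition~\ref{pro:ds-ct-nr} bounds the number of types by a function of $|D|+c$, Proposition~\ref{pro:delset-comptypes} computes types and representatives in polynomial time, each $A(t)$ is computed by brute-force over $\{0,1\}^D$ and the at most $|D|+c$ variables of $\qbfind{\qbfformula}{D \cup C_t}$ in time depending only on $|D|+c$, and the intersection test is trivial. The main obstacle I anticipate is the careful justification that the universal counter-strategies can indeed be composed in the reverse direction without interference; universal completeness is precisely what rules out conflicts by supplying sufficiently many variable-disjoint components of each type to host one dedicated counter-strategy per assignment of $D$.
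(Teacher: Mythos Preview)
Your proposal is correct and essentially identical to the paper's proof: you define the same sets $A(t)$, establish the same characterization via the intersection, invoke Lemma~\ref{lem:ex-uni-str} for the contrapositive direction, and use universal completeness exactly as the paper does to allocate a private component per assignment $\beta$ for the composed universal counter-strategy. The only cosmetic difference is that you invoke Proposition~\ref{pro:ds-ct-nr} to bound the number of types, whereas the paper simply uses the trivial polynomial bound $\CCard{\qbfformula}$; either suffices for fixed-parameter tractability.
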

\end{restatetheorem}
\begin{proof}
  For every $t \in \ctypes(\qbfformula,D)$, let $A(t)$ be the set of all assignments $\beta : D \rightarrow
  \{0,1\}$ such that the existential player has a winning strategy for
  $\qbfind{\qbfformula}{D\cup C_t}$ that forces $\beta$ for some $C_t \in \CbyT(\qbfformula,D,t)$. We claim that $\qbfformula$ is satisfiable if and only
  if $\bigcap_{t \in \ctypes(\qbfformula)}A(t)\neq \emptyset$. Note that this
  completes the proof of the theorem, because deciding whether
  $\bigcap_{t \in \ctypes(\qbfformula)}A(t)\neq \emptyset$ can be achieved in
  fpt-time parameterized by $|D|+c$ as follows. First, we use
  Proposition~\ref{pro:delset-comptypes} to compute $\ctypes(\qbfformula,D)$ and
  $\CbyT(\qbfformula,D,t)$ for every $t \in \ctypes(\qbfformula,D)$ in
  polynomial time. Then, for every of the at most $\CCard{\qbfformula}$ types $t
  \in \ctypes(\qbfformula,D)$ and every of the at most $2^{|D|}$ assignments $\beta : D \rightarrow
  \{0,1\}$, we decide whether the existential player has a winning strategy for
  $\qbfind{\qbfformula}{D\cup C_t}$ that forces $\beta$ for some $C_t \in \CbyT(\qbfformula,D,t)$ in time at most
  $\bigO(2^{|D|+c})$, which can be achieved by brute-force because
  $\qbfind{\qbfformula}{D\cup C_t}$ has at most $|D|+c$ variables.

  Towards showing the forward direction, we show the
  contraposition. That is, assuming that $\bigcap_{t \in
    \ctypes(\qbfformula)}A(t)= \emptyset$, we show that $\qbfformula$ is not satisfiable.
  Because of Lemma~\ref{lem:ex-uni-str}, it holds that for every $t
  \in \ctypes(\qbfformula)$ and every $\beta : D \rightarrow \{0,1\}$, if $\beta
  \notin A(t)$, then there is a universal strategy $\Gamma^{t,\beta}$
  for $\qbfind{\qbfformula}{D \cup C_t}$ that forbids $\beta$, where $C_t \in \CbyT(\qbfformula,D,t)$. Because $\qbfformula$ is
  universally complete, for every such $t$ and $\beta$ we can choose a
  private component $C^{t,\beta}$ of type $t$. Now let $\Gamma$ be the
  universal strategy that plays $\Gamma^{t,\beta}$ in the component
  $C^{t,\beta}$. Then, because $\bigcap_{t \in
    \ctypes(\qbfformula)}A(t)= \emptyset$, we obtain that for every $\beta : D
  \rightarrow \{0,1\}$, there is at least one component $C^{t,\beta}$
  such that $\Gamma$ plays $\Gamma^{t,\beta}$ in $\qbfind{\qbfformula}{D\cup
  C^{t,\beta}}$. In other words $\Gamma$ forbids every possible $\beta
  : D \rightarrow \{0,1\}$ and therefore $\Gamma$ is a winning
  strategy for the universal player on $\qbfformula$, showing that $\qbfformula$
  is not satisfiable.

  Towards showing the reverse direction, let $\beta \in \bigcap_{t \in
    \ctypes(\qbfformula)}A(t)$. Then, for every component $C \in \CbyT(\qbfformula,D,t)$, there
  is a winning strategy $\Tau^{C,\beta}$ for the existential player in
  $\qbfind{\qbfformula}{D\cup C}$ that forces $\beta$. It is straightforward to verify
  that the composition $\Tau^{\beta}$ of all $\Tau^{C,\beta}$'s is an
  existential winning strategy for $\qbfformula$.
\end{proof}

\end{document}

%
%
%
%

%
%

%
%
%
%
%
%
%


Let $\Phi$ be a $\QBFCNF$ and $D \subseteq \vare(\Phi)$ be a
$c$-deletion set for $G_\Phi$. Let $\Tau=(\tau_i: \{0,1\}^{V_{<i}^{\forall}}\rightarrow
\{0,1\}^{V_i})_{i\in I^{\exists}}$. We say that $V_i$ \emph{depends} on
$V_j$ for $j \in V_{<i}^{\forall}$ w.r.t. $\Tau$, if there are two (partial)
universal plays $\beta_1$ and $\beta_2$ with $\beta_i :
V_{<i}^{\forall} \rightarrow \{0,1\}$ that differ only for $V_j$ such
that $\tau_i(\beta_1)\neq\tau_i(\beta_2)$. 

Let $t \in \ctypes(\Phi)$. We say that $t$ is of the form $Q_1,\dotsc
Q_r$ with $Q_i \in \{\exists,\forall\}$ if $Q_1v_1,\dotsc
Q_rv_r$ is the 
quantifier prefix for every component of
type $t$ \todo{M: should be ok.}. We say that $t$ is
\emph{existentially simple} if all of its components have exactly one
existentially quantified variable in $\Phi$. If $t$ is existentially
simple, we assume that all components of type $t$ are ordered
according to the occurrences of their unique existentially quantified
variable in the quantifier prefix of $\Phi$. This will allows us to
refer to the $i$-th component of $t$, i.e., the component of type $t$,
whose unique existentially quantified variable occurs at the $i$-th
position among all existentially quantified variables of the
components of type $t$.

\begin{LEM}
  Let $\Phi$ be a $\QBFCNF$ and let $D \subseteq \vare(\Phi)$ be a
  $c$-deletion set for $G_\Phi$. Let $t \in \ctypes(\Phi)$ be of the
  form $\forall^u \exists$ with $\CbyT(\Phi,D,t)=\{A_1,\dotsc,A_{r}\}$,
  where $(A_1,\dotsc,A_r)$ is the ordering of the components of type
  $t$ according to the occurrence of the unique existential variable.
  If $\Phi$ is satisfiable, then there is an existential winning
  strategy $\Tau$ for $\Phi$ such that any existential variable of $\Phi$
  does not depend on any variable in $A_{2^u+1},\dotsc, A_r$ that is not in
  $v$'s component.
\end{LEM}
\begin{proof}
  Because $\Phi$ is satisfiable there is an existential winning
  strategy $\Tau'=(\tau_x': \{0,1\}^{V_{<x}^{\forall}}\rightarrow
  \{0,1\})_{x\in \vare(\Phi)}$ for $\Phi$. We show how to
  construct an existential winning strategy $\Tau$ from $\Tau'$ that
  satisfies the conditions given in the statement of the lemma.

  Let $Y_i$ be the set of universal variables of $A_i$ and let $x_i$
  be the unique existential variable of $A_i$. Let $Y=\bigcup_{1\leq i \leq r}Y_i$,
  $X=\{x_1,\dotsc,x_r\}$, $Y_P=\bigcup_{1 \leq i \leq 2^u}Y_i$, and $X_P=\{x_1,\dotsc,x_{2^u}\}$.

  Let $v \in \vare(\Phi)$ be any existential variable of $\Phi$ and let $\beta :
  V_{<v}^{\forall}\rightarrow \{0,1\}$ be any universal play of all
  variables occurring before $v$ in the prefix of $\Phi$. Let
  $\beta_P$ be the assignment obtained from $\beta$ after setting all
  variables in $(Y\setminus Y_P)\cap \var(\beta)$ to $0$.
  Finally, let $\beta'$ be the assignment obtained from $\beta_P$ as
  follows. We set $\beta'=\beta_{n}'$, where $n=|\var(\beta_P)|$ and $\beta_i'$
  is defined iteratively for every $i$ with $0\leq i \leq
  n$ as follows. Let $v_1,\dotsc, v_n$ be the ordering of the
  variables in $\var(\beta)$ according to the quantifier prefix in
  $\Phi$. We set $\beta_0'=\beta_P$ and for every
  $i$ with $0 < i \leq |\var(\beta_P)|$, we set:
  \begin{itemize}
  \item if $v_i \notin Y_P$, then $\beta_i'=\beta_{i-1}'$.
  \item otherwise, $v_i \in Y_P$, $v_i$ is in $A_l$ for some $l$
    with $1 \leq l \leq r$, and $v_i$ is the $a$-th variable among the
    variables in $Y_l$ in the prefix of $\Phi$ for some $a$ with $1
    \leq a \leq u$. We start by setting
    $\beta_i'(v)=\beta_{i-1}'(v)$ for every $v \neq v_i$.
    Moreover, to define $\beta_i'(v_i)$, we first need the following notions.
    Let $Y_j^a$ be the subset of $Y_j$ containing the first $a$
    variables in the prefix of $\Phi$.
    Let $\alpha$ be the restriction of $\beta_{i-1}'$ to the variables
    in $\{v_1,\dotsc,v_i\}$. Let $S$ be the set of all
    indices $j$ such that $\alpha(Y_j^a)=\alpha(Y_l^a)$.
    We set
    $\beta_i'(v_i)=\beta_{i-1}'(v_i)$ if $|S|\leq 2^{u-a}$ and
    $\beta_i'(v_i)=1-\beta_{i-1}'(v_i)$, otherwise.
  \end{itemize}
    
  We are now ready to define $\tau_v(\beta)$ as follows:
  \begin{itemize}
  \item If $v \notin X$, then $\tau_v(\beta)=\tau_v'(\beta')$.
  \item If $v \in X_P$, i.e., $v=x_i$ for some $i$ with $1 \leq i \leq
    2^u$, then either:
    \begin{itemize}
    \item $\tau_v(\beta)=\tau_v'(\beta')$, if $\beta(Y_i)=\beta'(Y_i)$
      or
    \item 
    \end{itemize}

    \begin{itemize}
    \item if $v=x_1$, then $\tau_v(\beta)=\tau_v'(\beta')$ or
    \item if $v=x_2$ and $\beta(Y_1)=\beta(Y_2)$, then
      $\tau_v(\beta)=\tau_{x_1}'(\beta')=\tau_{x_1}(\beta')$ or
    \item if $v=x_2$ and $\beta(Y_1)\neq \beta(Y_2)$, then
      $\tau_v(\beta)=\tau_{v}'(\beta')$
    \end{itemize}
  \item If $v \in X\setminus X_P$, i.e., $v=x_i$ for some $i>2^u$,
    then due the definition of $\beta'$ there is a $j$ with $1 \leq j
    \leq 2^u$ such that $\beta'(Y_j)=\beta'(Y_i)$ and we set
    $\tau_v(\beta)=\tau_{x_j}(\beta')$.
  \end{itemize}
  
  Then, $\Tau$ is clearly an existential strategy satisfying the
  conditions in the statement of the lemma. It remains to show that
  $\Tau$ is also a winning strategy for the existential player.

  Suppose this is not the case and let $\beta : V^\forall\rightarrow
  \{0,1\}$ be a universal play that when played against $\Tau$
  falsifies the clause $C$. We distinguish the following cases. If
  $C\cap X=\emptyset$, then $C$ is also falsified for $\Tau'$ playing
  against $\beta'$. This is because all existential variables in $C$
  play against $\beta'$ and moreover it holds that
  $\beta(u)=\beta'(u)$ for all universal variables in $C$.

  Otherwise, $C$ belongs to some component $A_j$. Then, due to the
  definition of $\beta'$, there is an $i$ with $1 \leq i \leq 2$ such
  that $\beta(Y_j)=\beta'(Y_i)$. But then, every existential variable
  in $C$ plays the same as $\Tau'$ would play against $\beta'$ on the
  copy of $C$ in $A_i$. Moreover, the same holds for the universal
  variables in $C$ that play the same as their copies in
  $A_i$. Therefore, the copy of $C$ in $A_i$ would be unsatisfied if
  $\Tau'$ is playing against $\beta'$.
  
\end{proof}

\begin{THM}
  Let $\Phi$ be a $\QBFCNF$ and $D \subseteq \vare(\Phi)$ be a
  $c$-deletion set for $G_\Phi$. If every component of $G_\Phi-D$
  contains at most one existentially quantified variable, then
  deciding whether $\Phi$ is satisfiable is fixed-parameter
  tractable parameterized by $|D|$.
\end{THM}
\begin{proof}
  
\end{proof}


\begin{LEM}
  Let $\Phi$ be a $\QBFCNF$ and let $D \subseteq \vare(\Phi)$ be a
  $c$-deletion set for $G_\Phi$. Let $t \in \ctypes(\Phi)$ be of the
  form $\exists^e\forall^u$ with $\CbyT(\Phi,D,t)=\{A_1,\dotsc,A_{r}\}$,
  where $(A_1,\dotsc,A_r)$ is the ordering of the components of type
  $t$ according to the occurrence of the last existential variable in
  the component.
  If $\Phi$ is satisfiable, then there is an existential winning
  strategy $\Tau$ for $\Phi$ such that any existential variable of $\Phi$
  does not depend on any variable in $A_{h+1},\dotsc, A_r$,
  where $h=2^{u+e}+1$.
\end{LEM}
\begin{proof}
  Because $\Phi$ is satisfiable there is an existential winning
  strategy $\Tau'=(\tau_x': \{0,1\}^{V_{<x}^{\forall}}\rightarrow
  \{0,1\})_{x\in V^\exists}$ for $\Phi$. We show how to
  construct an existential winning strategy $\Tau$ from $\Tau'$ that
  satisfies the conditions given in the statement of the lemma.

  Let $Y_i$ be the set of universal variables of $A_i$ and let $X_i$
  be the set of existential variables of $A_i$. Let $Y=\bigcup_{1\leq i \leq r}Y_i$,
  $X=\bigcup_{1\leq i \leq r}X_i$, $Y_P=\bigcup_{1 \leq i \leq h}Y_i$, and $X_P=\bigcup_{1 \leq i \leq h}X_i$.

  Let $v \in V^\exists$ be any existiential variable of $\Phi$ and let $\delta :
  V_{<v}^{\forall}\rightarrow \{0,1\}$ be any universal play of all
  variables occurring before $v$ in the prefix of $\Phi$. Let
  $\delta_P$ be the assignment obtained from $\delta$ after setting all
  variables in $(Y\setminus Y_P)\cap \var(\delta)$ to $0$.
  Finally, let $\delta'$ be the assignment obtained from $\delta_P$ as
  follows. We set $\delta'=\delta_{n}'$, where $n=|\var(\delta_P)|$ and $\delta_i'$
  is defined iteratively for every $i$ with $0\leq i \leq
  n$ as follows. Let $y_1,\dotsc, y_n$ be the ordering of the
  variables in $\var(\delta_P)$ according to the quantifier prefix in
  $\Phi$. We set $\delta_0'=\delta_P$ and for every
  $i$ with $0 < i \leq |\var(\delta_P)|$, we set:
  \begin{itemize}
  \item if $y_i \notin Y_P$, then $\delta_i'=\delta_{i-1}'$.
  \item otherwise, $y_i \in Y_P$, $y_i$ is in $A_l$ for some $l$
    with $1 \leq l \leq r$, and $y_i$ is the $a$-th variable among the
    variables in $Y_l$ in the prefix of $\Phi$ for some $a$ with $1
    \leq a \leq u$. We start by setting
    $\delta_i'(y)=\delta_{i-1}'(y)$ for every $y \in
    \var(\delta_P)\setminus \{y_i\}$, i.e., $\delta_i'$ will extend
    $\delta_{i-1}'$.
    
    Moreover, to define $\delta_i'(y_i)$, we first need the following
    notions. Let $\beta : X_l \rightarrow \{0,1\}$ be the assignment
    of the existential variables in $A_l$ obtained when $\Tau'$ is
    played against $\delta_{i-1}'$, i.e.,
    $\beta=(\alpha(\Tau',\delta_{i-1}'))_{X_l}$.
    Let $Y_j^a$ be the subset of $Y_j$ containing the first $a$
    variables in the prefix of $\Phi$.
    Let $\delta_{i-1}''$ be the restriction of $\delta_{i-1}'$ to the variables
    in $\{y_1,\dotsc,y_i\}$. Let $S$ be the set of all
    indices $j$ such that
    $\delta_{i-1}''(Y_j^a)=\delta_{i-1}''(Y_l^a)$ and $\beta(X_l)=(\alpha(\Tau',\delta_{i-1}'))_{X_j}$.
    We set
    $\delta_i'(y_i)=\delta_{i-1}'(y_i)$ if $|S|\leq 2^{u-a}$ and
    $\delta_i'(y_i)=1-\delta_{i-1}'(y_i)$, otherwise.
  \end{itemize}
    
  We are now ready to define $\tau_v(\delta)$ as follows:
  \begin{itemize}
  \item If $v \notin (X \setminus X_P)$, then $\tau_v(\delta)=\tau_v'(\delta')$.
  \item If $v \in (X \setminus X_P)$, then $v$ is in some component
    $A_l$ for some $h+1 \leq l \leq r$.
  \end{itemize}

  
  Then, $\Tau$ is clearly an existential strategy satisfying the
  conditions in the statement of the lemma. It remains to show that
  $\Tau$ is also a winning strategy for the existential player.

  Suppose this is not the case and let $\beta : V^\forall\rightarrow
  \{0,1\}$ be a universal play that when played against $\Tau$
  falsifies the clause $C$. We distinguish the following cases. If
  $C\cap X=\emptyset$, then $C$ is also falsified for $\Tau'$ playing
  against $\beta'$. This is because all existential variables in $C$
  play against $\beta'$ and moreover it holds that
  $\beta(u)=\beta'(u)$ for all universal variables in $C$.

  Otherwise, $C$ belongs to some component $A_j$. Then, due to the
  definition of $\beta'$, there is an $i$ with $1 \leq i \leq 2$ such
  that $\beta(Y_j)=\beta'(Y_i)$. But then, every existential variable
  in $C$ plays the same as $\Tau'$ would play against $\beta'$ on the
  copy of $C$ in $A_i$. Moreover, the same holds for the universal
  variables in $C$ that play the same as their copies in
  $A_i$. Therefore, the copy of $C$ in $A_i$ would be unsatisfied if
  $\Tau'$ is playing against $\beta'$.
  
\end{proof}

\begin{THM}
  Let $\Phi$ be a $\QBFCNF$ and $D \subseteq \vare(\Phi)$ be a
  $c$-deletion set for $G_\Phi$. If every component of $G_\Phi-D$
  contains at most one existentially quantified variable, then
  deciding whether $\Phi$ is satisfiable is fixed-parameter
  tractable parameterized by $|D|$.
\end{THM}
\begin{proof}
  
\end{proof}
